\newtheorem{theorem}{Theorem}[]
\newtheorem{definition}{Definition}[]
\newtheorem{assumption}{Assumption}
\newtheorem{proposition}[theorem]{Proposition}
\newtheorem{corollary}{Corollary}
\newtheorem{lemma}[theorem]{Lemma}
\theoremstyle{remark}
\newtheorem{remark}{Remark}
\DeclareMathOperator*{\var}{Var}
\DeclareMathOperator*{\op}{op}
\DeclareMathOperator*{\F}{F}
\DeclareMathOperator*{\opt}{opt}
\title{Change point localisation and inference in fragmented functional data}
\author{Gengyu Xue}
\author{Haotian Xu}
\author{Yi Yu}
\affil{Department of Statistics, University of Warwick}
\date{\today}
\begin{document}

\maketitle
\begin{abstract}
    We study the problem of change point localisation and inference for sequentially collected fragmented functional data, where each curve is observed only over discrete grids randomly sampled over a short fragment. The sequence of underlying covariance functions is assumed to be piecewise constant, with changes happening at unknown time points. To localise the change points, we propose a computationally efficient fragmented functional dynamic programming (FFDP) algorithm with consistent change point localisation rates. With an extra step of local refinement, we derive the limiting distributions for the refined change point estimators in two different regimes where the minimal jump size vanishes and where it remains constant as the sample size diverges. Such results are the first time seen in the fragmented functional data literature. As a byproduct of independent interest, we also present a non-asymptotic result on the estimation error of the covariance function estimators over intervals with change points inspired by \citet{lin2021basis}. Our result accounts for the effects of the sampling grid size within each fragment under novel identifiability conditions. Extensive numerical studies are also provided to support our theoretical results. 
\end{abstract}

\section{Introduction}
Functional data analysis (FDA) has enjoyed increasing popularity over the past few decades owing to the advances in modern technology which enable high-resolution data collection and storage. Modelling of data as realisations of random functions has found great success in various applications, including neuroscience \citep[e.g.][]{hasenstab2017multi,dai2019age}, medicine \citep[e.g.][]{chen2017modelling,crawford2020predicting,zhang2022nonparametric}, climatology \citep[e.g.][]{besse2000autoregressive,fraiman2014detecting}, finance \citep[e.g.][]{fan2014functional}, agriculture \citep[e.g.][]{lei2015functional} and others. A wide range of statistical research has also been carried out in the area of FDA such as classification \citep[e.g.][]{delaigle2012achieving}, estimation \citep[e.g.][]{yao2005functional,cai2011optimal} and functional regression \citep[e.g.][]{cai2012minimax,zhou2023functional}. We refer readers to \citet{wang2016functional} for a comprehensive review.

In this paper, we study the problem of change point detection and inference in fragmented functional data, where the partially observed functional data are collected sequentially as a non-stationary time series, with abrupt changes in covariance functions occurring at unknown time points. To be specific, let the observations $\{(X_{t,j},  Y_{t,j})\}_{t=1,j=1}^{n,m} \subseteq [0,1] \times \mathbb{R}$ satisfy the model
\begin{align}
    \label{model_obs}
    Y_{t,j} = f_{t}(X_{t,j})+ \varepsilon_{t,j}, \quad \text{for} \; t = 1, \ldots, n, \; \text{and} \; j =1, \ldots, m.
\end{align}
In the above notation, $\{X_{t,j}\}_{t=1,j=1}^{n,m} \subseteq [0,1]$ denote the discrete grids where the noisy functional data $\{ Y_{t,j}\}_{t=1,j=1}^{n,m} \subseteq \mathbb{R}$ are observed, $\{f_t(\cdot): [0,1] \rightarrow \mathbb{R}\}_{t=1}^n$ denote a sequence of mutually independent and square-integrable random processes with mean $0$ and covariance function $\Sigma_t^*$, where $\Sigma_t^*(k, l) = \mathbb{E}[f_t(k)f_t(l)]$ for every $k,l \in [0,1]$, and $\{\varepsilon_{t,j}\}_{t=1,j=1}^{n,m}$ denote the measurement errors. Moreover, different from the conventional setup of sparse FDA in the literature where each object can be observed over the entire domain \citep[e.g.][]{cai2012minimax,zhang2016sparse}, we will work under a more challenging scenario where each object can only be observed over a short fragment of length $0 <\delta \leq 1$, namely it holds that $\max_{j,k\in \{1, \ldots, m\}} |X_{t,j}-X_{t,k}| \leq \delta$ for every $t \in \{1, \ldots, n\}$.

To model the non-stationarity of the sequentially observed data, we introduce our change point notation. Across time, we assume that the covariance function possesses a piecewise constant pattern, i.e.~for $K \in \mathbb{N}_{+}$, there exists an increasing sequence of change points $\{\eta_\ell\}_{\ell=1}^{K}$ such that $0 = \eta_0 < \eta_1 < \cdots < \eta_K < n < \eta_{K+1} = n+1$ and
\begin{align}
    \label{model_cp}
    \Sigma^*_t \neq \Sigma^*_{t+1} \quad \text{if and only if} \quad t\in \{\eta_1,\dots, \eta_K\}.
\end{align}
For any $\ell \in \{1, \ldots, K\}$, denote the $\ell^{\mathrm{th}}$ jump function and its normalised version by $\nu_\ell =\Sigma^*_{\eta_{\ell}+1}-\Sigma^*_{\eta_{\ell}} $ and $\Upsilon_\ell = (\Sigma^*_{\eta_{\ell}+1}-\Sigma^*_{\eta_{\ell}})/ \|\Sigma^*_{\eta_{\ell}+1}-\Sigma^*_{\eta_{\ell}}\|_{L^2} =  \nu_\ell/\kappa_\ell$ respectively, where $\|\cdot\|_{L^2}$ is the function $L^2$ norm defined in \Cref{section_notation}. Let the minimal spacing between two consecutive change points and minimal jump size be
\begin{align*}
    \Delta = \min _{\ell=1, \ldots, K+1}|\eta_\ell-\eta_{\ell-1}|, \quad \text{and} \quad \kappa=\min _{\ell=1, \ldots, K} \kappa_\ell = \min _{\ell=1, \ldots, K} \|\Sigma^*_{\eta_{\ell}+1}-\Sigma^*_{\eta_{\ell}}\|_{L^2}.
\end{align*}
Our interest is to consistently estimate both the unknown number $K$ of the change points and also the time points $\{\eta_\ell\}_{\ell=1}^K$. To be specific, we seek estimators $\{\widehat{\eta}_\ell\}_{\ell=1}^{\widehat{K}}$, such that, as the sample size $n$ grows to infinity, it holds with probability going to $1$ that
\begin{align*}
    \widehat{K} = K, \quad \text{and} \quad \frac{\epsilon}{\Delta}= \underset{\ell = 1, \ldots, K}{\max} \frac{|\widehat{\eta}_\ell -\eta_\ell|}{\Delta}\rightarrow 0.  
\end{align*}
We refer to the quantity $\epsilon$ as the localisation error and $\frac{\epsilon}{\Delta}$ as the localisation rate. 
We further derive limiting distributions of the change point
estimators.

\subsection{Connections with relevant literature}
There are two key ingredients inside our problem formulated above: fragmented FDA and change point analysis. We discuss the relevant literature separately in this subsection.

\noindent \textbf{Fragmented functional data analysis.} Fragmented functional data, also known as functional snippets, are often collected in longitudinal studies when subjects enter a study at random times and are followed only over a short period. While this flexible design of studies offers advantages such as shorter completion times and potentially larger sample sizes, it also introduces new challenges for statistical analysis due to the complete missingness of design points in the off-diagonal region $\{(k,l): |l-k| >\delta; l,k \in [0,1]\}$. Consequently, the existing methods using interpolation techniques based on kernel smoothing and splines \citep[e.g.][]{yao2005functional,zhang2016sparse,peng2009geometric,xiao2013fast} fail to produce a consistent estimator of the covariance function in the off-diagonal region, and new exploration techniques have to be developed.

To the best of our knowledge, \citet{delaigle2016approximating} are arguably the first to study this problem. By modelling a discretised version of trajectories using a Markov chain, together with nonparametric smoothing techniques, they extend the observed fragments to the whole domain and construct covariance function estimators accordingly. \citet{descary2019recovering} and \citet{zhang2022nonparametric} handle this problem using matrix completion techniques requiring extra modifications when the observation grids are sampled randomly over the fragment. \citet{delaigle2021estimating} and \citet{lin2021basis} proceed by least square methods to numerically solve this problem via basis expansion. In a setting where the off-diagonal components can be modelled parametrically, \citet{lin2022mean} address the problem by the semiparametric divide-and-conquer method introduced in \citet{fan2007analysis} to overcome the challenges of missing data. The problem has also been studied more recently under a general setting of positive-semidefinite continuation by \citet{waghmare2022completion}, where they develop a kernel version of the canonical extension and consider the problem of extending a partially specified covariance kernel to the entire domain.

Note that the fragmented setting considered here represents a different and more challenging scenario compared to the partially observed functional data considered in \citet{kraus2015components}, \citet{gromenko2017cooling}, and \citet{liebl2019partially}. In their cases, the length of each fragment can be intuitively as large as the full domain $[0,1]$. Therefore a sufficiently large proportion of the curves are observed, indicating that most of the design points in the off-diagonal regions are still available.

\noindent \textbf{Change point analysis.} The considered problem of fragmented functional data with change points falls into the broad class of literature on change point analysis, which has a long history and remains a subject of active investigation till date. In the recent literature, the off-line version of the problem, where a full dataset is collected before analysis is carried out, has been studied on numerous settings including high dimensional linear regression \citep[e.g.][]{xu2022change, cho2022high}, time series models \citep[e.g.][]{chan2021optimal,yau2016inference}, networks \citep[e.g.][]{wang2021optimal,padilla2022change} and nonparametric models \citep[e.g.][]{padilla2021optimal, madrid2024change}, to name but a few. 

In the field of FDA, problems of change point analysis for the covariance functions have also received extensive attention, particularly in testing, localisation, and inference. For example, \citet{jaruvskova2013testing}, \citet{aue2020structural} and \citet{dette2021detect} study change point testing problems when there are changes in the eigensystems of the covariance operators; \citet{dette2022detecting} propose testing procedures to detect changes when jump sizes are measured in the sup-norm; \citet{stoehr2021detecting} localise change points with dimension reduction techniques when there is a shift in the covariance functions under weak dependence assumptions. Inference results on functional covariance changes for fully observed functional data can be found in \citet{horvath2022change} and \citet{jiao2023break}. However, to the best of our knowledge, most prior works focus only on single change point settings. Existing methods and theoretical results rely heavily on fully observed functional data without measurement errors, which may not be realistic in practice. Multiple change point localisation problems for partially observed fragmented functional data remain an open area for investigation, and inference results for change point estimators are yet to be explored. 

\subsection{List of contributions}
We summarise the contributions of this paper as follows.
\begin{itemize}
    \item To the best of our knowledge, this is the first study focusing on the change point analysis in the fragmented functional data. We account for the most general model where we allow model parameters to change with $n$, including the number of change points $K$, the smallest $L^2$-distance between two consecutive change points $\Delta$, the smallest difference between two consecutive covariance functions $\kappa$, and the size of sampling grids $m$. We propose the fragmented functional dynamic programming (FFDP) algorithm detailed in \Cref{DP}, which could be solved in polynomial time. 

    \item Under the regularity conditions in \Cref{section_theory}, we demonstrate that the initial estimators $\{\widehat{\eta}_\ell\}_{\ell=1}^{\widehat{K}}$ output by FFDP are consistent in localising multiple change points. Based on the consistent estimators $\{\widehat{\eta}_\ell\}_{\ell=1}^{\widehat{K}}$, we proceed with an additional step of local refinement, and derive the limiting distributions of the refined estimators $\{\Tilde{\eta}_\ell\}_{\ell=1}^{\widehat{K}}$ in \Cref{t_inference}. Depending on whether the jump size vanishes as $n$ diverges, the limiting distributions are shown to have two different non-degenerate regimes. This result is the first time seen for fragmented functional data, and it echos existing literature on fully observed settings \citep[e.g.][]{aue2018detecting,horvath2022change,jiao2023break}. 

    \item A byproduct of our theoretical analysis is a new non-asymptotic result on the estimation error of the covariance functions, detailed in \Cref{t_estimation}. By introducing a new identifiability condition, our result incorporates the effect of grid sizes within each fragment, which is novel in the fragmented FDA literature and holds independent interest.  

    \item  Results of extensive numerical experiments on both simulated and real data are presented in \Cref{section_numerical}. These results not only further support our theoretical findings but also demonstrate the benefits of our algorithm over current methods and highlight its practicality.
\end{itemize}

\subsection{Notation and organisation} \label{section_notation}
Throughout the paper, we adopt the following notation, some of them are only used in the Appendix. For any positive integer $a$, denote $[a] = \{1, \ldots, a\}$. For any set $\mathcal{S}$, denote $|\mathcal{S}|$ the cardinality of $\mathcal{S}$. For any $a,b \in \mathbb{R}$, let $\lceil a \rceil$ be the smallest integer greater than or equal to $a$, $\lfloor a \rfloor$ be the greatest integer less than or equal to $a$, and $a \vee b = \max\{a,b\}$. For any vectors $v, u \in \mathbb{R}^p$, let $\|v\|_2$ and $\|v\|_{\op}$ be its $\ell_2$- and operator norms respectively and let $v \odot u \in \mathbb{R}^{p^2}$ be the Kronecker product between $v$ and $u$. For any matrix $A \in \mathbb{R}^{p \times p}$, let $\|A\|_{\F}$ denote the Frobenius norm, $\mathrm{vec}(A) \in \mathbb{R}^{p^2}$ be the vectorisation of $A$ by stacking the columns of $A$ together, $(A)_{ij}$ denote the $(i,j)^{\mathrm{th}}$-entry of A, and write $A \succcurlyeq 0$ if $A$ is positive semidefinite.

Let $L^2([0,1])$ be the class of functions $f:\,[0,1] \rightarrow \mathbb{R}$ such that $f$ is square-integrable, i.e. $\{\int_{0}^1 f^2(t)\,\mathrm{d}t\}^{1/2}<\infty$. For any function $g:\, [0,1]^2 \rightarrow \mathbb{R}$, let $\|g\|_{L^2} = \{\int_{0}^1 \int_{0}^1 g^2(s,t) \,\mathrm{d}s\,\mathrm{d}t\}^{1/2}$ be its $L^2$-norm, and denote $L^2([0,1]^2)$ the class of bivariate functions $g:[0,1]^2 \rightarrow \mathbb{R}$ such that $g$ is square-integrable, i.e.~$\|g\|_{L^2} < \infty$.

For a deterministic or random $\mathbb{R}$-valued sequence $a_n$, write that a sequence of random variables $X_n = O_p(a_n)$ if $\lim_{M \rightarrow \infty}\lim\sup_n \mathbb{P}(|X_n|\geq M a_n)=0$, and write $X_n = o_p(a_n)$ if $\lim\sup_n \mathbb{P}(|X_n|\geq M a_n)=0$ for all $M>0$. For two deterministic or random $\mathbb{R}$-valued sequences $a_n,b_n>0$, write $a_n \gg b_n$ if $a_n/b_n \rightarrow \infty$ as $n \rightarrow \infty$. Write $a_n \lesssim b_n$, $a_n \gtrsim b_n$, and $a_n \asymp b_n$ if $a_n/b_n \leq C$, $b_n/a_n \leq C$, and $c \leq a_n/b_n \leq C$ for all  $n \geq 1$, where $c, C >0$ are absolute constants. For an $\mathbb{R}$-valued random variable $X$ and $k \in (0,1)\cup \{1,2\}$, let $\|X\|_{\psi_k}$ denote the Orlicz-$\psi_k$ norm, i.e.~$\|X\|_{\psi_k} = \inf \{t > 0: \mathbb{E}[\exp(\{|X|/t\}^k)]\leq 2\}$. The convergences in distribution and in probability are denoted by $\stackrel{\mathcal{D}}{\rightarrow}$ and $\stackrel{p}{\rightarrow}$ respectively.

The rest of the paper is organised as follows. \Cref{section_algorithm} consists of our two-step estimation procedure for change point localisation. Model assumptions and our main theoretical results are collected in \Cref{section_theory}. The consistency of the initial estimators and the limiting distributions of the refined estimators are detailed in Sections~\ref{section_consistency} and \ref{section_inference} respectively, followed by an intermediate result on the covariance function estimation errors in \Cref{section_byproduct}. Extensive numerical experiments are presented in \Cref{section_numerical}.  The paper is concluded with final discussions in \Cref{section_conclusion}, with proofs and technical details deferred to the Appendix.

\section{The fragmented functional data change point estimators and their refinement} \label{section_algorithm}
To achieve the goal of obtaining consistent change point estimators with tractable limiting distributions, we adopt a two-step procedure where we obtain a preliminary estimator in the first step and refine it using local information in the second step. Such a two-step procedure could be regularly seen in the recent literature \citep[e.g.][]{xu2022change, madrid2024change}. 

In the first step, we propose a fragmented functional dynamic programming (FFDP) algorithm, which is an $\ell_0$-penalised method to obtain the initial estimators. The procedure consists of two layers: estimation of covariance functions and estimation of change points. Motivated by the penalised least square estimators introduced in \citet{lin2021basis} and \citet{delaigle2021estimating}, we estimate the covariance function in the following way. Suppose that $\Phi = \{\phi_i\}_{i=1}^{\infty}$ is a complete orthonormal system of $L^2([0,1])$. Denote the collection of the pairs $\mathcal{O} = \{(2i-1,2i): i \in \{1,\ldots, \lfloor m/2 \rfloor\}\}$ and $\Phi_r(\cdot) = (\phi_1(\cdot), \ldots, \phi_r(\cdot))^\top$.  For any integer interval $I \subseteq [n]$, the estimated covariance function over interval $I$, $\widehat{\Sigma}_I$, is then defined as $\Phi_r^\top\widehat{C}_I\Phi_r$, where $r \in \mathbb{N}_+$ is a pre-specified level of truncation, 
\begin{align}
   \notag
   \widehat{C}_I &= \underset{\substack{C \in \mathbb{R}^{r \times r},\\C \succcurlyeq 0, \; \mathrm{symmetric}}}{\arg \min } \; Q(C)\\
   \label{loss_cov}
   &= \underset{\substack{C \in \mathbb{R}^{r \times r},\\C \succcurlyeq 0,\; \mathrm{symmetric}}}{\arg \min } \; \Bigg\{\frac{1}{|I|\lfloor \frac{m}{2} \rfloor}\sum_{t \in I}\sum_{(j,k) \in \mathcal{O}} \big\{Y_{t,j}Y_{t,k} - \Phi_r^\top(X_{t,j}) C\Phi_r(X_{t,k})\big\}^2 + \frac{\lambda \mathcal{J}(\Sigma_{r,C})}{2\sqrt{|I|\lfloor \frac{m}{2} \rfloor}}\Bigg\},
\end{align}
$\Sigma_{r,C} = \Phi_r^\top C \Phi_r$, $\lambda >0$ is a tuning parameter, and $\mathcal{J}(\cdot)$ is the roughness penalty defined as
\begin{align*}
    \mathcal{J}(\Sigma) =\int_0^1 \int_0^1 \Big(\frac{\partial^2 \Sigma}{\partial s^2}\Big)^2+ 2\Big(\frac{\partial^2 \Sigma}{\partial s \partial t}\Big )^2 + \Big(\frac{\partial^2 \Sigma}{\partial t^2}\Big)^2 \,\mathrm{d}s\,\mathrm{d}t.
\end{align*}

Given the above framework, we now present the $\ell_0$-penalised algorithm to obtain the initial change point estimators. Let $\mathcal{P}$ be an interval partition of $[n]$ into $K_{\mathcal{P}}$ disjoint intervals, i.e.
\begin{align*}
    \mathcal{P} = \Big\{\{1, \ldots, i_1\},\{i_1+1, \ldots, i_2\}, \ldots, \{i_{K_{\mathcal{P}}-1}+1, \ldots, i_{K_{\mathcal{P}}}-1\}\Big\},
\end{align*}
for some integers $1 < i_1 < \cdots < i_{K_{\mathcal{P}}} = n+1$ and $K_{\mathcal{P}} \geq 1$. For a tuning parameter $\xi>0$, let 
\begin{align} \label{loss_l0}
    \widehat{\mathcal{P}}  =\underset{\mathcal{P}}{\operatorname{argmin}}\Big\{\sum_{I \in \mathcal{P}} H(\widehat{C
 }_I, I)+\xi|\mathcal{P}|\Big\},
\end{align}
where $|\mathcal{P}| = K_{\mathcal{P}}$, the minimisation is taken over all possible interval partitions of $\{1, \ldots, n\}$, and the loss function $H(\cdot, \cdot)$ is defined as
\begin{align} \label{loss_localisation}
     H(C, I)= \begin{cases}
         \sum_{t \in I} \sum_{(j,k) \in \mathcal{O}} \big\{Y_{t,j}Y_{t,k} -\Phi_r^\top(X_{t,j}) C\Phi_r(X_{t,k})\big\}^2, \quad & |I| \geq \frac{\xi}{m},\\
         0, \quad & \text{otherwise}.
     \end{cases}
\end{align}
The change point estimators induced by the solution to \eqref{loss_l0} are simply obtained by taking all the right endpoints of the intervals $I \in \widehat{\mathcal{P}}$, except $n$. Even though the optimisation problem \eqref{loss_l0} is non-convex due to the presence of the penalty term $|\mathcal{P}|$, known as the minimal partition problem, it can still be solved in polynomial time by the dynamic programming approach detailed in \Cref{DP} with an overall computational cost of order $O\big(n^2 \mathcal{T}(n)\big)$, where $\mathcal{T}(n)$ denotes the computational cost of computing $\widehat{C}_I$ with $|I| = n$ \citep[see e.g.~Algorithm 1 in][]{friedrich2008complexity}.

\begin{algorithm}
    \caption{Fragmented functional dynamic programming. FFDP ($\{(X_{t,j},  Y_{t,j})\}_{t=1,j=1}^{n,m}, \Phi, r, \lambda, \xi$)}\label{DP}
    \begin{algorithmic}
        \Require Data $\{(X_{t,j},  Y_{t,j})\}_{t=1,j=1}^{n,m}$, complete orthonormal system $\Phi$, level of truncation $r\in \mathbb{Z}_{+}$, and tuning parameter $\lambda, \xi> 0$
        \State $(\mathcal{B}, \mathfrak{p}, M, k) \gets (\varnothing, \mathbf{1}_{n}, (-\xi, \infty, \infty, \ldots, \infty),n)$  \Comment{$M \in \mathbb{R}^{n+1}$ and $\mathbf{1}_{n} =(1,\ldots,1) \in \mathbb{R}^{n}$}
        \For{$e \in \{1, \ldots, n\}$}
        \For{$s \in \{1, \ldots, e\}$}
        \If{$|e-s+1| \geq \xi/m$}
        \State $L \gets M[s] + \xi + H(\widehat{C}_{[s,e]},[s,e])$ \Comment{$M[s]$ denotes the $s^{\mathrm{th}}$ entry of $M$}
        \Else 
        \State $L \gets M[s] + \xi$
        \EndIf
        \If{$L < M[e+1]$}
        \State $M[e+1] \gets L$, $\mathfrak{p}[e] \gets s$
        \EndIf
        \EndFor
        \EndFor
        \While{$\mathfrak{p}[k] > 1$}
        \State $b \gets \mathfrak{p}[k]-1$; $\mathcal{B} = \mathcal{B} \cup b$; $k \gets b$
        \EndWhile
        \Ensure $(\{\widehat{\eta}_\ell\}_{\ell=0}^{\widehat{K}+1} = \{0\} \cup \mathcal{B} \cup \{n\}, \widehat{K}= |\mathcal{B}|, \{\widehat{C}_{\ell}\}_{\ell=1}^{\widehat{K}+1})$\Comment{Set $\widehat{\eta}_{0} = 0$, and $\widehat{\eta}_{\widehat{K}+1} = n$}
    \end{algorithmic}
\end{algorithm}

Denote our preliminary estimators obtained from FFDP by $\{\widehat{\eta}_\ell\}_{\ell=1}^{\widehat{K}}$, and let the set of the corresponding estimators \eqref{loss_cov}  of covariance functions be $\{\widehat{C}_{k}\}_{k=1}^{\widehat{K}+1}$. Our final estimators are obtained via a local refinement step. For any $\ell \in \{1, \ldots, \widehat{K}\}$, let
\begin{align} \label{refine_interval}
    s_\ell = 3\widehat{\eta}_{\ell-1}/4+ \widehat{\eta}_\ell/4 \quad \text{and} \quad e_\ell = \widehat{\eta}_\ell/4+3\widehat{\eta}_{\ell+1}/4,
\end{align}
then the final estimators are proposed as
\begin{align}
    \label{loss_refine}
    \Tilde{\eta}_\ell &= \underset{s_\ell <\eta < e_\ell}{\arg \min} \; \mathcal{L}_\ell(\eta) \\ 
    \notag
    &= \underset{s_\ell <\eta < e_\ell}{\arg \min} \; \Bigg\{\sum_{t = s_\ell+1}^{\eta} \sum_{(j,k) \in \mathcal{O}} \big\{Y_{t,j}Y_{t,k} -\Phi_{r}^\top(X_{t,j})\widehat{C}_{\ell}\Phi_{r}(X_{t,k})\big\}^2 \\
    \notag
    & \hspace{2.5cm}+ \sum_{t = \eta+1}^{e_\ell}\sum_{(j,k) \in \mathcal{O}} \big\{Y_{t,j}Y_{t,k} -\Phi_{r}^\top(X_{t,j})\widehat{C}_{\ell+1}\Phi_{r}(X_{t,k})\big\}^2\Bigg\}.
\end{align}
Note that the choice of the constants $1/4$ and $3/4$ in \eqref{refine_interval} is arbitrary, and the intuition behind the step of local refinement is that, provided the initial estimators are good enough, i.e.~the localisation errors are a small fraction of the minimal spacing $\Delta$, each constructed interval $(s_\ell, e_\ell)$ contains one and only one true change point $\eta_\ell$ with large probability. Thus, performing an extra step on such an interval will result in a sharper localisation rate.

\section{Consistent estimation and limiting distributions} \label{section_theory}
In this section, we provide the theoretical guarantees for the change point estimators output by FFDP together with their limiting distributions after refinement. We begin by formulating the detailed setup of Model~\eqref{model_obs}. 
\begin{assumption}[Model] \label{a_model}
    The observed data $\{(X_{t,j},  Y_{t,j})\}_{t=1,j=1}^{n,m} \subseteq [0,1] \times \mathbb{R}$ are generated by Model~\eqref{model_obs} satisfying \eqref{model_cp}. 
    \begin{enumerate}[label=\textbf{\alph*.}]
        \item \label{a_model_frag}(Observed fragments). At the time $t \in [n]$, the noisy functional data are only observed on a subject-specific random interval $O_{t} = [a_{t}, a_{t}+\delta]$, where $\{a_{t}\}_{t=1}^{n}$ are independently and identically distributed (i.i.d.) Uniform random variables over the interval $[0,1-\delta]$, independent of $n$ random functions $\{f_t\}_{t=1}^n$, and $\delta \in (0,1]$ is an absolute constant.

        \item \label{a_model_grid}(Discrete grids). For any $t \in [n]$, and $j \in [m]$, let $X_{t,j} = a_t + U_{t,j}$, where $a_t$ is sampled under the \Cref{a_model}\ref{a_model_frag}, and $\{U_{t,j}\}_{t=1,j=1}^{n,m} \stackrel{\mathrm{i.i.d.}}{\sim} \mathrm{Uniform}[0,\delta]$. We further assume that $\{a_t\}_{t=1}^n$ and $\{U_{t,j}\}_{t=1,j=1}^{n,m}$ are mutually independent.

        \item \label{a_model_error} (Measurement error). Assume that $\{\varepsilon_{t,j}\}_{t=1,j=1}^{n,m}$ is a collection of i.i.d.~random variables with marginal mean zero and  $\|\varepsilon_{t,j}\|_{\psi_2} \leq C_{\varepsilon}$, where $C_\varepsilon >0$ is an absolute constant. 

        \item \label{a_model_function} (Random functions). Assume that the random functions are uniformly sub-Gaussian, i.e.~for any $t \in [n]$ and $x \in [0,1]$, it holds that $\|f_t(x)\|_{\psi_2} \leq C_{f}$, where $C_f >0$ is an absolute constant.

        \item \label{a_model_identifiable} (Covariance functions). For any $t \in [n]$, we assume that the covariance function $\Sigma^*_t$ is continuous and belongs to a $\mathcal{T}_{u, \delta}$-identifiable family $\mathcal{C}$ over $[0,1]^2$, which is defined as the class of continuous functions such that for any two functions $f_1, f_2 \in \mathcal{C}$, if $f_1(s_1,s_2) = f_2(s_1,s_2)$ for all $(s_1,s_2) \in [u, u+\delta]^2$ where $u \in [0, 1-\delta]$ is a fixed point, then $f_1(s_1,s_2) = f_2(s_1,s_2)$ for all $(s_1,s_2) \in [0,1]^2$.
    \end{enumerate}
\end{assumption}

Assumptions \ref{a_model}\ref{a_model_frag}~and \ref{a_model}\ref{a_model_grid}~provide details of the sampling mechanism in Model \eqref{model_obs}. Here, we assume that the starting point of each fragment is sampled uniformly over the interval $[0,1-\delta]$, and at each time $t$, conditioning on the starting point of the fragment $a_t$, the $m$ observation grids are sampled uniformly and independently over $[a_t, a_t+\delta]$. Several things could be relaxed in the setup, including the sampling distribution, homogeneous design among functions and random design assumptions.  

It is worth mentioning that under Model \eqref{model_obs}, for brevity, the mean function is assumed without loss of generality to be $0$. We remark that this assumption could be easily relaxed to accommodate unknown mean functions. To achieve this, note that under Assumptions \ref{a_model}\ref{a_model_frag}~and \ref{a_model}\ref{a_model_grid}, even though only short fragments of the functional data are observed, smoothing techniques such as \citet{yao2005functional} could still be applied to mean estimation since it holds that $\inf_{x \in (0,1)} f_X(x) >0$, where $f_X$ is the unconditional density of the observation grids $X$. Moreover, methods specifically tailored to the problem of mean function estimation for fragmented functional data could also be found in the existing literature \citep[e.g.][]{lin2021basis,lin2022mean}. The effect of estimating the mean function in establishing the convergence rate of our change point estimators is negligible compared to the effect introduced by the covariance estimation errors. 

In Assumptions \ref{a_model}\ref{a_model_error}~and~\ref{a_model}\ref{a_model_function}, the tail behaviour of both the measurement errors and the random functions are regulated. Similar sub-Gaussian assumptions on the tail behaviours could also be seen in the other literature on FDA \citep[e.g.][]{guo2023sparse, cai2024transfer}, and in our work, such conditions also help us to control the moments and unlock the functional central limit theorem in the inference step. Compared with the assumptions on the moments of random functions and measurement errors in \citet{lin2021basis}, our assumptions on the tail behaviours are stronger to derive non-asymptotic results. Following a similar treatment by using different concentration inequalities (e.g.~\Cref{t_concentration_subWeibull} in the Appendix), these tail behaviours could be relaxed to capture the scenarios with heavy tail distributions (e.g. sub-Weibull).

\Cref{a_model}\ref{a_model_identifiable}~is the identifiability assumption necessary to address the extrapolation nature of the covariance estimation problem in fragmented FDA aiming for a unique completion. For example, real analytic functions satisfy such assumption due to the analytic continuity \citep[Corollary 1.2.6 in][]{krantz2002primer}. A similar type of condition is commonly seen in the existing literature. For instance, \citet{descary2019recovering} assume that the covariance function possesses a finite number of real and analytic eigenfunctions; \citet{delaigle2021estimating} propose the concept of linear predictability, which assumes the values of the process $f_t$ on a sub-interval could be linearly predicted by solely using the information from the same process within another sub-interval; and \citet{lin2022mean} investigate the scenarios where the true covariance function is a member of a pre-specified semiparametric family. 

Our identifiability assumption here borrows the idea from the $\mathcal{T}_\delta$-identifiability assumption introduced in \citet{lin2021basis}. They assume if two functions $f_1, f_2 \in \mathcal{C}$ align in the diagonal band $\mathcal{T}_\delta = \{(s,t)\in [0,1]^2 : |s-t| \leq \delta\}$, then $f_1(s,t) = f_2(s,t)$ for all $(s,t) \in [0,1]^2$. It is worth noting that our assumption here is stronger for the sake of deriving the non-asymptotic result.

\subsection{Consistency of the preliminary estimators} \label{section_consistency} 

In this subsection, for the estimators output by our FFDP algorithm detailed in \Cref{DP}, we present high probability upper bounds on their
the localisation errors and demonstrate their consistency.

As discussed in \citet{lin2021basis}, the choice of the orthonormal system $\Phi$ in \eqref{loss_cov} and the true covariance functions' functional class result in varying performances of our estimators. The effect of the misspecification is captured by the approximation error $\rho_{r,t}$.  In this paper, we only focus on a special case and claim that theoretical justifications under other possible settings can also be established similarly with minor modifications.

\begin{assumption}[Covariance function] \label{a_localisation_cov}
In addition to \Cref{a_model}, we further assume that
for any $t \in [n]$, $\Sigma^*_t$ is periodic and belongs to a Sobolev space with parameter $q\geq 1$, $\mathcal{W}^{q}([0,1]^2)$, where
\begin{align*}
    \mathcal{W}^{q}([0,1]^2) = \{f \in L^2([0,1]^2): \ &\text{for each non-negative double index} \; \alpha \ \text{such that} \  |\alpha| \leq q, \\
    &D^{\alpha}f \in L^2([0,1]^2)\},
\end{align*}      
where for any $g:[0,1]^2 \rightarrow \mathbb{R}$ and any double index $\alpha = (\alpha_1, \alpha_2)$ of nonnegative integers, denote $|\alpha| = \alpha_1 + \alpha_2$, and $D^\alpha g$ is the weak derivative.
\end{assumption}

\Cref{a_localisation_cov} restricts our analysis to a special periodic Sobolev space defined in A.11.d in \citet{canuto2007spectral}. This is also one of the examples considered in \citet{lin2021basis}, Example 5 therein chooses~$\Phi$ as the Fourier basis, that
\begin{align}\label{eq_fourier}
    \phi_1(t) =1, \; \phi_{2k}(t) = \cos(2k\pi t),\; \phi_{2k+1}(t) =\sin(2k\pi t), \; \text{for} \; k \geq 1.
\end{align}
For $r \in \mathbb{N}_+$, using the first $r$ Fourier basis functions, the approximation error $\rho_{r,t}$ satisfies that $\rho_{r,t} = \|\Sigma^*_t - \Phi_r^\top C^*_{r,t}\Phi_r\|_{L^2} \leq C_{\rho}r^{-q}$, where $C^*_{r,t} \in \mathbb{R}^{r \times r}$ is a symmetric matrix with entries 
\[
    (C^*_{r,t})_{hl}= \int_{0}^1 \int_{0}^1 \Sigma_t^*(s,t)\phi_h(s)\phi_l(t) \,\mathrm{d}s\,\mathrm{d}t, \quad h,l \in [r],
\]
and $C_\rho >0$ is an absolute constant. We then detail our assumption on signal strength.
\begin{assumption} \leavevmode \label{a_localisation_jump} 
    \begin{enumerate}[label=\textbf{\alph*.}]
        \item \label{a_localisation_jump_snr} (Signal-to-noise ratio) Assume that there exists a sufficiently large absolute constant $C_{\mathrm{snr}} >0$ such that 
        \begin{align*}
            \Delta \kappa^2 \geq C_{\mathrm{snr}}\alpha_n K\Big\{\frac{r_{\opt}^4\log^2(n)}{\delta^2\zeta_{\delta}^2} \vee \frac{r_{\opt}^{6}\log(n)}{\delta^2\zeta_{\delta}^2 m}\Big\},
        \end{align*}
        where $\{\alpha_n\}_{n\in \mathbb{N}_{+}} > 0$ is any diverging sequence,
        \begin{align} \label{eq_r_opt}
            r_{\opt} = \begin{cases}
           \Big\lceil C_r\Big(\frac{nm}{\delta^4 \log(n)}\Big)^{\frac{1}{4+2q}}\Big\rceil, \quad & m \leq C_m n^{\frac{1}{1+q}}\delta^{-\frac{4}{1+q}}\log^{-\frac{3+q}{1+q}}(n),\\
           \Big\lceil C_r'\Big(\frac{n}{\delta^4 \log^2(n)}\Big)^{\frac{1}{2+2q}}\Big\rceil, &\text{otherwise},
       \end{cases}
        \end{align}
        and $C_r, C_r', C_m >0$ are absolute constants.

        \item \label{a_localisation_jump_lb} (Jump size) Assume that there exists an absolute constant $C_{\mathcal{K}} >0$ such that $\kappa \geq C_{\mathcal{K}} r_{\opt}^{-q}$, with $r_{\opt}$ in \eqref{eq_r_opt}.
    \end{enumerate}
\end{assumption}
It has been commonly seen in the existing literature \citep[e.g.][]{wang_univariate, verzelen2023optimal} that $\Delta\kappa^2$ is essentially the quantity capturing the fundamental difficulty of offline change point localisation problems.  Conditions in the form of \Cref{a_localisation_jump}\ref{a_localisation_jump_snr}~are ubiquitous in the change point literature and ensure theoretical guarantees of change point estimation. 

In \Cref{a_localisation_jump}\ref{a_localisation_jump_lb}, we require that $\kappa^2$ is lower bounded by the approximation error $C_{\mathcal{K}} r_{\opt}^{-q}$ when the level of truncation $r$ is chosen optimally. Intuitively speaking, instead of targeting the differences between two covariance functions, our proposed algorithm searches for the change points where the differences in the $r_{\opt} \times r_{\opt}$ coefficient matrices obtained by projecting the covariance function along the direction of orthonormal basis, are maximised. The minimum jump size, consequently, should not be too small to avoid being dominated by the information loss after truncating using the first $r_{\opt}$ basis.  The information loss is therefore approximately of the order of $r_{\opt}^{-q}$ under \Cref{a_localisation_cov} when $\Phi$ is chosen as the Fourier basis.

We are now ready to present our first theorem, showing the consistency of FFDP.
\begin{theorem} \label{t_localisation}
    Let the data $\{(X_{t,j},  Y_{t,j})\}_{t=1,j=1}^{n,m}$ be generated from Model \eqref{model_obs} satisfying Assumptions \ref{a_model}, \ref{a_localisation_cov} and \ref{a_localisation_jump}. Let $\{\widehat{\eta}_\ell\}_{\ell=1}^{\widehat{K}}$ be the estimated change points output by \textnormal{FFDP} with $\Phi$ being the set of Fourier basis functions defined in \eqref{eq_fourier},  
    \begin{align}\label{localisation_r}
         r=\begin{cases}
           \Big\lceil C_r\Big(\frac{nm}{\delta^4 \log(n)}\Big)^{\frac{1}{4+2q}} \Big\rceil, \quad & \text{when} \quad m \leq C_m n^{\frac{1}{1+q}}\delta^{-\frac{4}{1+q}}\log^{-\frac{3+q}{1+q}}(n),\\
           \Big\lceil C_r'\Big(\frac{n}{\delta^4 \log^2(n)}\Big)^{\frac{1}{2+2q}}\Big\rceil, &\text{otherwise},
       \end{cases}
    \end{align}
    $\lambda = C_{\lambda}\big\{r^{-11/2}\sqrt{m\log(n)} \vee r^{-9/2}\sqrt{\log(n)}\big\}$ and $\xi = C_{\xi}K\big\{\frac{r^4 m\log^2(n)}{\delta^2\zeta_{\delta}} \vee \frac{r^6\log(n)}{\delta^2 \zeta_{\delta}}\big\}$, with $C_r$, $C_r'$, $C_m$, $C_\lambda$, $C_{\xi} >0$ being absolute constants, and $\zeta_{\delta} >0$ being a constant depending on $\delta$ and $\Sigma^*$. It holds with probability at least $1-3n^{-3}$ that 
    \begin{align*}
        \widehat{K} = K, \; \mathrm{and} \; \max_{\ell=1, \dots, K} \kappa_{\ell}^2|\widehat{\eta}_\ell -\eta_\ell| \leq C_{\epsilon}K\Big\{\frac{r^4\log^2(n)}{\delta^2\zeta_{\delta}^2} \vee \frac{r^6\log(n)}{\delta^2 \zeta_{\delta}^2m}\Big\},
    \end{align*}
    where $C_{\varepsilon} >0$ is an absolute constant.
\end{theorem}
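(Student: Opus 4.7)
The plan is to condition on a high-probability event on which the covariance estimator $\widehat{C}_I$ in \eqref{loss_cov} satisfies the estimation error bound of \Cref{t_estimation} simultaneously over every candidate interval $I \subseteq [n]$ with $|I| \geq \xi/m$ that the dynamic program could visit. Sub-Gaussianity of the $f_t$'s and the $\varepsilon_{t,j}$'s gives sub-exponential tails for the products $Y_{t,j}Y_{t,k}$, and a union bound over the $O(n^2)$ such intervals, with the prescribed truncation level $r$ from \eqref{localisation_r}, makes this event hold with probability at least $1 - 3n^{-3}$. On this event $\|\widehat{\Sigma}_I - \bar\Sigma_I\|_{L^2}^2$ is controlled by $r^4 \log^2(n)/(\delta^2 \zeta_\delta^2 |I|) \vee r^6 \log(n)/(\delta^2 \zeta_\delta^2 m |I|) + r^{-2q}$, where $\bar\Sigma_I = |I|^{-1}\sum_{t \in I} \Sigma^*_t$ is the effective target on $I$.

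From here I would run the standard basic inequality for $\ell_0$-penalised segmentation, comparing $\widehat{\mathcal{P}}$ with the oracle partition $\mathcal{P}^* = \{(\eta_{\ell-1},\eta_\ell]\}_{\ell=1}^{K+1}$, and expanding each $H(\widehat{C}_I, I)$ into a signal piece, a cross piece, and a pure-noise piece. The cross and noise pieces are absorbed by the concentration controls already encoded in the good event, while the signal piece reduces, via the identifiability condition \Cref{a_model}\ref{a_model_identifiable} and the equivalence (up to the factor $\zeta_\delta$) between the diagonal-band $L^2$ inner product and the full $L^2$ inner product, to $\sum_{t \in I}\|\Sigma^*_t - \widehat{\Sigma}_I\|_{L^2}^2$. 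To rule out $\widehat{K} < K$, a pigeonhole argument produces some $I \in \widehat{\mathcal{P}}$ straddling two true change points, contributing a signal of order $\Delta \kappa^2$ that \Cref{a_localisation_jump}\ref{a_localisation_jump_snr} makes too large to be absorbed by the penalty; to rule out $\widehat{K} > K$, the calibration $\xi \asymp K\{r^4 m \log^2(n)/(\delta^2 \zeta_\delta) \vee r^6 \log(n)/(\delta^2 \zeta_\delta)\}$ is strictly larger than any saving an extra split can produce on a no-change interval.

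Once $\widehat{K} = K$ is established, each $\widehat{\eta}_\ell$ is matched to a unique $\eta_\ell$, and the localisation rate is obtained by repeating the basic inequality on the three-interval window between $\widehat{\eta}_{\ell-1}, \widehat{\eta}_\ell, \widehat{\eta}_{\ell+1}$ and the corresponding true change points: the signal contribution of the misaligned gap $(\eta_\ell \wedge \widehat{\eta}_\ell, \eta_\ell \vee \widehat{\eta}_\ell]$ is at least $\kappa_\ell^2 |\widehat{\eta}_\ell - \eta_\ell|$, minus an $r^{-q}$ approximation error that \Cref{a_localisation_jump}\ref{a_localisation_jump_lb} guarantees is dominated by $\kappa_\ell^2$. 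The right-hand side is bounded by the uniform covariance-estimation error times $K$, which yields the stated bound after taking a maximum over $\ell$.

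The main obstacle is \Cref{t_estimation} itself rather than the subsequent change-point bookkeeping. The quadratic objective $Q(\cdot)$ in \eqref{loss_cov} is only evaluated on design points living in the diagonal band of width $\delta$, so its curvature in the PSD cone depends on the constant $\zeta_\delta > 0$, which in turn requires \Cref{a_model}\ref{a_model_identifiable} to turn diagonal-band agreement into global agreement. On top of this, when $I$ straddles change points the effective target $\bar\Sigma_I$ is no longer itself a single element of the Sobolev class of \Cref{a_localisation_cov}, so the analysis must split carefully between the approximation error contained in the first $r$ Fourier coefficients of each $\Sigma^*_t$ and the stochastic error from the $m$-grid concentration of $Y_{t,j}Y_{t,k}$. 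Tracking the interplay of $r$, $\lambda$, $m$, and $\delta$ through this decomposition is what produces the two-regime rate in \eqref{localisation_r}, and it is the piece that requires the most technical care.
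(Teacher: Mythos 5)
Your proposal follows essentially the same route as the paper's proof: a uniform high-probability covariance-estimation bound over all candidate intervals (the paper's \Cref{t_estimation} together with \Cref{l_loss_long} and \Cref{l_loss_short}), the restricted-eigenvalue/identifiability step converting diagonal-band quadratic forms into full Frobenius norms via $\zeta_\delta$ (\Cref{l_rec}), and then the $\ell_0$-penalised basic-inequality bookkeeping with a signal/cross/noise decomposition to obtain the localisation rate by contradiction and to pin down $\widehat{K}=K$ (\Cref{prop_localisation} and \Cref{prop_consistent_number}). The only elision worth noting is that the paper must additionally case-split on intervals shorter than $\xi/m$, where $H$ is set to zero by construction, but your sketch is otherwise the paper's argument.
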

From \Cref{t_localisation}, we have that with properly chosen values of the tuning parameters and with probability tending to $1$ as $n\rightarrow \infty$,
\begin{align*}
     \max_{\ell=1, \dots, K} \frac{|\widehat{\eta}_\ell- \eta_\ell|}{\Delta} \leq \frac{C_{\varepsilon}K}{\Delta\kappa^2}\Big\{\frac{r^4\log^2(n)}{\delta^2\zeta_{\delta}^2} \vee \frac{r^6\log(n)}{\delta^2 \zeta_{\delta}^2m}\Big\} \leq \frac{C_\varepsilon}{C_{\mathrm{SNR}}\alpha_n} \rightarrow 0,
\end{align*}
where the second inequality follows from \Cref{a_localisation_jump}\ref{a_localisation_jump_snr}, hence the estimators are consistent. 

To further understand \Cref{t_localisation}, we discuss it in more detail in the following, focusing on the tuning parameter selection and the effects of model parameters.

\noindent \textbf{Choices of the tuning parameters.} Apart from the selected complete orthonormal system $\Phi$, our FFDP algorithm has three tuning parameters: $(1)$ roughness penalty $\lambda$, $(2)$ level of truncation~$r$ and $(3)$ penalty on the level of partition and interval cutoff $\xi$.

The parameter $\lambda$ affects the overall performances of the covariance function estimators defined in \eqref{loss_cov}. It controls the smoothness of the estimator and in particular, a proper choice of $\lambda$ helps to prevent over-fitting. The parameter $r$ is the selected number of leading terms and controls the approximation quality. Compared with the choices of $\lambda$ and $r$ in \Cref{t_estimation}, different values of the tuning parameters are used here, since in \Cref{t_localisation} a range of interval lengths are considered whereas a single interval is considered in \Cref{t_estimation}. We choose unified tuning parameter values in \Cref{t_localisation} to cover all intervals considered and to simplify arguments.

The tuning parameter $\xi$ prevents over-fitting while searching for the optimal interval partition. The order of $\xi$ can be intuitively explained as a high probability upper bound on the absolute difference between $H\big(\widehat{C}_{I_1}, I_1\big) + H\big(\widehat{C}_{I_2}, I_2\big)$ and $H\big(\widehat{C}_{I}, I\big)$, where $I_1$ and $I_2$ are two non-overlapping and adjacent intervals such that $I = I_1 \cup I_2$ containing no true change point. While in \eqref{loss_localisation}, $\xi$ also restricts interval splitting to the intervals with length at least $\xi/m$, but at the same time implies that the change point localisation errors are at least of order $\xi/m$. However, this still leads to the consistency since by \Cref{a_localisation_jump}\ref{a_localisation_jump_snr} and the fact that $\kappa^2 \lesssim 1$, it holds that $\Delta \gtrsim \alpha_n\xi/(m\zeta_{\delta}) \gg \xi/m$. The choice of $\xi$ also serves as an upper bound on the optimal choices of $r$ (see \Cref{r_choice_r} in the Appendix).

\noindent \textbf{Effects of $n$, $m$ and $\delta$.}
In our setup in \Cref{a_model}, a total of $n$ curves are observed at $m$ discrete points across a fragment of length $\delta$. We allow a full range of choices of $m \in \mathbb{Z}_+\setminus\{1\}$, reflecting scenarios ranging from sparse to dense observations over fragments. For a better illustration of the effects of $n$, $m$ and $\delta$ on the localisation errors, by plugging in the choices of $r$ in \eqref{localisation_r} into the localisation errors in \Cref{t_localisation}, we have the following holds with high probability.
\begin{itemize}
    \item When $m \lesssim n^{\frac{1}{1+q}}\delta^{-\frac{4}{1+q}}\log^{-\frac{3+q}{1+q}}(n)$, we select $r \asymp \Big(\frac{nm}{\delta^4 \log(n)}\Big)^{\frac{1}{4+2q}}$, and the localisation error is
        \begin{align*}
            |\widehat{\eta}_\ell -\eta_\ell| \leq \frac{C_{\varepsilon}Kn^{\frac{3}{2+q}}m^{-\frac{-1+q}{2+q}}\log^{\frac{-1+q}{2+q}}(n)\zeta_{\delta}^{-2}\delta^{-\frac{16+2q}{2+q}}}{\kappa^2}.
        \end{align*}

    \item When $m \gtrsim n^{\frac{1}{1+q}}\delta^{-\frac{4}{1+q}}\log^{-\frac{3+q}{1+q}}(n)$, we select $r \asymp \Big(\frac{n}{\delta^4 \log^2(n)}\Big)^{\frac{1}{2+2q}}$, and the localisation error is
        \begin{align*}
             |\widehat{\eta}_\ell -\eta_\ell| \leq \frac{C_{\varepsilon}Kn^{\frac{2}{1+q}}\log^{\frac{-2+2q}{1+q}}(n)\zeta_{\delta}^{-2}\delta^{-\frac{10+2q}{1+q}}}{\kappa^2}.
        \end{align*}
\end{itemize}

To further illustrate the above result, it is natural to compare our results with the existing change point literature on functional covariance changes. However, it is worth noting that there are arguably no existing results regarding localisation rates that we can use for direct comparisons. Most of the existing literature, to our best knowledge, focuses on change point testing problems and therefore is not comparable. 

Based on a range of different algorithms, high probability upper bounds on localisation errors can be seen in similar problems under different models, for example, functional mean changes and non-parametric model changes. We list results (up to logarithm and constant factors) here for completeness. For functional mean change problems. \citet{madrid2022change} investigate situations when the mean functions are Holder smooth, $\mathcal{H}_q(L)$ \footnote{For $q>0$, let $\mathcal{H}_q(L)$ be the set of functions $f:\mathbb{R}\rightarrow \mathbb{R}$ such that $f$ is $\lfloor q\rfloor$-times differentiable for all $x\in \mathbb{R}$, and satisfy $|f(x)-f_{x_0}^q(x)|\leq L|x-x_0|^q$ for all $x,x_0 \in \mathbb{R}$, where $f_{x_0}^q$ is its Taylor polynomial of degree $\lfloor q\rfloor$ at $x_0$ and is defined as $f_{x_0}^q(x) = \sum_{s\leq \lfloor q\rfloor}\frac{(x-x_0)^q}{q!}\partial^s f(x_0)$.}, and show that localisation errors are of the order of $(1+~n^{\frac{1}{2q+1}}m^{-\frac{2q}{2q+1}})\kappa^{-2}$; \citet{rice2022consistency} consider the problem under assumptions of lower bounded minimal jump size and $\Delta \geq Cn^{1-\omega}$ for $0 \leq \omega < 1/8$, and reach the result of $|\widehat{\eta}_\ell - \eta_\ell| \lesssim n^{2\omega+1/2}$. The problem concerning non-parametric model changes when the underlying density functions belong to the class of $\mathcal{H}_q(L)$ can be found in \citet{madrid2024change}, in which they prove that with their designed algorithm, it holds that $|\widehat{\eta}_\ell - \eta_\ell| \lesssim \kappa^{-2}n^{\frac{1}{2q+1}}$. Even though the above-mentioned rates are not directly comparable with our results in \Cref{t_localisation}, all of the rates follow the usual form of the localisation error rate in the offline change point literature, which is
\begin{align*}
    |\widehat{\eta}_\ell - \eta_\ell| \lesssim \frac{\text{variance of the noise}}{\kappa^2}.
\end{align*}

In our results in \Cref{t_localisation}, the smoothness parameter of the true covariance functions $q$ directly influences the rates of localisation, with smoother covariance functions leading to better localisation rates. Moreover, we observe that there are two regimes with a boundary at $m 
\asymp n^{\frac{1}{1+q}}\delta^{-\frac{4}{1+q}}\log^{-\frac{3+q}{1+q}}(n)$. When the sampling frequency $m$ is small, that is $m \lesssim n^{\frac{1}{1+q}}\delta^{-\frac{4}{1+q}}\log^{-\frac{3+q}{1+q}}(n)$, the localisation errors depend jointly on the value of $n$ and $m$. Provided that the true covariance functions are smooth enough $(q >1)$, having more observations inside each fragment always contributes to a better localisation rate in the sparse regime. As the sampling frequency increases above the boundary,  a transition to a dense regime occurs, where the localisation rate solely depends on the value of $n$. A similar phenomenon of phase transition is commonly observed in the FDA literature concerning sparsely observed functional data \citep[e.g.][]{cai2011optimal, zhang2016sparse}. However, we conjecture that the rate of phase transition in our work is not optimal and there is still potential for improvement. 

The length of the observed fragment $\delta$ controls the level of missingness of the observed functions, which further leads to the missingness of the design points in the off-diagonal region when estimating covariance functions. The parameter $\delta$ therefore reflects the complexity of the problem, with a larger value resulting in an easier problem and hence a better localisation rate. In the extreme case, when $\delta = 1$, we revert to the conventional setting of sparsely sampled functional data, with no missingness of design points in the off-diagonal regions. Furthermore, $\delta$ also captures the difficulty of the identifiability problem through the constant $\zeta_\delta$ as shown in \Cref{l_rec} in the Appendix. Denote $\delta_1$ the observed fragment length, and $\delta_2$ the parameter in the $\mathcal{T}_{u, \delta_2}$-identifiable family in \Cref{a_model}\ref{a_model_identifiable}. We conjecture that $\delta_1$ has to be greater than $\delta_2$, with a larger value of $\delta_2$ indicating an easier problem, but specifics of this relationship have yet to be covered. We also conjecture there is room for improvement in the exponent of $\delta$.

\subsection{Limiting distributions} \label{section_inference}
    Since inference is usually a more challenging task than deriving high probability upper bounds on localisation errors, we require stronger assumptions. This phenomenon is commonly seen in the change point literature on inference, see for example \citet{xu2022change} and \citet{madrid2024change}. A stronger assumption on the structure of the covariance functions is required compared to what is outlined in \Cref{a_localisation_cov}.

\begin{assumption}[Covariance function] \label{a_inference}
    Assume that there exists an absolute constant $\Tilde{r} \in \mathbb{N}_+$ such that for any $t \in [n]$, it holds that 
    \begin{align*}
        \Sigma_t^*(\cdot, \cdot) =\Phi^\top_{\Tilde{r}}(\cdot)C^*_t\Phi_{\Tilde{r}}(\cdot) = \sum_{1 \leq h, l \leq \Tilde{r}} (C^*_t)_{hl}\phi_h(\cdot)\phi_l(\cdot),
    \end{align*}
    where $(C^*_t)_{hl} = \int_{0}^1 \int_{0}^1 \Sigma_t^*(s,t)\phi_h(s)\phi_l(t) \,\mathrm{d}s\,\mathrm{d}t$, and $\Phi_{\Tilde{r}}(\cdot)$ is the vector formed by the first $\Tilde{r}$ Fourier basis as given in \eqref{eq_fourier}, with $(\Phi_{\Tilde{r}}(\cdot))_i = \phi_i(\cdot)$ for any $i \in [\Tilde{r}]$. We further assume that $m$ is fixed, and for any $\ell \in [K]$, there exists an absolute constant $\varpi_\ell >0$ such that $\sum_{(j,k) \in \mathcal{O}}\mathbb{E}[\Upsilon^2_\ell(X_{1,j}, X_{1,k})] \rightarrow \varpi_\ell$ as $n \rightarrow \infty$.
\end{assumption}
In \Cref{a_inference}, we impose a specific parametric structure on the true underlying covariance functions. This parametric structure is used to avoid selecting the tuning parameter $r$ in our FFDP algorithm (\Cref{DP}). A limiting drift coefficient $\varpi_\ell$ is also introduced for all $\ell \in [K]$ to characterize the limiting distributions. 

We state a signal-to-noise ratio below.
\begin{assumption}[Signal-to-noise ratio] \label{a_snr_strong}\leavevmode  
\begin{enumerate}[label=\textbf{\alph*.}]
    \item \label{a_snr_strong_1} Assume that there exists a sufficiently large absolute constant $C'_{\mathrm{snr}} >0$ and a diverging sequence $\{\alpha_n\}_{n\in \mathbb{N}_{+}} > 0$ such that
    \begin{align*}
        \Delta\kappa^2 \geq  \frac{C'_{\mathrm{snr}}\alpha_n K\Tilde{r}^4 \log^2(n)}{\delta^2\zeta_{\delta}^2}.
    \end{align*}

    \item \label{a_snr_strong_2} In addition, we further assume that $\{\alpha_n\}_{n\in \mathbb{N}_{+}}$ diverges faster than $\delta^{-2}\log(n)$, i.e.
    \begin{align*}
        \alpha_n \gg \delta^{-2}\log(n).
    \end{align*}
\end{enumerate}
\end{assumption}
Assumptions \ref{a_snr_strong}\ref{a_snr_strong_1}~and \ref{a_snr_strong}\ref{a_snr_strong_2}~impose lower bounds on $\Delta\kappa^2$ to ensure theoretical guarantees on inference procedures. This assumption is stronger compared to the signal-to-noise ratio stated in \Cref{a_localisation_jump}\ref{a_localisation_jump_snr}, and is required here to establish the uniform tightness of refined estimators in \Cref{t_inference}. 

For completeness, we present localisation errors of the FFDP algorithm under new Assumptions \ref{a_inference} and \ref{a_snr_strong}\ref{a_snr_strong_1}~in the next Corollary.
\begin{corollary} \label{c_inf_localisation}
    Suppose that Assumptions \ref{a_model}, \ref{a_inference} and \ref{a_snr_strong}\ref{a_snr_strong_1}~hold. Let $\{\widehat{\eta}_\ell\}_{\ell=1}^{\widehat{K}}$ be the estimated change points output by \textnormal{FFDP} in \Cref{DP}, with $r = \Tilde{r}$, $\lambda = C_{\lambda}\Tilde{r}^{-11/2}\sqrt{m\log(n)}$, $\xi = \frac{C_{\xi}K \Tilde{r}^4 m\log^2(n)}{\delta^2\zeta_{\delta}} $, with $C_{\lambda}$, $C_{\xi} >0$ being absolute constants, and $\zeta_{\delta} >0$ being a constant depending on $\delta$ and $\Sigma^*$. It holds with probability at least $1-3n^{-3}$ that $\widehat{K} = K$ and
    \begin{align*}
        \max_{\ell=1, \dots, K} \kappa_{\ell}^2|\widehat{\eta}_\ell -\eta_\ell| \leq \frac{C_{\varepsilon}K\Tilde{r}^4\log^2(n)}{\delta^2\zeta_{\delta}^2},
    \end{align*}
    where $C_{\varepsilon} >0 $ is an absolute constant.
\end{corollary}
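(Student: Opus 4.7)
The plan is to read \Cref{c_inf_localisation} as a direct specialisation of \Cref{t_localisation} to the parametric regime in which the covariance functions lie exactly in the span of the first $\Tilde{r}$ Fourier basis functions. Under \Cref{a_inference}, taking the truncation level $r = \Tilde{r}$ makes the approximation error $\rho_{r,t} = \|\Sigma_t^* - \Phi_r^\top C_{r,t}^*\Phi_r\|_{L^2}$ identically zero for every $t \in [n]$, so the proof of \Cref{t_localisation} can be rerun with $r$ held fixed at the constant $\Tilde{r}$ rather than growing as in \eqref{localisation_r}. Since $m$ is also treated as an absolute constant in \Cref{a_inference}, only one branch of each of the two maxima appearing in the choices of $\lambda$ and $\xi$, and in the final error bound of \Cref{t_localisation}, remains relevant up to constants, which is exactly what produces the simplified forms stated in the corollary.

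My second step would be to verify that the tuning parameters and signal-to-noise hypothesis line up correctly. With $r = \Tilde{r}$ and $m$ fixed, the choices
\[
  \lambda = C_{\lambda}\bigl\{r^{-11/2}\sqrt{m\log(n)} \vee r^{-9/2}\sqrt{\log(n)}\bigr\}, \qquad \xi = C_{\xi}K\Bigl\{\tfrac{r^4 m\log^2(n)}{\delta^2\zeta_\delta} \vee \tfrac{r^6\log(n)}{\delta^2\zeta_\delta}\Bigr\}
\]
from \Cref{t_localisation} collapse to the values $\lambda = C_{\lambda}\Tilde{r}^{-11/2}\sqrt{m\log(n)}$ and $\xi = C_{\xi}K\Tilde{r}^4 m\log^2(n)/(\delta^2\zeta_\delta)$ stated in the corollary, up to absolute constants. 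Likewise, \Cref{a_snr_strong}\ref{a_snr_strong_1} is exactly the version of \Cref{a_localisation_jump}\ref{a_localisation_jump_snr} obtained after plugging in these simplifications, so every hypothesis that the combinatorial argument behind \Cref{t_localisation} consumes is readily available.

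The main obstacle, which is really more of a bookkeeping subtlety than a genuine difficulty, is that \Cref{a_localisation_jump}\ref{a_localisation_jump_lb}, used in the proof of \Cref{t_localisation} to prevent the jump size $\kappa$ from being dominated by the bias $C_\rho r^{-q}$, is dropped entirely in the corollary. This is legitimate because that bias is identically zero under \Cref{a_inference}, so the only lower bound on $\kappa$ that the proof of \Cref{t_localisation} actually invokes is the one encoded in the signal-to-noise condition itself. Beyond this step, each remaining component of the original proof transfers verbatim: the high-probability Frobenius-norm control of $\widehat{C}_I$ on arbitrary intervals inherited from \Cref{t_estimation}, the identifiability-based extension from the diagonal band to $[0,1]^2$ through the constant $\zeta_\delta$, and the dynamic programming induction producing $\widehat{K} = K$ together with the componentwise localisation rate. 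The only operational care is to carry $\Tilde{r}$ through as an explicit factor rather than absorbing it into an absolute constant, so that the factor $\Tilde{r}^4$ appears where it should in the final bound $\kappa_\ell^2|\widehat{\eta}_\ell - \eta_\ell| \leq C_{\varepsilon}K \Tilde{r}^4 \log^2(n)/(\delta^2\zeta_\delta^2)$.
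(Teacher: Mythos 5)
Your proposal is correct and matches the paper's intended argument: the paper explicitly omits the proof of \Cref{c_inf_localisation}, stating only that under \Cref{a_inference} the approximation-error term vanishes so that the proof of \Cref{t_localisation} goes through in a "similar and simpler" form with $r=\Tilde r$ fixed, which is precisely the reduction you describe. Your additional observations — that the maxima in $\lambda$, $\xi$ and the error bound collapse to a single branch when $r$ and $m$ are constants, and that \Cref{a_localisation_jump}\ref{a_localisation_jump_lb} becomes vacuous because $\|C^*_{\Tilde r,\eta_{\ell+1}}-C^*_{\Tilde r,\eta_{\ell}}\|_{\F}=\kappa_\ell$ exactly — are the right bookkeeping details.
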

Note that our rate of localisation errors outlined in \Cref{t_localisation} comprises two parts, where the first term arises from the estimation error of the coefficient matrix obtained by projecting the covariance functions along the direction of orthonormal basis functions, and the second term originates from the approximation error. In comparison to the result in \Cref{t_localisation}, only the term associated with the estimation error remains in \Cref{c_inf_localisation}, which is an immediate consequence of \Cref{a_inference}. Therefore, the proof of \Cref{c_inf_localisation} follows from a similar and simpler argument as the one used in the proof of \Cref{t_localisation}, hence is omitted here. 

We are ready to present the inference results.
\begin{theorem} \label{t_inference}
In addition to the same condition of \Cref{c_inf_localisation}, assume \Cref{a_snr_strong}\ref{a_snr_strong_2}~holds. Let $\{\Tilde{\eta}_\ell\}_{\ell =1} ^{\widehat{K}}$ be the change point estimators defined in \eqref{loss_refine}, with 
\begin{itemize}
    \item the intervals $\{(s_\ell, e_\ell)\}_{\ell=1}^{\widehat{K}}$ defined in \eqref{refine_interval};
    
    \item the preliminary estimators $\{\widehat{\eta}_\ell\}_{\ell=1}^{\widehat{K}}$ output by \textnormal{FFDP} in \Cref{DP} with tuning parameters $r = \Tilde{r}$, $\lambda = C_{\lambda}\Tilde{r}^{-11/2}\sqrt{m\log(n)}$, and $\xi = \frac{C_{\xi}K \Tilde{r}^4 m\log^2(n)}{\delta^2\zeta_{\delta}} $, with $C_{\lambda}$, $C_{\xi} >0$ being absolute constants; and

    \item the estimated covariance functions $\{\widehat{C}_{\ell}\}_{\ell=1}^{\widehat{K}+1}$ defined in \eqref{loss_cov} with the intervals constructed based on $\{\widehat{\eta}_\ell\}_{\ell=0}^{\widehat{K}+1}$.
\end{itemize}
For any $\ell \in [K+1]$, denote $\{Y_{t,j}^{(\ell)}\}_{t\in \mathbb{Z}, j\in [m]}$ an infinite process generated under \Cref{a_model} with $Y_{t,j}^{(\ell)}=f_{t}^{(\ell)}(X_{t,j})+\varepsilon_{t,j}$, and additionally we assume that $f_t^{(\ell)}$ is the random function with mean $0$ and $\mathbb{E}[f_t^{(\ell)}(s)f_t^{(\ell)}(k)] = \Sigma^*_{\eta_\ell}(s,t)$ for all $(s,k) \in [0,1]^2$. 

\begin{enumerate}[leftmargin=*, label=\textbf{\alph*.}]
    \item (Non-vanishing regime) For $\ell \in [K]$, if $\kappa_\ell \rightarrow \varrho_\ell$ as $n \rightarrow \infty$, with $\varrho_\ell > 0$ being an absolute constant, the following results hold.
    \begin{enumerate}[label={\textbf{a.\arabic*.}}]
        \item The estimation error satisfies that $|\Tilde{\eta}_\ell -\eta_\ell| = O_p(1)$, as $n \rightarrow \infty$.

        \item When $n \rightarrow \infty$, 
        \begin{align*}
            \Tilde{\eta}_\ell -\eta_\ell \stackrel{\mathcal{D}}{\longrightarrow} \underset{d \in \mathbb{Z}}{\arg\min}  \; P_\ell(d),
        \end{align*}
        where for $d \in \mathbb{Z}$, $P_\ell(d)$ is a double-sided random walk defined as
        \begin{align*}
            P_\ell(d) = \begin{cases}
            \sum_{t=1}^{d} \sum_{(j,k) \in \mathcal{O}} \big\{2\varrho_\ell\epsilon_{t,(j,k)}(\ell+1)\vartheta_{t,(j,k)}(\ell)+\varrho_\ell^2\vartheta_{t,(j,k)}^2(\ell)\big\}, & d>0,\\
            0, &d = 0,\\
            \sum_{t = d}^{-1} \sum_{(j,k) \in \mathcal{O}} \big\{-2\varrho_\ell\epsilon_{t,(j,k)}(\ell)\vartheta_{t,(j,k)}(\ell)+\varrho_\ell^2\vartheta_{t,(j,k)}^2(\ell)\big\}, &d<0,
    \end{cases}
        \end{align*}
        if we additionally assume that for any $(j,k) \in \mathcal{O}$ and any $t \in \mathbb{Z}$, 
        \begin{align} \label{a_inference_nonvanish}
            \{\Upsilon_\ell(X_{t,j},X_{t,k}),Y_{t,j}^{(\ell)}Y_{t,k}^{(\ell)}-\Phi_{\Tilde{r}}^\top(X_{t,j})C^*_{\Tilde{r},\eta_\ell}\Phi_{\Tilde{r}}(X_{t,k})\} \stackrel{\mathcal{D}}{\longrightarrow} \{\vartheta_{t,(j,k)}(\ell), \epsilon_{t,(j,k)}(\ell)\}.
        \end{align} 
    \end{enumerate}

    \item (Vanishing regime) For $\ell \in [K]$, if $\kappa_\ell \rightarrow 0$ as $n \rightarrow \infty$, the following results hold.
    \begin{enumerate}[label={\textbf{b.\arabic*.}}]
        \item The estimation error satisfies that $\kappa_\ell^2|\Tilde{\eta}_\ell -\eta_\ell| = O_p(1)$, as $n \rightarrow \infty$.

        \item When $n \rightarrow \infty$, if we additionally assume that $C^*_{\Tilde{r},\eta_{\ell}}$ and $C^*_{\Tilde{r},\eta_{\ell+1}}$ converge, then we have that
        \begin{align*}
            \kappa_\ell^2(\Tilde{\eta}_\ell -\eta_\ell) \stackrel{\mathcal{D}}{\longrightarrow} \underset{d \in \mathbb{R}}{\arg\min} \; \mathcal{B}_\ell(d),
        \end{align*}
        where $\mathcal{B}_\ell(d)$ is defined as 
        \begin{align*}
            \mathcal{B}_\ell(d) = \begin{cases}
            \sigma_{\ell}\mathbb{B}(d) + d\varpi_\ell, & d >0,\\
            0, &d =0,\\
            \sigma_{\ell}\mathbb{B}(-d)-d\varpi_\ell, & d<0,
            \end{cases}
        \end{align*}
        $\mathbb{B}(d)$ is the standard Brownian motion, and $\sigma_{\ell}$ is defined as
        \begin{align} \label{eq_var1}
            \sigma^2_{\ell} & = 4 \underset{n \rightarrow \infty}{\lim} \var\Big(\sum_{(j,k) \in \mathcal{O}} \big\{Y^{(\ell+1)}_{1,j}Y^{(\ell+1)}_{1,k}-\Phi_{\Tilde{r}}^\top(X_{1,j})C^*_{\Tilde{r},\eta_{\ell+1}}\Phi_{\Tilde{r}}(X_{1,k})\big\}\Upsilon_\ell(X_{1,j}, X_{1,k})\Big)\\ \label{eq_var2}
            & =4 \underset{n \rightarrow \infty}{\lim} \var\Big(\sum_{(j,k) \in \mathcal{O}} \big\{Y^{(\ell)}_{1,j}Y^{(\ell)}_{1,k}-\Phi_{\Tilde{r}}^\top(X_{1,j})C^*_{\Tilde{r},\eta_{\ell}}\Phi_{\Tilde{r}}(X_{1,k})\big\}\Upsilon_\ell(X_{1,j}, X_{1,k})\Big).
        \end{align}
        \end{enumerate}
\end{enumerate}
\end{theorem}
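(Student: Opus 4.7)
The plan is to follow the standard two-stage argument used for offline change point inference (e.g.~\citealp{xu2022change,madrid2024change}) adapted to the fragmented FDA loss in \eqref{loss_refine}. First, I would argue that with probability tending to $1$ the refinement window $(s_\ell,e_\ell)$ isolates exactly one true change point, so that the quantities $\widehat{C}_\ell,\widehat{C}_{\ell+1}$ used in \eqref{loss_refine} are computed on sub-intervals containing no change points. Combining \Cref{c_inf_localisation} with \Cref{a_snr_strong}\ref{a_snr_strong_1} gives $\max_\ell |\widehat{\eta}_\ell-\eta_\ell|/\Delta = o_p(1)$, whence the construction \eqref{refine_interval} yields $s_\ell<\eta_\ell<e_\ell$ and $|s_\ell-\eta_\ell|,|e_\ell-\eta_\ell|\asymp\Delta$ with high probability. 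On these intervals the covariance estimator from \eqref{loss_cov} concentrates around the true $C^*_{\eta_\ell}$ (respectively $C^*_{\eta_{\ell+1}}$) via \Cref{t_estimation}; in particular $\|\widehat{C}_\ell - C^*_{\eta_\ell}\|_F$ vanishes at a rate dictated by $|s_\ell-\widehat{\eta}_{\ell-1}|\wedge|\widehat{\eta}_\ell-s_\ell|\gtrsim\Delta$.

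Next, writing $d=\eta-\eta_\ell$, I would decompose $G_\ell(d):=\mathcal{L}_\ell(\eta_\ell+d)-\mathcal{L}_\ell(\eta_\ell)$ as a sum over the $|d|$ observations that switch segments. For $d>0$ each summand is of the form
\begin{equation*}
\sum_{(j,k)\in\mathcal{O}}\bigl\{Y_{t,j}Y_{t,k}-\Phi_{\widetilde{r}}^\top(X_{t,j})\widehat{C}_{\ell+1}\Phi_{\widetilde{r}}(X_{t,k})\bigr\}^2
-\bigl\{Y_{t,j}Y_{t,k}-\Phi_{\widetilde{r}}^\top(X_{t,j})\widehat{C}_\ell\Phi_{\widetilde{r}}(X_{t,k})\bigr\}^2,
\end{equation*}
which, after expanding the squares and using $Y_{t,j}Y_{t,k}=\Phi_{\widetilde{r}}^\top(X_{t,j})C^*_{\eta_{\ell+1}}\Phi_{\widetilde{r}}(X_{t,k})+\varepsilon_{t,(j,k)}(\ell+1)$ for $t\in(\eta_\ell,\eta_{\ell+1}]$, separates cleanly into (i) a \emph{drift term} proportional to $\kappa_\ell^2\sum_{(j,k)}\Upsilon_\ell^2(X_{t,j},X_{t,k})$, (ii) a \emph{noise term} linear in $\varepsilon_{t,(j,k)}(\ell+1)$ and $\Upsilon_\ell(X_{t,j},X_{t,k})$, and (iii) \emph{remainder terms} involving $\widehat{C}_\cdot-C^*_\cdot$ which are $o_p(1)$ uniformly in $d$ by Step~1. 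An analogous decomposition holds for $d<0$ with $\widehat{C}_\ell,\,C^*_{\eta_\ell}$.

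For tightness (parts \textbf{a.1} and \textbf{b.1}), I would use the by-now-standard peeling argument: since $G_\ell(0)=0$ and the drift term grows like $|d|\kappa_\ell^2\varpi_\ell$ while the noise term is stochastically bounded by a sub-Gaussian partial-sum process of order $\kappa_\ell\sqrt{|d|}$, by a union bound over dyadic shells $\{2^j\le \kappa_\ell^2|d|\le 2^{j+1}\}$ the infimum is attained at $|d|\lesssim\kappa_\ell^{-2}$ with probability arbitrarily close to one. In the non-vanishing regime $\kappa_\ell\to\varrho_\ell$ this reduces to $|d|=O_p(1)$. The stronger SNR \Cref{a_snr_strong}\ref{a_snr_strong_2} is used here to absorb the remainders (iii) uniformly over $|d|\lesssim\kappa_\ell^{-2}$.

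Finally for the weak-convergence conclusions \textbf{a.2} and \textbf{b.2}, I would pass to the limit of the process $G_\ell$ on compact sets and invoke the argmin continuous mapping theorem (as in Theorem~1 of Kim and Pollard). In the non-vanishing regime the sum has a bounded number of i.i.d.-like summands for each fixed $d\in\mathbb{Z}$, so finite-dimensional convergence follows directly from \eqref{a_inference_nonvanish} and the Cram\'er--Wold device, yielding the double-sided random walk $P_\ell(d)$. In the vanishing regime I would rescale time by $\kappa_\ell^{-2}$ so that $d=\kappa_\ell^{-2}s$ and apply Donsker's invariance principle: the noise partial sum converges in the Skorokhod sense to $\sigma_\ell \mathbb{B}(|s|)$ with the variance formula \eqref{eq_var1}--\eqref{eq_var2}, while the drift term converges to $|s|\varpi_\ell$ by the law of large numbers, giving the process $\mathcal{B}_\ell$. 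The equality of \eqref{eq_var1} and \eqref{eq_var2} follows because in the vanishing regime $C^*_{\eta_\ell}$ and $C^*_{\eta_{\ell+1}}$ coincide in the limit. The main obstacle I anticipate is controlling the remainder (iii) uniformly over the relevant range of $d$ while the covariance estimators $\widehat{C}_\ell$ are themselves random and correlated with the data on $(s_\ell,e_\ell)$; I would handle this by a sample-splitting-style argument that exploits the independence between the data on $[\widehat{\eta}_{\ell-1}+1,s_\ell]$ (used to form $\widehat{C}_\ell$) and on $(s_\ell,e_\ell)$ (used to evaluate $\mathcal{L}_\ell$), made rigorous by conditioning on the event that $\widehat{\eta}_{\ell-1},\widehat{\eta}_\ell,\widehat{\eta}_{\ell+1}$ fall in the high-probability neighbourhoods of $\eta_{\ell-1},\eta_\ell,\eta_{\ell+1}$ guaranteed by \Cref{c_inf_localisation}.
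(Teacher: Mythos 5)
Your overall architecture matches the paper's proof: decompose $\mathcal{L}_\ell(\eta_\ell+d)-\mathcal{L}_\ell(\eta_\ell)$ into a drift term, a mean-zero noise term and remainders involving $\widehat{C}_\cdot-C^*_\cdot$; obtain tightness by comparing the linear drift $\asymp d\kappa_\ell^2$ against a $\sqrt{d\kappa_\ell^2}\,\mathrm{polylog}$ noise envelope (your dyadic peeling is essentially what the paper's maximal inequality, \Cref{l_inequality_partial_sum}, does internally); and conclude via the argmax continuous mapping theorem with the assumed finite-dimensional limits in the non-vanishing regime and the functional CLT after rescaling by $\kappa_\ell^{-2}$ in the vanishing regime. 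However, there are two concrete problems in how you set up and close the argument.

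First, your premise that $\widehat{C}_\ell,\widehat{C}_{\ell+1}$ are ``computed on sub-intervals containing no change points'' is false: by the theorem's construction $\widehat{C}_\ell$ is the estimator on the FFDP segment $I_\ell=(\widehat{\eta}_{\ell-1},\widehat{\eta}_\ell]$, which generically straddles $\eta_{\ell-1}$ and/or $\eta_\ell$ by the (nonzero) localisation error; indeed $(\widehat{\eta}_{\ell-1},\widehat{\eta}_{\ell+1}]$ can contain up to three true change points. \Cref{t_estimation} then only gives concentration of $\widehat{C}_\ell$ around the \emph{averaged} coefficient $C^*_{\Tilde{r},I_\ell}$, and one must separately bound the bias $\|C^*_{\Tilde{r},I_\ell}-C^*_{\Tilde{r},\eta_\ell}\|_{\F}\lesssim \alpha_n^{-1}\kappa_\ell$ (and the analogous quantities for $I_{\ell+1}$) before the remainders can be shown to be negligible against the drift; this step is absent from your plan. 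Second, and more seriously, your proposed resolution of the dependence between $\widehat{C}_\ell$ and the data on $(s_\ell,e_\ell)$ via sample splitting cannot work: $\widehat{C}_\ell$ is fit on all of $(\widehat{\eta}_{\ell-1},\widehat{\eta}_\ell]$, and $(s_\ell,\widehat{\eta}_\ell]$ — three quarters of that segment — lies inside $(s_\ell,e_\ell)$, so the estimation and evaluation samples overlap substantially and there is no independence to condition on. The paper handles this without any splitting: each cross term is bounded by Cauchy–Schwarz as (a uniform-over-all-intervals operator-norm bound on the score-type linear process, e.g.\ \Cref{l_op1} and \Cref{l_op5}) times the deterministic-looking Frobenius error $\|\widehat{C}_\ell-C^*_{\Tilde{r},I_\ell}\|_{\F}$, and \Cref{a_snr_strong}\ref{a_snr_strong_2} is then invoked to make the product $o_p(d\kappa_\ell^2 m)+o_p(\sqrt{d\kappa_\ell^2}\,m)+o_p(m)$. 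Without this decoupling device (or a genuine independent-sample modification of the estimator, which would change the theorem), your remainder control does not go through.
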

\Cref{t_inference} provides the localisation errors of the refined estimators $\{\Tilde{\eta}_\ell\}_{\ell=1}^{\widehat{K}}$, together with their limiting distributions in two regimes of jump sizes. Notably, for any $\ell \in [K]$, the upper bounds on the localisation errors in both the vanishing and non-vanishing regimes can be written as 
\begin{align*}
    \kappa_{\ell}^2|\Tilde{\eta}_\ell -\eta_\ell| = O_p(1).
\end{align*}
This uniform tightness condition of $\kappa_{\ell}^2|\widehat{\eta}_\ell -\eta_\ell|$ not only represents a significant improvement upon \Cref{c_inf_localisation}, in which the localisation errors depend both on the number of change points $K$ and a factor of $\log^2(n)$, but more importantly guarantees the existence of the limiting distributions. In the following, We further elaborate on \Cref{t_inference} by providing a sketch of the proof and offering insights into the intuition behind the two different forms of the limiting distributions.

\noindent \textbf{Sketch of the proof.}
The detailed proof can be found in Appendix \ref{pf_inference}, while here we offer some high-level insights. The proof could be roughly divided into two steps. 

In the first step, using the results of preliminary estimators output by FFDP in \Cref{c_inf_localisation}, we show that for $\ell \in [\widehat{K}]$, $\kappa_\ell^2|\Tilde{\eta}_\ell-\eta_\ell|$ is uniformly tight. In the second step, we establish the limiting distributions by applying the Argmax continuous mapping theorem \citep[e.g.~Theorem 3.2.2 in][]{vanderVaart1996}. Based on differences in the limits of the jump sizes $\{\kappa_\ell\}_{\ell=1}^{\widehat{K}}$, the limiting distributions of the refined estimators are derived individually under both the non-vanishing
and vanishing regimes.  In particular, in the vanishing case, an application of the functional central limit theorem yields the two-sided Brownian motion in the limit. 

\noindent \textbf{Limiting distributions.}
The limiting distributions of the change points for the fragmented functional data are the first time seen in \Cref{t_inference}. The results not only quantify the asymptotic behaviour of our refined estimators $\{\Tilde{\eta}_\ell\}_{\ell=1}^{\widehat{K}}$, but also facilitate the construction of confidence intervals, especially in the vanishing regimes when change point estimators have diverging localisation errors. In the vanishing regime, the limiting distribution provides us with an asymptotic framework to analyse the uncertainties. Note that $\sigma^2_\ell$ in \Cref{t_inference} has two expressions detailed in \eqref{eq_var1} and \eqref{eq_var2}, which equal to each other. This is due to the fact that in the jump size vanishing regime, the partial sums of data studied before and after each change point, in \eqref{eq_var1} and \eqref{eq_var2} respectively, have the same limiting distribution, hence the same limiting variance.  We will provide more discussions on the deviance between the theory and practice on this aspect in the sequel. 

Despite being novel for the problem setup and related limiting distributions, such a two-regime phenomenon described in \Cref{t_inference} is no stranger to the change point community.  Similar types of results have been discovered in various problems, for example, changes in high dimensional regression \citep[e.g.][]{xu2022change}, nonparametric statistics \citep[e.g.][]{madrid2024change}, and the mean functions of functional data \citep[e.g.][]{aue2018detecting}. The problem concerning the functional covariance changes could also be found in \citet{horvath2022change} and \citet{jiao2023break}, in which they derive the limiting distributions when there is only one change point in the covariance functions of fully observed and weakly dependent functional data. 

\subsubsection{Confidence intervals construction} \label{section_CI_construction}
An important application of the limiting distributions derived in \Cref{t_inference} is to construct confidence intervals for refined estimators $\{\Tilde{\eta}_\ell\}_{\ell=1}^{\widehat{K}}$. In this subsection, we provide a discussion on the estimation of the unknown parameters $\{\varpi_\ell\}_{\ell=1}^{K}$, $\{\kappa_\ell^2\}_{\ell =1}^{K}$ and {$\{\sigma^2_{\ell,+}, \sigma^2_{\ell,-}\}_{\ell=1}^{K}$}, and also demonstrate two practical inference procedures in the vanishing regime of \Cref{t_inference}.

For $\ell \in \{1, \ldots, \widehat{K}\}$, with the outputs of the FFDP algorithm: $\widehat{\eta}_\ell$, $\widehat{C}_\ell$ and $\widehat{C}_{\ell+1}$,
and outputs of local refinement $\Tilde{\eta}_\ell$, we construct $(1-\alpha)$-confidence intervals, $\alpha \in (0,1)$, of $\eta_\ell$ in the following ways.

\noindent \textbf{Step 1.}(Jump size estimation). Let the estimator of $\kappa_\ell^2$ - the squared $\ell^{\text{th}}$ jump size - be
\begin{align*}
    \widehat{\kappa}_\ell^2 = \|\widehat{C}_{\ell+1} - \widehat{C}_\ell\|_{\F}^2.
\end{align*}

\noindent \textbf{Step 2.}(Drift estimation). Construct the drift estimator $\widehat{\varpi}_\ell^2$ as
\begin{align} \label{CI_drift}
    \widehat{\varpi}_\ell^2 = \frac{1}{\widehat{\kappa}_\ell^2 (\widehat{\eta}_{\ell+1} - \widehat{\eta}_{\ell-1})}\sum_{t = \widehat{\eta}_{\ell-1}+1}^{\widehat{\eta}_{\ell+1}}\sum_{(j,k)\in \mathcal{O}}\big\{\Phi_r^\top(X_{t,j})(\widehat{C}_{\ell+1} - \widehat{C}_\ell)\Phi_r(X_{t,k})\big\}^2.
\end{align}

\noindent \textbf{Step 3.}(Variance estimation). We estimate $\sigma^2_{\ell,-}$ and $\sigma^2_{\ell,+}$ by sample variances of corresponding segments of data. Let 
\begin{align}\label{CI_var1}
    \widehat{\sigma}^2_{\ell, -} = \frac{1}{\widehat{\eta}_{\ell} - \widehat{\eta}_{\ell-1}-1}\sum_{t=\widehat{\eta}_{\ell-1}+1}^{\widehat{\eta}_{\ell}} (Z_{\ell, t} - \Bar{Z_\ell})^2,
\end{align}
where $\Bar{Z_\ell} = 1/(\widehat{\eta}_{\ell} - \widehat{\eta}_{\ell-1}) \sum_{t \in (\widehat{\eta}_{\ell-1},\widehat{\eta}_\ell]} Z_{\ell,t}$ and for $t \in (\widehat{\eta}_{\ell-1},\widehat{\eta}_\ell]$,
\begin{align*}
    Z_{\ell, t} = \frac{1}{\widehat{\kappa}_\ell}\sum_{(j,k) \in \mathcal{O}}\big\{Y_{t,j}Y_{t,k} -\Phi_{\Tilde{r}}^\top(X_{t,j})\widehat{C}_{\ell}\Phi_{\Tilde{r}}(X_{t,k})\big\}\big\{\Phi_{\Tilde{r}}^\top(X_{t,j})(\widehat{C}_{\ell+1} - \widehat{C}_\ell)\Phi_{\Tilde{r}}(X_{t,k})\big\}.
\end{align*}
Let
\begin{align}\label{CI_var2}
    \widehat{\sigma}^2_{\ell,+} = \frac{1}{\widehat{\eta}_{\ell+1} - \widehat{\eta}_{\ell}-1}\sum_{t = \widehat{\eta}_{\ell}+1}^{\widehat{\eta}_{\ell+1}} (Z'_{\ell, t} - \Bar{Z'_\ell})^2,
\end{align}
where $\Bar{Z'_\ell} = 1/(\widehat{\eta}_{\ell+1} - \widehat{\eta}_{\ell}) \sum_{t \in (\widehat{\eta}_{\ell},\widehat{\eta}_{\ell+1}]} Z'_{\ell,t}$ and for $t \in (\widehat{\eta}_{\ell},\widehat{\eta}_{\ell+1}]$, 
\begin{align*}
    Z'_{\ell, t} = \frac{1}{\widehat{\kappa}_\ell}\sum_{(j,k) \in \mathcal{O}}\big\{Y_{t,j}Y_{t,k} -\Phi_{\Tilde{r}}^\top(X_{t,j})\widehat{C}_{\ell+1}\Phi_{\Tilde{r}}(X_{t,k})\big\}\big\{\Phi_{\Tilde{r}}^\top(X_{t,j})(\widehat{C}_{\ell+1} - \widehat{C}_\ell)\Phi_{\Tilde{r}}(X_{t,k})\big\}.
\end{align*}

\noindent \textbf{Step 4.} (Limiting distribution simulation). Let $B \in \mathbb{Z}_{+}$, $M \in \mathbb{R}_{+}$ and $N \in \mathbb{Z}_{+}$. For $b \in \{1, \ldots, B\}$, let
\begin{align*}
    \widehat{u}^{(b)} = \underset{d \in (-M,M)}{\arg \min} \big\{\widehat{\varpi}_\ell^2|d| + \mathbbm{1}_{\{d<0\}}\widehat{\sigma}^2_{\ell,-}\mathbb{W}^{(b)}(d) + \mathbbm{1}_{\{d \geq 0\}}\widehat{\sigma}^2_{\ell,+}\mathbb{W}^{(b)}(d)\big\},
\end{align*}
where
\begin{align} \label{eq_w}
    \mathbb{W}^{(b)}(d) = \begin{cases}
        \frac{1}{\sqrt{N}}\sum_{i=\lceil Nd \rceil}^{-1} z_i^{(b)}, & d<0,\\
        0, & d=0,\\
        \frac{1}{\sqrt{N}}\sum_{i=1}^{\lfloor Nd \rfloor} z_i^{(b)}, & d>0,
    \end{cases}
\end{align}
$\{z_i^{(b)}\}_{i = -\lfloor NM \rfloor}^{\lceil NM \rceil}$ are independent standard Gaussian random variables, $\widehat{\varpi}_\ell^2$, $\widehat{\sigma}^2_{\ell,-}$ and $\widehat{\sigma}^2_{\ell,+}$ are constructed by \eqref{CI_drift}, \eqref{CI_var1} and \eqref{CI_var2} respectively.

\noindent \textbf{Step 4'.} (Alternative limiting distribution simulation). Let $B \in \mathbb{Z}_{+}$, $M \in \mathbb{R}_{+}$ and $N \in \mathbb{Z}_{+}$. For $b \in \{1, \ldots, B\}$, let
\begin{align}\label{eq_symmetric_BM}
    \widehat{u}^{(b)} = \underset{d \in (-M,M)}{\arg \min} \big\{\widehat{\varpi}_\ell^2|d| + \widehat{\Bar{\sigma}}^2_{\ell}\mathbb{W}^{(b)}(d)\big\},
\end{align}
where $\widehat{\Bar{\sigma}}^2_{\ell} = (\widehat{\sigma}^2_{\ell,-}+\widehat{\sigma}^2_{\ell,+})/2$,  $\widehat{\varpi}_\ell^2$, $\widehat{\sigma}^2_{\ell,-}$, $\widehat{\sigma}^2_{\ell,+}$ and $\mathbb{W}^{(b)}(d)$ are constructed by \eqref{CI_drift}, \eqref{CI_var1}, \eqref{CI_var2} and \eqref{eq_w} respectively.

\noindent \textbf{Step 5.}(Confidence interval construction). Let $\widehat{q}_u(1-\alpha/2)$ and $\widehat{q}_u(\alpha/2)$ be the $(1-\alpha/2)$- and $\alpha/2$-quantiles of the empirical distribution of $\{\widehat{u}^{(b)}\}_{b=1}^B$. A numerical $(1-\alpha)$-confidence interval of $\eta_\ell$ is constructed as
\begin{align} \label{eq_CI}
    \Big[\Tilde{\eta}_\ell - \frac{\widehat{q}_u(1-\alpha/2)}{\widehat{\kappa}_\ell^2}\mathbbm{1}_{\{\widehat{\kappa}_\ell^2 \neq 0\}} , \; \Tilde{\eta}_\ell -  \frac{\widehat{q}_u(\alpha/2)}{\widehat{\kappa}_\ell^2}\mathbbm{1}_{\{\widehat{\kappa}_\ell^2 \neq 0\}}  \Big].
\end{align}

\begin{remark}[Asymmetry of confidence intervals]\label{r_asymmetric}
    Despite that \eqref{eq_var1} and \eqref{eq_var2} - the limiting variances of partial sums of data before and after change points - are the same in the vanishing regime, in practice however this is never the case.  To deal with this inequality, following the suit in the existing literature \citep[e.g.][]{yau2016inference, ling2016estimation}, we consider an alternative step to enforce the equality of limiting variances.  In particular, \textbf{Step 4'} takes an average of $\widehat{\sigma}^2_{\ell,-}$ and $\widehat{\sigma}^2_{\ell,+}$ defined in \eqref{CI_var1} and \eqref{CI_var2} respectively.
\end{remark}

\subsection{Intermediate result} \label{section_byproduct}
In this section, we present an intermediate result on the estimation error of covariance functions under Model \eqref{model_obs} and \Cref{a_model}, which is of independent interest. 

We employ the analytic $(\alpha,\beta)$-basis framework for the chosen complete orthonormal system $\Phi$, which provides an ordering for selecting the basis functions. We say that a complete orthonormal system $\Phi = \{\phi_k\}_{k=1}^\infty$ is called an analytic $(\alpha,\beta)$-basis of $L^2([0,1])$ for some constants $\alpha, \beta \geq 0$ if there exists an absolute constant $C_{\phi} > 0$ such that $\|\phi_k\|_\infty \leq C_{\phi}k^{\alpha}$ and $\|\phi_k^{(d)}\|_{L^2} \leq C_{\phi}k^{\beta d}$ for $k \in \mathbb{N}$ and $d \in\{1,2\}$, where $\phi_k^{(d)}$ stands for the $d^{\mathrm{th}}$ order derivative of $\phi$.  This framework is also considered in \citet{lin2021basis}.

We are now ready to present our results.
\begin{theorem} \label{t_estimation} 
    For any integer interval $I\subseteq [n]$ such that $|I| \geq C\log(n)$ for an absolute constant $C>0$, let $\widehat{C}_I$ be defined as the solution to \eqref{loss_cov} with $\Phi$ being an analytic $(\alpha,\beta)$-basis of $L^2([0,1])$ for $\alpha, \beta \geq 0$ and 
    \begin{align} \label{t_estimation_lambda}
        \lambda = C_{\lambda}\Big\{r^{2\alpha-4\beta-3/2}\sqrt{m\log(n)} \vee r^{4\alpha-4\beta-1/2}\sqrt{\log(n)} \vee \frac{r^{-4\beta-5/2}\sqrt{m}\sum_{t \in I}\rho_{r,t}}{\sqrt{|I|}\delta^2}\Big\}, 
    \end{align}
    for some sufficiently large absolute constant $C_{\lambda} >0$. Then under \Cref{a_model}, for any positive integer $r < n$, it holds with probability at least $1-3n^{-3}$ that
    \begin{align*}
        \|\widehat{C}_I - C^*_{r,I}\|_{\F} \lesssim \frac{r^{2\alpha+1}}{\zeta_{\delta}}\sqrt{\frac{\log(n)}{|I|}}+\frac{r^{4\alpha+2}}{\zeta_{\delta}}\sqrt{\frac{\log(n)}{|I|m}}+\frac{\sum_{t \in I} \rho_{r,t}}{|I|\delta^2\zeta_{\delta}}, 
    \end{align*}
    and
    \begin{align} \label{t_estimation_result}
        \|\Phi_r^\top\widehat{C}_I\Phi_r - \Sigma^*_{I}\|_{L^2} \lesssim \frac{r^{2\alpha+1}}{\zeta_{\delta}}\sqrt{\frac{\log(n)}{|I|}}+\frac{r^{4\alpha+2}}{\zeta_{\delta}}\sqrt{\frac{\log(n)}{|I|m}}+ \frac{\sum_{t \in I} \rho_{r,t}}{|I|}\bigg(1 \vee \frac{1}{\delta^2\zeta_{\delta}}\bigg),
    \end{align}
    where $C^*_{r,t} \in \mathbb{R}^{r \times r}$ with entries $(C^*_{r,t})_{hl}= \int_{0}^1 \int_{0}^1 \Sigma_t^*(s,t)\phi_h(s)\phi_l(t) \,\mathrm{d}s\,\mathrm{d}t$ for $h,l \in [r]$, $C^*_{r, I} =~|I|^{-1}  \sum_{t \in I} C^*_{r,t}$, $\Sigma^*_{I} = |I|^{-1} \sum_{t \in I} \Sigma_t^*$, $\zeta_{\delta}>0$ is a constant depending on $\delta$ and $\{\Sigma^*_t\}_{t \in I}$, and $\rho_{r,t}= \|\Sigma^*_t - \Phi_r^\top C^*_{r,t} \Phi_r\|_{L^2}$ is the convergence rate of the approximation error for $\Sigma^*_t$ using the first $r$ basis of $\Phi$.
\end{theorem}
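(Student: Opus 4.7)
\textbf{Proof plan for Theorem \ref{t_estimation}.} My plan is to run a standard ``basic inequality'' analysis for the penalised least squares estimator, decompose the resulting cross terms into stochastic components that can be bounded by concentration inequalities and a deterministic approximation bias, and finally pass from the empirical quadratic form to the Frobenius norm using the $\mathcal{T}_{u,\delta}$-identifiability constant $\zeta_\delta$.

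First I would set $\widehat{\Delta} = \widehat{C}_I - C^*_{r,I}$ and exploit the optimality of $\widehat{C}_I$ for the objective $Q(\cdot)$ in \eqref{loss_cov}, i.e.\ $Q(\widehat{C}_I) \leq Q(C^*_{r,I})$. Expanding the squared residuals and rearranging yields
\begin{align*}
\tfrac{1}{|I|\lfloor m/2 \rfloor}\sum_{t \in I}\sum_{(j,k)\in \mathcal{O}} \big\{\Phi_r^\top(X_{t,j})\widehat{\Delta}\Phi_r(X_{t,k})\big\}^2
&\leq \tfrac{2}{|I|\lfloor m/2 \rfloor}\sum_{t \in I}\sum_{(j,k)\in \mathcal{O}} R_{t,j,k}\, \Phi_r^\top(X_{t,j})\widehat{\Delta}\Phi_r(X_{t,k}) \\
&\quad + \tfrac{\lambda}{2\sqrt{|I|\lfloor m/2 \rfloor}}\big\{\mathcal{J}(\Sigma_{r,C^*_{r,I}}) - \mathcal{J}(\Sigma_{r,\widehat{C}_I})\big\},
\end{align*}
where $R_{t,j,k} = Y_{t,j}Y_{t,k} - \Phi_r^\top(X_{t,j})C^*_{r,I}\Phi_r(X_{t,k})$. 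I would then split $R_{t,j,k}$ into four pieces: a noise part containing $\varepsilon_{t,j}\varepsilon_{t,k}$, $\varepsilon_{t,j}f_t(X_{t,k})$ and $\varepsilon_{t,k}f_t(X_{t,j})$; a mean-zero ``functional fluctuation'' part $f_t(X_{t,j})f_t(X_{t,k}) - \Sigma^*_t(X_{t,j},X_{t,k})$; a pointwise truncation bias $\Sigma^*_t(X_{t,j},X_{t,k}) - \Phi_r^\top(X_{t,j})C^*_{r,t}\Phi_r(X_{t,k})$; and a heterogeneity term $\Phi_r^\top(X_{t,j})(C^*_{r,t}-C^*_{r,I})\Phi_r(X_{t,k})$ that captures possible change points inside $I$.

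Next I would bound the cross-term with each piece of $R_{t,j,k}$. Since $\Phi_r^\top(X_{t,j})\widehat{\Delta}\Phi_r(X_{t,k}) = \langle \widehat{\Delta},\, \Phi_r(X_{t,j})\Phi_r^\top(X_{t,k})\rangle_{\mathrm{F}}$, this is a linear functional of $\widehat{\Delta}$, so by Cauchy-Schwarz each stochastic cross term can be controlled by $\|\widehat{\Delta}\|_{\mathrm{F}}$ times the Frobenius norm of an $r \times r$ random matrix. I would then apply the sub-Gaussian / sub-Weibull concentration tools (matrix Bernstein and/or the cited Theorem \ref{t_concentration_subWeibull}) entrywise under Assumptions \ref{a_model}\ref{a_model_error} and \ref{a_model}\ref{a_model_function}, using the $(\alpha,\beta)$-basis bound $\|\phi_k\|_\infty \leq C_\phi k^\alpha$ to control the $L^\infty$ size of each summand. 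This yields the three terms $r^{2\alpha+1}\sqrt{\log(n)/|I|}$ (from the $f_t$-driven cross products), $r^{4\alpha+2}\sqrt{\log(n)/(|I|m)}$ (from the $\varepsilon_{t,j}\varepsilon_{t,k}$ terms, which gain a factor $1/\sqrt{m}$ since they only involve $\lfloor m/2\rfloor$ independent copies per $t$), and the deterministic approximation piece $|I|^{-1}\sum_{t \in I}\rho_{r,t}/\delta^2$ times $\|\widehat{\Delta}\|_{\mathrm{F}}$. The penalty term is handled using $\mathcal{J}(\Phi_r^\top C^*_{r,I}\Phi_r)\lesssim r^{4\beta}\|C^*_{r,I}\|_{\mathrm{F}}$ from the $(\alpha,\beta)$-basis property, together with $\mathcal{J}(\Phi_r^\top \widehat{C}_I\Phi_r) \geq 0$, and the explicit choice of $\lambda$ in \eqref{t_estimation_lambda} is calibrated precisely so that it dominates the stochastic noise level.

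The main obstacle, and the step I would save for last, is converting the empirical quadratic form on the left-hand side into a Frobenius lower bound $\zeta_\delta \|\widehat{\Delta}\|_{\mathrm{F}}^2$. Here I would first compute the expectation of the left-hand side conditional on $\{a_t\}$: by Assumption \ref{a_model}\ref{a_model_grid}, integrating $X_{t,j},X_{t,k}$ uniformly over $[a_t, a_t+\delta]^2$ gives a Gram matrix $G(a_t) = \delta^{-2}\int_{[a_t,a_t+\delta]^2}\Phi_r(s)\Phi_r^\top(s) \otimes \Phi_r(u)\Phi_r^\top(u)\,\mathrm{d}s\,\mathrm{d}u$, and then averaging over $a_t\sim\mathrm{Unif}[0,1-\delta]$ gives the expected operator. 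The identifiability assumption \ref{a_model}\ref{a_model_identifiable}, translated to the finite-dimensional span of $\{\phi_i \otimes \phi_j\}_{i,j=1}^r$, guarantees that this expected operator is strictly positive definite with smallest eigenvalue at least $\zeta_\delta > 0$ --- this is the content of the auxiliary Lemma \ref{l_rec} cited in the paper. I would then combine this population lower bound with a uniform concentration argument (a Bernstein-type bound on the sum of rank-one sub-exponential operators, requiring $|I| \geq C\log(n)$) to obtain $\|\widehat{\Delta}\|^2_{n,m} \geq (\zeta_\delta/2)\|\widehat{\Delta}\|_{\mathrm{F}}^2$ on the good event. Plugging this into the basic inequality gives a quadratic-in-$\|\widehat{\Delta}\|_{\mathrm{F}}$ estimate that solves to the stated Frobenius bound, and the $L^2$ bound \eqref{t_estimation_result} then follows from the triangle inequality $\|\Phi_r^\top\widehat{C}_I\Phi_r - \Sigma^*_I\|_{L^2} \leq \|\widehat{\Delta}\|_{\mathrm{F}} + |I|^{-1}\sum_{t\in I}\rho_{r,t}$, using orthonormality of $\{\phi_i\}$.
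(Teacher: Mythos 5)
Your plan follows the same overall architecture as the paper's proof: decompose the residual into measurement noise, the mean-zero fluctuation of $f_t(X_{t,j})f_t(X_{t,k})$, the truncation bias, and the within-interval heterogeneity of $C^*_{r,t}$; control the cross terms by operator-norm concentration (the paper's Lemmas~\ref{l_op1} and \ref{l_op2}); lower-bound the empirical quadratic form via the restricted eigenvalue Lemma~\ref{l_rec} plus concentration; and finish with the triangle inequality and orthonormality. The one structural difference is the localisation device. You run the basic inequality directly on the random direction $\widehat{\Delta}=\widehat{C}_I-C^*_{r,I}$, which forces the Gram-matrix concentration in your last step to hold \emph{uniformly} over directions; done by covering (as in the paper's Lemma~\ref{l_op4}) this costs an extra factor of $r$, i.e.\ $r^{4\alpha+3}$ in place of $r^{4\alpha+2}$. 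The paper instead fixes a deterministic $\Delta$ on the sphere $\mathcal{B}_{r,I}(\theta)$, shows $Q(C^*_{r,I}+\omega_{I,m}\Delta)>Q(C^*_{r,I})$ there, and uses convexity of $Q$ to conclude the minimiser lies inside the ball, which lets it invoke the pointwise concentration Lemma~\ref{l_op3} and the restricted eigenvalue bound for a fixed direction. To recover the stated rate you should either adopt this convexity trick or argue the extra factor of $r$ is harmless. Relatedly, Lemma~\ref{l_rec} is a \emph{restricted} eigenvalue bound on $\mathcal{B}_{r,I}(\theta)$, not a smallest-eigenvalue bound for the full expected Gram operator: identifiability only controls directions $\Delta$ for which $C^*_{r,I}+\Delta$ stays in the identifiable class, so your phrase ``strictly positive definite with smallest eigenvalue at least $\zeta_\delta$'' claims more than is available.

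There is also a bookkeeping error in where the rates come from. The term $r^{4\alpha+2}\sqrt{\log(n)/(|I|m)}$ does not arise from the $\varepsilon_{t,j}\varepsilon_{t,k}$ products. Within a fixed $t$, the products $f_t(X_{t,j})f_t(X_{t,k})$ over $(j,k)\in\mathcal{O}$ share the same realisation of $f_t$ and are therefore not independent across pairs, so the entire block $Y_{t,j}Y_{t,k}-\Sigma^*_t(X_{t,j},X_{t,k})$ (the paper's $(A_1)$, which contains the $\varepsilon\varepsilon$ products) concentrates at rate $\sqrt{\log(n)/|I|}$ with effective sample size $|I|$, not $|I|m$. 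The $\sqrt{\log(n)/(|I|m)}$ contribution comes from the two pieces that are deterministic given the grid --- the centred truncation bias $\Sigma^*_t-\Phi_r^\top C^*_{r,t}\Phi_r$ and the heterogeneity $\Phi_r^\top(C^*_{r,t}-C^*_{r,I})\Phi_r$ --- which, conditionally on $\{a_t\}$, are independent across all $|I|\lfloor m/2\rfloor$ pairs and admit Hoeffding bounds with envelope of order $r^{2\alpha+1}(hl)^\alpha$. If you only extract the conditional mean $|I|^{-1}\sum_{t\in I}\rho_{r,t}/\delta^2$ from the truncation bias and assign the $1/\sqrt{m}$ gain to the measurement errors, you will miss the fluctuation of the bias and heterogeneity terms, which is exactly what produces $r^{4\alpha+2}\sqrt{\log(n)/(|I|m)}$.
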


\Cref{t_estimation} shows that the solution to \eqref{loss_cov} serves as a good enough estimator for its population counterpart regardless of the choices of the interval $I$, even if $I$ is non-stationary. It is repeatedly used in the proof of Theorems~\ref{t_localisation} and \ref{t_inference} to establish high probability upper bounds on the localisation errors. Note that \Cref{t_estimation} presents non-asymptotic results.  We allow to a great extend of generality that, the sizes of sampling grids $m$, the length of the interval $I$ and the true covariance functions are all functions of the sample size $n$.

A similar result but in the asymptotic form and with different model assumptions is derived in \citet{lin2021basis}. \citet{lin2021basis} show that for any interval $I$ containing no change point, it holds that 
\[
    \|\Phi_r^\top\widehat{C}_I\Phi_r - \Sigma^*_{I}\|_{L^2} =  O_p(|I|^{-1/2}r^{2\alpha+1}+\rho_r), \quad \text{where} \quad \rho_r = \rho_{r,t}, \quad t\in I.
\]

Our model defined in \Cref{a_model} is a special case of the class of models defined in \citet{lin2021basis}. The first and third terms in \eqref{t_estimation_result} align with the results demonstrated in \citet{lin2021basis} up to a logarithm factor. Notably, the inclusion of the second term in \eqref{t_estimation_result} accounts for the effects of $m$, which comes from an application of Hoeffding’s inequality for general bounded random variables \citep[Theorem 2.2.6 in][]{vershynin2018high}. Moreover, our results in \Cref{t_estimation} can also be applied to those intervals containing change points to capture the deviation of the estimators to their population counterparts. 

In the literature of the conventional setup of sparse FDA where observation grids can be sampled over the entire domain, results of estimation errors of covariance function over intervals without change points can also be found. Denote $\Sigma^*$ the true covariance function and $\widehat{\Sigma}$ the corresponding estimator. \citet{cai2010nonparametric} utilise a reproducing kernel
Hilbert space framework in the case when the sample paths of the random functions belong to a Sobolev space $\mathcal{W}^q$ almost surely. They show that it holds with high probability $\|\widehat{\Sigma} - \Sigma^*\|^2_{L^2} \lesssim (mn)^{-\frac{2q}{2q+1}}+\frac{1}{n}$, which is also minimax optimal. \citet{zhang2016sparse} consider the local linear smoothers when $\partial^2\Sigma^*(s,t)/ \partial s^2$, $\partial^2\Sigma^*(s,t)/ \partial t^2$ and $\partial^2\Sigma^*(s,t)/ \partial s \partial t$ are bounded on $[0,1]^2$, and show that $\|\widehat{\Sigma}- \Sigma^*\|_{L^2}^2 = O_p(n^{-\frac{2}{3}}m^{-\frac{4}{3}}+n^{-1})$. Both the above results reveal a phase transition phenomenon, with the transition boundary at $m \asymp n^{\frac{1}{4}}$ \citep[when $q=2$ in][]{cai2010nonparametric}. Compared to aforementioned results, our results in \Cref{t_estimation} of the upper bounds on the estimation errors are looser. We conjecture that this phenomenon is partially due to a more challenging setup being considered. 
 
\section{Numerical experiments} \label{section_numerical}
In this section, we further investigate the performances of our FFDP algorithm by comparing with state-of-art methods through numerical experiments, with simulated data in \Cref{section_numerical_simulated} and two real data sets in \Cref{section_numerical_real}.~All the implementation of numerical experiments can be found at \url{https://github.com/GengyuXue/FragmentCP}.

\subsection{Simulated data analysis} \label{section_numerical_simulated}

\noindent \textbf{Settings.} We simulate data from the model described in \eqref{model_obs} with $\{f_t\}_{t=1}^n$ being Gaussian processes and $\{\varepsilon_{t,j}\}_{t=1,j=1}^{n,m}$ following Gaussian distributions with variance $\sigma_\epsilon^2$. We consider four different scenarios, with the first three showing the effectiveness of the FFDP algorithm and the last one focusing on inference. Each scenario consists of multiple setups. For each setup in the first three settings, we carry out 100 simulations, and for each setup in the fourth scenario, we carry out 200 simulations. These settings are designed to investigate the effects of $n$ the sample size, $m$ the grid size and the functional class of the true covariance functions. The details of each setting are as follows.
\begin{enumerate}[leftmargin=*, label =\textbf{(\roman*)}]
    \item \label{scenario_i} Single change point, varying $n$. Let $n \in \{200, 300, 400\}$, $m = 30$, $\delta = 0.6$ and $\sigma_\epsilon = 0.01$. The only change point occurs at $n/2$. The covariance functions are defined as
    \begin{align*}
        \Sigma^*_t(s,t)=
            \begin{cases}
            2\phi_1(s)\phi_1(t) + 1.4\phi_2(s)\phi_2(t) + 0.8\phi_3(s)\phi_3(t), \quad &t \in \{1, \ldots, n/2\},\\[5pt]
            5\phi_1(s)\phi_1(t) + 4.2\phi_2(s)\phi_2(t) + 3.4\phi_3(s)\phi_3(t), & t \in \{n/2+1, \ldots, n\},
            \end{cases}
    \end{align*}
    where $\phi_1(s) = 1$, $\phi_2(s) = \cos(2\pi s)$, and $\phi_3(s) = \sin(2\pi s)$.

    \item  \label{scenario_ii} Single change point, varying $m$. Let $m \in \{10,20,30,40\}$, $n =200$, $\delta = 0.6$ and $\sigma_\epsilon = 1$. The only change point occurs at $n/2$. The covariance functions are defined as
    \begin{align*}
        \Sigma^*_t(s,t)= 
            \begin{cases}
            \psi_1(s)\psi_1(t)+ 0.8\psi_2(s)\psi_2(t), \quad &t \in \{1, \ldots, 100\},\\[5pt]
            5\psi_1(s)\psi_1(t)+7.5\psi_3(s)\psi_3(t), & t \in \{101, \ldots, n\},
        \end{cases}
    \end{align*}
    where $\psi_1(s) = \sqrt{5}(6s^2-6s+1)$, $\psi_2(s)= \sqrt{2}\log(s+0.1)$, and $\psi_3(s) = 5(s-0.5)^2$.

    \item \label{scenario_iii} Multiple change points, varying $n$. Let $n \in \{150, 200\}$, $m = 15$, $\delta = 0.5$ and $\sigma_\epsilon = 0.01$. The unevenly spaced change points occur at $\{\lfloor n/4\rfloor, \lfloor 5n/8\rfloor\}$. The covariance functions are defined as
    \begin{align*}
        \Sigma^*_t(s,t)= 
            \begin{cases}
            1.8\phi_1(s)\phi_1(t)+ 1.3\phi_2(s)\phi_2(t), \quad &t \in \{1, \ldots, \lfloor n/4\rfloor\},\\[5pt]
            8\phi_1(s)\phi_1(t)+ 6.5\phi_2(s)\phi_2(t), & t \in \{\lfloor n/4\rfloor +1, \ldots, \lfloor 5n/8\rfloor\},\\[5pt]
            1.8\phi_1(s)\phi_1(t)+1.6\phi_2(s)\phi_2(t), & t \in \{ \lfloor 5n/8\rfloor + 1, \ldots n\},
        \end{cases}
    \end{align*}
    where $\phi_1(s) = 1$ and $\phi_2(s) = \cos(2\pi s)$.

    \item \label{scenario_iv} Single change point, varying both $n$ and $m$. Let $n \in \{200, 300, 400\}$ and $m \in \{10, 20\}$, $\delta = 0.5$ and $\sigma_\epsilon = 0.1$. The only change point occurs at $n/2$. The covariance functions are defined as
    \begin{align*}
        \Sigma^*_t(s,t)=
            \begin{cases}
            0.5\phi_1(s)\phi_1(t) + 2\phi_2(s)\phi_2(t) + 1.4\phi_3(s)\phi_3(t), \quad &t \in \{1, \ldots, n/2\},\\[5pt]
            3.5\phi_1(s)\phi_1(t) + 4\phi_2(s)\phi_2(t) + 3.5\phi_3(s)\phi_3(t), & t \in \{n/2+1, \ldots, n\},
            \end{cases}
    \end{align*}
    where $\phi_1(s) = 1$, $\phi_2(s) = \cos(2\pi s)$, and $\phi_3(s) = \sin(2\pi s)$.
\end{enumerate}

\noindent \textbf{Competitors.} Since the change point analysis problem under fragmented functional data is first studied here, we do not have any direct competitors.

For illustration purposes, we propose a modified version of the seeded binary segmentation (SBS) proposed in \citet{kovacs2023seeded}, detailed in Appendix \ref{appendix_SBS}. We also take into account the weighted cumulative sum (WCUSUM) approach in \citet{horvath2022change} and the functional seeded binary segmentation (FSBS) algorithm introduced in \citet{madrid2022change}. WCUSUM targets to detect a single covariance function change under the setting of fully observed and weakly dependent functional data, while FSBS specialises in managing discretely observed functional data and is designed to detect multiple change points by examining mean changes. In all simulated experiments, for implementations of WCUSUM and FSBS, we use the authors' provided code\footnote{The code for WCUSUM could be found at \url{https://github.com/yzhao7322/CovFun_Change}, and the code for FSBS could be found at \url{https://github.com/cmadridp/FSBS}.}. WCUSUM requires fully observed functional data, therefore, in practice, we convert discretely observed data to functional observations by B-splines of order four. For FSBS, the tuning parameters are chosen by the cross-validation method recommended by the authors.

\noindent \textbf{Tuning parameter selection and algorithm implementations.}
To compute the initial estimators $\{\widehat{\eta}_\ell\}_{\ell=1}^{\widehat{K}}$, we need to determine a set of basis functions $\Phi$. In our simulation experiments, we select the set of Fourier basis and incorporate the Fourier extension technique as recommended by \citet{lin2021basis} due to its computational stability. There are three more tuning parameters involved in our algorithm: $\lambda$, $r$ and $\xi$, and we adopt a cross-validation method to select these tuning parameters. Specifically, we first divide $\{X_{t,j}, Y_{t,j}\}_{t=1, j=1}^{n,m}$ into training and validation sets according to odd and even indices of $t$. For each candidate $(\lambda, r, \xi)$, we obtain change point estimators $\{\widehat{\eta}_\ell\}_{\ell=0}^{\widehat{K}+1}$ with $\widehat{\eta}_0 =0$ and $\widehat{\eta}_{\widehat{K}+1} =n/2$, and coefficients estimators $\{\widehat{C}_{(\widehat{\eta}_\ell,\widehat{\eta}_{\ell+1}]}\}_{\ell=0}^{\widehat{K}}$ on the training set. Using these outputs, we compute the validation loss
\begin{align*}
    \sum_{\ell = 0}^{\widehat{K}} \sum_{t = \widehat{\eta}_\ell+1}^{\widehat{\eta}_{\ell+1}} \sum_{(j,k)\in \mathcal{O}} \big\{Y_{t,j}Y_{t,k} - \Phi_r^\top(X_{t,j}) \widehat{C}_{(\widehat{\eta}_\ell,\widehat{\eta}_{\ell+1}]}\Phi_r(X_{t,k})\big\}^2
\end{align*}
with data in the validation set. The candidate $(\lambda, r, \xi)$ corresponding to the lowest validation loss is chosen as the final tuning parameter set. However, in practice, for computational efficiency, we pick the true $r$ in scenarios \ref{scenario_i}, \ref{scenario_iii} and \ref{scenario_iv} when covariance functions are constructed with Fourier basis, and select $r =3$ in scenario \ref{scenario_ii}. 

The constrained optimisation problem in \eqref{loss_cov} is tackled by the geometric Newton's method suggested in \citet{lin2021basis}, with all the tuning parameters chosen as suggested there.

\noindent \textbf{Evaluation metrics.} Let $\{\eta_\ell\}_{\ell =1}^{\widehat{K}}$ and $\{\Tilde{\eta}_\ell\}_{\ell =1}^{\widehat{K}}$ be the preliminary and final change point estimators defined in \eqref{loss_l0} and \eqref{loss_refine} respectively. 

To access the performance of localisation, we report (i) means and standard deviations of absolute errors $|\widehat{K}-K|$, (ii) the proportions of the repetitions with $\widehat{K} < K$ (i.e.~underestimating the number of change points), $\widehat{K} = K$ and $\widehat{K} >K$ (i.e.~overestimating the number of change points), and (iii) means and standard deviations of scaled Hausdorff distances of the preliminary and final estimators.  The scaled Hausdorff distance between $\{\widehat{\eta}_\ell\}_{\ell =1}^{\widehat{K}}$ and $\{\eta_\ell\}_{\ell =1}^{\widehat{K}}$ is defined as
\begin{align}\label{scaled_hausdorff}
    \mathcal{D}\big(\{\widehat{\eta}_\ell\}_{\ell=1}^{\widehat{K}},\{\eta_\ell\}_{\ell=1}^K\big)=\frac{d\big(\{\widehat{\eta}_\ell\}_{\ell=1}^{\widehat{K}},\{\eta_\ell\}_{\ell=1}^K\big)}{n},
\end{align}
where $d(\mathcal{A}, \mathcal{B})$ denotes the Hausdorff distance between two compact sets $\mathcal{A}, \mathcal{B}$ in $\mathbb{R}$, given by
\begin{align*}
    d(\mathcal{A},\mathcal{B})=\max \Big\{\max _{a \in \mathcal{A}} \min _{b \in \mathcal{B}}|a-b|, \; \max _{b \in \mathcal{B}} \min _{a \in \mathcal{A}}|a-b|\Big\}.
\end{align*}
To assess the performance of local refinement, we report the scaled Hausdorff distance $\mathcal{D}\big(\{\Tilde{\eta}_\ell\}_{\ell=1}^{\widehat{K}},\{\eta_\ell\}_{\ell=1}^{\widehat{K}}\big)$ as detailed in \eqref{scaled_hausdorff}. To assess the performance of inference, we report (i) the coverage rate of the constructed confidence interval detailed in \Cref{section_CI_construction}, i.e. we report the coverage probability $\text{cover}_\ell(1-\alpha)$, which is the proportion of times such that
\begin{align} \label{eq_cover}
   \eta_\ell \in \Bigg[ \Big\lfloor \Tilde{\eta}_\ell - \frac{\widehat{q}_u(1-\alpha/2)}{\widehat{\kappa}_\ell^2}\mathbbm{1}_{\{\widehat{\kappa}_\ell^2 \neq 0\}}\Big\rfloor,\; \Big\lceil\Tilde{\eta}_\ell - \frac{\widehat{q}_u(\alpha/2)}{\widehat{\kappa}_\ell^2}\mathbbm{1}_{\{\widehat{\kappa}_\ell^2 \neq 0\}}\Big\rceil \Bigg]
\end{align}
among all the repetitions with $\widehat{K} = K$, and (ii) medians and standard deviations of widths of numerical $(1-\alpha)$-confidence intervals, i.e.
\begin{align*}
    \text{width}_\ell(1-\alpha) = \Big\lceil\Tilde{\eta}_\ell - \frac{\widehat{q}_u(\alpha/2)}{\widehat{\kappa}_\ell^2}\mathbbm{1}_{\{\widehat{\kappa}_\ell^2 \neq 0\}}\Big\rceil  - \Big\lfloor \Tilde{\eta}_\ell - \frac{\widehat{q}_u(1-\alpha/2)}{\widehat{\kappa}_\ell^2}\mathbbm{1}_{\{\widehat{\kappa}_\ell^2 \neq 0\}}\Big\rfloor.
\end{align*}

\noindent \textbf{Results - localisation.}
We summarise the performances of FFDP, SBS, WCUSUM and FSBS under scenarios \ref{scenario_i},  \ref{scenario_ii} and \ref{scenario_iii} in Tables \ref{table_scenario_i}, \ref{table_scenario_ii} and \ref{table_scenario_iii} respectively. In Tables \ref{table_scenario_i} and \ref{table_scenario_ii}, we highlight the top two results in \textbf{bold}. Since WCUSUM is designed to detect one change point only, in scenario \ref{scenario_iii}, we only apply FSBS, SBS and FFDP, and the best result is highlighted in \textbf{bold}.

We can see that in nearly all of our settings, FFDP is among the top performers in terms of both absolute errors $|\widehat{K} - K|$ and scaled Hausdorff distances, and demonstrates robust behaviours in both sparsely and densely sampled data. Among different settings in three scenarios, we investigate the effect of $n$ and $m$ on localisation performances and our results indicate that better localisation performances are achieved as $n$ and $m$ increase.  This observation aligns with our theoretical findings. In scenario \ref{scenario_ii}, we further show that even when the true covariance functions are not periodic, our algorithm could still output satisfying estimators when $\Phi$ is chosen as the Fourier basis.

Note that scenario \ref{scenario_ii} has only one significant change point in each setting. This aligns perfectly with the strength of WCUSUM when the quality of the function smoothing is satisfactory. Therefore, WCUSUM also performs well in scenario \ref{scenario_ii} when  $m$ is relatively large. As we do not have any changes in mean functions, relatively poor performances of FSBS results are also observed as expected.

\noindent \textbf{Results - inference.} The performances of local refinement and inference are summarised in \Cref{table_inference1}. We can see that an extra step of local refinement can drastically improve localisation performances and better coverage is achieved as $n$ increases. However, we note that the coverage of constructed confidence intervals is relatively poor, especially when the sample size is small. 

To provide several insights behind this observation, we conjecture that the poor performance is attributed to the large jump size required for achieving satisfactory localisation performances. Even though we are already superior to all the competitors, change point localisation for fragmented functional data is in general a challenging task. To obtain reasonable localisation performance, a relatively large jump size is required. This requirement not only means that we are moving away from the vanishing regime but also increases the asymmetry of constructed confidence intervals if we follow \textbf{Step 4} in \Cref{section_CI_construction} during constructions, echoing \Cref{r_asymmetric}.  We have also provided simulation results based on the alternative procedure involving \textbf{Step 4'}, showing similar performances.  Note that even when substituting the estimators $\widehat{C}$ with population truth in all calculations in \Cref{section_CI_construction}, there is still no significant improvement in terms of both scaled Hausdorff distances and coverage probabilities. 

\begin{table}[!htbp]
\centering
\begin{tabular}{ccccccc}
\hline
Method & $n$                    & $|\widehat{K} - K|$   & $\widehat{K} <K$ & $\widehat{K} = K$ & $\widehat{K} > K$ & $\mathcal{D}$          \\ \hline
FFDP   & \multirow{4}{*}{100} & \textbf{0.270 (0.446)} & 0.230             & 0.730              & 0.040              & \textbf{0.204 (0.189)} \\
SBS    &                      & \textbf{0.600 (0.532)}  & 0.390             & 0.420              & 0.190              & \textbf{0.274 (0.204)}          \\
WCUSUM &                      & 0.700 (0.461)           & 0.700             & 0.300              & -                 & 0.371 (0.202) \\
FSBS   &                      & 4.100 (6.078)          & 0.370             & 0.120              & 0.510              & 0.324 (0.156)          \\ \hline
FFDP   & \multirow{4}{*}{300} & \textbf{0.370 (0.544)} & 0.160             & 0.660              & 0.180              & \textbf{0.172 (0.184)} \\
SBS    &                      & 0.550 (0.642)          & 0.13             & 0.520              & 0.350              & \textbf{0.188 (0.181)} \\
WCUSUM &                      & \textbf{0.490 (0.502)} & 0.490             & 0.510              & -                 & 0.260 (0.239)          \\
FSBS   &                      & 4.380 (8.208)          & 0.350             & 0.110              & 0.540              & 0.321 (0.147)          \\ \hline
FFDP   & \multirow{4}{*}{400} & \textbf{0.240 (0.429)} & 0.120             & 0.760              & 0.120              & \textbf{0.152 (0.172)} \\
SBS    &                      & 0.510 (0.659)          & 0.080             & 0.580              & 0.340              & \textbf{0.151 (0.170)} \\
WCUSUM &                      & \textbf{0.420 (0.496)} & 0.420             & 0.580              & -                 & 0.227 (0.237)          \\
FSBS   &                      & 5.090 (9.334)          & 0.350             & 0.150              & 0.500               & 0.431 (0.206)          \\ \hline
\end{tabular}
\caption{Localisation results for Scenario \ref{scenario_i}. Each cell is based on 100 repetitions. The columns $|\widehat{K} - K|$ and $\mathcal{D}$ are in the form of mean (standard deviation).  The columns $\widehat{K} < K$, $\widehat{K} = K$ and $\widehat{K} > K$ collect proportions of corresponding cases.}
\label{table_scenario_i}
\end{table}

\begin{table}[!htbp]
\centering
\begin{tabular}{ccccccc}
\hline
Method & $m$                 & $|\widehat{K} - K|$   & $\widehat{K} <K$ & $\widehat{K} = K$ & $\widehat{K} > K$ & $\mathcal{D}$          \\ \hline
FFDP   & \multirow{4}{*}{10} & \textbf{0.360 (0.482)} & 0.020            & 0.640              & 0.340              & \textbf{0.142 (0.135)} \\
SBS    &                     & \textbf{0.400 (0.550)} & 0.030             & 0.630              & 0.340              & \textbf{0.138 (0.149)} \\
WCUSUM &                     & 0.900 (0.302)          & 0.900             & 0.100              & -                 & 0.474 (0.090)          \\
FSBS   &                     & 2.120 (2.567)          & 0.310             & 0.210              & 0.480              & 0.306 (0.148)          \\ \hline
FFDP   & \multirow{4}{*}{20} & \textbf{0.150 (0.359)} & 0.020             & 0.850              & 0.130              & \textbf{0.101 (0.118)} \\
SBS    &                     & \textbf{0.300 (0.541)} & 0.050             & 0.740              & 0.210              & \textbf{0.131 (0.141)} \\
WCUSUM &                     & 0.450 (0.500)          & 0.450             & 0.550              & -                 & 0.239 (0.239)          \\
FSBS   &                     & 2.820 (4.174)          & 0.430             & 0.130              & 0.440              & 0.338 (0.155)          \\ \hline
FFDP   & \multirow{4}{*}{30} & \textbf{0.140 (0.349)} & 0.050             & 0.860              & 0.090              & \textbf{0.097 (0.125)} \\
SBS    &                     & 0.260 (0.463)          & 0.130             & 0.750              & 0.120              & 0.140 (0.169)          \\
WCUSUM &                     & \textbf{0.180 (0.386)} & 0.180             & 0.820              & -                 & \textbf{0.106 (0.187)} \\
FSBS   &                     & 3.08 (5.171)          & 0.520             & 0.120              & 0.360              & 0.381 (0.143)          \\ \hline
FFDP   & \multirow{4}{*}{40} & \textbf{0.170 (0.378)} & 0.060             & 0.830              & 0.110              & \textbf{0.094 (0.133)} \\
SBS    &                     & 0.260 (0.441)          & 0.140             & 0.740              & 0.120              & 0.139 (0.174)          \\
WCUSUM &                     & \textbf{0.110 (0.314)} & 0.110             & 0.890              & -                 & \textbf{0.077 (0.153)} \\
FSBS   &                     & 1.980 (3.629)          & 0.700             & 0.060              & 0.240              & 0.425 (0.123) \\        \hline
\end{tabular}
\caption{Localisation results for Scenario \ref{scenario_ii}. Each cell is based on 100 repetitions. The columns $|\widehat{K} - K|$ and $\mathcal{D}$ are in the form of mean (standard deviation).  The columns $\widehat{K} < K$, $\widehat{K} = K$ and $\widehat{K} > K$ collect proportions of corresponding cases.}
\label{table_scenario_ii}
\end{table}

\begin{table}[!htbp]
\centering
\begin{tabular}{ccccccc}
\hline
Method & $n$                    & $|\widehat{K} - K|$   & $\widehat{K} < K$ & $\widehat{K} = K$ & $\widehat{K} > K$ & $\mathcal{D}$          \\ \hline
FFDP   & \multirow{3}{*}{150} & \textbf{0.370 (0.580)} & 0.300              & 0.680              & 0.020              & \textbf{0.151 (0.106)} \\
SBS    &                      & 0.460 (0.784)          & 0.200              & 0.720              & 0.080              & 0.152 (0.122)          \\
FSBS   &                      & 1.690 (1.229)          & 0.720              & 0.150              & 0.130              & 0.309 (0.092)          \\ \hline
FFDP   & \multirow{3}{*}{200} & \textbf{0.260 (0.441)} & 0.110              & 0.740              & 0.150              & \textbf{0.121 (0.073)} \\
SBS    &                      & 0.330 (0.551)          & 0.070              & 0.710              & 0.220              & 0.131 (0.085)          \\
FSBS   &                      & 2.260 (2.809)          & 0.670              & 0.110              & 0.220              & 0.295 (0.093)          \\ \hline
\end{tabular}
\caption{Localisation results for Scenario \ref{scenario_iii}. Each cell is based on 100 repetitions. The columns $|\widehat{K} - K|$ and $\mathcal{D}$ are in the form of mean (standard deviation).  The columns $\widehat{K} < K$, $\widehat{K} = K$ and $\widehat{K} > K$ collect proportions of corresponding cases.}
\label{table_scenario_iii}
\end{table}

\begin{table}[htbp!]
\centering
\begin{tabular}{ccclccc}
\hline
\multirow{2}{*}{$m$} & \multirow{2}{*}{$\mathcal{D}(\widehat{\eta}, \eta)$} & \multirow{2}{*}{$\mathcal{D}(\Tilde{\eta}, \eta)$} & \multicolumn{2}{c}{$\alpha = 0.01$}                                           & \multicolumn{2}{c}{$\alpha = 0.05$}                                           \\
                     &                                                      &                                                    & \multicolumn{1}{l}{cover($1-\alpha$)} & \multicolumn{1}{l}{width($1-\alpha$)} & \multicolumn{1}{l}{cover($1-\alpha$)} & \multicolumn{1}{l}{width($1-\alpha$)} \\ \hline
\multicolumn{7}{c}{$n =200$}                                                                                                                                                                                                                                                                     \\
\multirow{2}{*}{10}  & \multirow{2}{*}{0.117 (0.102)}                       & \multirow{2}{*}{0.040 (0.062)}                     & (I) 0.805                                 & 22.000 (20.361)                       & 0.772                                 & 18.000 (17.049)                       \\
                     &                                                      &                                                    & (II) 0.837                                 & 28.000 (31.301)                       & 0.837                                 & 26.000 (20.029)                       \\
\multirow{2}{*}{20}  & \multirow{2}{*}{0.116 (0.097)}                       & \multirow{2}{*}{0.042 (0.064)}                     & (I) 0.818                                 & 24.000 (23.724)                       & 0.802                                 & 20.000 (18.796)                       \\
                     &                                                      &                                                    & (II) 0.835                                 & 32.000 (26.309)                       & 0.835                                 & 28.000 (22.094)                       \\ \hline
\multicolumn{7}{c}{$n =300$}                                                                                                                                                                                                                                                                     \\
\multirow{2}{*}{10}  & \multirow{2}{*}{0.107 (0.109)}                       & \multirow{2}{*}{0.029 (0.056)}                     & (I) 0.847                                 & 28.000 (21.886)                       & 0.824                                 & 23.000 (16.860)                       \\
                     &                                                      &                                                    & (II) 0.878                                 & 36.000 (24.701)                       & 0.863                                 & 33.000 (19.666)                       \\
\multirow{2}{*}{20}  & \multirow{2}{*}{0.102 (0.104)}                       & \multirow{2}{*}{0.028 (0.059)}                     & (I) 0.904                                 & 36.000 (42.547)                       & 0.856                                 & 30.000 (31.257)                       \\
                     &                                                      &                                                    &(II) 0.920                                 & 44.000 (38.389)                       & 0.896                                 & 39.000 (30.482)                       \\ \hline
\multicolumn{7}{c}{$n = 400$}                                                                                                                                                                                                                                                                    \\
\multirow{2}{*}{10}  & \multirow{2}{*}{0.098 (0.116)}                       & \multirow{2}{*}{0.030 (0.064)}                     & (I) 0.857                                 & 56.000 (40.122)                       & 0.846                                 & 45.000 (29.286)                       \\
                     &                                                      &                                                    & (II) 0.934                                 & 67.000 (34.848)                       & 0.923                                 & 49.000 (26.855)                       \\
\multirow{2}{*}{20}  & \multirow{2}{*}{0.094 (0.108)}                       & \multirow{2}{*}{0.022 (0.054)}                     & (I) 0.902                                 & 51.000 (45.011)                       & 0.856                                 & 41.500 (32.730)                       \\
                     &                                                      &                                                    & (II) 0.955                                 & 61.500 (37.110)                       & 0.947                                 & 45.000 (27.415)                       \\ \hline
\end{tabular}
\caption{Inference results for Scenario \ref{scenario_iv}. The columns of $\mathcal{D}(\widehat{\eta}, \eta)$ and $\mathcal{D}(\Tilde{\eta}, \eta)$ are in the form of mean (standard deviations). The columns of width$(1-\alpha)$ are in the form of median (standard deviation). For each setting, (I) corresponds to the construction following \textbf{Step 4} and (II) corresponds to the construction following \textbf{Step 4'}.}
\label{table_inference1}
\end{table}

\subsection{Real data example} \label{section_numerical_real}
We apply our algorithm to analyse the daily demand curve of German power market, which consists of two publicly available datasets: the average hourly wholesale electricity prices ($\{Y_{t,j}\}_{t=1,j=1}^{n,m}$) from \cite{electricityprice}  and hourly values of Germany's gross electricity demand ($\{X_{t,j}\}_{t=1,j=1}^{n,m}$) from \cite{electricitydemand} in 2022. By pairing the prices and demand of the same timestamp, we construct a daily price-versus-demand curve. Similar datasets from different years and the same functions of interest construction have also been previously analysed in the FDA literature in terms of prediction \citep[e.g.][]{liebl2013modeling}, testing \citep[e.g.][]{liebl2019nonparametric} and covariance function estimations \citep[e.g.][]{ Kneip2020optimal}. 

After removing days with missing data points, to construct the model as detailed in \eqref{model_obs}, we model daily demand curves as realisations of a sequence of random functions, indexed by days.  We are interested in localising the days where the covariance functions change.  In addition to carrying out analysis using all available data, we conduct a separate analysis using only daily peak time data (from 8 am to 8 pm). Such selected data inherit the largest variations in electricity prices and at the same time have enhanced fragmented natures as demonstrated in \Cref{figure_domain}, which collects all the observable electricity demand pairs.  To be specific, each dot corresponds to an observation pair within a day, with its $x$- and $y$-axis values being the two normalised observed electricity demand values - normalised by setting the minimum and maximum demand values to zero and one respectively. The shaded diagonal areas in \Cref{figure_domain} in fact resonate with the missing off-diagonal covariance matrices phenomena in the fragmented functional data analysis problems.

Accordingly, the data dimensions of two separate analyses are given by $(n, m_1)= (355, 24)$ and $(n, m_2) = (355, 12)$ respectively.  The lengths of observation fragments (on the normalised demand axis) are 0.714 and 0.384 respectively.

To remove the effect of the unknown mean function, we estimate it by the penalised least squares method introduced in \citet{lin2021basis}, which targets mean function estimation for fragmented functional data. We apply FFDP to detect potential change points in daily demand curves with $r=4$, and the tuning parameters $(\lambda, \xi)$ are selected following the same implementation as the one described in \Cref{section_numerical_simulated}. For comparison, we also apply the SBS and WCUSUM procedures. Results are collected in \Cref{table_real_data} and plots of correspondingly estimated covariance functions before the first change point, between two change points and after the second change point outputted by the FFDP algorithm can be found in \Cref{figure_covariance}.

For both the whole and peak time datasets, we identify two change points after applying our proposed FFDP algorithm, with common change points shared around August. When using peak time data only, we detect another change point in April, which is close to a change point detected by SBS, and a change point detected by WCUSUM when using the whole dataset. We conjecture that the change point on $17^{\text{th}}$ April is related to the energy crisis associated with the Russia-Ukraine war. According to \citet{ITA}, around $52 \%$ of the total volume of natural gas imported into Germany comes from Russia in 2021. On $23^{\text{rd}}$ March, Russian President Putin signed a decree requiring foreign buyers to pay for Russian gas in roubles from $1^{\text{st}}$ April. The April change point likely lines up with the issue of the decree. In addition, we reckon that the August change point is correlated with the observation that August sees the highest average electricity price throughout the year. As noted by \citet{VoxEU}, these extremely high prices might be attributed to the competition within EU countries driven by concerns about limited supplies of natural gases. 

After an extra step of local refinement, we obtain refined estimators $14^{\text{th}}$ August and $31^{\text{th}}$ August when using the whole dataset, and $16^{\text{th}}$ April and $14^{\text{th}}$ August when using only the peak time data. The confidence interval for each refined estimator could also be constructed following the same procedure in \Cref{section_CI_construction}. The widths of these constructed intervals are, however, less than 1.  We therefore omit the inference step.  We attribute this phenomenon to the substantial jump sizes. The minimum estimated jump size between two consecutive covariance functions exceeds $9.24 \times 10^7$. Therefore, we may well be in the non-vanishing jump size regime where we do not consider the numerical inference step in this paper. 

\begin{minipage}[!ht]{0.3\textwidth}
\centering
\vspace{0.3cm}
\includegraphics[width=1\textwidth]{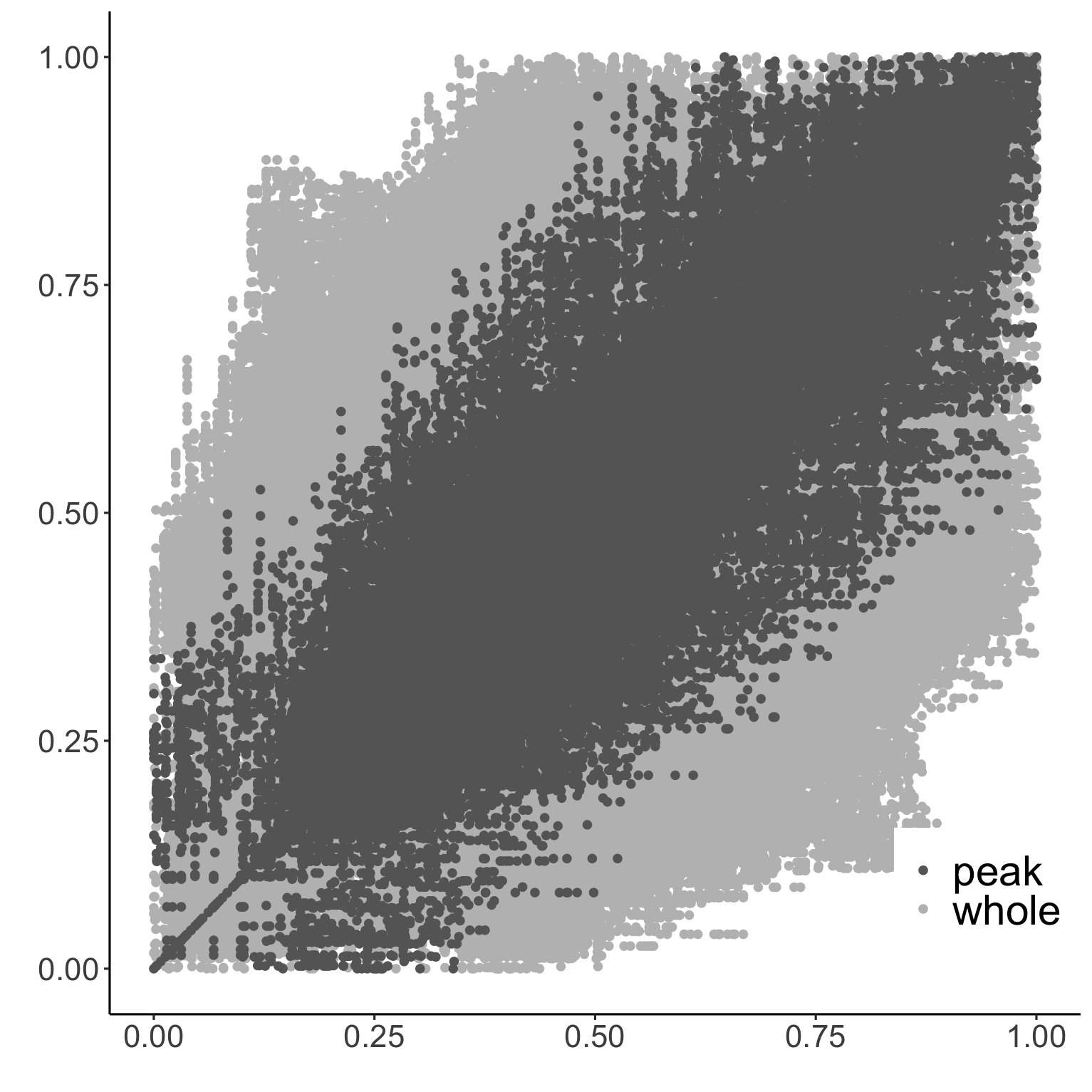}
\captionof{figure}{Observed domain of covariance functions using all observations in the real data example. The $x$-axis and $y$-axis are normalised demand.}
\label{figure_domain}
\end{minipage}
\hfill
\begin{minipage}[ht]{0.62\textwidth}
\centering
\begin{tabular}{ccc}
\hline
       & Whole data     & Peak time data \\ \hline
FFDP   & 26-Jul; 31-Aug & 17-Apr; 13-Aug \\
SBS    & 15-Aug; 12-Sep & 08-Apr; 15-Aug; 25-Nov \\
WCUSUM & 22-Apr         & 16-Aug         \\ \hline
\end{tabular}
\captionof{table}{Results of the estimated change points in the real data example.}
\label{table_real_data}
\end{minipage}

\begin{figure}[!htbp]
    \centering
    \includegraphics[scale = 0.34]{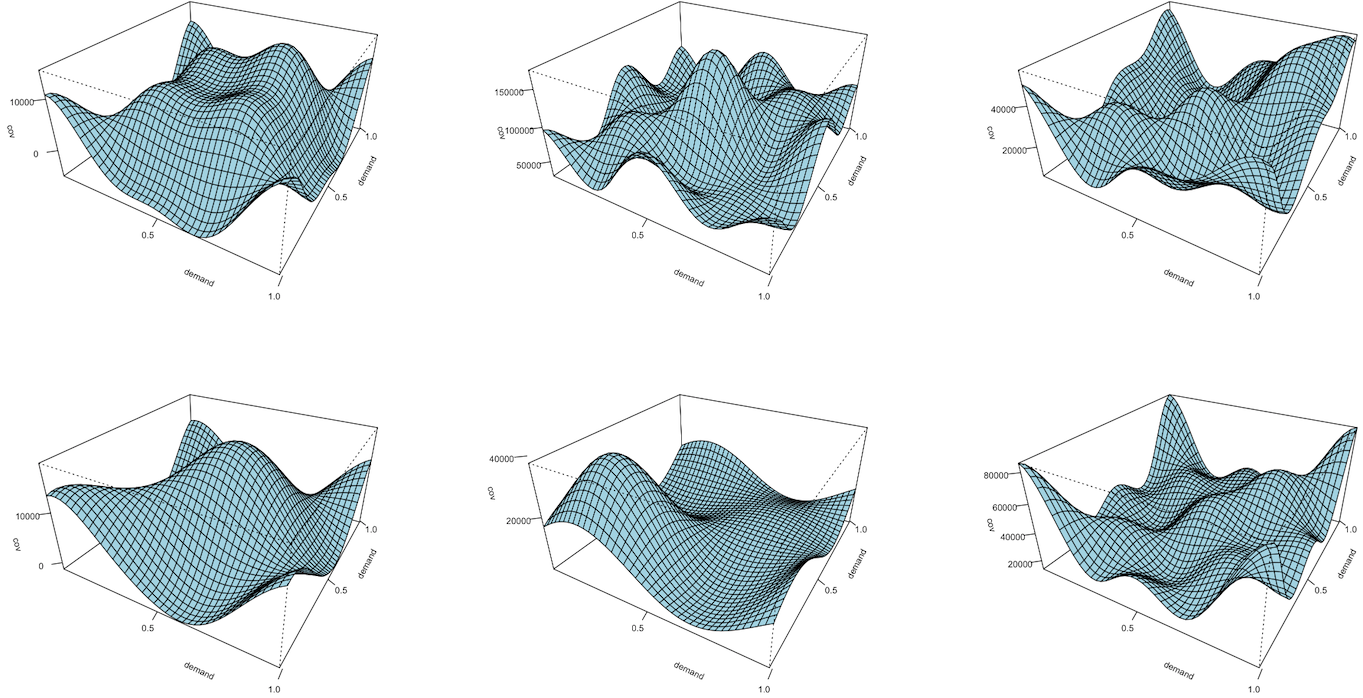}
    \caption{Estimated covariance functions. From left to right: estimated covariance functions before the first change point, between the two change points and after the second change point. The top and bottom rows correspond to analyses carried out with whole and peak-time data respectively.}
    \label{figure_covariance}
\end{figure}

\section{Conclusions} \label{section_conclusion} 
In this paper, we study the problems of change point localisation and inference in fragmented functional data. To the best of our knowledge, this study is the first time seen in the change point analysis literature. We propose a two-step method. In the first step, the localisation is efficiently implemented using the fragmented functional dynamic programming algorithm, whose outputs are also shown to have consistent localisation rates under common regularity conditions. In the second step, we perform local refined change point estimation and confidence construction based on our derived limiting distributions in different regimes of jump sizes.

We envisage several potential extensions. It will be interesting to further understand the role of the fragment length~$\delta$ on identifiability in covariance function estimation.  Lying in the core is a restricted eigenvalue type condition derived in \Cref{l_rec} in the Appendix. The quantity $\zeta_\delta$ - a lower bound on the restricted eigenvalue - is a function of $\delta$, but the dependence is yet known.  We conjecture some further structural assumptions on the covariance functions would be helpful in this task.

Another possible extension to our work is the implementation of limiting distributions when the jump size is non-vanishing. \Cref{t_inference} indicates that limiting distributions of change point estimators in the non-vanishing regime lack universality and depend heavily on the data generating mechanisms. While we focus on the implementation of the limiting distributions in the vanishing regime, there has been some recent work attempting to utilise limiting distributions in non-vanishing regime via bootstrap in univariate time series \citep[e.g.][]{chan2021optimal, ng2022bootstrap, cho2022bootstrap}. Deriving practical algorithms and theoretical properties for fragmented functional data remains an intriguing area for further investigation.

\section*{Acknowledgements}
Yu is partially supported by the Leverhulme Trust.

\newpage
\bibliographystyle{apalike}
\bibliography{reference}

\newpage
\begin{appendices}
\label{section_appendix}
All of the proofs and additional numerical details are collected in the Appendices. Details of the SBS algorithm in \Cref{section_numerical} can be found in Appendix \ref{appendix_SBS}. The proofs of Theorems \ref{t_localisation}, \ref{t_inference} and \ref{t_estimation} are presented in Appendices \ref{pf_localisation}, \ref{pf_inference} and \ref{pf_estimation} respectively. Necessary technical Lemmas can be found in Appendix \ref{pf_technical}.

Throughout the Appendix, with a slight abuse of notation, unless specifically stated otherwise, let~$C_1, C_2, \ldots > 0$ denote absolute constants whose values may vary from place to place. Also, denote the loss function $H(\cdot,\cdot)$ as
\begin{align*}
    H(C, I)= \begin{cases}
         0, \quad & |I| \leq \frac{\xi}{m} \; \text{and} \; C = \widehat{C}_I,\\
         \sum_{t \in I} \sum_{(j,k) \in \mathcal{O}} \big\{Y_{t,j}Y_{t,k} -\Phi_r^\top(X_{t,j}) C\Phi_r(X_{t,k})\big\}^2, \quad & \text{otherwise}.
     \end{cases}
\end{align*}

\section{Additional numerical details: Seeded Binary Segmentation (SBS)} \label{appendix_SBS}
In this section, we provide details of the SBS algorithm in \Cref{section_numerical}. Our algorithm is based on the SBS method introduced in \citet{kovacs2023seeded}. SBS is based on a collection of deterministic intervals defined in \Cref{def_seeded}. We further design a greedy selection method with our own constructed CUSUM statistics based on algorithms detailed in \Cref{section_algorithm}.

\begin{definition}[Seeded intervals] \label{def_seeded}
    Let $\mathcal{K}= \lceil C_\mathcal{K} \log_{2}(n)\rceil$, with some sufficiently large absolute constant $C_\mathcal{K} >0$. For $k \in \{1, \ldots, \mathcal{K}\}$, let $\mathcal{J}_k$ be the collection of $2^k-1$ intervals of length $l_k = n2^{-k+1}$ that that evenly shifted by $l_k/2 = n2^{-k}$, i.e.
    \begin{align*}
        \mathcal{J}_k = \{(\lfloor(i-1)n2^{-k}\rfloor, \lceil (i-1)n2^{-k}+n2^{-k+1}\rceil], \quad i =1, \ldots, 2^k-1\}.
    \end{align*}
    The overall collection of seeded intervals is denoted as $\mathcal{J} = \cup_{k=1}^\mathcal{K}\mathcal{J}_k$.
\end{definition}

\begin{algorithm}
    \caption{Seeded binary segmentation. SBS ($(s,e), \Phi, r,  \mathcal{J},\lambda,\tau)$}
    \begin{algorithmic}
        \Require Data $\{(X_{t,j},  Y_{t,j})\}_{t=1,j=1}^{n,m}$, complete orthonormal system $\Phi$, level of truncation $r\in \mathbb{Z}_{+}$, seeded intervals $\mathcal{J}$, and tuning parameter $\lambda, \tau> 0$
        \State \textbf{Initialization:} If $(s,e]=(0,n]$, set $\boldsymbol{S} \gets \varnothing$.
        \For{$\mathcal{I} = (\alpha,\beta] \in \mathcal{J}$}
        \State $b_{\mathcal{I}} \gets \underset{\alpha<t\leq \beta}{\arg\max} \; \big\{H(\widehat{C}_\mathcal{I}, \mathcal{I})-H(\widehat{C}_{(\alpha,t]}, (\alpha,t])-H(\widehat{C}_{(t,\beta]}, (t,\beta])\big\}$
        \State $a_{\mathcal{I}} \gets H(\widehat{C}_\mathcal{I}, \mathcal{I})-H(\widehat{C}_{(\alpha,b_{\mathcal{I}}]}, (\alpha,b_{\mathcal{I}}])-H(\widehat{C}_{(b_{\mathcal{I}},\beta]}, (b_{\mathcal{I}},\beta])$
        \EndFor
        \State$\mathcal{I}^* \gets \underset{\mathcal{I} \in \mathcal{J}}{\arg\max} \; a_\mathcal{I}$
        \If{$a_{\mathcal{I}^*} > \tau$}
        \State $\boldsymbol{S} \gets \boldsymbol{S} \cup \{b_{\mathcal{I}^*}\}$
        \State SBS$((s,b_{\mathcal{I}^*}),\Phi, r,  \mathcal{J},\lambda,\tau)$
        \State SBS$((b_{\mathcal{I}^*},e),\Phi, r,  \mathcal{J},\lambda,\tau)$
        \EndIf
        \Ensure The set of estimated change points $\boldsymbol{S}$
    \end{algorithmic}
\end{algorithm}

Apart from the orthonormal system $\Phi$, we have three more tuning parameters, $r,\tau, \lambda$. We select them by cross-validations. Specifically, we first divide $\{X_{t,j}, Y_{t,j}\}_{t=1, j=1}^{n,m}$ into training and validation sets according to odd and even indices of $t$. For each candidate $(\lambda, r, \tau)$, we obtain change point estimators $\{\widehat{\eta}^{(\text{t})}_\ell\}_{\ell=0}^{\widehat{K}+1}$ with $\widehat{\eta}^{(\text{t})}_0 =0$ and $\widehat{\eta}^{(\text{t})}_{\widehat{K}+1} =n/2$, and coefficients estimators $\{\widehat{C}_{(\widehat{\eta}^{(\text{t})}_\ell,\widehat{\eta}^{(\text{t})}_{\ell+1}]}\}_{\ell=0}^{\widehat{K}}$ on the training set. Using these outputs, we compute the validation loss
\begin{align*}
    \sum_{\ell = 0}^{\widehat{K}} \sum_{t = \widehat{\eta}^{(\text{t})}_\ell+1}^{\widehat{\eta}^{(\text{t})}_{\ell+1}} \sum_{(j,k)\in \mathcal{O}} \big\{Y_{t,j}Y_{t,k} - \Phi_r^\top(X_{t,j}) \widehat{C}_{(\widehat{\eta}^{(\text{t})}_\ell,\widehat{\eta}^{(\text{t})}_{\ell+1}]}\Phi_r(X_{t,k})\big\}^2
\end{align*}
with data in the validation set. The candidate $(\lambda, r, \tau)$ corresponding to the lowest validation loss is chosen as the final tuning parameter set.

\section{Proof of \texorpdfstring{\Cref{t_localisation}}{}} \label{pf_localisation}
\begin{proof}[Proof of \Cref{t_localisation}]
    It holds from \Cref{prop_localisation} that in two consecutive induced integer intervals formed by three estimated change points, there is at least one true change point, thus it follows that $|\widehat{\mathcal{P}}|-1 \leq 3K$. In addition, for any interval $I=(s,e] \in \widehat{\mathcal{P}}$, there are three possible scenarios: the interval $I$ contains no, one, or two true change points. If there is only one true change point inside $I$, then it is close to one of the endpoints. If there are two true change points, then the smaller true change point is closer to the left endpoint $s$, and the larger true change point is closer to the right endpoint $e$. Therefore, this shows that every true change point can be mapped to an estimated change point. Hence it holds that $|\widehat{\mathcal{P}}|-1> K$. Combined with the result in \Cref{prop_consistent_number}, we complete the proof of \Cref{t_localisation}.
\end{proof}

\begin{proposition} \label{prop_localisation}
    Under the same condition in \Cref{t_localisation} and letting $\widehat{\mathcal{P}}$ being the solution to \eqref{loss_l0}, the following events uniformly hold with probability at least $1-3n^{-3}$.
    \begin{enumerate}[label=(\roman*)]
        \item For all interval $I = (s,e] \in \widehat{\mathcal{P}}$ containing one and only one true change point $\eta_\ell$, it must be the case that for a sufficiently large constant $C_\varepsilon > 0$, 
        \begin{align*}
            \min\{\eta_\ell-s,e-\eta_\ell\} \leq C_{\epsilon}K\Big\{\frac{r^4\log^2(n)}{\kappa_\ell^2\delta^2\zeta_{\delta}^2} \vee \frac{r^6\log(n)}{\kappa_\ell^2\delta^2 \zeta_{\delta}^2m}\Big\};
        \end{align*}

        \item for all interval $I = (s,e] \in \widehat{\mathcal{P}}$ containing exactly two change points, say $\eta_\ell < \eta_{\ell+1}$, it must be the case that for a sufficiently large $C_\varepsilon > 0$,  
        \begin{align*}
            \eta_\ell-s \leq C_{\epsilon}K\Big\{\frac{r^4\log^2(n)}{\kappa_\ell^2\delta^2\zeta_{\delta}^2} \vee \frac{r^6\log(n)}{\delta^2 \zeta_{\delta}^2m}\Big\} \quad \text{and} \quad e-\eta_{\ell+1} \leq C_{\epsilon}K\Big\{\frac{r^4\log^2(n)}{\kappa_\ell^2\delta^2\zeta_{\delta}^2} \vee \frac{r^6\log(n)}{\kappa_\ell^2\delta^2 \zeta_{\delta}^2m}\Big\};
        \end{align*}

        \item for any two consecutive induced intervals $I_1$ and $I_2 \in \widehat{\mathcal{P}}$, the interval $I_1 \cup I_2$ contains at least one true change point; and,

        \item no interval $I \in \widehat{\mathcal{P}}$ contains strictly more than two true change points. 
    \end{enumerate}
\end{proposition}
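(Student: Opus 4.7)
The plan is to follow the by-now standard strategy for analysing $\ell_0$-penalised dynamic programming change point procedures (as in \citealt{wang_univariate}, \citealt{xu2022change} and \citealt{madrid2024change}), adapted to our covariance-estimation loss $H(\cdot,\cdot)$. The backbone of every step is the basic inequality
\[
\sum_{I \in \widehat{\mathcal{P}}} H(\widehat{C}_I, I) + \xi |\widehat{\mathcal{P}}|
\;\leq\; \sum_{I \in \mathcal{P}^*} H(\widehat{C}_I, I) + \xi |\mathcal{P}^*|
\]
that follows from the optimality of $\widehat{\mathcal{P}}$ in \eqref{loss_l0}. For each of the four claims I will compare $\widehat{\mathcal{P}}$ with a carefully chosen competitor $\mathcal{P}^*$ and turn the resulting inequality into the announced bound.

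First, I will set up a single high-probability good event $\mathcal{E}$, on which three families of estimates hold simultaneously and uniformly over all relevant subintervals $I \subseteq [n]$: (a) the covariance-estimation bound of \Cref{t_estimation} with the tuning parameters specified in \Cref{t_localisation}, yielding $\|\Phi_r^\top \widehat{C}_I \Phi_r - \Sigma^*_I\|_{L^2}^2 \lesssim r^{4\alpha+2}\log(n)/(|I|\zeta_\delta^2) + r^{8\alpha+4}\log(n)/(|I|m\zeta_\delta^2) + r^{-2q}$; (b) a sub-exponential concentration bound for the cross term $\sum_{t\in I}\sum_{(j,k)\in\mathcal{O}}\{Y_{t,j}Y_{t,k}-\mathbb{E}[Y_{t,j}Y_{t,k}]\}\{\Phi_r^\top(X_{t,j}) C\Phi_r(X_{t,k})\}$ (controlling the stochastic part of the loss via the sub-Gaussian assumptions \ref{a_model}\ref{a_model_function} and \ref{a_model}\ref{a_model_error}); and (c) a bound on the purely noise contribution $\sum_{t\in I}\sum_{(j,k)\in\mathcal{O}}(Y_{t,j}Y_{t,k})^2$. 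Choosing $r$ as in \eqref{localisation_r} matches the stochastic and approximation terms, and a union bound over the $O(n^2)$ intervals yields probability at least $1-3n^{-3}$. This is the key technical preparation, and I expect this to be the main obstacle, because the loss $H$ involves a quadratic form in the unknown coefficient matrix, so the deviation inequalities must be derived for matrix-valued quadratic functions rather than for simple partial sums; the Fourier-basis scaling $\|\phi_k\|_\infty \lesssim 1$, i.e.\ $\alpha=0$, will be used throughout to keep the powers of $r$ manageable.

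With $\mathcal{E}$ in hand, the four claims fall into two groups. For (iii) and (iv), which are ``structural'', I will argue by contradiction and use a refinement of $\widehat{\mathcal{P}}$ as the competitor. For (iv), if some $I \in \widehat{\mathcal{P}}$ were to contain three true change points $\eta_a < \eta_{a+1} < \eta_{a+2}$, I would form $\mathcal{P}^* = \widehat{\mathcal{P}} \cup \{\eta_a,\eta_{a+1},\eta_{a+2}\}$ and show that the decrease in loss dominates the extra penalty $3\xi$; the decrease can be lower-bounded by (a constant times) $\min_\ell \Delta \kappa_\ell^2$, which exceeds $3\xi$ under Assumption~\ref{a_localisation_jump}\ref{a_localisation_jump_snr}. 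An analogous argument handles (iii): if two consecutive intervals in $\widehat{\mathcal{P}}$ contain no true change point, then merging them costs no additional in-sample loss beyond the estimation fluctuation controlled on $\mathcal{E}$, which is $o(\xi)$ by the choice of $\xi$, contradicting optimality.

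For (i) and (ii), which deliver the quantitative localisation rates, the competitor is obtained by inserting the exact true change point into $I$. If $I = (s,e] \in \widehat{\mathcal{P}}$ contains a unique $\eta_\ell$ with $\min\{\eta_\ell-s, e-\eta_\ell\} = d$, I will compare $\widehat{\mathcal{P}}$ with $\widehat{\mathcal{P}}' = (\widehat{\mathcal{P}}\setminus\{I\}) \cup \{(s,\eta_\ell], (\eta_\ell,e]\}$. The signal decrease from this split, after subtracting cross-term and estimation-error contributions controlled on $\mathcal{E}$, is of order $d\kappa_\ell^2$ up to additive error $K\{r^4\log^2(n)/(\delta^2\zeta_\delta^2) \vee r^6\log(n)/(\delta^2\zeta_\delta^2 m)\}$, and it must exceed $\xi$ by optimality. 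Solving for $d$ yields the claimed upper bound on $\min\{\eta_\ell-s, e-\eta_\ell\}$. Claim (ii) is handled by the same argument applied at each of the two endpoints of $I$, inserting the nearer change point on each side. Once the four claims hold on $\mathcal{E}$, a union bound concludes the proof of \Cref{prop_localisation}.
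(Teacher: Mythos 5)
Your overall strategy---comparing $\widehat{\mathcal{P}}$ against competitors obtained by splitting at the true change points (for (i), (ii), (iv)) or merging two consecutive intervals (for (iii)), on a uniform high-probability event controlling the loss over all $O(n^2)$ subintervals---is exactly the paper's, which proves the four claims as separate lemmas with precisely this structure. However, there is a gap at the step you dispatch in one clause: ``the signal decrease from this split \ldots is of order $d\kappa_\ell^2$''. In the fragmented setting this is not routine. The empirical signal term is $\sum_{t}\sum_{(j,k)\in\mathcal{O}}\{\Phi_r^\top(X_{t,j})(C^*_{r,I_1}-C^*_{r,I_2})\Phi_r(X_{t,k})\}^2$, and the design points $X_{t,j}$ only cover the random fragment $[a_t,a_t+\delta]$; without further argument this quadratic form could be arbitrarily small even when $\|C^*_{r,I_1}-C^*_{r,I_2}\|_{\F}$ is bounded away from zero, because two distinct covariance functions can nearly coincide on every sub-square $[u,u+\delta]^2$. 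The paper closes this with the restricted-eigenvalue bound of \Cref{l_rec}, which uses the $\mathcal{T}_{u,\delta}$-identifiability of \Cref{a_model}\ref{a_model_identifiable} and a compactness argument to show $\mathbb{E}[\{\Phi_r^\top(X_{1,1})\Delta\Phi_r(X_{1,2})\}^2\mid a_1]\geq \zeta_\delta\|\Delta\|_{\F}^2$; this is where the constant $\zeta_\delta$ in the statement comes from, and the event $\mathcal{E}$ as you describe it does not contain this ingredient (your item (a) controls the estimation error $\|\widehat{C}_I-C^*_{r,I}\|_{\F}$, which is a different quantity). A second, related omission: you need $\|C^*_{r,\eta_{\ell+1}}-C^*_{r,\eta_\ell}\|_{\F}^2\gtrsim\kappa_\ell^2$, which is false in general after truncating to $r$ basis functions and is exactly what \Cref{a_localisation_jump}\ref{a_localisation_jump_lb} is there to guarantee.

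Two smaller points. First, the thresholding $H(C,I)=0$ for $|I|<\xi/m$ in \eqref{loss_localisation} forces a case analysis you do not mention: when a split piece (e.g.\ a boundary piece in (ii) or (iv), or one of the merged intervals in (iii)) is shorter than $\xi/m$, its loss is identically zero and the comparison must instead invoke a crude bound $H(C^*_{r,I_\ell},I_\ell)\lesssim\xi$ for short intervals (the paper's \Cref{l_loss_short}); the bounds in (i) and (ii) also hold trivially in that regime, which is why the contradiction argument may assume $\min\{|I_1|,|I_2|\}\gtrsim\xi/(m\zeta_\delta\kappa_\ell^2)$ from the outset. Second, your phrasing ``it must exceed $\xi$ by optimality'' has the inequality backwards: optimality of $\widehat{\mathcal{P}}$ forces the loss decrease from splitting to be \emph{at most} $\xi$, and it is the lower bound of order $d\kappa_\ell^2 m\zeta_\delta$ on that decrease which then yields $d\lesssim\xi/(m\zeta_\delta\kappa_\ell^2)$.
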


\begin{proposition}\label{prop_consistent_number}
    Under the same condition in \Cref{t_localisation}, with $\widehat{\mathcal{P}}$ being the solution to \eqref{loss_l0}, satisfying $K \leq |\widehat{\mathcal{P}}|-1 \leq 3K$, then it holds with probability at least $1-3n^{-3}$ that $|\widehat{\mathcal{P}}|-1 = \widehat{K} = K$.
\end{proposition}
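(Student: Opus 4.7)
The plan is to argue by contradiction. Since the hypothesis already yields $|\widehat{\mathcal{P}}|-1 \ge K$, it suffices to rule out $|\widehat{\mathcal{P}}|-1 \ge K+1$. Assume for contradiction that $|\widehat{\mathcal{P}}| \ge K+2$. By part~(iv) of \Cref{prop_localisation}, each interval of $\widehat{\mathcal{P}}$ contains at most two of the $K$ true change points, so writing $n_0, n_1, n_2$ for the number of intervals of $\widehat{\mathcal{P}}$ containing exactly $0,1,2$ true change points respectively, $n_1+2n_2 = K$ and $n_0+n_1+n_2 = |\widehat{\mathcal{P}}|$ force $n_0 \ge |\widehat{\mathcal{P}}|-K \ge 2$. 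Part~(iii) of the same proposition forbids two empty intervals from being consecutive, so I may select an empty $I^* \in \widehat{\mathcal{P}}$ admitting a neighbour $I^{**} \in \widehat{\mathcal{P}}$, and by parts~(i)--(ii) any true change points lying inside $I^{**}$ are close to one of its endpoints.

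Let $\widetilde{\mathcal{P}}$ denote the coarser partition obtained from $\widehat{\mathcal{P}}$ by replacing $\{I^*, I^{**}\}$ with the single interval $I^* \cup I^{**}$, so that $|\widetilde{\mathcal{P}}| = |\widehat{\mathcal{P}}| - 1$. Optimality of $\widehat{\mathcal{P}}$ for \eqref{loss_l0} immediately gives
\begin{equation*}
H(\widehat{C}_{I^* \cup I^{**}}, I^* \cup I^{**}) - H(\widehat{C}_{I^*}, I^*) - H(\widehat{C}_{I^{**}}, I^{**}) \;\ge\; \xi.
\end{equation*}
Using additivity of $H(C,\cdot)$ in its interval argument for any fixed matrix $C$, the left-hand side equals
\begin{equation*}
\bigl[H(\widehat{C}_{I^* \cup I^{**}}, I^*) - H(\widehat{C}_{I^*}, I^*)\bigr] + \bigl[H(\widehat{C}_{I^* \cup I^{**}}, I^{**}) - H(\widehat{C}_{I^{**}}, I^{**})\bigr].
\end{equation*}

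Each bracket I would control by expanding both fits around the population target $C^*_{r,J}$ on the relevant interval $J$: the quadratic bias contributions are of order $\|\widehat{C}_I - C^*_{r,I}\|_{\F}^2$ and are bounded for arbitrary intervals (including those crossing change points) by \Cref{t_estimation}, while the linear cross terms between the residuals $Y_{t,j}Y_{t,k} - \Phi_r^\top(X_{t,j})C^*_{r,t}\Phi_r(X_{t,k})$ and the estimation errors are controlled by the sub-Gaussian tails of \Cref{a_model} together with a union bound over the $O(n^2)$ candidate intervals. Inserting the values of $r$ and $\lambda$ prescribed in \Cref{t_localisation}, the resulting upper bound is of order $K\{r^4 m\log^2(n)/(\delta^2\zeta_\delta^2) \vee r^6\log(n)/(\delta^2\zeta_\delta^2)\}$, which is strictly smaller than $\xi = C_\xi K\{r^4 m\log^2(n)/(\delta^2\zeta_\delta) \vee r^6\log(n)/(\delta^2\zeta_\delta)\}$ once the absolute constant $C_\xi$ is taken large enough, yielding the desired contradiction.

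The main obstacle is precisely the step above when $I^{**}$ contains a true change point, which by \Cref{prop_localisation}(iii) is forced in this configuration (otherwise $I^* \cup I^{**}$ would consist of two consecutive empty intervals). In that case both $\widehat{C}_{I^{**}}$ and $\widehat{C}_{I^* \cup I^{**}}$ approximate non-constant piecewise targets and a naive Pythagorean comparison fails. The leverage is that parts~(i)--(ii) of \Cref{prop_localisation} localise each change point inside $I^{**}$ within $O(K r^4 \log^2(n)/(\delta^2\zeta_\delta^2\kappa_\ell^2) \vee K r^6\log(n)/(\delta^2\zeta_\delta^2 m\kappa_\ell^2))$ of an endpoint of $I^{**}$, so that the length of the ``non-stationary slice'' of $I^{**}$ multiplied by the squared jump $\kappa_\ell^2$ is exactly at the scale $\xi/C_\xi$; the $\kappa_\ell^2$ factors cancel, and the excess loss carried by this slice is precisely of the order of $\xi/C_\xi$. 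The constant $C_\xi$ in \Cref{t_localisation} is tuned precisely so as to dominate this scale uniformly over all empty intervals and their neighbours.
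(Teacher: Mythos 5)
Your counting step is fine, but the core of the argument --- that merging the empty interval $I^*$ with its neighbour $I^{**}$ raises the loss by strictly less than $\xi$ --- fails in general, and the difficulty you flag at the end is not resolved by your final paragraph. By \Cref{prop_localisation}(iii), $I^{**}$ must contain a true change point $\eta_\ell$, and parts (i)--(ii) only place it near \emph{some} endpoint of $I^{**}$. If $\eta_\ell$ lies near the endpoint shared with $I^*$ --- which is exactly the benign configuration where that shared endpoint is a good estimate of $\eta_\ell$ and $I^*$ is a legitimate segment between two well-located estimates --- then in the merged interval $I^*\cup I^{**}$ the change point sits at distance $\approx|I^*|$ from one end and $\approx|I^{**}|$ from the other. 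The merged interval is then genuinely non-stationary and, by the restricted-eigenvalue bound of \Cref{l_rec}, the loss increase from merging is at least of order $\zeta_\delta m\kappa_\ell^2\min\{|I^*|,|I^{**}|\}$, which can vastly exceed $\xi$; the optimality inequality you derive is simply satisfied and no contradiction follows. The underlying issue is that an empty interval of $\widehat{\mathcal{P}}$ is not evidence of a spurious change point (empty intervals occur even when $\widehat K=K$ and all localisation errors are small but nonzero), so your choice of $I^*$ does not localise the extra split. Even in the favourable configuration (change point near the far endpoint of $I^{**}$), the quantitative claim that the excess is of order $\xi/C_{\xi}$ is not substantiated: the localisation bound gives $\epsilon_\ell\kappa_\ell^2\lesssim\xi/(m\zeta_\delta)$, but the available upper bound on the quadratic form $\sum\{\Phi_r^\top\Delta\Phi_r\}^2$ carries the factor $r^2/\delta^2$ rather than $\zeta_\delta$ (compare the eigenvalue bound in the proof of \Cref{l_loss_long} with \Cref{l_rec}), so the excess is only controlled at order $\xi r^2/(\delta^2\zeta_\delta^2)$; since neither $r$ nor $\zeta_\delta$ is an absolute constant, this cannot be beaten by enlarging the absolute constant $C_\xi$.

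The paper sidesteps these local configuration issues with a global comparison. It introduces the aggregated loss $S_n(\cdot)$ of the partition induced by an arbitrary set of split points and compares $S_n(\widehat\eta_1,\dots,\widehat\eta_{\widehat K})+\widehat K\xi$ with $S_n(\eta_1,\dots,\eta_K)+K\xi$ through their common refinement $\mathrm{sort}(\eta_1,\dots,\eta_K,\widehat\eta_1,\dots,\widehat\eta_{\widehat K})$. Refining a partition can only decrease the loss up to additive deviation terms of order $(K+\widehat K+1)\{r^4m\log(n)/(\delta^2\zeta_\delta)\vee r^6\log(n)/(\delta^2\zeta_\delta)\vee r^{2-2q}nm/(\delta^6\zeta_\delta)\}$, controlled by \Cref{l_loss_long} and \Cref{l_loss_short}; hence if $\widehat K\ge K+1$ the penalty gap $(\widehat K-K)\xi\ge\xi$ must be absorbed by these terms, which contradicts the choice of $C_\xi$ because $\widehat K\le 3K$. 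If you wish to retain a local flavour, you would at minimum need to compare against splits at the \emph{true} change points inside $I^{**}$ rather than merging wholesale --- which is, in effect, what the common-refinement argument does.
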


\subsection{Proof of \texorpdfstring{\Cref{prop_localisation}}{}}
In this subsection, we present the proofs of the four cases in \Cref{prop_localisation}. Throughout the whole section, we assume that all of the conditions in \Cref{t_localisation} hold.

\begin{lemma} \label{l_localisation_1true}
     Let $I = (s,e] \in \widehat{\mathcal{P}}$ be such that $I$ contains one and only one true change point $\eta_\ell$, then it holds with probability at least $1-3n^{-3}$ that
     \begin{align*}
         \min\{\eta_\ell-s,e-\eta_\ell\} \leq C_{\epsilon}K\Big\{\frac{r^4\log^2(n)}{\kappa_\ell^2\delta^2\zeta_{\delta}^2} \vee \frac{r^6\log(n)}{\kappa_\ell^2\delta^2 \zeta_{\delta}^2m}\Big\}.
     \end{align*}
\end{lemma}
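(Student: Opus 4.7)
The plan is to exploit the optimality of $\widehat{\mathcal{P}}$ as the minimiser of \eqref{loss_l0}. Writing $I_1 = (s, \eta_\ell]$ and $I_2 = (\eta_\ell, e]$, the refined partition $\widehat{\mathcal{P}}' = (\widehat{\mathcal{P}} \setminus \{I\}) \cup \{I_1, I_2\}$ is a competitor, so the basic inequality
\[
H(\widehat{C}_I, I) \leq H(\widehat{C}_{I_1}, I_1) + H(\widehat{C}_{I_2}, I_2) + \xi
\]
holds deterministically. I would then write $Y_{t,j}Y_{t,k} = \Sigma^*_t(X_{t,j},X_{t,k}) + e_{t,(j,k)}$, where $e_{t,(j,k)}$ is conditionally mean-zero (since $(j,k)\in\mathcal{O}$ only pairs distinct indices, so the measurement-error diagonal is avoided), and decompose each $H(C,J)$ into a pure-noise part $\sum e_{t,(j,k)}^2$, a cross term, and a pure-signal part $F(C;J)=\sum R_{t,C}(X_{t,j},X_{t,k})^2$ with $R_{t,C}=\Phi_r^\top C\Phi_r-\Sigma^*_t$. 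The pure-noise parts on the two sides of the basic inequality are identical and cancel. What remains, after some reorganisation, reads schematically as $F(\widehat{C}_I;I)\le F(\widehat{C}_{I_1};I_1)+F(\widehat{C}_{I_2};I_2)+(\text{cross terms})+\xi$.

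For the right-hand side, $F(\widehat{C}_{I_i};I_i)$ is a ``stationary'' quantity, which I would bound via \Cref{t_estimation} applied on each $I_i$ (each is contained in a stationary segment): its dominant scale is $|I_i|$ times the $L^2$ estimation error on $I_i$. The cross terms $\sum e_{t,(j,k)} R_{t,\widehat{C}}(X_{t,j},X_{t,k})$ I would control by a peeling/union-bound argument over all candidate intervals $(s,e]\subseteq[n]$ (there are at most $n^2$ of them, absorbed into $\log n$) combined with sub-Gaussian concentration licensed by Assumptions~\ref{a_model}\ref{a_model_error} and \ref{a_model}\ref{a_model_function}; this produces fluctuations of order $\sqrt{|I|m\log n}$ multiplied by an $L^2$ norm of $R_{t,\widehat{C}}$, which can be absorbed (via Young's inequality) into a small fraction of $F(\widehat{C}_I;I)$ plus a lower-order term.

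The key lower bound is on $F(\widehat{C}_I;I)$: since $\widehat{C}_I$ is a single matrix attempting to estimate two distinct covariance functions $\Sigma^*_{\eta_\ell}$ and $\Sigma^*_{\eta_\ell+1}$ differing by $\kappa_\ell$ in $L^2$, a parallelogram/triangle inequality argument gives, for at least one of the sub-intervals (say the shorter one),
\[
\|\Phi_r^\top\widehat{C}_I\Phi_r-\Sigma^*_{\eta_\ell+r_0}\|_{L^2}\;\gtrsim\;\tfrac{1}{2}\kappa_\ell - (\text{approximation error}),
\]
for one of $r_0\in\{0,1\}$. Translating this $L^2$ statement from the coefficient side to the sample loss $F$ on that sub-interval requires the restricted-eigenvalue-type bound embodied in the identifiability condition \Cref{a_model}\ref{a_model_identifiable} (the constant $\zeta_\delta$ originates from \Cref{l_rec}) together with the fact that a fresh random fragment observes a $\delta^2$ fraction of the domain; this yields $F(\widehat{C}_I;I)\gtrsim m\,\min(\eta_\ell-s,e-\eta_\ell)\,\kappa_\ell^2\,\delta^2\,\zeta_\delta$. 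Combining the upper and lower bounds and solving for $\min(\eta_\ell-s,e-\eta_\ell)$ yields the claim: the $\xi$-term, of order $K r^4 m\log^2(n)/(\delta^2\zeta_\delta)$, once divided by $m\kappa_\ell^2\delta^2\zeta_\delta$ produces the first term in the bound, and the cross-term contribution similarly produces the second term.

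The main obstacle will be the last step: producing a lower bound on $F(\widehat{C}_I;I)$ that depends linearly on $\min(\eta_\ell-s,e-\eta_\ell)$. The difficulty is that $\widehat{C}_I$ is itself data-dependent (so a fixed-$C$ concentration argument is not enough), and that the random sampling through the fragments $O_t$ only observes the diagonal band. I would therefore invoke the restricted-eigenvalue lower bound of \Cref{l_rec} with constant $\zeta_\delta$ to convert $\|\Phi_r^\top\widehat{C}_I\Phi_r-\Sigma^*_{\eta_\ell+r_0}\|_{L^2}^2$ into an empirical sum over $\mathcal{O}$ and over the sub-interval, and handle the self-normalisation of $\widehat{C}_I$ by peeling over the Frobenius norm of $\widehat{C}_I-C^*_{r,\eta_\ell+r_0}$ bounded by \Cref{t_estimation}. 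This argument is the engine that lets the jump $\kappa_\ell^2$ appear multiplied by $\min(\eta_\ell-s,e-\eta_\ell)$ rather than by some averaged quantity.
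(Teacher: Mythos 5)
Your overall architecture matches the paper's: the basic inequality from the optimality of $\widehat{\mathcal{P}}$, cancellation/control of the noise and cross terms via interval-uniform concentration absorbed by Young's inequality, a restricted-eigenvalue lower bound on the remaining signal term that scales like $m\,\min\{\eta_\ell-s,e-\eta_\ell\}\,\kappa_\ell^2\,\zeta_\delta$, and then solving for the localisation error. Two housekeeping points first: you should dispense separately with the case $\min\{|I_1|,|I_2|\}\lesssim \xi/(m\kappa_\ell^2)$, since $H(\cdot,J)\equiv 0$ for short intervals so the basic inequality is not the one you wrote there (the conclusion is trivially true in that case anyway); and your lower bound carries an extra factor $\delta^2$ that is not in \Cref{l_rec} and would spoil the exponent of $\delta$ in the final rate.

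The substantive divergence is at the key lower bound, and as described your route has a gap. You propose to apply the restricted-eigenvalue bound of \Cref{l_rec} to $\Delta=\widehat{C}_I-C^*_{r,\eta_\ell+r_0}$, but that lemma is stated for deterministic $\Delta$ lying in the constrained set $\mathcal{B}_{r,I}(\theta)$ (which requires $C^*_{r,I}+\Delta\in\mathfrak{C}(r)$, i.e.\ symmetry, positive semidefiniteness and membership of the identifiable class); the data-dependent matrix $\widehat{C}_I-C^*_{r,\eta_\ell+r_0}$ is not shown to lie in this set, and the "peeling over the Frobenius norm" you gesture at controls the magnitude of $\Delta$ but not its membership in $\mathcal{B}$, nor does it by itself upgrade a pointwise expectation bound to one valid at a random argument. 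The paper is structured precisely to avoid this: it writes $\widehat{C}_I-C^*_{r,I_\ell}=(\widehat{C}_I-C^*_{r,I})+(C^*_{r,I}-C^*_{r,I_\ell})$ and uses $(a+b)^2\ge \tfrac12 a^2-b^2$, so that \Cref{l_rec} is only ever invoked for the \emph{deterministic} matrix $C^*_{r,I_2}-C^*_{r,I_1}$ (whose membership in $\mathcal{B}_{r,I_1}$ is verified via analyticity), while the random piece $\widehat{C}_I-C^*_{r,I}$ enters only with a negative sign and only needs an \emph{upper} bound, supplied by \Cref{t_estimation} applied to the non-stationary interval $I$ — this is exactly why the paper needs \Cref{t_estimation} to hold on intervals containing change points. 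The weighted-average identity $\frac{|I_1||I_2|}{|I|}\|C^*_{r,I_1}-C^*_{r,I_2}\|_{\F}^2\ge \tfrac12\min\{|I_1|,|I_2|\}\|C^*_{r,I_1}-C^*_{r,I_2}\|_{\F}^2$ then plays the role of your triangle-inequality "cannot fit both" step. To repair your version you would either need to verify that $\widehat{C}_I-C^*_{r,\eta_\ell+r_0}$ lies in (a suitably enlarged version of) $\mathcal{B}_{r,I}$ and prove a uniform-over-the-set empirical lower bound (the analogue of \Cref{l_op4} combined with the uniform infimum in \Cref{l_rec}), or simply adopt the paper's decomposition.
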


\begin{proof}
For an interval $I = (s,e] \in \widehat{\mathcal{P}}$ containing only one true change point $\eta_\ell$, denote
\begin{align*}
    I_1 = (s,\eta_\ell] \quad \mathrm{and} \quad I_2 = (\eta_\ell,e].
\end{align*}
If $ |I| < \frac{\xi}{m\zeta_{\delta}\kappa_\ell^2}$, then there is nothing to show, and the result holds automatically. If $|I| \geq \frac{\xi}{m\zeta_{\delta}\kappa_\ell^2}$ and any of $|I_1|,|I_2|< \frac{\xi}{m\zeta_{\delta}\kappa_\ell^2}$, then the result also holds trivially. In the rest of the proof, we will prove the result by contradiction, and it suffices to assume that
\begin{align} \label{l_localisation_1true_eq1}
    \min\{|I_1|, |I_2|\} > \frac{\xi}{m\zeta_{\delta}\kappa_\ell^2} \geq C_{\kappa,\zeta} ^{-1} \frac{\xi}{m},
\end{align}
where $C_{\kappa,\zeta} > 0$ is an absolute constant, and the second inequality follows from the fact that $\kappa_\ell$ is bounded as $\Sigma^*_t \in L^2([0,1]^2)$ for any $t \in [n]$, and $\zeta_{\delta} \leq C_\zeta$ by \Cref{r_zeta_delta}. Since the algorithm does not split the induced interval $I$ further into two disjoint intervals, it must be the case that
\begin{align} \label{l_localisation_1true_eq2}
    H(\widehat{C}_I, I)=H(\widehat{C}_I, I_1)+ H(\widehat{C}_I, I_2) \leq H(\widehat{C}_{I_1}, I_1)+ H(\widehat{C}_{I_2}, I_2) + \xi.
\end{align}
Since for $\ell \in \{1,2\}$, it holds from \Cref{l_localisation_1true_eq1} that $\min\{|I_1|, |I_2|\} > C_{\kappa,\zeta}^{-1}\frac{\xi}{m}$. By \Cref{l_loss_long}, for any $\ell \in \{1,2\}$, it holds with probability at least $1-3n^{-3}$ that
\begin{align} \notag
    \Big|H(C^*_{r,I_\ell}, I_\ell)-H(\widehat{C}_{I_\ell}, I_\ell)\Big| &\leq C_1\Big\{\frac{r^4m\log(n)}{\delta^2\zeta_{\delta}} \vee \frac{r^6\log(n)}{\delta^2 \zeta_{\delta}} \vee \frac{r^{2-2q}|I_\ell|m}{\delta^6\zeta_{\delta}}\Big\}\\
    \label{l_localisation_1true_eq3}
    & \leq C_1\Big\{\frac{r^4m\log(n)}{\delta^2\zeta_{\delta}} \vee \frac{r^6\log(n)}{\delta^2 \zeta_{\delta}} \vee \frac{r^{2-2q}nm}{\delta^6\zeta_{\delta}}\Big\},
\end{align}
where $C^*_{r,I_\ell} = \frac{1}{|I_\ell|}\sum_{t \in I_\ell} C^*_{r,t}$. Plugging \Cref{l_localisation_1true_eq3} into \Cref{l_localisation_1true_eq2}, we have that 
\begin{align} \notag
    H(\widehat{C}_I, I_1)+ H(\widehat{C}_I, I_2) &\leq H(\widehat{C}_{I_1}, I_1)+ H(\widehat{C}_{I_2}, I_2) + \xi\\
    \label{l_localisation_1true_eq4}
    & \leq H(C^*_{r,I_1}, I_1)+H(C^*_{r,I_2}, I_2)+ \xi +C_2\Big\{\frac{r^4m\log(n)}{\delta^2\zeta_{\delta}} \vee \frac{r^6\log(n)}{\delta^2 \zeta_{\delta}} \vee \frac{r^{2-2q}nm}{\delta^6\zeta_{\delta}}\Big\}.
\end{align}
Rearranging \Cref{l_localisation_1true_eq4}, we have that
\begin{align}\label{l_localisation_1true_eq5}
    &\sum_{\ell=1}^2 \sum_{t\in I_\ell}\sum_{(j,k) \in \mathcal{O}} \big\{\Phi_r^\top(X_{t,j})(\widehat{C}_I-C^*_{r,I_\ell})\Phi_r(X_{t,k})\big\}^2\\
    \label{l_localisation_1true_eq6}
    \leq \;& 2 \sum_{\ell=1}^2 \sum_{t\in I_\ell} \sum_{(j,k) \in \mathcal{O}}\big\{Y_{t,j}Y_{t,k}-\Phi_r^\top(X_{t,j})C^*_{r,I_\ell}\Phi_r(X_{t,k})\big\}\big\{\Phi_r^\top(X_{t,j})(\widehat{C}_I-C^*_{r,I_\ell})\Phi_r(X_{t,k})\big\}\\
    \notag
    &+\xi+C_2\Big\{\frac{r^4m\log(n)}{\delta^2\zeta_{\delta}} \vee \frac{r^6\log(n)}{\delta^2 \zeta_{\delta}} \vee \frac{r^{2-2q}nm}{\delta^6\zeta_{\delta}}\Big\}.
\end{align}
\noindent \textbf{Step 1.} To find a lower bound on \Cref{l_localisation_1true_eq5}, denote $C^*_{r, I} = \frac{1}{|I|}\sum_{t \in I} C^*_{r,t}$, then \Cref{l_localisation_1true_eq5} could be rewritten as
\begin{align} \notag
    &\sum_{\ell=1}^2 \sum_{t\in I_\ell}\sum_{(j,k) \in \mathcal{O}} \big\{\Phi_r^\top(X_{t,j})(\widehat{C}_I-C^*_{r,I_\ell})\Phi_r(X_{t,k})\big\}^2\\
    \label{l_localisation_1true_eq7}
    =\;& \sum_{\ell=1}^2\sum_{t\in I_\ell}\sum_{(j,k) \in \mathcal{O}} \big\{\Phi_r^\top(X_{t,j})(\widehat{C}_I-C^*_{r,I})\Phi_r(X_{t,k}) + \Phi_r(X_{t,j})^{\top}(C^*_{r,I} - C^*_{r,I_\ell})\Phi_r(X_{t,k})\big\}^2.
\end{align}
To analyse \Cref{l_localisation_1true_eq7}, when $\ell =1$, we have that 
\begin{align} \notag
    &\sum_{t\in I_1}\sum_{(j,k) \in \mathcal{O}} \big\{\Phi_r^\top(X_{t,j})(\widehat{C}_I-C^*_{r,I})\Phi_r(X_{t,k}) + \Phi_r(X_{t,j})^{\top}(C^*_{r,I} - C^*_{r,I_1})\Phi_r(X_{t,k})\big\}^2\\
    \notag
    =\;&\sum_{t\in I_1}\sum_{(j,k) \in \mathcal{O}} \Big\{\Phi_r^\top(X_{t,j})(\widehat{C}_I-C^*_{r,I})\Phi_r(X_{t,k}) + \Phi_r^{\top}(X_{t,j})\frac{|I_2|(C^*_{r,I_2} - C^*_{r,I_1})}{|I|}\Phi_r(X_{t,k})\Big\}^2\\
    \label{l_localisation_1true_eq8}
    \geq\;& \frac{1}{2}\sum_{t\in I_1}\sum_{(j,k) \in \mathcal{O}}\Big\{\Phi_r^{\top}(X_{t,j})\frac{|I_2|(C^*_{r,I_2} - C^*_{r,I_1})}{|I|}\Phi_r(X_{t,k})\Big\}^2 -\sum_{t\in I_1}\sum_{(j,k) \in \mathcal{O}}\Big\{\Phi_r^{\top}(X_{t,j})(\widehat{C}_I-C^*_{r,I})\Phi_r(X_{t,k})\Big\}^2,
\end{align}
where the inequality follows from the fact that $(a+b)^2 \geq 1/2a^2- b^2$ for any $a,b \in \mathbb{R}$. To compute a lower bound on the first term in \Cref{l_localisation_1true_eq8}, firstly note that $C_{r, I_2}^* \in \mathcal{C}$ since a finite summation of analytic function is still analytic \citep[Proposition 1.1.7 in][]{krantz2002primer}. Therefore, by \Cref{l_localisation_1true_eq1} and the construction of $\mathcal{B}_{r,I_1}(\theta)$ in \Cref{t_estimation_eq_B}, we have that
\begin{align*}
    C^*_{r,I_2} - C^*_{r,I_1} \in \mathcal{B}_{r,I_1}(\|C^*_{r,I_2} - C^*_{r,I_1}\|_{\F})=\big\{\Delta \in \mathbb{R}^{r \times r}: \|\Delta\|_{\F}=\|C^*_{r,I_2} - C^*_{r,I_1}\|_{\F}, C^*_{r,I_1}+\Delta \in \mathfrak{C}(r)\big\},
\end{align*}
Therefore, by \Cref{l_rec}, we have that
\begin{align} \label{l_localisation_1true_eq9}
    \mathbb{E}[\{\Phi_r^\top(X_{1,1})(C^*_{r,I_2} - C^*_{r,I_1})\Phi_r(X_{1,2})\}^2|a_1] \geq \zeta_{\delta}\|C^*_{r,I_2} - C^*_{r,I_1}\|_{\F}^2
\end{align}
almost surely. Combining the result in \Cref{l_localisation_1true_eq9} with \Cref{l_op3}, we have that with probability at least $1-3n^{-3}$
\begin{align} \notag
    &\sum_{t\in I_1}\sum_{(j,k) \in \mathcal{O}}\Big\{\Phi_r^{\top}(X_{t,j})\frac{|I_2|(C^*_{r,I_2} - C^*_{r,I_1})}{|I|}\Phi_r(X_{t,k})\Big\}^2 \\
    \label{l_localisation_1true_eq10}
    \geq\;& \frac{\zeta_{\delta}|I_1||I_2|^2m}{|I|^2} \|C^*_{r,I_2} - C^*_{r,I_1}\|_{\F}^2 - \frac{C_3r^2|I_2|^2}{|I|^2}\sqrt{|I_1|m\log(n)}\|C^*_{r,I_2} - C^*_{r,I_1}\|_{\F}^2.
\end{align}
To find an upper bound on the second term in \Cref{l_localisation_1true_eq8}, using a similar argument as the one in the $\textbf{Step 1}$ in the proof of \Cref{l_loss_long}, we have that with probability at least $1-3n^{-3}$
\begin{align} \label{l_localisation_1true_eq11}
    \sum_{t\in I_1}\sum_{(j,k) \in \mathcal{O}}\Big\{\Phi_r^{\top}(X_{t,j})(\widehat{C}_I-C^*_{r,I})\Phi_r(X_{t,k})\Big\}^2 \leq \frac{C_4\zeta_{\delta}|I_1|mr^2}{\delta^2}\|\widehat{C}_I-C^*_{r,I}\|_{\F}^2,
\end{align}
if we select $r$ such that 
\begin{align} \label{l_localisation_1true_eq12}
     \frac{r^{3}\sqrt{\log(n)}}{\sqrt{|I_1|m}} \leq \frac{r^{3}\sqrt{\log(n)}}{\sqrt{\xi}} \lesssim \frac{\zeta_{\delta}}{2},
\end{align}
where the first inequality holds from \Cref{l_localisation_1true_eq1}. Therefore, with the condition in \Cref{l_localisation_1true_eq12}, combining \Cref{l_localisation_1true_eq10} and \Cref{l_localisation_1true_eq11}, we have with probability at least $1-3n^{-3}$ that 
\begin{align} \notag
    &\sum_{t\in I_1}\sum_{(j,k) \in \mathcal{O}} \big\{\Phi_r^\top(X_{t,j})(\widehat{C}_I-C^*_{r,I})\Phi_r(X_{t,k}) + \Phi_r(X_{t,j})^{\top}(C^*_{r,I} - C^*_{r,I_1})\Phi_r(X_{t,k})\big\}^2 \\
    \label{l_localisation_1true_eq13}
    \geq\;& \frac{\zeta_{\delta}|I_1||I_2|^2m}{4|I|^2} \|C^*_{r,I_2}-C^*_{r,I_1}\|_{\F}^2-\frac{\zeta_{\delta}|I_1|mr^2}{2\delta^2}\|\widehat{C}_I-C^*_{r,I}\|_{\F}^2.
\end{align}
By a similar justification, it also holds that for the other term in \Cref{l_localisation_1true_eq7} when $\ell=2$, 
\begin{align} \notag
    &\sum_{t\in I_2}\sum_{(j,k) \in \mathcal{O}} \big\{\Phi_r^\top(X_{t,j})(\widehat{C}_I-C^*_{r,I})\Phi_r(X_{t,k}) + \Phi_r(X_{t,j})^{\top}(C^*_{r,I} - C^*_{r,I_2})\Phi_r(X_{t,k})\big\}^2\\ \notag
    =\;&\sum_{t\in I_2}\sum_{(j,k) \in \mathcal{O}} \Big\{\Phi_r^\top(X_{t,j})(\widehat{C}_I-C^*_{r,I})\Phi_r(X_{t,k}) - \Phi_r^{\top}(X_{t,j})\frac{|I_1|(C^*_{r,I_2} - C^*_{r,I_1})}{|I|}\Phi_r(X_{t,k})\Big\}^2 \\
    \label{l_localisation_1true_eq14}
    \geq \;& \frac{\zeta_{\delta}|I_1|^2|I_2|m}{4|I|^2} \|C^*_{r,I_2}-C^*_{r,I_1}\|_{\F}^2 -\frac{\zeta_{\delta}|I_2|mr^2}{2\delta^2}\|\widehat{C}_I-C^*_{r,I}\|_{\F}^2.
\end{align}
Therefore, substituting \Cref{l_localisation_1true_eq13} and \Cref{l_localisation_1true_eq14} into \Cref{l_localisation_1true_eq7}, we have that 
\begin{align} \label{l_localisation_1true_eq15}
    \sum_{\ell=1}^2 \sum_{t\in I_\ell}\sum_{(j,k) \in \mathcal{O}} \big\{\Phi_r^\top(X_{t,j})(\widehat{C}_I-C^*_{r,I_\ell})\Phi_r(X_{t,k})\big\}^2 \geq &\frac{\zeta_{\delta}|I_1||I_2|m}{4|I|} \|C^*_{r,I_2}-C^*_{r,I_1}\|_{\F}^2 -\frac{\zeta_{\delta}|I|mr^2}{2\delta^2}\|\widehat{C}_I-C^*_{r,I}\|_{\F}^2.
\end{align}
\noindent \textbf{Step 2.} Next, we will find an upper bound on \Cref{l_localisation_1true_eq6}. For any $\ell \in \{1,2\}$, note that by \Cref{l_op1}, \Cref{l_op2} and \Cref{l_loss_long_eq9}, we have with probability at least $1-3n^{-3}$ that
\begin{align} \notag
    &\sum_{t\in I_\ell} \sum_{(j,k) \in \mathcal{O}}\big\{Y_{t,j}Y_{t,k}-\Phi_r^\top(X_{t,j})C^*_{r,I_\ell}\Phi_r(X_{t,k})\big\}\big\{\Phi_r^\top(X_{t,j})(\widehat{C}_I-C^*_{r,I_\ell})\Phi_r(X_{t,k})\big\}\\
    \notag
    \leq\;& C_5\Big\{r\sqrt{|I_\ell|m^2\log(n)} \vee r^{2}\sqrt{|I_\ell|m\log(n)} \vee \frac{r^{1-q}|I_\ell|m}{\delta^2} \Big\}\; \|\widehat{C}_I-C^*_{r,I_\ell}\|_{\F}\\
    \notag
    =\;& C_5\sqrt{\frac{\zeta_{\delta}|I_\ell|m}{32}}\|\widehat{C}_I-C^*_{r,I_\ell}\|_{\F} \cdot \Big\{r \sqrt{\frac{m\log(n)}{\zeta_{\delta}/32}} \vee r^{2} \sqrt{\frac{\log(n)}{\zeta_{\delta}/32}} \vee \frac{r^{1-q}}{\delta^2}\sqrt{\frac{|I_\ell|m}{\zeta_{\delta}/32}}\Big\}\\
    \notag
    \leq\;& C_5\sqrt{\frac{\zeta_{\delta}|I_\ell|m}{32}}\|\widehat{C}_I-C^*_{r,I_\ell}\|_{\F} \cdot \Big\{r \sqrt{\frac{m\log(n)}{\zeta_{\delta}/32}} \vee r^{2} \sqrt{\frac{\log(n)}{\zeta_{\delta}/32}} \vee \frac{r^{1-q}}{\delta^2}\sqrt{\frac{nm}{\zeta_{\delta}/32}}\Big\}\\
    \label{l_localisation_1true_eq16}
    \leq\;& \frac{\zeta_{\delta}|I_\ell|m}{32}\|\widehat{C}_I-C^*_{r,I_\ell}\|_{\F}^2+ C_6\Big\{\frac{r^2m\log(n)}{\zeta_{\delta}} \vee \frac{r^{4}\log(n)}{\zeta_{\delta}} \vee \frac{r^{2-2q}nm}{\delta^4\zeta_{\delta}}\Big\},
\end{align}
where the last inequality follows from the fact that $2ab \leq a^2 +b^2$ for all $a,b \in \mathbb{R}$. Also, we have that 
\begin{align} \notag
    \frac{\zeta_{\delta}|I_1|m}{32}\|\widehat{C}_I-C^*_{r,I_1}\|_{\F}^2 &= \frac{\zeta_{\delta}|I_1|m}{32}\|\widehat{C}_I-C^*_{r,I}+C^*_{r,I}-C^*_{r,I_1}\|_{\F}^2\\
    \notag
    & \leq \frac{\zeta_{\delta}|I_1|m}{16}\|\widehat{C}_I-C^*_{r,I}\|_{\F}^2+\frac{\zeta_{\delta}|I_1|m}{16}\|C^*_{r,I}-C^*_{r,I_1}\|_{\F}^2\\
    \label{l_localisation_1true_eq17}
    & = \frac{\zeta_{\delta}|I_1|m}{16}\|\widehat{C}_I-C^*_{r,I}\|_{\F}^2 + \frac{\zeta_{\delta}|I_1||I_2|^2m}{16|I|^2}\|C^*_{r,I_2}-C^*_{r,I_1}\|_{\F}^2.
\end{align}
Similarly, we also have that 
\begin{align}\label{l_localisation_1true_eq18}
    \frac{\zeta_{\delta}|I_2|m}{32}\|\widehat{C}_I-C^*_{r,I_2}\|_{\F}^2 \leq \frac{\zeta_{\delta}|I_2|m}{16}\|\widehat{C}_I-C^*_{r,I}\|_{\F}^2 + \frac{\zeta_{\delta}|I_1|^2|I_2|m}{16|I|^2}\|C^*_{r,I_2}-C^*_{r,I_1}\|_{\F}^2.
\end{align}
Therefore, by \Cref{l_localisation_1true_eq16}, \Cref{l_localisation_1true_eq17}, and \Cref{l_localisation_1true_eq18}, we have with probability at least $1-3n^{-3}$ that
\begin{align} \notag
    & \sum_{\ell=1}^2\sum_{t\in I_\ell} \sum_{(j,k) \in \mathcal{O}}\big\{Y_{t,j}Y_{t,k}-\Phi_r^\top(X_{t,j})C^*_{r,I_\ell}\Phi_r(X_{t,k})\big\}\big\{\Phi_r^\top(X_{t,j})(\widehat{C}_I-C^*_{r,I_\ell})\Phi_r(X_{t,k})\big\}\\
    \label{l_localisation_1true_eq19}
    \leq\;& \frac{\zeta_{\delta}|I|m}{16}\|\widehat{C}_I-C^*_{r,I}\|_{\F}^2+\frac{\zeta_{\delta}|I_1||I_2|m}{16|I|}\|C^*_{r,I_2}-C^*_{r,I_1}\|_{\F}^2+2C_6\Big\{\frac{r^2m\log(n)}{\zeta_{\delta}} \vee \frac{r^{4}\log(n)}{\zeta_{\delta}} \vee \frac{r^{2-2q}nm}{\delta^4\zeta_{\delta}}\Big\}.
\end{align}
\noindent \textbf{Step 3.} Substituting \Cref{l_localisation_1true_eq15} and \Cref{l_localisation_1true_eq19} into \Cref{l_localisation_1true_eq5} and \Cref{l_localisation_1true_eq6}, it holds with probability at least $1-3n^{-3}$ that
\begin{align*}
    \frac{\zeta_{\delta}|I_1||I_2|m}{8|I|}\|C^*_{r,I_2}-C^*_{r,I_1}\|_{\F}^2 \leq\;&  \frac{\zeta_{\delta}|I|mr^2}{2\delta^2}\|\widehat{C}_I-C^*_{r,I}\|_{\F}^2+\frac{\zeta_{\delta}|I|m}{8}\;\|\widehat{C}_I-C^*_{r,I}\|_{\F}^2 \\
    &+\xi+C_7\Big\{\frac{r^4m\log(n)}{\delta^2\zeta_{\delta}} \vee \frac{r^6\log(n)}{\delta^2 \zeta_{\delta}} \vee \frac{r^{2-2q}nm}{\delta^6\zeta_{\delta}}\Big\}\\
    \leq\;& \xi+C_8\Big\{\frac{r^4m\log(n)}{\delta^2\zeta_{\delta}} \vee \frac{r^6\log(n)}{\delta^2 \zeta_{\delta}} \vee \frac{r^{2-2q}nm}{\delta^6\zeta_{\delta}}\Big\},
\end{align*}
where the last inequality holds from \Cref{t_estimation}. Observe that
\begin{align*}
    \frac{\min\{|I_1|,|I_2|\}\|C^*_{r,I_1}-C^*_{r,I_2}\|_{\F}^2}{2} \leq  \frac{\zeta_{\delta}|I_1||I_2|m}{|I|}\|C^*_{r,I_2}-C^*_{r,I_1}\|_{\F}^2,
\end{align*}
therefore we have that 
\begin{align*}
    \frac{\zeta_{\delta}\min\{|I_1|,|I_2|\}m\|C^*_{r,I_1}-C^*_{r,I_2}\|_{\F}^2}{16} \leq \xi+C_8\Big\{\frac{r^4m\log(n)}{\delta^2\zeta_{\delta}} \vee \frac{r^6\log(n)}{\delta^2 \zeta_{\delta}} \vee \frac{r^{2-2q}nm}{\delta^6\zeta_{\delta}}\Big\}.
\end{align*}
Pick $\xi = C_{\xi}K\Big\{\frac{r^4 m\log^2(n)}{\delta^2\zeta_{\delta}} \vee \frac{r^6\log(n)}{\delta^2 \zeta_{\delta}} \vee \frac{r^{2-2q}nm}{\delta^6\zeta_{\delta}}\Big\}$ where $C_{\xi} >0$ is an absolute constant,  we have that 
\begin{align} \label{l_localisation_1true_eq20}
    \min\{|I_1|,|I_2|\} &\leq C_{9}K\Big\{\frac{r^4\log^2(n)}{\delta^2\zeta_{\delta}^2} \vee \frac{r^6\log(n)}{\delta^2 \zeta_{\delta}^2m} \vee \frac{r^{2-2q}n}{\delta^6\zeta_{\delta}^2}\Big\} \frac{1}{\|C^*_{r,I_1}-C^*_{r,I_2}\|_{\F}^2}.
\end{align}
\noindent \textbf{Step 4.} By \Cref{a_localisation_cov} and the triangle inequality, it holds that for any $\ell \in [K]$
\begin{align*}
    \kappa_\ell^2 &= \|\Sigma^*_{\eta_{\ell+1}}-\Sigma^*_{\eta_{\ell}}\|_{L^2}^2\\
    &=\|\Sigma^*_{\eta_{\ell+1}}-\Sigma^*_{r,\eta_{\ell+1}}-(\Sigma^*_{\eta_{\ell}}-\Sigma^*_{r,\eta_{\ell}})+\Sigma^*_{r,\eta_{\ell+1}}-\Sigma^*_{r,\eta_{\ell}}\|_{L^2}^2\\
    & \leq 3\Big\{\|\Sigma^*_{\eta_{\ell+1}}-\Sigma^*_{r,\eta_{\ell+1}}\|_{L^2}^2+\|\Sigma^*_{\eta_{\ell}}-\Sigma^*_{r,\eta_{\ell}}\|_{L^2}^2+\|\Sigma^*_{r,\eta_{\ell+1}}-\Sigma^*_{r,\eta_{\ell}}\|_{L^2}^2\Big\}\\
    & \leq 3\|C^*_{r,\eta_{\ell+1}}-C^*_{r,\eta_{\ell}}\|_{\F}^2 + 6C_{10}r^{-2q}.
\end{align*}
Therefore, by \Cref{a_localisation_jump}\ref{a_localisation_jump_lb}, we have that 
\begin{align}\label{l_localisation_1true_eq21}
    \|C^*_{r,\eta_{\ell+1}}-C^*_{r,\eta_{\ell}}\|_{\F}^2 \geq \frac{\kappa_\ell^2}{3}-2C_{10}r^{-2q} \geq \frac{\kappa_\ell^2}{6}.
\end{align}
Putting \Cref{l_localisation_1true_eq21} into \Cref{l_localisation_1true_eq20}, it holds that
\begin{align} \notag
    \min\{|I_1|,|I_2|\} &\leq C_{11}K\Big\{\frac{r^4\log^2(n)}{\delta^2\zeta_{\delta}^2} \vee \frac{r^6\log(n)}{\delta^2 \zeta_{\delta}^2m} \vee \frac{r^{2-2q}n}{\delta^6\zeta_{\delta}^2}\Big\} \frac{1}{\kappa_\ell^2}\\
    \label{l_localisation_1true_eq22}
    &= C_{11}\Big\{(I) \vee (II) \vee (III)\Big\} \frac{1}{\kappa_\ell^2},
\end{align}
which is a contradiction to \Cref{l_localisation_1true_eq1}. Hence, the Lemma holds.

\noindent \textbf{Step 5. (Choices of $r$)} In this step, we match the order of the three terms in \Cref{l_localisation_1true_eq22} to find the optimal rate of $r$. We will use $(I)$ and $(II)$ to derive a threshold of $m$ and then derive the optimal choice of $r$ by matching $(III)$ with $(I)$ and $(II)$ respectively. Matching $(I)$ with $(II)$, we have that $m = \frac{r^2}{\log(n)}$. Hence when $m < \frac{r^2}{\log(n)}$, $(II)$ dominates. By equating $(II)$ and $(III)$, we have that
\begin{align} \label{l_localisation_1true_eq23}
    r = \Big(\frac{nm}{\delta^4 \log(n)}\Big)^{\frac{1}{4+2q}},
\end{align}
and correspondingly, the rate for $m$ becomes
\begin{align*}
    m < n^{\frac{1}{1+q}}\delta^{-\frac{4}{1+q}}\log^{-\frac{3+q}{1+q}}(n).
\end{align*}
In the other case, when we have that $m > \frac{r^2}{\log(n)}$, $(I)$ dominates. By equating $(I)$ and $(III)$, we have that
\begin{align} \label{l_localisation_1true_eq24}
     r = \Big(\frac{n}{\delta^4 \log^2(n)}\Big)^{\frac{1}{2+2q}},
\end{align}
and correspondingly, the rate for $m$ becomes
\begin{align*}
    m > n^{\frac{1}{1+q}}\delta^{-\frac{4}{1+q}}\log^{-\frac{3+q}{1+q}}(n).
\end{align*}
\end{proof}

\begin{remark}[Choice of $\lambda$] \label{r_choice_lambda}
    Substituting the choice of $r$ given in \Cref{l_localisation_1true_eq23} and \Cref{l_localisation_1true_eq24} into \Cref{t_estimation_lambda} in \Cref{t_estimation}, we have that for any interval $|I| \subseteq [n]$ such that $|I| \geq \xi/m$, 
    \begin{align*}
        \Big\{r^{-11/2}\sqrt{m\log(n)} \vee r^{-9/2}\sqrt{\log(n)}\Big\} =  \frac{r^{-13/2}\sqrt{nm}r^{-q}}{\delta^2} \geq \frac{r^{-13/2}\sqrt{|I|m}r^{-q}}{\delta^2}.
    \end{align*}
    Hence, the value of $\lambda$ in \Cref{t_localisation} is picked at
    \begin{align*}
        \lambda = C_{\lambda}\Big\{r^{-11/2}\sqrt{m\log(n)} \vee r^{-9/2}\sqrt{\log(n)}\Big\}. 
    \end{align*}
\end{remark}

\begin{remark} \label{r_choice_r}
    Note that with the choices of $r$ and $\xi$ given in \Cref{t_localisation}, \Cref{l_localisation_1true_eq12} holds automatically if $\delta K^{-1/2} \lesssim \zeta_{\delta}$.
    \begin{itemize}
        \item When $m \leq n^{\frac{1}{1+q}}\delta^{-\frac{4}{1+q}}\log^{-\frac{3+q}{1+q}}(n)$, pick $\xi = \frac{C_{\xi}Kr^6\log(n)}{\delta^2\zeta_{\delta}}$, then with the optimal choice of $r$ in \Cref{l_localisation_1true_eq23}, we have that 
        \begin{align*}
            r^3\sqrt{\frac{\log(n)}{\xi}} = \frac{C_1r^3 \log^{1/2}(n)\delta\zeta_{\delta}^{1/2}}{r^3\log^{1/2}(n)K^{1/2}} = \frac{C_1\delta\zeta_{\delta}^{1/2}}{K^{1/2}} \lesssim \zeta_{\delta}.
        \end{align*}

        \item $m \geq n^{\frac{1}{1+q}}\delta^{-\frac{4}{1+q}}\log^{-\frac{3+q}{1+q}}(n)$, pick $\xi = \frac{C_{\xi}Kr^4m\log^2(n)}{\delta^2\zeta_{\delta}}$, then with the optimal choice of $r$ in \Cref{l_localisation_1true_eq24}, we have that 
        \begin{align*}
            r^3\sqrt{\frac{\log(n)}{\xi}} =r\delta\sqrt{\frac{C_2\zeta_{\delta}}{m\log(n)K}} \lesssim \Big(\frac{n}{
            \delta^4\log^2(n)}\Big)^{\frac{1}{2+2q}} \delta\sqrt{\frac{\zeta_{\delta}}{\log(n)K}} \frac{\delta^{\frac{4}{2+2q}}\log^{\frac{3+q}{2+2q}}(n)}{n^{\frac{1}{2+2q}}}= \frac{\delta\zeta_{\delta}^{1/2}}{K^{1/2}} \lesssim \zeta_{\delta}.
        \end{align*}
    \end{itemize}
\end{remark}

\begin{lemma} \label{l_localisation_2true}
    Let $I = (s,e] \in \widehat{\mathcal{P}}$ be such that $I$ contains two true change points $s \leq \eta_\ell < \eta_{\ell+1} \leq e$, then it holds with probability at least $1-3n^{-3}$ that for a sufficiently large constant $C_\epsilon > 0$,
    \begin{align*}
        \eta_\ell-s \leq C_{\epsilon}K\Big\{\frac{r^4\log^2(n)}{\kappa_\ell^2\delta^2\zeta_{\delta}^2} \vee \frac{r^6\log(n)}{\kappa_\ell^2\delta^2 \zeta_{\delta}^2m}\Big\} \quad \text{and} \quad e-\eta_{\ell+1} \leq C_{\epsilon}K\Big\{\frac{r^4\log^2(n)}{\kappa_\ell^2\delta^2\zeta_{\delta}^2} \vee \frac{r^6\log(n)}{\kappa_\ell^2\delta^2 \zeta_{\delta}^2m}\Big\}.
    \end{align*}
\end{lemma}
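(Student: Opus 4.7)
The plan is to mimic the three-step argument used in the proof of \Cref{l_localisation_1true}, replacing the two-piece stationary decomposition $I = I_1 \cup I_2$ by the three-piece one $I_1 = (s,\eta_\ell]$, $I_2 = (\eta_\ell,\eta_{\ell+1}]$, $I_3 = (\eta_{\ell+1},e]$, each of which is stationary with covariance $C^*_{r,I_j}$ equal to a single projected population covariance. The essential differences are that (i) we must invoke the $\ell_0$-optimality of $\widehat{\mathcal{P}}$ against a refinement that increases the cardinality by two rather than one, so the penalty contribution becomes $2\xi$; and (ii) the quadratic lower bound on the weighted distance of $\widehat{C}_I$ to the three population targets now gives a three-term sum of pairwise $L^2$ gaps via the weighted-mean identity, and we must extract from it the bound on $|I_1|$ and $|I_3|$ separately.

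Concretely, assuming without loss of generality that $\min\{|I_1|,|I_3|\} \geq \xi/(m\zeta_\delta\kappa^2)$ (otherwise the stated bound is immediate), I would first write $H(\widehat{C}_I,I) \leq \sum_{j=1}^3 H(\widehat{C}_{I_j},I_j) + 2\xi$ by optimality of $\widehat{\mathcal{P}}$, then apply \Cref{l_loss_long} on each stationary $I_j$ to replace $H(\widehat{C}_{I_j},I_j)$ by $H(C^*_{r,I_j},I_j)$ up to controlled error. The same quadratic manipulation and cross-term bounds as in Steps~1--2 of the proof of \Cref{l_localisation_1true} (via \Cref{l_rec}, \Cref{l_op1}--\Cref{l_op3}, and \Cref{t_estimation}) yield
\[
\zeta_\delta m \sum_{j=1}^3 |I_j|\,\|\widehat{C}_I - C^*_{r,I_j}\|_{\F}^2 \;\lesssim\; 2\xi + \mathrm{error}.
\]
Since the weighted quadratic $M\mapsto \sum_j |I_j|\,\|M-C^*_{r,I_j}\|_{\F}^2$ is minimised at $M = C^*_{r,I} = |I|^{-1}\sum_j |I_j|C^*_{r,I_j}$, lower-bounding the left-hand side by its minimum value and using the weighted-variance identity gives
\[
\frac{1}{|I|}\bigl(|I_1||I_2|\,\|C^*_{r,I_1}-C^*_{r,I_2}\|_{\F}^2 + |I_2||I_3|\,\|C^*_{r,I_2}-C^*_{r,I_3}\|_{\F}^2\bigr) \;\lesssim\; \frac{2\xi + \mathrm{error}}{m\zeta_\delta},
\]
and the Step~4 argument of \Cref{l_localisation_1true} converts each $\|C^*_{r,I_j}-C^*_{r,I_{j+1}}\|_{\F}^2$ into a multiple of $\kappa_\ell^2$ or $\kappa_{\ell+1}^2$ using \Cref{a_localisation_jump}\ref{a_localisation_jump_lb}, leaving
\[
\tfrac{|I_2|}{|I|}\bigl(|I_1|\kappa_\ell^2 + |I_3|\kappa_{\ell+1}^2\bigr) \;\lesssim\; (\xi + \mathrm{error})/(m\zeta_\delta). \qquad (\star)
\]

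The main obstacle, and genuinely new feature compared with \Cref{l_localisation_1true}, is closing the argument from $(\star)$, since the factor $|I_2|/|I|$ is not automatically bounded below: if both $|I_1|$ and $|I_3|$ are much larger than $|I_2|$ this ratio degenerates. I would resolve this by a dichotomy. If $|I_1|+|I_3| \leq |I_2|$, then $|I_2|/|I| \geq 1/2$ and $(\star)$ immediately yields the desired bounds on both $|I_1|\kappa_\ell^2$ and $|I_3|\kappa_{\ell+1}^2$. Otherwise, using $|I_2| \geq \Delta$ together with $\min(\kappa_\ell,\kappa_{\ell+1})\geq \kappa$, the inequality $(\star)$ rearranges to $\Delta\kappa^2 \lesssim (\xi+\mathrm{error})/(m\zeta_\delta)$, which contradicts the signal-to-noise condition \Cref{a_localisation_jump}\ref{a_localisation_jump_snr} under the choices of $r$ and $\xi$ prescribed in \Cref{t_localisation}. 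The optimal choice of $r$ and the resulting rate bookkeeping then proceed exactly as in Step~5 of the proof of \Cref{l_localisation_1true}.
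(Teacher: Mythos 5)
Your core argument is valid and takes a genuinely different route from the paper's in one respect: you keep all three quadratic terms and invoke the three-point weighted-variance identity, then resolve the possibly degenerate factor $|I_2|/|I|$ with the dichotomy $|I_1|+|I_3|\le |I_2|$ versus $|I_1|+|I_3|>|I_2|$ (the second branch correctly yields $\Delta\kappa^2\lesssim \xi/(m\zeta_\delta)$, contradicting \Cref{a_localisation_jump}\ref{a_localisation_jump_snr}). The paper instead assumes WLOG $|I_1|\ge|I_3|$, discards the $I_3$ quadratic term entirely, and uses only the two-point identity between $I_1$ and $I_2$, concluding from $\min\{|I_1|,|I_2|\}\kappa_\ell^2\lesssim\xi/(m\zeta_\delta)$ and $|I_2|\ge\Delta$. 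Your version is more symmetric (it delivers the bounds on $|I_1|$ and $|I_3|$ simultaneously, with the natural denominators $\kappa_\ell^2$ and $\kappa_{\ell+1}^2$); the paper's avoids the dichotomy at the cost of a symmetry argument. Either is acceptable.

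There is, however, a genuine gap in your opening reduction. You claim that if $\min\{|I_1|,|I_3|\}$ is below the threshold then "the stated bound is immediate," but the lemma asserts \emph{both} bounds, and the mixed configuration — say $|I_3|<\xi/m$ while $|I_1|$ is large — is neither immediate nor covered by your main argument. In that configuration your key inequality $H(\widehat{C}_I,I)\le\sum_{j=1}^3 H(\widehat{C}_{I_j},I_j)+2\xi$ cannot be processed as you describe: by the definition \eqref{loss_localisation} of the loss, $H(\widehat{C}_{I_3},I_3)=0$ for $|I_3|<\xi/m$, and \Cref{l_loss_long} (which requires $|I_j|\ge\xi/m$) cannot be applied to replace $H(\widehat{C}_{I_3},I_3)$ by $H(C^*_{r,I_3},I_3)$; moreover the cross term $H(\widehat{C}_I,I_3)-H(C^*_{r,I_3},I_3)$ on the left-hand side must still be lower-bounded. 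The paper devotes a separate step to this case, using \Cref{l_loss_short} to show $H(C^*_{r,I_3},I_3)\lesssim\xi$ and hence $H(\widehat{C}_I,I_3)-H(C^*_{r,I_3},I_3)\ge-\xi/2$, so that the short piece only inflates the penalty from $2\xi$ to $5\xi/2$ and the two-piece argument on $I_1\cup I_2$ goes through. You need to add this case (and its mirror image), since both configurations of $\widehat{\mathcal{P}}$ are possible; without it the bound on the long end segment is unproven exactly when the other end segment happens to be short.
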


\begin{proof}
    For an interval $I = (s,e] \in \widehat{\mathcal{P}}$ containing two true change points $\eta_\ell$ and $\eta_{\ell+1}$, denote
    \begin{align*}
        I_1 = (s,\eta_\ell], \quad I_2 = (\eta_\ell,\eta_{\ell+1}] \quad \text{and} \quad I_3 = (\eta_{\ell+1},e].
    \end{align*}
    Without loss of generality, we will assume that $|I_1| \geq |I_3|$, and it suffices to derive an upper bound on $|I_1|$. Note that if $|I_1| \leq \frac{\xi}{m\zeta_{\delta}\kappa_\ell^2}$, then the result follows trivially. Hence, it suffices to assume that 
    \begin{align} \label{l_localisation_2true_eq1}
        |I_1| \geq \frac{\xi}{m\zeta_{\delta}\kappa_\ell^2} \geq C_{\kappa,\zeta}^{-1}\frac{\xi}{m},
    \end{align}
    where $C_{\kappa,\zeta} >0$ is a constant depending on the upper bound on $\kappa$ and the constant $C_\zeta$ in \Cref{r_zeta_delta}. Since $I$ contains two true change points, it follows that $|I| > |\eta_{\ell+1} - \eta_{\ell}| =|I_{2}| \geq \Delta \gtrsim \frac{\xi}{m}$, where the last inequality follows from \Cref{a_localisation_jump}\ref{a_localisation_jump_snr}. In the rest of the proof, there are two possible cases to consider, $|I_1| \geq |I_3| \gtrsim \frac{\xi}{m}$ or $|I_1| \gtrsim \frac{\xi}{m} > |I_3|$, and the two cases will be considered individually in $\textbf{Step 1}$ and $\textbf{Step 2}$. 

    \noindent \textbf{Step 1.} In the first case when $|I_1| \geq |I_3| \gtrsim \frac{\xi}{m}$, since $I \in \widehat{\mathcal{P}}$ and the algorithm does not split $I$ into three disjoint intervals, it must be the case that with probability at least $1-3n^{-3}$,
    \begin{align} \notag
        H\big(\widehat{C}_{I},  I\big)
        &\leq H(\widehat{C}_{I_1}, I_1)+ H(\widehat{C}_{I_2}, I_2)+H\big(\widehat{C}_{I_3}, I_3\big) +2 \xi\\
        \label{l_localisation_2true_eq2}
        & \leq H(C^*_{r,I_1}, I_1)+ H(C^*_{r,I_2}, I_2)+H\big(C^*_{r,I_3}, I_3\big) +2 \xi+ C_1\Big\{\frac{r^4m\log(n)}{\delta^2\zeta_{\delta}} \vee \frac{r^6\log(n)}{\delta^2 \zeta_{\delta}} \vee \frac{r^{2-2q}nm}{\delta^6\zeta_{\delta}}\Big\},
    \end{align}
    where the second inequality follows from \Cref{l_loss_long}. Rearranging \Cref{l_localisation_2true_eq2}, we have that 
    \begin{align} \label{l_localisation_2true_eq3}
        &\sum_{\ell=1}^3 \sum_{t\in I_\ell}\sum_{(j,k) \in \mathcal{O}} \big\{\Phi_r^\top(X_{t,j})(\widehat{C}_I-C^*_{r,I_\ell})\Phi_r(X_{t,k})\big\}^2\\
        \label{l_localisation_2true_eq4}
        \leq \;& \sum_{\ell=1}^3\sum_{t\in I_\ell} \sum_{(j,k) \in \mathcal{O}}\big\{Y_{t,j}Y_{t,k}-\Phi_r^\top(X_{t,j})C^*_{r,I_\ell}\Phi_r(X_{t,k})\big\}\big\{\Phi_r^\top(X_{t,j})(\widehat{C}_I-C^*_{r,I_\ell})\Phi_r(X_{t,k})\big\}\\
        \notag
        &+2\xi+C_1\Big\{\frac{r^4m\log(n)}{\delta^2\zeta_{\delta}} \vee \frac{r^6\log(n)}{\delta^2 \zeta_{\delta}} \vee \frac{r^{2-2q}nm}{\delta^6\zeta_{\delta}}\Big\}.
    \end{align}
    \noindent \textbf{Step 1.1.} To find a lower bound on \Cref{l_localisation_2true_eq3}, denote $C^*_{r,I} = \frac{1}{|I|}\sum_{t\in I} C^*_t$, then for any $\ell \in \{1, 2, 3\}$, using a similar argument as the one in \Cref{l_localisation_1true_eq13} and \Cref{l_localisation_1true_eq14}, we have with probability at least $1-3n^{-3}$ that
    \begin{align*}
        &\sum_{t\in I_\ell}\sum_{(j,k) \in \mathcal{O}} \big\{\Phi_r^\top(X_{t,j})(\widehat{C}_I-C^*_{r,I_\ell})\Phi_r(X_{t,k})\big\}^2\\
        =\; & \sum_{t\in I_\ell}\sum_{(j,k) \in \mathcal{O}} \big\{\Phi_r^\top(X_{t,j})(\widehat{C}_I-C^*_{r,I})\Phi_r(X_{t,k}) + \Phi_r(X_{t,j})^{\top}(C^*_{r,I} - C^*_{r,I_\ell})\Phi_r(X_{t,k})\big\}^2\\
        \geq \;&\frac{1}{2}\sum_{t\in I_\ell}\sum_{(j,k) \in \mathcal{O}}\big\{\Phi_r^{\top}(X_{t,j})(C^*_{r,I} - C^*_{r,I_\ell})\Phi_r(X_{t,k})\big\}^2- \sum_{t\in I_\ell}\sum_{(j,k) \in \mathcal{O}}\big\{\Phi_r^\top(X_{t,j})(\widehat{C}_I-C^*_{r,I})\Phi_r(X_{t,k}) \big\}^2\\
        \geq \;& \frac{\zeta_{\delta}|I_\ell|m}{4}\|C^*_{r,I} - C^*_{r,I_\ell}\|_{\F} - \sum_{t\in I_\ell}\sum_{(j,k) \in \mathcal{O}}\big\{\Phi_r^\top(X_{t,j})(\widehat{C}_I-C^*_{r,I})\Phi_r(X_{t,k}) \big\}^2\\
        \geq \;& \frac{\zeta_{\delta}|I_\ell|m}{4}\|C^*_{r,I} - C^*_{r,I_\ell}\|_{\F}^2 - \frac{\zeta_{\delta}|I_\ell|mr^2}{2\delta^2}\|\widehat{C}_I-C^*_{r,I}\|_{\F}^2.
    \end{align*}
    Therefore, we have that 
    \begin{align} \notag
        &\sum_{\ell=1}^3 \sum_{t\in I_\ell}\sum_{(j,k) \in \mathcal{O}} \big\{\Phi_r^\top(X_{t,j})(\widehat{C}_I-C^*_{r,I_\ell})\Phi_r(X_{t,k})\big\}^2\\
        \notag
        \geq\;& \sum_{\ell=1}^3 \Big\{\frac{\zeta_{\delta}|I_\ell|m}{4}\|C^*_{r,I} - C^*_{r,I_\ell}\|_{\F}^2 - \frac{\zeta_{\delta}|I_\ell|mr^2}{2\delta^2}\|\widehat{C}_I-C^*_{r,I}\|_{\F}^2\Big\}\\
        \label{l_localisation_2true_eq5}
        =\;& \sum_{\ell=1}^3 \frac{\zeta_{\delta}|I_\ell|m}{4}\|C^*_{r,I} - C^*_{r,I_\ell}\|_{\F}^2 - \frac{\zeta_{\delta}|I|mr^2}{2\delta^2}\|\widehat{C}_I-C^*_{r,I}\|_{\F}^2.
    \end{align}

    \noindent \textbf{Step 1.2.} To find an upper bound on \Cref{l_localisation_2true_eq4}, using a similar argument as the one in deriving \Cref{l_localisation_1true_eq16} and \Cref{l_localisation_1true_eq19}, we have that with probability at least $1-3n^{-3}$ that, for any $\ell \in \{1,2,3\}$,
    \begin{align*}
        &\sum_{t\in I_\ell} \sum_{(j,k) \in \mathcal{O}}\big\{Y_{t,j}Y_{t,k}-\Phi_r^\top(X_{t,j})C^*_{r,I_\ell}\Phi_r(X_{t,k})\big\}\big\{\Phi_r^\top(X_{t,j})(\widehat{C}_I-C^*_{r,I_\ell})\Phi_r(X_{t,k})\big\}\\
        \leq \;& C_2\Big\{r\sqrt{|I_\ell|m^2\log(n)} \vee r^{2}\sqrt{|I_\ell|m\log(n)} \vee \frac{r^{1-q}|I_\ell|m}{\delta^2} \Big\}\; \|\widehat{C}_I-C^*_{r,I_\ell}\|_{\F}\\
        \leq \;& C_2 \sqrt{\frac{\zeta_{\delta}|I_\ell|m}{32}}\|\widehat{C}_I-C^*_{r,I_\ell}\|_{\F} \cdot \Big\{r \sqrt{\frac{m\log(n)}{\zeta_{\delta}/32}} \vee r^{2} \sqrt{\frac{\log(n)}{\zeta_{\delta}/32}} \vee \frac{r^{1-q}}{\delta^2}\sqrt{\frac{nm}{\zeta_{\delta}/32}}\Big\}\\
        \leq \;& \frac{\zeta_{\delta}|I_\ell|m}{32}\|\widehat{C}_I-C^*_{r,I_\ell}\|_{\F}^2+ C_3\Big\{\frac{r^2m\log(n)}{\zeta_{\delta}} \vee \frac{r^{4}\log(n)}{\zeta_{\delta}} \vee \frac{r^{2-2q}nm}{\delta^4\zeta_{\delta}}\Big\}\\
        \leq \;& \frac{\zeta_{\delta}|I_\ell|m}{16}\|\widehat{C}_I-C^*_{r,I}\|_{\F}^2+\frac{\zeta_{\delta}|I_\ell|m}{16}\|C^*_{r,I}-C^*_{r,I_\ell}\|_{\F}^2 +C_3\Big\{\frac{r^2m\log(n)}{\zeta_{\delta}} \vee \frac{r^{4}\log(n)}{\zeta_{\delta}} \vee \frac{r^{2-2q}nm}{\delta^4\zeta_{\delta}}\Big\}.
    \end{align*}
    Therefore, we have that 
    \begin{align} \notag
        &\sum_{\ell=1}^3\sum_{t\in I_\ell} \sum_{(j,k) \in \mathcal{O}}\big\{Y_{t,j}Y_{t,k}-\Phi_r^\top(X_{t,j})C^*_{r,I_\ell}\Phi_r(X_{t,k})\big\}\big\{\Phi_r^\top(X_{t,j})(\widehat{C}_I-C^*_{r,I_\ell})\Phi_r(X_{t,k})\big\}\\
        \label{l_localisation_2true_eq6}
        \leq \;& \sum_{\ell=1}^3\frac{\zeta_{\delta}|I_\ell|m}{16}\|C^*_{r,I}-C^*_{r,I_\ell}\|_{\F}^2  + \frac{\zeta_{\delta}|I|m}{16}\|\widehat{C}_I-C^*_{r,I}\|_{\F}^2+ C_4\Big\{\frac{r^2m\log(n)}{\zeta_{\delta}} \vee \frac{r^{4}\log(n)}{\zeta_{\delta}} \vee \frac{r^{2-2q}nm}{\delta^4\zeta_{\delta}}\Big\}.
    \end{align}
    \noindent \textbf{Step 1.3.} Putting the results in \Cref{l_localisation_2true_eq5} and \Cref{l_localisation_2true_eq6} into \Cref{l_localisation_2true_eq3} and \Cref{l_localisation_2true_eq4}, we have that 
    \begin{align*}
        \sum_{\ell=1}^3 \frac{\zeta_{\delta}|I_\ell|m}{8}\|C^*_{r,I} - C^*_{r,I_\ell}\|_{\F}^2 \leq \;&\Big\{\frac{\zeta_{\delta}|I|mr^2}{2\delta^2}+\frac{\zeta_{\delta}|I|m}{16}\Big\}\|\widehat{C}_I-C^*_{r,I}\|_{\F}^2\\
        &+ 2\xi+C_5\Big\{\frac{r^4m\log(n)}{\delta^2\zeta_{\delta}} \vee \frac{r^6\log(n)}{\delta^2 \zeta_{\delta}} \vee \frac{r^{2-2q}nm}{\delta^6\zeta_{\delta}}\Big\}.
    \end{align*}
    Using \Cref{t_estimation} and the fact that $\frac{\zeta_{\delta}|I_3|m}{8}\|C^*_{r,I} - C^*_{r,I_3}\|_{\F}^2 \geq 0$, it holds with probability at least $1-3n^{-3}$ that 
    \begin{align} \label{l_localisation_2true_eq7}
         \sum_{\ell=1}^2 \frac{\zeta_{\delta}|I_\ell|m}{8}\|C^*_{r,I} - C^*_{r,I_\ell}\|_{\F}^2 \leq 2\xi+C_6\Big\{\frac{r^4m\log(n)}{\delta^2\zeta_{\delta}} \vee \frac{r^6\log(n)}{\delta^2 \zeta_{\delta}} \vee \frac{r^{2-2q}nm}{\delta^6\zeta_{\delta}}\Big\}.
    \end{align}
    Observe that 
    \begin{align} \notag
        \sum_{\ell=1}^2 \frac{\zeta_{\delta}|I_\ell|m}{8}\|C^*_{r,I} - C^*_{r,I_\ell}\|_{\F}^2 &\geq \underset{C \in \mathbb{R}^{r\times r}}{\min} \Big\{ \frac{\zeta_{\delta}|I_1|m}{8}\|C - C^*_{r,I_1}\|_{\F}^2+\frac{\zeta_{\delta}|I_2|m}{8}\|C - C^*_{r,I_2}\|_{\F}^2\Big\}\\
        \notag
        & = \frac{\zeta_{\delta}|I_1||I_2|^2m}{8|I|^2}\|C^*_{r,I_1}-C^*_{r,I_2}\|_{\F}^2 + \frac{\zeta_{\delta}|I_1|^2|I_2|m}{8|I|^2}\|C^*_{r,I_1}-C^*_{r,I_2}\|_{\F}^2\\
        \label{l_localisation_2true_eq8}
        &\geq \frac{\zeta_{\delta}\min\{|I_1|,|I_2|\}m\|C^*_{r,I_1}-C^*_{r,I_2}\|_{\F}^2}{16}.
    \end{align}
    Put \Cref{l_localisation_2true_eq8} into \Cref{l_localisation_2true_eq7}, and pick $\xi = C_{\xi}K\Big\{\frac{r^4m\log^2(n)}{\delta^2\zeta_{\delta}} \vee \frac{r^6\log(n)}{\delta^2 \zeta_{\delta}} \vee \frac{r^{2-2q}nm}{\delta^6\zeta_{\delta}}\Big\}$, we have that
    \begin{align*}
        \min\{|I_1|,|I_2|\} &\leq C_7K\Big\{\frac{r^4\log^2(n)}{\delta^2\zeta_{\delta}^2} \vee \frac{r^6\log(n)}{\delta^2 \zeta_{\delta}^2m} \vee \frac{r^{2-2q}n}{\delta^6\zeta_{\delta}^2}\Big\} \frac{1}{\|C^*_{r,I_1}-C^*_{r,I_2}\|_{\F}^2}\\
        & \leq C_8K\Big\{\frac{r^4\log^2(n)}{\delta^2\zeta_{\delta}^2} \vee \frac{r^6\log(n)}{\delta^2 \zeta_{\delta}^2m} \vee \frac{r^{2-2q}n}{\delta^6\zeta_{\delta}^2}\Big\} \frac{1}{\kappa_\ell^2},
    \end{align*}
    where the second inequality follows from \Cref{l_localisation_1true_eq21}. Finally, since $|I_2| \geq \Delta$, then it must be the case that
    \begin{align*}
        |I_1| \leq C_8K\Big\{\frac{r^4\log^2(n)}{\delta^2\zeta_{\delta}^2} \vee \frac{r^6\log(n)}{\delta^2 \zeta_{\delta}^2m} \vee \frac{r^{2-2q}n}{\delta^6\zeta_{\delta}^2}\Big\} \frac{1}{\kappa_k^2},
    \end{align*}
    which is a contradiction to \Cref{l_localisation_2true_eq1}.

    \noindent \textbf{Step 2.} In the second case, when we have $|I_1| \gtrsim \frac{\xi}{m} > |I_3|$, note that with probability at least $1-3n^{-3}$, 
    \begin{align} \notag
        &H\big(\widehat{C}_{I}, I_3\big) - H(C^*_{r,I_3}, I_3)\\
        \notag
        =\;& \sum_{t\in I_3} \sum_{(j,k) \in \mathcal{O}} \big\{Y_{t,j}Y_{t,k} - \Phi_r^\top(X_{t,j}) \widehat{C}_{I} \Phi_r(X_{t,k})\big\}^2- \sum_{t\in I_3} \sum_{(j,k) \in \mathcal{O}} \big\{Y_{t,j}Y_{t,k} - \Phi_r^\top(X_{t,j}) C^*_{r,I_3} \Phi_r(X_{t,k})\big\}^2\\
        \notag
        =\;& \sum_{t\in I_3} \sum_{(j,k) \in \mathcal{O}} \{\Phi_r^\top(X_{t,j})(\widehat{C}_{I} - C^*_{r,I_3})\Phi_r(X_{t,k})\}^2\\
        \notag
        &-2\sum_{t\in I_3} \sum_{(j,k) \in \mathcal{O}}\{Y_{t,j}Y_{t,k}-\Phi_r^\top(X_{t,j})C^*_{r,I_3}\Phi_r(X_{t,k})\}\{\Phi_r^\top(X_{t,j})(\widehat{C}_{I} - C^*_{r,I_3})\Phi_r(X_{t,k})\}\\
        \notag
        \geq\;&  \sum_{t\in I_3} \sum_{(j,k) \in \mathcal{O}} \{\Phi_r^\top(X_{t,j})(\widehat{C}_{I} - C^*_{r,I_3})\Phi_r(X_{t,k})\}^2 -\frac{1}{2}\sum_{t\in I_3} \sum_{(j,k) \in \mathcal{O}} \{\Phi_r^\top(X_{t,j})(\widehat{C}_{I} - C^*_{r,I_3})\Phi_r(X_{t,k})\}^2 \\
        \notag
        & - 2\sum_{t\in I_3} \sum_{(j,k) \in \mathcal{O}} \Big\{Y_{t,j}Y_{t,k} - \Phi_r^\top(X_{t,j}) C^*_{r,I_3} \Phi_r(X_{t,k})\Big\}^2\\
        \label{l_localisation_2true_eq9}
        \geq\;& \frac{1}{2}\sum_{t\in I_3} \sum_{(j,k) \in \mathcal{O}} \{\Phi_r^\top(X_{t,j})(\widehat{C}_{I} - C^*_{r,I_3})\Phi_r(X_{t,k})\}^2 - \frac{\xi}{2} \geq -\frac{\xi}{2},
    \end{align}
    where the second inequality follows from \Cref{l_loss_short}, and the last inequality follows from the fact that $\frac{1}{2}\sum_{t\in I_3} \sum_{(j,k) \in \mathcal{O}} \{\Phi_r^\top(X_{t,j})(\widehat{C}_{I} - C^*_{r,I_3})\Phi_r(X_{t,k})\}^2 \geq 0$. Since the algorithm does not split $I$ into three disjoint intervals and by \Cref{l_localisation_2true_eq9} and using the result in \Cref{l_loss_long} for interval $I_1$ and $I_2$, it holds with probability at least $1-3n^{-3}$ that 
    \begin{align*}
         H(\widehat{C}_{I}, I_1)+ H(\widehat{C}_{I}, I_2) \leq H(C^*_{r,I_1}, I_1)+ H(C^*_{r,I_2}, I_2) +\frac{5}{2} \xi+ C_9\Big\{\frac{r^4m\log(n)}{\delta^2\zeta_{\delta}} \vee \frac{r^6\log(n)}{\delta^2 \zeta_{\delta}} \vee \frac{r^{2-2q}nm}{\delta^6\zeta_{\delta}}\Big\}.
    \end{align*}
    Rearranging similarly as \Cref{l_localisation_2true_eq5} and \Cref{l_localisation_2true_eq6} in \textbf{Step 1}, it holds with probability at least $1-3n^{-3}$ that
    \begin{align*}
        &\frac{\zeta_{\delta}|I_1|m}{4}\|C^*_{r,I} - C^*_{r,I_1}\|_{\F}^2+\frac{\zeta_{\delta}|I_2|m}{4}\|C^*_{r,I} - C^*_{r,I_2}\|_{\F}^2\\
        \leq\;& \frac{\zeta_{\delta}(|I_1|+|I_2|)mr^2}{2\delta^2}\|\widehat{C}_I-C^*_{r,I}\|_{\F}^2+ \sum_{\ell=1}^2 \Bigg\{\frac{\zeta_{\delta}|I_\ell|m}{8}\|\widehat{C}_I-C^*_{r,I}\|_{\F}^2+\frac{\zeta_{\delta}|I_\ell|m}{8}\|C^*_{r,I}-C^*_{r,I_\ell}\|_{\F}^2\Bigg\}\\
        &+\frac{5}{2}\xi+C_{10}\Big\{\frac{r^4m\log(n)}{\delta^2\zeta_{\delta}} \vee \frac{r^6\log(n)}{\delta^2 \zeta_{\delta}} \vee \frac{r^{2-2q}nm}{\delta^6\zeta_{\delta}}\Big\}.
    \end{align*}
    By \Cref{t_estimation} and rearranging the above equation, it holds that 
    \begin{align} \notag
        &\frac{\zeta_{\delta}|I_1|m}{8}\|C^*_{r,I} - C^*_{r,I_1}\|_{\F}^2+\frac{\zeta_{\delta}|I_2|m}{8}\|C^*_{r,I} - C^*_{r,I_2}\|_{\F}^2 \\\notag
        \leq \;&\Big\{\frac{\zeta_{\delta}|I|mr^2}{2\delta^2} +\frac{\zeta_{\delta}|I|m}{8}\Big\}\|\widehat{C}_I-C^*_{r,I}\|_{\F}^2+\frac{5}{2}\xi+C_{10}\Big\{\frac{r^4m\log(n)}{\delta^2\zeta_{\delta}} \vee \frac{r^6\log(n)}{\delta^2 \zeta_{\delta}} \vee \frac{r^{2-2q}nm}{\delta^6\zeta_{\delta}}\Big\}\\
        \label{l_localisation_2true_eq10}
        \leq \;&\frac{5}{2}\xi+C_{11}\Big\{\frac{r^4m\log(n)}{\delta^2\zeta_{\delta}} \vee \frac{r^6\log(n)}{\delta^2 \zeta_{\delta}} \vee \frac{r^{2-2q}nm}{\delta^6\zeta_{\delta}}\Big\},
    \end{align}
    where the first inequality holds since $\frac{\zeta_{\delta}|I_3|mr^2}{2\delta^2}\|\widehat{C}_I-C^*_{r,I}\|_{\F}^2, \frac{\zeta_{\delta}|I_3|m}{8}\|\widehat{C}_I-C^*_{r,I}\|_{\F}^2 \geq 0$. By \Cref{l_localisation_2true_eq8} and a similar argument in \textbf{Step 1.3}, we reach again a contradiction to \Cref{l_localisation_2true_eq1}, which immediately gives the desired result.
\end{proof}

\begin{lemma} \label{l_localisation_3true}
    With probability at least $1-3n^{-3}$, there is no interval $I \in \widehat{\mathcal{P}}$ containing three or more true change points.
\end{lemma}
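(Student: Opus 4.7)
The plan is to argue by contradiction, generalising the two-change-point analysis of \Cref{l_localisation_2true}. Suppose some $I=(s,e]\in\widehat{\mathcal{P}}$ contains $p+1\geq 3$ true change points $\eta_\ell<\eta_{\ell+1}<\cdots<\eta_{\ell+p}$. Partition $I$ into $I_0=(s,\eta_\ell]$, $I_j=(\eta_{\ell+j-1},\eta_{\ell+j}]$ for $j=1,\ldots,p$, and $I_{p+1}=(\eta_{\ell+p},e]$. Each of $I_1,\ldots,I_p$ has length at least $\Delta\gtrsim \xi/m$, while $I_0$ and $I_{p+1}$ may be arbitrarily short.

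Since $\widehat{\mathcal{P}}$ minimises \eqref{loss_l0}, replacing $I$ by $\{I_0,\ldots,I_{p+1}\}$ cannot decrease the penalised objective, so $H(\widehat{C}_I,I)=\sum_{j=0}^{p+1}H(\widehat{C}_I,I_j)\leq \sum_{j=0}^{p+1}H(\widehat{C}_{I_j},I_j)+(p+1)\xi$. For each of the long middle segments $I_1,\ldots,I_p$ I would invoke \Cref{l_loss_long} to trade $H(\widehat{C}_{I_j},I_j)$ for $H(C^*_{r,I_j},I_j)$ up to the standard error terms. For the possibly short boundary segments $I_0$ and $I_{p+1}$, I would follow Step~2 of \Cref{l_localisation_2true} and apply \Cref{l_loss_short} to obtain $H(\widehat{C}_I,I_j)-H(C^*_{r,I_j},I_j)\geq -\xi/2$, absorbing those contributions into the penalty budget on the right-hand side.

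Expanding the quadratics as in Step~1.1--1.3 of \Cref{l_localisation_2true}, using $(a+b)^2\geq \tfrac12 a^2-b^2$, the restricted-eigenvalue bound \Cref{l_rec}, the covariance estimation rate in \Cref{t_estimation}, and the concentration bounds in \Cref{l_op1,l_op2,l_op3}, the rearrangement yields
\begin{align*}
    \sum_{j=1}^{p}\frac{\zeta_\delta|I_j|m}{8}\|C^*_{r,I}-C^*_{r,I_j}\|_{\F}^2 \;\leq\; C_1(p+1)\xi + C_2\Big\{\tfrac{r^4m\log(n)}{\delta^2\zeta_\delta}\vee \tfrac{r^6\log(n)}{\delta^2\zeta_\delta}\vee \tfrac{r^{2-2q}nm}{\delta^6\zeta_\delta}\Big\}.
\end{align*}
Lower-bounding the left-hand side is the pivotal step. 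Since the mean $C^*_{r,I}$ is a common argument, $\sum_{j=1}^{p}|I_j|\|C^*_{r,I}-C^*_{r,I_j}\|_{\F}^2\geq \min_{C}\sum_{j=1}^{p}|I_j|\|C-C^*_{r,I_j}\|_{\F}^2$, and by the same two-term projection identity as in \eqref{l_localisation_2true_eq8}, any adjacent pair $(I_j,I_{j+1})$ with $j\in\{1,\ldots,p-1\}$ gives a lower bound of $\tfrac{\min(|I_j|,|I_{j+1}|)}{2}\|C^*_{r,I_j}-C^*_{r,I_{j+1}}\|_{\F}^2\geq \tfrac{\Delta\kappa_{\ell+j}^2}{12}\geq \tfrac{\Delta\kappa^2}{12}$, where the last inequality uses \eqref{l_localisation_1true_eq21} and $|I_j|,|I_{j+1}|\geq \Delta$.

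Plugging in $\xi=C_\xi K\{r^4m\log^2(n)/(\delta^2\zeta_\delta)\vee r^6\log(n)/(\delta^2\zeta_\delta)\vee r^{2-2q}nm/(\delta^6\zeta_\delta)\}$ and using $p+1\leq K+1$ leads to
\begin{align*}
    \Delta\kappa^2 \;\lesssim\; K\Big\{\tfrac{r^4\log^2(n)}{\delta^2\zeta_\delta^2}\vee \tfrac{r^6\log(n)}{\delta^2\zeta_\delta^2 m}\vee \tfrac{r^{2-2q}n}{\delta^6\zeta_\delta^2}\Big\},
\end{align*}
which contradicts the signal-to-noise assumption \Cref{a_localisation_jump}\ref{a_localisation_jump_snr} once $C_{\mathrm{snr}}$ is taken sufficiently large. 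The main obstacle is handling the two boundary fragments $I_0$ and $I_{p+1}$, whose lengths are not controlled by $\Delta$ and therefore do not permit a direct application of \Cref{l_loss_long}; this is resolved exactly as in Step~2 of \Cref{l_localisation_2true} by the one-sided inequality coming from \Cref{l_loss_short}. The pigeonhole step itself is straightforward because the existence of two consecutive full-length middle segments is guaranteed by $p\geq 2$.
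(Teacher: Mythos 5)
Your overall architecture matches the paper's proof: contradiction, the same decomposition into two (possibly short) boundary fragments plus $p\geq 2$ middle segments each of length at least $\Delta$, the non-splitting inequality with penalty $(p+1)\xi$, \Cref{l_loss_short} for the short boundary pieces, and the quadratic expansion plus \Cref{l_rec}, \Cref{t_estimation} and the concentration lemmas to arrive at
$\sum_{j=1}^{p}\tfrac{\zeta_\delta|I_j|m}{8}\|C^*_{r,I}-C^*_{r,I_j}\|_{\F}^2\lesssim (p+1)\xi+\{\cdots\}$.

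However, there is a genuine gap in your final lower-bound step. You lower-bound the entire left-hand sum by the contribution of a \emph{single} adjacent pair, which yields only $\zeta_\delta m\Delta\kappa^2/C$ — a quantity with no dependence on $p$. The right-hand side carries the factor $(p+1)\xi$ with $\xi\asymp K\{\cdots\}$, so after rearranging you obtain $\Delta\kappa^2\lesssim (p+1)K\{\cdots\}$, and bounding $p+1\leq K+1$ gives $\Delta\kappa^2\lesssim K^2\{\cdots\}$, not the $K\{\cdots\}$ you claim. Since \Cref{a_localisation_jump}\ref{a_localisation_jump_snr}~only guarantees $\Delta\kappa^2\geq C_{\mathrm{snr}}\alpha_nK\{\cdots\}$ and the paper allows $K$ to diverge with $n$ while $\alpha_n$ is merely some diverging sequence, $K^2\{\cdots\}$ does not contradict the signal-to-noise assumption unless $\alpha_n\gtrsim K$, which is not assumed. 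The fix — and what the paper does in \eqref{l_localisation_3true_eq7} — is to sum the pairwise lower bound over \emph{all} $p-1$ adjacent pairs of middle segments (each interval is counted at most twice, whence the factor $2$ on the left), producing a lower bound proportional to $(p-1)\Delta\kappa^2\gtrsim (p+1)\Delta\kappa^2$ for $p\geq 2$; the factor $(p+1)$ then cancels against $(p+1)\xi$ and the contradiction $\Delta\kappa^2\lesssim K\{\cdots\}$ follows. With that single modification your argument goes through.
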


\begin{proof}
    We will prove this by contradiction. Suppose $I =(s,e] \in \widehat{\mathcal{P}}$ is an induced interval containing $L \geq 3$ true change points $\{\eta_1, \eta_2, \dots, \eta_L\}$. Denote
    \begin{align*} 
        I_1 = (s,\eta_1], \; I_{L+1} = (\eta_L,e],\; \text{and}\; \;I_\ell = (\eta_{\ell-1}, \eta_\ell] \; \; \text{for}\;\; \ell \in \{2,3,\dots,L\}.
    \end{align*}
    Since $I$ contains $L$ true change points, by \Cref{a_localisation_jump}\ref{a_localisation_jump_snr}, it must hold that $|I| > \Delta \gtrsim \frac{\xi}{m}$, and $|I_\ell| >\Delta \gtrsim \frac{\xi}{m}$ for each $\ell \in \{2,3,\dots, L\}$. Since the algorithm does not split $I$ into $L+1$ disjoint intervals, it must be the case that
    \begin{align} \label{l_localisation_3true_eq1}
        H(\widehat{C}_{I},I) \leq \sum_{\ell =1}^{L+1} H(\widehat{C}_{I_\ell},I_\ell) + L\xi.
    \end{align}
    In the rest of the proof, it suffices to consider four cases: $\min\{|I_1|, |I_{L+1}|\} \geq \frac{\xi}{m}$; $\max \{|I_1|, |I_{L+1}|\} \leq \frac{\xi}{m}$; $|I_1| \geq \frac{\xi}{m} > |I_{L+1}|$; and $|I_{L+1}| \geq \frac{\xi}{m} > |I_1|$.

    \noindent \textbf{Step 1.} Denote $C^*_{r,I} = \frac{1}{|I|}\sum_{t \in I}C^*_{r,t}$. In the first case when $\min\{|I_1|, |I_{L+1}|\} \geq \frac{\xi}{m}$, using a similar argument that leads to \Cref{l_localisation_2true_eq5} and \Cref{l_localisation_2true_eq6}, we have that with probability at least $1-3n^{-3}$ that for $\ell \in \{1, L+1\}$,
    \begin{align}\notag
        H(\widehat{C}_{I},I_\ell) - H(C^*_{r,I_\ell},I_\ell) \geq \;& \frac{\zeta_{\delta}|I_\ell|m}{8}\|C^*_{r,I} - C^*_{r,I_\ell}\|_{\F}^2 - \frac{\zeta_{\delta}|I_\ell|mr^2}{2\delta^2}\|\widehat{C}_I-C^*_{r,I}\|_{\F}^2 \\
        \notag
        &-\frac{\zeta_{\delta}|I_\ell|m}{8}\|\widehat{C}_I-C^*_{r,I}\|_{\F}^2-C_1\Big\{\frac{r^2m\log(n)}{\zeta_{\delta}} \vee \frac{r^{4}\log(n)}{\zeta_{\delta}} \vee \frac{r^{2-2q}nm}{\delta^4\zeta_{\delta}}\Big\} \\
        \label{l_localisation_3true_eq2}
        \geq\;& -C_2\Big\{\frac{r^4m\log(n)}{\delta^2\zeta_{\delta}} \vee \frac{r^6\log(n)}{\delta^2 \zeta_{\delta}} \vee \frac{r^{2-2q}nm}{\delta^6\zeta_{\delta}}\Big\} \geq -\frac{\xi}{2},
    \end{align}
    where the second inequality follows from \Cref{t_estimation}. In the second case when $\max \{|I_1|, |I_{L+1}|\} \leq \frac{\xi}{m}$, by \Cref{l_loss_short}, we have that for $\ell \in \{1, L+1\}$
    \begin{align} \label{l_localisation_3true_eq3}
        H(\widehat{C}_{I},I_\ell) - H(C^*_{r,I_\ell},I_\ell) \geq - H(C^*_{r,I_\ell},I_\ell) \geq -\frac{\xi}{2}.
    \end{align}
    In the third case when $|I_1| \geq \frac{\xi}{m} > |I_{L+1}|$, by the arguments that lead to \Cref{l_localisation_2true_eq5} and \Cref{l_localisation_2true_eq6}, and \Cref{l_loss_short}, we have that 
    \begin{align}\notag
        H(\widehat{C}_{I},I_1) - H(C^*_{r,I_1},I_1) \geq \;& \frac{\zeta_{\delta}|I_1|m}{8}\|C^*_{r,I} - C^*_{r,I_1}\|_{\F}^2 - \frac{\zeta_{\delta}|I_1|mr^2}{2\delta^2}\|\widehat{C}_I-C^*_{r,I}\|_{\F}^2 \\
        \notag
        &-\frac{\zeta_{\delta}|I_1|m}{8}\|\widehat{C}_I-C^*_{r,I}\|_{\F}^2-C_3\Big\{\frac{r^2m\log(n)}{\zeta_{\delta}} \vee \frac{r^{4}\log(n)}{\zeta_{\delta}} \vee \frac{r^{2-2q}nm}{\delta^4\zeta_{\delta}}\Big\} \\
        \label{l_localisation_3true_eq4}
        \geq\;& -C_4\Big\{\frac{r^4m\log(n)}{\delta^2\zeta_{\delta}} \vee \frac{r^6\log(n)}{\delta^2 \zeta_{\delta}} \vee \frac{r^{2-2q}nm}{\delta^6\zeta_{\delta}}\Big\} \geq -\frac{\xi}{2},
    \end{align}
    and 
    \begin{align} \label{l_localisation_3true_eq5}
        H(\widehat{C}_{I},I_{L+1}) - H(C^*_{r,I_{L+1}},I_{L+1}) \geq -\frac{\xi}{2}.
    \end{align}
    By a symmetric argument, similar results as the third case could also be obtained for the fourth case. Therefore, in all of the four cases above, we have that for $\ell \in \{1, L+1\}$, 
    \begin{align*}
        H(\widehat{C}_{I},I_\ell) - H(C^*_{r,I_\ell},I_\ell) \geq -\frac{\xi}{2}.
    \end{align*}

    \noindent \textbf{Step 2.} Using \Cref{l_localisation_3true_eq2}, \Cref{l_localisation_3true_eq3}, \Cref{l_localisation_3true_eq4} and \Cref{l_localisation_3true_eq5}, together with the argument that leads to \Cref{l_localisation_2true_eq10}, it holds with probability at least $1-3n^{-3}$ that
    \begin{align}\notag
        &\sum_{\ell=2}^{L} \frac{\zeta_{\delta}|I_\ell|m}{8}\|C^*_{r,I} - C^*_{r,I_\ell}\|_{\F}^2\\
        \notag
        \leq\;& \Big\{\frac{\zeta_{\delta}|I|mr^2}{2\delta^2} +\frac{\zeta_{\delta}|I|m}{8}\Big\}\|\widehat{C}_I-C^*_{r,I}\|_{\F}^2+(L+1)\xi +C_5(L-1)\Big\{\frac{r^4m\log(n)}{\delta^2\zeta_{\delta}} \vee \frac{r^6\log(n)}{\delta^2 \zeta_{\delta}} \vee \frac{r^{2-2q}nm}{\delta^6\zeta_{\delta}}\Big\}\\
        \label{l_localisation_3true_eq6}
        \leq \;&(L+1)\xi +C_6L\Big\{\frac{r^4m\log(n)}{\delta^2\zeta_{\delta}} \vee \frac{r^6\log(n)}{\delta^2 \zeta_{\delta}} \vee \frac{r^{2-2q}nm}{\delta^6\zeta_{\delta}}\Big\}.
    \end{align}
    \noindent \textbf{Step 3.} Observe that for all $\ell \in \{3, \ldots, L\}$,
    \begin{align*}
        & \frac{\zeta_{\delta}|I_{\ell-1}|m}{8}\|C^*_{r,I} - C^*_{r,I_{\ell-1}}\|_{\F}^2+ \frac{\zeta_{\delta}|I_\ell|m}{8}\|C^*_{r,I} - C^*_{r,I_\ell}\|_{\F}^2\\
        \geq\;& \underset{C \in \mathbb{R}^{r\times r}}{\min} \Big\{ \frac{\zeta_{\delta}|I_{\ell-1}|m}{8}\|C - C^*_{r,I_{\ell-1}}\|_{\F}^2+\frac{\zeta_{\delta}|I_\ell|m}{8}\|C - C^*_{r,I_\ell}\|_{\F}^2\Big\}\\
        \geq\;& \frac{\zeta_{\delta}\min\{|I_{\ell-1}|, |I_\ell|\}m\|C^*_{r,I_{\ell-1}}-C^*_{r,I_\ell}\|_{\F}^2}{16}\\
        \geq\;& \frac{\zeta_{\delta}\Delta m \underset{\ell \in \{3, \ldots, L\}}{\min} \|C^*_{r,I_{\ell-1}}-C^*_{r,I_\ell}\|_{\F}^2}{16}.
    \end{align*}
    where the second inequality follows from the fact that $|I_\ell| \geq \Delta$. Also, note that
    \begin{align} \notag
        2\sum_{\ell=2}^{L} \frac{\zeta_{\delta}|I_\ell|m}{8}\|C^*_{r,I} - C^*_{r,I_\ell}\|_{\F}^2 &\geq \sum_{\ell=3}^{L}\Big\{\frac{\zeta_{\delta}|I_{\ell-1}|m}{8}\|C^*_{r,I} - C^*_{r,I_{\ell-1}}\|_{\F}^2+ \frac{\zeta_{\delta}|I_\ell|m}{8}\|C^*_{r,I} - C^*_{r,I_\ell}\|_{\F}^2\Big\}\\
        \notag
        & \geq \frac{(L-2)\zeta_{\delta}\Delta m \underset{\ell \in \{3, \ldots, L\}}{\min} \|C^*_{r,I_{\ell-1}}-C^*_{r,I_\ell}\|_{\F}^2}{16}\\
        \notag
        & \geq \frac{(L+1)\zeta_{\delta}\Delta m \underset{\ell \in \{3, \ldots, L\}}{\min} \|C^*_{r,I_{\ell-1}}-C^*_{r,I_\ell}\|_{\F}^2}{64}\\
        \label{l_localisation_3true_eq7}
        & \geq \frac{(L+1)\zeta_{\delta}\Delta m \kappa^2}{384},
    \end{align}
    where the third inequality follows from the fact that $L \geq 3$, and the last inequality holds from \Cref{l_localisation_1true_eq21} and the fact that 
    \begin{align*}
        \underset{\ell \in \{3, \ldots, L\}}{\min} \|C^*_{r,I_{\ell-1}}-C^*_{r,I_\ell}\|_{\F}^2 \geq \underset{\ell \in \{3, \ldots, L\}}{\min} \frac{\kappa_{\ell-1}^2}{6} \geq \frac{\kappa^2}{6}.
    \end{align*}
    Combine \Cref{l_localisation_3true_eq6} and \Cref{l_localisation_3true_eq7} together, and pick $\xi = C_{\xi}K\Big\{\frac{r^4m\log^2(n)}{\delta^2\zeta_{\delta}} \vee \frac{r^6\log(n)}{\delta^2 \zeta_{\delta}} \vee \frac{r^{2-2q}nm}{\delta^6\zeta_{\delta}}\Big\}$, we have that
    \begin{align*}
        \Delta \kappa^2 \leq C_7K\Big\{\frac{r^4\log^2(n)}{\delta^2\zeta_{\delta}^2} \vee \frac{r^6\log(n)}{\delta^2 \zeta_{\delta}^2m} \vee \frac{r^{2-2q}n}{\delta^6\zeta_{\delta}^2}\Big\},
    \end{align*}
    which is a contradiction to \Cref{a_localisation_jump}\ref{a_localisation_jump_snr}.
\end{proof}

\begin{lemma} \label{l_localisation_2contain1}
    With probability at least $1-3n^{-3}$, there are no two consecutive induced intervals $I_1 = [s,t) \in \mathcal{\widehat{P}}$ and $I_2 = [t,e) \in \mathcal{\widehat{P}}$ such that $I_1 \cup I_2$ contains no true change point.
\end{lemma}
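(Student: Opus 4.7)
The plan is to argue by contradiction. Suppose $I_1 = [s,t)$ and $I_2 = [t,e)$ are two consecutive intervals in $\widehat{\mathcal{P}}$ such that $I = I_1 \cup I_2$ contains no true change point. Then $C^*_{r,t'}$ is constant in $t' \in I$, so $C^*_{r,I_1} = C^*_{r,I_2} = C^*_{r,I}$ and the exact identity
\[
H(C^*_{r,I}, I) = H(C^*_{r,I_1}, I_1) + H(C^*_{r,I_2}, I_2)
\]
holds. Since $\widehat{\mathcal{P}}$ minimises \eqref{loss_l0}, merging $I_1$ and $I_2$ into $I$ cannot strictly decrease the objective, which, after cancelling one copy of $\xi$ from both sides, yields
\[
\xi \,\le\, H(\widehat{C}_I, I) - H(\widehat{C}_{I_1}, I_1) - H(\widehat{C}_{I_2}, I_2).
\]
My goal is then to show that, with probability at least $1 - 3n^{-3}$, the right-hand side of this inequality is strictly less than $\xi$ for the choice of $\xi$ in \Cref{t_localisation}, producing the desired contradiction.

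Using the displayed identity, I would telescope the right-hand side into three concentration-type discrepancies of the form $|H(\widehat{C}_J, J) - H(C^*_{r,J}, J)|$ for $J \in \{I, I_1, I_2\}$. Whenever $|J| \ge \xi/m$, the bound already established in \eqref{l_localisation_1true_eq3} (an application of \Cref{l_loss_long}) controls each discrepancy by an error term of order
\[
\frac{r^4 m \log(n)}{\delta^2 \zeta_\delta} \vee \frac{r^6 \log(n)}{\delta^2 \zeta_\delta} \vee \frac{r^{2-2q} n m}{\delta^6 \zeta_\delta},
\]
which is smaller than $\xi$ by a factor of $K\log(n)$ under the choice of $\xi$ in \Cref{t_localisation}. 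Whenever $|J| < \xi/m$, the convention $H(\widehat{C}_J, J) = 0$ together with the short-interval bound \Cref{l_loss_short} (already used to derive \eqref{l_localisation_3true_eq3}) controls $H(C^*_{r,J}, J) \le \xi/2$, so the resulting contribution is again dominated by $\xi$.

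The main obstacle I anticipate is the mixed regime where exactly one of $|I_1|, |I_2|$ lies below $\xi/m$ while the other (and hence $|I|$) does not: one empirical term in the displayed inequality vanishes by convention, so the telescoping must be carried out asymmetrically. In this sub-case I would split $H(\widehat{C}_I, I) = H(\widehat{C}_I, I_1) + H(\widehat{C}_I, I_2)$, absorb the short-piece contribution via \Cref{l_loss_short} applied to $C^*_{r,I_1}$ or $C^*_{r,I_2}$, and handle the long piece by \Cref{l_loss_long}. Combining the three sub-cases (both long, both short, one short and one long) then bounds the right-hand side of the displayed inequality by lower-order error terms that are strictly less than $\xi$ in every scenario, contradicting the optimality of the split at $t$ and completing the proof.
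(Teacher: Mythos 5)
Your proposal is correct and follows essentially the same route as the paper: contradiction via the optimality of $\widehat{\mathcal{P}}$, the exact identity $H(C^*_{r,I},I)=H(C^*_{r,I_1},I_1)+H(C^*_{r,I_2},I_2)$ from the absence of a change point, and a case analysis on whether $|I_1|$, $|I_2|$ fall above or below $\xi/m$, using \Cref{l_loss_long} for long intervals and \Cref{l_loss_short} together with the convention $H(\widehat{C}_J,J)=0$ for short ones. The paper handles the mixed regime by bounding $H(\widehat{C}_I,I)\le H(C^*_{r,I},I)+\mathrm{err}$ for the whole interval first and then splitting the population loss, which is marginally cleaner than splitting $H(\widehat{C}_I,\cdot)$ over the two sub-intervals, but the argument is the same.
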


\begin{proof}
     We will prove by contradiction. Assume that there is no true change point inside the interval $I =I_1\cup I_2$, consequently, it holds that $C^*_{r, I} =C^*_{r, I_1}=C^*_{r, I_2}$. There are four cases to consider: $\min\{|I_1|, |I_2|\} \geq \frac{\xi}{m}$; $|I_1| \geq \frac{\xi}{m} > |I_2|$; $|I_2| \geq \frac{\xi}{m} > |I_1|$; and $\max\{|I_1|, |I_2|\} < \frac{\xi}{m}$. All of the four cases will be discussed individually in \textbf{Step 1-4}.
     
     \noindent \textbf{Step 1.} In the first case, when $\min\{|I_1|, |I_2|\} \geq \frac{\xi}{m}$, we have that $|I| \geq \frac{\xi}{m}$. By \Cref{l_loss_long}, for $\ell \in \{1,2\}$, we have that with probability at least $1-3n^{-3}$ that
     \begin{align*}
         \big|H(C^*_{r,I_\ell},I_\ell)-H(\widehat{C}_{I_\ell},I_\ell)\big| \leq C_1\Big\{\frac{r^4m\log(n)}{\delta^2\zeta_{\delta}} \vee \frac{r^6\log(n)}{\delta^2 \zeta_{\delta}} \vee \frac{r^{2-2q}|I_\ell|m}{\delta^6\zeta_{\delta}}\Big\},
     \end{align*}
     and also
     \begin{align*}
         \big|H(C^*_{r,I},I)-H(\widehat{C}_{I},I)\big| \leq C_2\Big\{\frac{r^4m\log(n)}{\delta^2\zeta_{\delta}} \vee \frac{r^6\log(n)}{\delta^2 \zeta_{\delta}} \vee \frac{r^{2-2q}|I|m}{\delta^6\zeta_{\delta}}\Big\}.
     \end{align*}
     Therefore, we have that
     \begin{align*}
         H(\widehat{C}_{I},I) &\leq H(C^*_{r,I},I)+C_2\Big\{\frac{r^4m\log(n)}{\delta^2\zeta_{\delta}} \vee \frac{r^6\log(n)}{\delta^2 \zeta_{\delta}} \vee \frac{r^{2-2q}|I|m}{\delta^6\zeta_{\delta}}\Big\} \\
         &= H(C^*_{r,I_1},I_1)+H(C^*_{r,I_2},I_2)+C_2\Big\{\frac{r^4m\log(n)}{\delta^2\zeta_{\delta}} \vee \frac{r^6\log(n)}{\delta^2 \zeta_{\delta}} \vee \frac{r^{2-2q}|I|m}{\delta^6\zeta_{\delta}}\Big\}\\
         & \leq H(\widehat{C}_{I_1},I_1) +H(\widehat{C}_{I_2},I_2) +C_3\Big\{\frac{r^4m\log(n)}{\delta^2\zeta_{\delta}} \vee \frac{r^6\log(n)}{\delta^2 \zeta_{\delta}} \vee \frac{r^{2-2q}nm}{\delta^6\zeta_{\delta}}\Big\}\\
         & \leq H(\widehat{C}_{I_1},I_1) +H(\widehat{C}_{I_2},I_2) +\xi,
     \end{align*}
     which is a contradiction.

     \noindent \textbf{Step 2.} In the second case, it still holds that $|I| \geq \frac{\xi}{m}$, therefore by \Cref{l_loss_long}, it holds with probability at least $1-3n^{-3}$ that
     \begin{align*}
         H(\widehat{C}_{I},I) &\leq H(C^*_{r,I},I)+C_2\Big\{\frac{r^4m\log(n)}{\delta^2\zeta_{\delta}} \vee \frac{r^6\log(n)}{\delta^2 \zeta_{\delta}} \vee \frac{r^{2-2q}|I|m}{\delta^6\zeta_{\delta}}\Big\} \\
         &= H(C^*_{r,I_1},I_1)+H(C^*_{r,I_2},I_2)+C_2\Big\{\frac{r^4m\log(n)}{\delta^2\zeta_{\delta}} \vee \frac{r^6\log(n)}{\delta^2 \zeta_{\delta}} \vee \frac{r^{2-2q}|I|m}{\delta^6\zeta_{\delta}}\Big\}\\
         & \leq H(\widehat{C}_{I_1},I_1) + C_4\Big\{\frac{r^4m\log(n)}{\delta^2\zeta_{\delta}} \vee \frac{r^6\log(n)}{\delta^2 \zeta_{\delta}} \vee \frac{r^{2-2q}nm}{\delta^6\zeta_{\delta}}\Big\} + \frac{\xi}{2}\\
         & \leq H(\widehat{C}_{I_1},I_1) +H(\widehat{C}_{I_2},I_2) +\xi,
     \end{align*}
     where the second inequality also follows from \Cref{l_loss_short}, which is a contradiction.

     \noindent \textbf{Step 3.} The proof of the third case is similar to the second case.

     \noindent \textbf{Step 4.} When $|I| \leq \frac{\xi}{m}$, by \Cref{loss_localisation}, we have that $H(\widehat{C}_{I},I)= H(\widehat{C}_{I_1},I_1)= H(\widehat{C}_{I_2},I_2)=0$, therefore, it holds trivially that $H(\widehat{C}_{I},I) \leq  H(\widehat{C}_{I_1},I_1) +H(\widehat{C}_{I_2},I_2) +\xi$. When $|I| \geq \frac{\xi}{m}$, then using \Cref{l_loss_long}, we have that with probability at least $1-3n^{-3}$ that
     \begin{align*}
         H(\widehat{C}_{I},I) &\leq H(C^*_{r,I_1},I_1)+H(C^*_{r,I_2},I_2)+C_2\Big\{\frac{r^4m\log(n)}{\delta^2\zeta_{\delta}} \vee \frac{r^6\log(n)}{\delta^2 \zeta_{\delta}} \vee \frac{r^{2-2q}|I|m}{\delta^6\zeta_{\delta}}\Big\} \\
    & \leq H(\widehat{C}_{I_1},I_1) +H(\widehat{C}_{I_2},I_2) +\xi,
     \end{align*}
     where the second inequality follows from \Cref{l_loss_short}, which is again a contradiction.
\end{proof}

\subsection{Proof of \texorpdfstring{\Cref{prop_consistent_number}}{}}
\begin{proof}
    Denote
    \begin{align*}
        S_n^* = \sum_{k=0}^K H\big(C^*_{r,(\eta_k,\eta_{k+1}]},(\eta_k,\eta_{k+1}]\big).
    \end{align*}
    Given any collection of times $\{t_1,\ldots, t_m\}$, where $t_1 < \cdots < t_m$, and $t_0 = 0$, and $t_{m+1} = n$, let
    \begin{align*}
        S_{n}(t_1, \ldots, t_m) = \sum_{k=0}^m H\big(\widehat{C}_{(t_k,t_{k+1}]},(t_k,t_{k+1}]\big).
    \end{align*}
    Let $\{\widehat{\eta}_{k}\}_{k=1}^{\widehat{K}}$ denote the change points induced by $\widehat{\mathcal{P}}$, then it suffices to justify that
    \begin{align}
        \label{prop_consistent_number_eq1}
        S_n^* +K\xi \geq\;& S_n(\eta_1,\ldots, \eta_K) + K\xi - C_1(K+1)\Big\{\frac{r^4m\log(n)}{\delta^2\zeta_{\delta}} \vee \frac{r^6\log(n)}{\delta^2 \zeta_{\delta}} \vee \frac{r^{2-2q}nm}{\delta^6\zeta_{\delta}}\Big\}\\ \label{prop_consistent_number_eq2}
        \geq \;& S_{n}(\widehat{\eta}_1, \ldots, \widehat{\eta}_{\widehat{K}})+\widehat{K}\xi - C_1(K+1)\Big\{\frac{r^4m\log(n)}{\delta^2\zeta_{\delta}} \vee \frac{r^6\log(n)}{\delta^2 \zeta_{\delta}} \vee \frac{r^{2-2q}nm}{\delta^6\zeta_{\delta}}\Big\}\\ \notag
        \geq\;& S_n\big(\mathrm{sort}(\eta_1, \ldots, \eta_K,\widehat{\eta}_1, \ldots, \widehat{\eta}_{\widehat{K}})\big)+ \widehat{K}\xi \\ \label{prop_consistent_number_eq3}
        &- C_2(K+\widehat{K}+1)\Big\{\frac{r^4m\log(n)}{\delta^2\zeta_{\delta}} \vee \frac{r^6\log(n)}{\delta^2 \zeta_{\delta}} \vee \frac{r^{2-2q}nm}{\delta^6\zeta_{\delta}}\Big\},
    \end{align}
    and that
    \begin{align}
        \label{prop_consistent_number_eq4}
        S_n^* - S_n(\mathrm{sort}(\eta_1,\ldots, \eta_K,\widehat{\eta}_1, \ldots, \widehat{\eta}_{\widehat{K}})) \leq C_3(K+\widehat{K}+1)\Big\{\frac{r^4m\log(n)}{\delta^2\zeta_{\delta}} \vee \frac{r^6\log(n)}{\delta^2 \zeta_{\delta}} \vee \frac{r^{2-2q}nm}{\delta^6\zeta_{\delta}}\Big\}.
    \end{align}
    If both sets of the equations are true, then it must hold that $\widehat{K}=K$, as otherwise if $\widehat{K} \geq K+1$, then
    \begin{align}
        \notag
        &C_3(4K+1)\Big\{\frac{r^4m\log(n)}{\delta^2\zeta_{\delta}} \vee \frac{r^6\log(n)}{\delta^2 \zeta_{\delta}} \vee \frac{r^{2-2q}nm}{\delta^6\zeta_{\delta}}\Big\}\\
        \notag
        \geq\;& C_3(K+\widehat{K}+1)\Big\{\frac{r^4m\log(n)}{\delta^2\zeta_{\delta}} \vee \frac{r^6\log(n)}{\delta^2 \zeta_{\delta}} \vee \frac{r^{2-2q}nm}{\delta^6\zeta_{\delta}}\Big\}\\
        \notag
        \geq\;& S_n^* - S_n(\mathrm{sort}(\eta_1,\ldots, \eta_K,\widehat{\eta}_1, \ldots, \widehat{\eta}_{\widehat{K}}))\\
        \notag
        \geq\;& -C_2(K+\widehat{K}+1)\Big\{\frac{r^4m\log(n)}{\delta^2\zeta_{\delta}} \vee \frac{r^6\log(n)}{\delta^2 \zeta_{\delta}} \vee \frac{r^{2-2q}nm}{\delta^6\zeta_{\delta}}\Big\} + (\widehat{K}-K)\xi\\
        \label{prop_consistent_number_eq5}
        \geq\;& -C_2(4K+1)\Big\{\frac{r^4m\log(n)}{\delta^2\zeta_{\delta}} \vee \frac{r^6\log(n)}{\delta^2 \zeta_{\delta}} \vee \frac{r^{2-2q}nm}{\delta^6\zeta_{\delta}}\Big\} + \xi,
    \end{align}
    where the first inequality follows from the fact that $|\widehat{\mathcal{P}}|-1= \widehat{K} \leq 3K$, the second inequality follows from Equation \eqref{prop_consistent_number_eq4}, the third inequality follows from Equation \eqref{prop_consistent_number_eq3} and the last inequality holds since $\widehat{K} \geq K+1$. However, if Equation \eqref{prop_consistent_number_eq5} holds, this would imply
    \begin{align*}
        \xi \leq C_4(8K+2)\Big\{\frac{r^4m\log(n)}{\delta^2\zeta_{\delta}} \vee \frac{r^6\log(n)}{\delta^2 \zeta_{\delta}} \vee \frac{r^{2-2q}nm}{\delta^6\zeta_{\delta}}\Big\} \leq C_5 K\Big\{\frac{r^4m\log^2(n)}{\delta^2\zeta_{\delta}} \vee \frac{r^6\log(n)}{\delta^2 \zeta_{\delta}} \vee \frac{r^{2-2q}nm}{\delta^6\zeta_{\delta}}\Big\},
    \end{align*}
    which is a contradiction to the choice of the tuning parameter $\xi$. To show the above two sets of equations hold, note that Equation \eqref{prop_consistent_number_eq1} is implied by the fact that $\Delta \gtrsim \frac{\xi}{m}$ and by \Cref{l_loss_long}. Therefore, it holds with probability at least $1-3n^{-3}$ that
    \begin{align}
        \label{prop_consistent_number_eq6}
        |S_n^*- S_n(\eta_1,\ldots, \eta_K)| \leq C_1(K+1)\Big\{\frac{r^4m\log(n)}{\delta^2\zeta_{\delta}} \vee \frac{r^6\log(n)}{\delta^2 \zeta_{\delta}} \vee \frac{r^{2-2q}nm}{\delta^6\zeta_{\delta}}\Big\}.
    \end{align}
    Equation \eqref{prop_consistent_number_eq2} holds from the fact that $S_{n}(\widehat{\eta}_1, \ldots, \widehat{\eta}_{\widehat{K}})+\widehat{K}\xi$ is the minimal value of \Cref{loss_l0}. To show \Cref{prop_consistent_number_eq3} holds, for every $I \in (s,e] \in \widehat{\mathcal{P}}$, denote
    \begin{align*}
        I = (s,\eta_{\ell+1}] \cup \cdots \cup (\eta_{\ell+q},e] = I_1 \cup \cdots \cup I_{q+1},
    \end{align*}
    where $\{\eta_{\ell+m}\}_{m=1}^q = I \cap \{\eta_k\}_{k=1}^K$, then in the rest of the proof, it suffices to show that with probability at least $1-3n^{-3}$ that
    \begin{align}
        \label{prop_consistent_number_eq7}
        H(\widehat{C}_{I}, I) \geq \sum_{\ell=1}^{q+1} H(\widehat{C}_{I_\ell}, I_\ell)-C_6(q+1)\Big\{\frac{r^4m\log(n)}{\delta^2\zeta_{\delta}} \vee \frac{r^6\log(n)}{\delta^2 \zeta_{\delta}} \vee \frac{r^{2-2q}nm}{\delta^6\zeta_{\delta}}\Big\}.
    \end{align}
    Note that, if $|I| \leq \frac{\xi}{m}$, then $H(\widehat{C}_{I}, I) =H(\widehat{C}_{I_\ell}, I_\ell)=0$ for all $\ell \in \{1, \ldots, q+1\}$, and Equation \eqref{prop_consistent_number_eq7} trivially holds. Therefore, it suffices to assume that $|I| \geq \frac{\xi}{m}$. For any $\ell \in \{1, \ldots, q+1\}$, there are two cases to consider. 

    \noindent \textbf{Case 1.} When $|I_\ell| \geq \frac{\xi}{m}$, following a similar argument as the one leading to \Cref{l_localisation_3true_eq4}, it holds with probability at least $1-3n^{-3}$ that
    \begin{align*}
        H(\widehat{C}_I, I_\ell)- H(C^*_{r, I_\ell}, I_\ell) =\;& \sum_{t\in I_\ell}\sum_{(j,k) \in \mathcal{O}} \big\{\Phi_r^\top(X_{t,j})(\widehat{C}_I-C^*_{r,I_\ell})\Phi_r(X_{t,k})\big\}^2\\
        & - 2\sum_{t\in I_\ell} \sum_{(j,k) \in \mathcal{O}}\big\{Y_{t,j}Y_{t,k}-\Phi_r^\top(X_{t,j})C^*_{r,I_\ell}\Phi_r(X_{t,k})\big\}\big\{\Phi_r^\top(X_{t,j})(\widehat{C}_I-C^*_{r,I_\ell})\Phi_r(X_{t,k})\big\}\\
        \geq \;& \frac{\zeta_{\delta}|I_\ell|m}{8}\|C^*_{r,I} - C^*_{r,I_\ell}\|_{\F}^2 - C_7\Big\{\frac{r^4m\log(n)}{\delta^2\zeta_{\delta}} \vee \frac{r^6\log(n)}{\delta^2 \zeta_{\delta}} \vee \frac{r^{2-2q}nm}{\delta^6\zeta_{\delta}}\Big\}\\
        \geq\;& - C_7\Big\{\frac{r^4m\log(n)}{\delta^2\zeta_{\delta}} \vee \frac{r^6\log(n)}{\delta^2 \zeta_{\delta}} \vee \frac{r^{2-2q}nm}{\delta^6\zeta_{\delta}}\Big\},
    \end{align*}
    where the last inequality holds since $\frac{\zeta_{\delta}|I_\ell|m}{8}\|C^*_{r, I} - C^*_{r, I_\ell}\|_{\F}^2 >0$. Hence, we have that
    \begin{align*}
        H(\widehat{C}_{I_\ell}, I_\ell) &\leq H(C^*_{r, I_\ell}, I_\ell) +C_8\Big\{\frac{r^4m\log(n)}{\delta^2\zeta_{\delta}} \vee \frac{r^6\log(n)}{\delta^2 \zeta_{\delta}} \vee \frac{r^{2-2q}nm}{\delta^6\zeta_{\delta}}\Big\}\\
        & \leq H(\widehat{C}_I, I_\ell) + C_9\Big\{\frac{r^4m\log(n)}{\delta^2\zeta_{\delta}} \vee \frac{r^6\log(n)}{\delta^2 \zeta_{\delta}} \vee \frac{r^{2-2q}nm}{\delta^6\zeta_{\delta}}\Big\},
    \end{align*}
    where the first inequality follows from \Cref{l_loss_long}.

    \noindent \textbf{Case 2.} When$|I_\ell| \lesssim \frac{\xi}{m}$, by \Cref{loss_localisation}, we have that 
    \begin{align*}
        H(\widehat{C}_{I_\ell}, I_\ell) = 0 \leq  H(\widehat{C}_I, I_\ell)+ C_{10}\Big\{\frac{r^4m\log(n)}{\delta^2\zeta_{\delta}} \vee \frac{r^6\log(n)}{\delta^2 \zeta_{\delta}} \vee \frac{r^{2-2q}nm}{\delta^6\zeta_{\delta}}\Big\}.
    \end{align*}
    Therefore, in both cases, it holds that
    \begin{align*}
        H(\widehat{C}_{I}, I) = \sum_{\ell=1}^{q+1} H(\widehat{C}_I, I_\ell)\geq \sum_{\ell=1}^{q+1} H(\widehat{C}_{I_\ell}, I_\ell)-C_{11}(q+1)\Big\{\frac{r^4m\log(n)}{\delta^2\zeta_{\delta}} \vee \frac{r^6\log(n)}{\delta^2 \zeta_{\delta}} \vee \frac{r^{2-2q}nm}{\delta^6\zeta_{\delta}}\Big\},
    \end{align*}
    which is exactly Equation \eqref{prop_consistent_number_eq7}. Finally, to show Equation \eqref{prop_consistent_number_eq4}, note that 
    \begin{align*}
        \big| S_n^* - S_n\big(\mathrm{sort}(\eta_1,\ldots, \eta_K,\widehat{\eta}_1, \ldots, \widehat{\eta}_{\widehat{K}})\big)|\leq\;& |S_n^*- S_n\big(\eta_1,\ldots, \eta_K\big)\big| \\
        &+ \big|S_n(\eta_1,\ldots, \eta_K) -S_n(\mathrm{sort}(\eta_1,\ldots, \eta_K,\widehat{\eta}_1, \ldots, \widehat{\eta}_{\widehat{K}}))\big|\\
        \leq \;& C_{12}(K+1)\Big\{\frac{r^4m\log(n)}{\delta^2\zeta_{\delta}} \vee \frac{r^6\log(n)}{\delta^2 \zeta_{\delta}} \vee \frac{r^{2-2q}nm}{\delta^6\zeta_{\delta}}\Big\} \\
        &+ \big|S_n(\eta_1,\ldots, \eta_K) -S_n\big(\mathrm{sort}(\eta_1,\ldots, \eta_K,\widehat{\eta}_1, \ldots, \widehat{\eta}_{\widehat{K}})\big)\big|,
    \end{align*}
    where the second inequality follows from Equation \eqref{prop_consistent_number_eq6}, and it suffices to show that
    \begin{align*}
        \Big|S_n(\eta_1,\ldots, \eta_K) -S_n\big(\mathrm{sort}(\eta_1,\ldots, \eta_K,\widehat{\eta}_1, \ldots, \widehat{\eta}_{\widehat{K}})\big)\Big| \leq C_{13}(K+\widehat{K})\Big\{\frac{r^4m\log(n)}{\delta^2\zeta_{\delta}} \vee \frac{r^6\log(n)}{\delta^2 \zeta_{\delta}} \vee \frac{r^{2-2q}nm}{\delta^6\zeta_{\delta}}\Big\},
    \end{align*}
    where the analysis follows from a similar argument as above.
\end{proof}

\subsection{Additional results}
\begin{lemma} \label{l_loss_long}
    Under the setup in \Cref{t_localisation}, let $I = [s,e) \subseteq [n]$ be any integer interval such that $|I| \geq \xi/m$. If $I$ contains no true change point, then it holds that 
    \begin{align*}
        \mathbb{P}\Bigg\{\big|H(C^*_{r,I},I)-H(\widehat{C}_I,I)\big| \leq C\Big\{\frac{r^4m\log(n)}{\delta^2\zeta_{\delta}} \vee \frac{r^6\log(n)}{\delta^2 \zeta_{\delta}} \vee \frac{r^{2-2q}|I|m}{\delta^6\zeta_{\delta}}\Big\}\Bigg\} \geq 1-3n^{-3},
    \end{align*}
    where $C^*_{r,I} = \frac{1}{|I|}\sum_{t \in I}C^*_{r,t}$, and $C >0$ is an absolute constant.
\end{lemma}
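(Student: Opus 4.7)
My approach is to algebraically decompose the squared-loss difference and then bound each resulting piece using concentration inequalities together with the covariance-function estimation rate from Theorem \ref{t_estimation}. Using $a^{2}-b^{2}=(a-b)(a+b)$ applied termwise and regrouping, I obtain
\[
H(C^*_{r,I}, I) - H(\widehat{C}_I, I) = 2\sum_{t\in I}\sum_{(j,k)\in\mathcal{O}} d_{t,(j,k)}\, e_{t,(j,k)} \;-\; \sum_{t\in I}\sum_{(j,k)\in\mathcal{O}} d_{t,(j,k)}^{\,2},
\]
where $d_{t,(j,k)} = \Phi_r^\top(X_{t,j})(\widehat{C}_I - C^*_{r,I})\Phi_r(X_{t,k})$ captures the deviation of the estimator from the projected population covariance, and $e_{t,(j,k)} = Y_{t,j}Y_{t,k} - \Phi_r^\top(X_{t,j})C^*_{r,I}\Phi_r(X_{t,k})$ is the pointwise residual against the $r$-term projection. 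The triangle inequality then reduces the task to bounding the quadratic and cross-product sums independently.

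For the quadratic sum, I would apply a matrix Bernstein-type concentration for the sum of independent rank-one tensors $\Phi_r(X_{t,j})\Phi_r(X_{t,k})^\top$, combined with the uniform bound on the Fourier basis ($\|\phi_k\|_\infty\lesssim 1$, i.e.\ the $(0,1)$-analytic class), to obtain, with probability at least $1-n^{-3}$,
\[
\sum_{t,(j,k)} d_{t,(j,k)}^{\,2} \lesssim \tfrac{\zeta_{\delta}|I|mr^{2}}{\delta^{2}}\|\widehat{C}_I - C^*_{r,I}\|_F^{\,2}.
\]
The hypothesis $|I|\geq \xi/m$ together with the choice of $\xi$ in Theorem \ref{t_localisation} enforces the burn-in $r^{3}\sqrt{\log n/(|I|m)}\lesssim \zeta_{\delta}$ required for this step (cf.\ \Cref{r_choice_r}). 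For the cross-product sum, decomposing $Y_{t,j}Y_{t,k}=f_{t}(X_{t,j})f_{t}(X_{t,k})+f_{t}(X_{t,j})\varepsilon_{t,k}+f_{t}(X_{t,k})\varepsilon_{t,j}+\varepsilon_{t,j}\varepsilon_{t,k}$ and applying the operator-norm concentration Lemmas~\ref{l_op1}--\ref{l_op2} to each piece, I would obtain, with probability at least $1-2n^{-3}$,
\[
\Bigl|\sum_{t,(j,k)} d_{t,(j,k)}\, e_{t,(j,k)}\Bigr| \lesssim \Bigl\{r\sqrt{|I|m^{2}\log n} \vee r^{2}\sqrt{|I|m\log n} \vee \tfrac{r^{1-q}|I|m}{\delta^{2}}\Bigr\}\|\widehat{C}_I - C^*_{r,I}\|_F.
\]

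Substituting Theorem \ref{t_estimation} specialised to the Fourier basis (so $\alpha=0$) with the Sobolev approximation rate $\rho_{r,t}\lesssim r^{-q}$ from \Cref{a_localisation_cov},
\[
\|\widehat{C}_I - C^*_{r,I}\|_F \lesssim \tfrac{r}{\zeta_{\delta}}\sqrt{\tfrac{\log n}{|I|}} + \tfrac{r^{2}}{\zeta_{\delta}}\sqrt{\tfrac{\log n}{|I|m}} + \tfrac{r^{-q}}{\delta^{2}\zeta_{\delta}},
\]
and applying $2ab\leq a^{2}+b^{2}$ to absorb the cross-term into a piece of the quadratic form plus a remainder, the three surviving regimes arrange into $\tfrac{r^{4}m\log n}{\delta^{2}\zeta_{\delta}}\vee\tfrac{r^{6}\log n}{\delta^{2}\zeta_{\delta}}\vee\tfrac{r^{2-2q}|I|m}{\delta^{6}\zeta_{\delta}}$, exactly matching the target bound. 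A union bound across the three concentration failure events contributes the total failure probability $3n^{-3}$.

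The main obstacle is the careful bookkeeping of the $\zeta_{\delta}$ factors. The restricted-eigenvalue inequality \Cref{l_rec} endows the quadratic form bound with an extra multiplicative $\zeta_{\delta}$ that cancels one power of $\zeta_{\delta}^{-1}$ inherited from the covariance estimation rate, leaving only a single $\zeta_{\delta}^{-1}$ in the final bound, as required. A secondary technicality is validating the burn-in lower bound on $|I|m$ demanded by the restricted-eigenvalue concentration; this is precisely why the assumption $|I|\geq \xi/m$ appears in the lemma hypothesis, with $\xi$ in Theorem \ref{t_localisation} chosen large enough to make the inequality automatic.
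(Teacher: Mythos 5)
Your proposal follows essentially the same route as the paper's proof: the identical decomposition into a quadratic term $\sum d^2$ and a cross term $2\sum d\,e$, the quadratic term controlled by the trace bound on $\mathbb{E}[\Psi\Psi^\top\,|\,a]$ plus the concentration of \Cref{l_op4} under the burn-in $r^{3}\sqrt{\log n/(|I|m)}\lesssim\zeta_{\delta}$ guaranteed by $|I|\geq\xi/m$, the cross term via Lemmas~\ref{l_op1}--\ref{l_op2} together with the bias term $r^{1-q}|I|m/\delta^{2}$, and finally substitution of \Cref{t_estimation} with $\alpha=0$ and $\rho_{r,t}\lesssim r^{-q}$. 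One small correction: the multiplicative $\zeta_{\delta}$ in the quadratic-term bound does not come from \Cref{l_rec} (which is a \emph{lower} bound on the expected quadratic form and plays no role in this upper bound); it enters through the burn-in condition that absorbs the fluctuation term $r^{3}\sqrt{|I|m\log n}$ into $\zeta_{\delta}|I|mr^{2}/\delta^{2}$, exactly as in the paper's Step 1.
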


\begin{proof}
    Denote $\Delta_I = \widehat{C}_I - C^*_{r,I}$, and note that 
    \begin{align} \notag
        &\big|H(C^*_{r,I},I)-H(\widehat{C}_I,I)\big|\\
        \notag
        =& \Big|\sum_{t=s+1}^e\sum_{(j,k) \in \mathcal{O}} \big\{Y_{t,j}Y_{t,k} - \Phi_r^\top(X_{t,j}) C^*_{r,I}\Phi_r(X_{t,k})\big\}^2 - \sum_{t=s+1}^e\sum_{(j,k) \in \mathcal{O}} \big\{Y_{t,j}Y_{t,k} - \Phi_r^\top(X_{t,j}) \widehat{C}_I\Phi_r(X_{t,k})\big\}^2\Big|\\
        \label{l_loss_long_eq1}
        \leq &\Big|\sum_{t=s+1}^e \sum_{(j,k) \in \mathcal{O}} \big\{\Phi_r^\top(X_{t,j})\Delta_I\Phi_r(X_{t,k})\big\}^2\Big|\\
        \label{l_loss_long_eq2}
        &+2\Big|\sum_{t=s+1}^e \sum_{(j,k) \in \mathcal{O}}\big\{Y_{t,j}Y_{t,k}-\Phi_r^\top(X_{t,j})C^*_r\Phi_r(X_{t,k})\big\}\big\{\Phi_r^\top(X_{t,j})\Delta_I\Phi_r(X_{t,k})\big\}\Big|.
    \end{align}
    Next, we find upper bounds on \Cref{l_loss_long_eq1} and \Cref{l_loss_long_eq2} in \textbf{Step 1} and \textbf{Step 2} respectively. 

    \noindent \textbf{Step 1.} Denote 
    \begin{align*}
        T= \sum_{t=s+1}^e \sum_{(j,k) \in \mathcal{O}} \Psi(X_{t,j},X_{t,k})\Psi^T(X_{t,j},X_{t,k}),
    \end{align*}
    where $\Psi(X_{t,j}, X_{t,k}) = \Phi_r(X_{t,j}) \odot \Phi_r(X_{t,k})\in \mathbb{R}^{r^2}$. 
    
    \noindent Therefore,  we have that for \Cref{l_loss_long_eq1},
    \begin{align}
        \notag
        &\Big|\sum_{t=s+1}^e \sum_{(j,k) \in \mathcal{O}} \big\{\Phi_r^\top(X_{t,j})\Delta_I\Phi_r(X_{t,k})\big\}^2\Big| = \Big|\mathrm{vec(\Delta_I)^{\top}}T\;\mathrm{vec(\Delta_I)}\Big|\\
        \notag
        \leq\; &  \Big|\mathrm{vec}(\Delta_I)^{\top}\mathbb{E}[T|a]\mathrm{vec}(\Delta_I)\Big| + \Big|\mathrm{vec}(\Delta_I)^{\top}(T-\mathbb{E}[T|a])\mathrm{vec}(\Delta_I)\Big|\\
        \notag
        \leq \; & |I|m \; \Big|\mathrm{vec}(\Delta_I)^{\top}\mathbb{E}\big[\Psi(X_{s+1,1},X_{s+1,2})\Psi^T(X_{s+1,1},X_{s+1,2})\big|a\big]\mathrm{vec}(\Delta_I)\Big|+C_1r^{3}\sqrt{|I|m\log(n)}\|\Delta_I\|_{\F}^2\\
        \label{l_loss_long_eq3}
        \leq \;& |I|m\;\Lambda_{\max}\Big(\mathbb{E}\big[\Psi(X_{s+1,1},X_{s+1,2})\Psi^T(X_{s+1,1},X_{s+1,2})\big|a\big]\Big)\|\Delta_{I}\|_{\F}^2+C_1r^{3}\sqrt{|I|m\log(n)}\|\Delta_I\|_{\F}^2,
    \end{align}
    where $\Lambda_{\max}(\cdot)$ stands for the largest eigenvalue, and the second inequality follows from \Cref{l_op4}. To find an upper bound on $\Lambda_{\max}\big(\mathbb{E}\big[\Psi(X_{s+1,1},X_{s+1,2})\Psi^T(X_{s+1,1},X_{s+1,2})\big|a\big]\big)$, note that
    \begin{align}
        \notag&\Lambda_{\max}\Big(\mathbb{E}\big[\Psi(X_{s+1,1},X_{s+1,2})\Psi^T(X_{s+1,1},X_{s+1,2})\big|a\big]\Big)
        \leq \mathrm{Tr}\Big(\mathbb{E}\big[\Psi(X_{s+1,1},X_{s+1,2})\Psi^T(X_{s+1,1},X_{s+1,2})\big|a\big]\Big)\\
        \notag
        \leq \; & \sum_{i=1}^{r^2} \mathbb{E}\big[\phi^2_{\lceil\frac{i}{r}\rceil}(X_{s+1,1})\phi^2_{i-(\lceil\frac{i}{r}\rceil-1)r}(X_{s+1,2})\big|a\big]= \sum_{i=1}^{r^2}\mathbb{E}\big[\phi^2_{\lceil\frac{i}{r}\rceil}(X_{s+1,1})\big|a\big]\;\mathbb{E}\big[\phi^2_{i-(\lceil\frac{i}{r}\rceil-1)r}(X_{s+1,2})\big|a\big]\\
        \label{l_loss_long_eq4}
        \leq \; &  \sum_{i=1}^{r^2} \frac{1}{\delta^2} \int_{0}^1  \phi^2_{\lceil\frac{i}{r}\rceil}(s)\,\mathrm{d}s \int_{0}^1 \phi^2_{i-(\lceil\frac{i}{r}\rceil-1)r}(l) \,\mathrm{d}l = \sum_{i=1}^{r^2}\frac{1}{\delta^2} = \frac{r^2}{\delta^2},
    \end{align}
    where the first inequality follows from the fact that $\mathbb{E}\big[\Psi(X_{s+1,1}, X_{s+1,2})\Psi^T(X_{s+1,1}, X_{s+1,2})\big|a\big]$ is positive semi-definite almost surely, and the first equality holds since $X_{s+1,1}$ and $X_{s+1,2}$ are conditionally independent given $a$. Plugging \Cref{l_loss_long_eq4} into \Cref{l_loss_long_eq3}, it holds that
    \begin{align} \notag
        & \Big|\sum_{t=s+1}^e \sum_{(j,k) \in \mathcal{O}} \big\{\Phi_r^\top(X_{t,j})\Delta_I\Phi_r(X_{t,k})\big\}^2\Big| \leq \frac{|I|mr^2}{\delta^2}\|\Delta_{I}\|_{\F}^2+C_1r^{3}\sqrt{|I|m\log(n)}\|\Delta_I\|_{\F}^2\\
        \notag
        \leq\;& C_2\Big\{\frac{|I|mr^2}{\delta^2} \vee r^{3}\sqrt{|I|m\log(n)}\Big\} \cdot \Big\{\frac{r}{\zeta_{\delta}}\sqrt{\frac{\log(n)}{|I|}} \vee \frac{r^{2}}{\zeta_{\delta}}\sqrt{\frac{\log(n)}{|I|m}} \vee \frac{r^{-q}}{\delta^2\zeta_{\delta}}\Big\}^2\\
        \notag
        \leq\;& \frac{C_3\zeta_{\delta}|I|mr^2}{\delta^2}\cdot \Big\{\frac{r^2\log(n)}{\zeta_{\delta}^2|I|} \vee \frac{r^{4}\log(n)}{\zeta_{\delta}^2|I|m} \vee \frac{r^{-2q}}{\delta^4\zeta_{\delta}^2}\Big\}\\
        \label{l_loss_long_eq5}
        \leq\;&  C_4\Big\{\frac{r^4m\log(n)}{\delta^2\zeta_{\delta}} \vee \frac{r^6\log(n)}{\delta^2 \zeta_{\delta}} \vee \frac{r^{2-2q}|I|m}{\delta^6\zeta_{\delta}}\Big\},
    \end{align}
    where the third inequality follows from the fact that $\delta <1$, and the assumption that with the choice of the $r$ given in \Cref{t_localisation}, by \Cref{r_choice_r}, we have that
    \begin{align*}
        \frac{r^{3}\sqrt{\log(n)}}{\sqrt{|I|m}} \leq \frac{r^{3}\sqrt{\log(n)}}{\sqrt{\xi}} \leq \frac{\zeta_{\delta}}{2},
    \end{align*}
    hence
    \begin{align*}
        r^{3}\sqrt{|I|m\log(n)} &= r^2\cdot r\sqrt{|I|m\log(n)} \leq r^2 \Big(\frac{\zeta_{\delta}^2|I|m}{4\log(n)}\Big)^{1/6} (|I|m\log(n))^{1/2}\\
        & \leq r^2 \Big(\frac{\zeta_{\delta}^2|I|m}{4\log(n)}\Big)^{1/2} (|I|m\log(n))^{1/2}=\frac{\zeta_{\delta}|I|mr^2}{2} \leq \frac{\zeta_{\delta}|I|mr^2}{2\delta^2}.
    \end{align*}

    \noindent \textbf{Step 2.} We first find an upper bound on the term 
    \begin{align*}
        \Bigg\|\sum_{t \in I} \sum_{(j,k) \in \mathcal{O}} \mathbb{E}\Big[\big\{\Sigma_I^*(X_{t,j},X_{t,k})-\Phi_r^\top(X_{t,j})C^*_{r,I}\Phi_r(X_{t,k})\big\}\Phi_r^\top(X_{t,j})\odot \Phi^{\top}_r(X_{t,k})\Big|a\Big]\Bigg\|_{\op}.
    \end{align*}
    Note that for any $h, l \in [r]$, $t \in I$ and $(j,k)\in \mathcal{O}$,
    \begin{align}
        \notag
        &\mathbb{E}\Big[\big\{\Sigma_I^*(X_{t,j},X_{t,k})-\Phi_r^\top(X_{t,j})C^*_{r,I}\Phi_r(X_{t,k})\big\}\phi_h(X_{t,j})\phi_l(X_{t,k})\Big|a\Big]\\
        \notag
        \leq\;& \Big\{\mathbb{E}\Big[\big\{\Sigma_I^*(X_{t,j},X_{t,k})-\Phi_r^\top(X_{t,j})C^*_{r,I}\Phi_r(X_{t,k})\big\}^2\Big|a\Big]\Big\}^{\frac{1}{2}}\Big\{\mathbb{E}\Big[\phi_h^2(X_{t,j})\phi_l^2(X_{t,k})\Big|a\Big]\Big\}^{\frac{1}{2}}\\
        \label{l_loss_long_eq6}
        =\;& \Big\{\mathbb{E}\Big[\big\{\Sigma_I^*(X_{t,j},X_{t,k})-\Phi_r^\top(X_{t,j})C^*_{r,I}\Phi_r(X_{t,k})\big\}^2\Big|a_t\Big]\Big\}^{\frac{1}{2}}\Big\{\mathbb{E}\Big[\phi_h^2(X_{t,j})\Big|a\Big]\Big\}^{\frac{1}{2}}\Big\{\mathbb{E}\Big[\phi_l^2(X_{t,k})\Big|a\Big]\Big\}^{\frac{1}{2}},
    \end{align}
    where the first inequality follows from the Cauchy-Schwarz inequality, and the equality follows from the fact that $X_{t,j}$ and $X_{t,k}$ are conditionally independent given $a$. By the orthonormal property of the function $\phi_h(\cdot)$ and $\phi_l(\cdot)$, we have that 
    \begin{align} \notag
        \Big\{\mathbb{E}\Big[\big\{\Sigma_I^*(X_{t,j},X_{t,k})-\Phi_r^\top(X_{t,j})C^*_{r,I}\Phi_r(X_{t,k})\big\}^2\Big|a\Big]\Big\}^{\frac{1}{2}} &= \frac{1}{\delta}\Big[\int_{a_t}^{a_t+\delta} \int_{a_t}^{a_t+\delta}\big\{\Sigma_I^*(s,l)-\Sigma^*_{r,I}(s,l)\big\}^2 \,\mathrm{d}s\,\mathrm{d}l\Big]^{\frac{1}{2}}\\
        \notag
        & \leq \frac{1}{\delta}\Big[\int_0^1 \int_0^1 \big\{\Sigma_I^*(s,l)-\Sigma^*_{r,I}(s,l)\big\}^2 \,\mathrm{d}s\,\mathrm{d}l\Big]^{\frac{1}{2}}\\
        \label{l_loss_long_eq7}
        & \leq \frac{C_5r^{-q}}{\delta},
    \end{align}
    and 
    \begin{align} \label{l_loss_long_eq8}
        \Big\{\mathbb{E}\Big[\phi_h^2(X_{t,j})\Big|a_t\Big]\Big\}^{\frac{1}{2}}
        & = \Big\{\frac{1}{\delta} \int_{a_t}^{a_t+\delta} \phi_h^2(s)\,\mathrm{d}s\Big\}^{\frac{1}{2}} \leq \Big\{\frac{1}{\delta} \int_{0}^{1} \phi_h^2(s)\,\mathrm{d}s\Big\}^{\frac{1}{2}}= \sqrt{\frac{1}{\delta}}.
    \end{align}
    Plugging \Cref{l_loss_long_eq7} and \Cref{l_loss_long_eq8} into \Cref{l_loss_long_eq6}, and using a similar argument as \Cref{l_op1_eq1}, we have that 
    \begin{align} \label{l_loss_long_eq9}
         \Bigg\|\sum_{t \in I} \sum_{(j,k) \in \mathcal{O}} \mathbb{E}\Big[\big\{Y_{t,j}Y_{t,k}-\Sigma^*(X_{t,j},X_{t,k})\big\}\Phi_r^\top(X_{t,j})\odot \Phi^{\top}_r(X_{t,k})\Big|a\Big]\Bigg\|_{\op} &\leq C_5r \underset{1\leq h,l\leq r}{\max} \frac{r^{-q}|I|m}{\delta^2} \leq \frac{C_5r^{1-q}|I|m}{\delta^2}.
    \end{align}
    Hence, using \Cref{l_op1}, \Cref{l_op2} and \Cref{l_loss_long_eq9}, we have that 
    \begin{align} \notag
        &\Big|\sum_{t=s+1}^e \sum_{(j,k) \in \mathcal{O}}\{Y_{t,j}Y_{t,k}-\Phi_r^\top(X_{t,j})C^*_r\Phi_r(X_{t,k})\}\{\Phi_r^\top(X_{t,j})\Delta_I\Phi_r(X_{t,k})\}\Big|\\
        \notag
        \leq\;& C_6\Big\{r\sqrt{|I|m^2\log(n)} \vee r^{2}\sqrt{|I|m\log(n)} \vee \frac{r^{1-q}|I|m}{\delta^2} \Big\}\; \|\Delta_I\|_{\F}\\
        \notag
        \leq\;& C_7\Big\{r\sqrt{|I|m^2\log(n)} \vee r^{2}\sqrt{|I|m\log(n)} \vee \frac{r^{1-q}|I|m}{\delta^2} \Big\}\; \Big\{\frac{r}{\zeta_{\delta}}\sqrt{\frac{\log(n)}{|I|}} \vee \frac{r^{2}}{\zeta_{\delta}}\sqrt{\frac{\log(n)}{|I|m}} \vee \frac{r^{-q}}{\zeta_{\delta}\delta^2}\Big\}\\
        \label{l_loss_long_eq10}
        \leq\;& C_8\Big\{\frac{r^2m\log(n)}{\zeta_{\delta}} \vee \frac{r^4\log(n)}{\zeta_{\delta}} \vee \frac{r^{2-2q}|I|m}{\delta^4\zeta_{\delta}}\Big\}.
    \end{align}

    \noindent \textbf{Step 3.} Substituting the results in \Cref{l_loss_long_eq5} and \Cref{l_loss_long_eq10} into \Cref{l_loss_long_eq1} and \Cref{l_loss_long_eq2} respectively, the result follows. 
\end{proof}

\begin{lemma} \label{l_loss_short}
    Under the setup in \Cref{t_localisation}, let $I = [s,e)\subseteq [n]$ be any integer interval. If $I$ contains no change point, and $|I| \leq \xi/m$, then it holds that 
    \begin{align*}
        \mathbb{P}\Big\{\big|H(C^*_{r,I},I)-H(\widehat{C}_I,I)\big| \leq C\xi \Big\} \geq 1-3n^{-3},
    \end{align*}
    where $C^*_{r,I} = \frac{1}{|I|}\sum_{t \in I}C^*_{r,t}$, and $C >0$ is an absolute constant.
\end{lemma}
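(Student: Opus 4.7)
\textbf{Proof proposal for Lemma \ref{l_loss_short}.}
The crucial observation is that by the redefinition of $H$ at the top of the Appendix, if $|I| \le \xi/m$ then $H(\widehat{C}_I, I) = 0$ by convention, so the problem reduces to controlling
\[
    H(C^*_{r,I}, I) = \sum_{t \in I}\sum_{(j,k) \in \mathcal{O}}\big\{Y_{t,j}Y_{t,k} - \Phi_r^\top(X_{t,j})C^*_{r,I}\Phi_r(X_{t,k})\big\}^2,
\]
where the sum has at most $|I|\lfloor m/2\rfloor \le \xi/2$ terms. Since $I$ contains no change point, $\Sigma^*_t = \Sigma^*_I$ for every $t \in I$ and $C^*_{r,I} = C^*_{r,t}$. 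The plan is to decompose each summand as
\[
    Y_{t,j}Y_{t,k} - \Phi_r^\top(X_{t,j})C^*_{r,I}\Phi_r(X_{t,k}) = A_{t,j,k} + B_{t,j,k},
\]
where $A_{t,j,k} = Y_{t,j}Y_{t,k} - \Sigma^*_I(X_{t,j},X_{t,k})$ is the mean-zero observation noise and $B_{t,j,k} = \Sigma^*_I(X_{t,j},X_{t,k}) - \Phi_r^\top(X_{t,j})C^*_{r,I}\Phi_r(X_{t,k})$ is the deterministic truncation/approximation error. Using $(a+b)^2 \le 2a^2 + 2b^2$,
\[
    H(C^*_{r,I}, I) \le 2\sum_{t,j,k} A_{t,j,k}^2 + 2\sum_{t,j,k} B_{t,j,k}^2.
\]

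For the noise piece, Assumptions \ref{a_model}\ref{a_model_error} and \ref{a_model}\ref{a_model_function} make $Y_{t,j}Y_{t,k}$ sub-exponential, hence $A_{t,j,k}$ is sub-exponential and $A_{t,j,k}^2$ is sub-Weibull$(1/2)$ with bounded expectation. The first moment therefore satisfies $\mathbb{E}[\sum A_{t,j,k}^2] \lesssim |I|m \le \xi$. I would then invoke a sub-Weibull concentration inequality (of the type already collected as Theorem \ref{t_concentration_subWeibull} in the Appendix) to obtain, with probability at least $1-n^{-3}$,
\[
    \sum_{t,j,k} A_{t,j,k}^2 \lesssim |I|m + \log^2(n) \lesssim \xi,
\]
where the second inequality uses that $\xi \gtrsim K r^4 \log^2(n) \gtrsim \log^2(n)$ under the choice of $\xi$ in Theorem \ref{t_localisation}.

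For the approximation piece, conditioning on the fragment location $a_t$, the uniform design on $[a_t,a_t+\delta]$ gives $\mathbb{E}[B_{t,j,k}^2 \mid a_t] \le \delta^{-2}\|\Sigma^*_I - \Phi_r^\top C^*_{r,I}\Phi_r\|_{L^2}^2 \le C\delta^{-2}r^{-2q}$ by Assumption \ref{a_localisation_cov}, so $\mathbb{E}[\sum B_{t,j,k}^2] \lesssim |I|m \delta^{-2}r^{-2q} \lesssim \xi$. Because $B_{t,j,k}$ is uniformly bounded (the Fourier basis is bounded and $\Sigma^*_I$ is continuous), a standard Hoeffding/Bernstein step upgrades this to $\sum B_{t,j,k}^2 \lesssim \xi$ with probability at least $1-n^{-3}$. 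Combining the two bounds and a union bound over the three high-probability events yields the stated $1-3n^{-3}$ guarantee.

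The main obstacle I anticipate is the concentration of $\sum A_{t,j,k}^2$, since the number of summands may be as small as order $\log(n)$ and the summands are only sub-Weibull$(1/2)$, which gives a heavier $\log^2(n)$ tail instead of the familiar $\log(n)$ tail. The choice of $\xi$ in Theorem \ref{t_localisation} is precisely calibrated so that $\xi \gtrsim \log^2(n)$, which is exactly what absorbs this heavier tail term; making this quantitatively explicit, and verifying that the same absorption works uniformly across all relevant scales of $|I|$ up to $\xi/m$, is the step that requires the most care.
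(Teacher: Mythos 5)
Your proposal is correct and follows essentially the same route as the paper's proof: the same reduction via $H(\widehat{C}_I,I)=0$, the same decomposition into the mean-zero noise term and the deterministic approximation term via $(a+b)^2\le 2a^2+2b^2$, sub-Weibull$(1/2)$ concentration (Theorem \ref{t_concentration_subWeibull}) for the noise part with the $m\log^2(n)$ tail absorbed by $\xi$, and bounded-variable Hoeffding plus the conditional $L^2$ bound $\delta^{-2}r^{-2q}$ for the approximation part. The only cosmetic discrepancy is that your deviation term for $\sum A_{t,j,k}^2$ should carry a factor of $m$ (the paper gets $m\sqrt{|I|\log(n)}\vee m\log^2(n)$), but this is still dominated by $\xi$ under the stated choice, so the argument goes through unchanged.
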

\begin{proof}
    In the case when $|I| \leq \frac{\xi}{m}$, we have that $H(\widehat{C}_I, I) = 0$ by the construction of \Cref{loss_localisation}, hence finding an upper bound on the term $|H(C^*_{r, I}, I)-H(\widehat{C}_I, I)|$ is equivalent to finding an upper bound on $|H(C^*_{r,I}, I)|$. Note that
    \begin{align}  \notag
         H(C^*_{r,I}, I) &= \sum_{t\in I} \sum_{(j,k) \in \mathcal{O}} \big\{Y_{t,j}Y_{t,k} - \Phi_r^\top(X_{t,j}) C^*_{r,I} \Phi_r(X_{t,k})\big\}^2\\
         \notag
         & \leq 2 \sum_{t\in I} \sum_{(j,k) \in \mathcal{O}} \big\{Y_{t,j}Y_{t,k} - \Sigma^*_{I}(X_{t,j}, X_{t,k})\big\}^2 + 2\sum_{t\in I} \sum_{(j,k) \in \mathcal{O}} \big\{\Sigma^*_{I}(X_{t,j}, X_{t,k}) - \Phi_r^\top(X_{t,j}) C^*_{r,I} \Phi_r(X_{t,k})\big\}^2\\
         \label{l_loss_short_eq1}
         & = 2(A) + 2(B).
    \end{align}
    In the rest of the proof, we will find upper bounds on $(A)$ and $(B)$ respectively in $\textbf{Step 1}$ and $\textbf{Step 2}$.

    \noindent \textbf{Step 1.} By the model setup in \Cref{model_obs}, we have that 
    \begin{align*}
        (A) = \sum_{t\in I} \sum_{(j,k) \in \mathcal{O}} \big\{f_{t}(X_{t,j})f_{t}(X_{t,k}) + f_{t}(X_{t,j})\varepsilon_{t,k} + f_{t}(X_{t,k})\varepsilon_{t,j} +\varepsilon_{t,j}\varepsilon_{t,k} - \Sigma^*_{I}(X_{t,j}, X_{t,k})\big\}^2.
    \end{align*}
    Following from Lemma 26 in \citet{xu2022change} and \Cref{l_addition_subWeibull}, we have that conditioning on the observation grids $X = \{X_{t,j}\}_{t \in I, j\in [m]}$, $\sum_{(j,k) \in \mathcal{O}}\big\{Y_{t,j}Y_{t,k}-\Sigma^*_{I}(X_{t,j}, X_{t,k})\big\}^2$ are mutually independent across $t \in I$, and each follows a sub-Weibull distribution with parameter $1/2$ with
    \begin{align*}
        \Big\|\sum_{(j,k) \in \mathcal{O}}\big\{Y_{t,j}Y_{t,k}-\Sigma^*_{I}(X_{t,j}, X_{t,k})\big\}^2\Big\|_{\psi_{1/2}} \leq \sum_{(j,k) \in \mathcal{O}}\Big\|\big\{Y_{t,j}Y_{t,k}-\Sigma^*_{I}(X_{t,j}, X_{t,k})\big\}^2\Big\|_{\psi_{1/2}} \leq  mC_{f,\varepsilon},
    \end{align*}
    where the first inequality follows from the triangle inequality, and $C_{f,\varepsilon} >0$ is a constant depending only on the absolute constants $C_f, C_\varepsilon > 0$ given in \Cref{a_model}. Therefore, it holds from \Cref{t_concentration_subWeibull} that for any $\tau > 0$, and an interval $I \subset [1,n]$,
    \begin{align*}
        \mathbb{P}\Big\{\Big|(A)-\mathbb{E}[(A)|X]\Big| \geq \tau \Big|X \Big\} \leq e\exp\Big(-C_1\min\Big\{\frac{\tau^2}{|I|m^2}, \Big(\frac{\tau}{m}\Big)^{1/2}\Big\}\Big).
    \end{align*}
    By an union bound argument on the choices of interval $I$, and pick $\tau = C_2\{m\sqrt{|I|\log(n)}\vee m\log^2(n)\}$, we have that
    \begin{align} \label{l_loss_short_eq2}
        \mathbb{P}\Big\{\Big|(A)-\mathbb{E}[(A)|X]\Big| \geq \tau, \; \text{for any} \; I\Big\} = \mathbb{E}_X\Big[\mathbb{P}\Big\{\Big|(A)-\mathbb{E}[(A)|X]\Big| \geq \tau,  \; \text{for any} \; I \Big|X\Big\}\Big] \leq 3n^{-3}.
    \end{align}
    In addition, note that
    \begin{align} \notag
        &\mathrm{E}\Big[\big\{f_{t}(X_{t,j})f_{t}(X_{t,k}) + f_{t}(X_{t,j})\varepsilon_{t,k} + f_{t}(X_{t,k})\varepsilon_{t,j} +\varepsilon_{t,j}\varepsilon_{t,k} - \Sigma^*_{I}(X_{t,j}, X_{t,k})\big\}^2\Big|X\Big]\\
        \notag
        =\;&\mathrm{Var}\big[f_{t}(X_{t,j})f_{t}(X_{t,k}) + f_{t}(X_{t,j})\varepsilon_{t,k} + f_{t}(X_{t,k})\varepsilon_{t,j} +\varepsilon_{t,j}\varepsilon_{t,k}\big|X\big]\\
        \notag
        = \;&\mathrm{Var}\big[f_{t}(X_{t,j})f_{t}(X_{t,k})\big|X\big] +\mathrm{Var}\big[f_{t}(X_{t,j})\varepsilon_{t,k}\big|X\big]+\mathrm{Var}\big[f_{t}(X_{t,k})\varepsilon_{t,j}\big|X\big]+\mathrm{Var}\big[\varepsilon_{t,j}\varepsilon_{t,k}\big|X\big]\\
        \label{l_loss_short_eq3}
        \leq\;& C_{f,\epsilon},
    \end{align}
    where the first equality follows from the fact that
    \begin{align*}
        \mathrm{E}\big[f_{t}(X_{t,j})f_{t}(X_{t,k}) + f_{t}(X_{t,j})\varepsilon_{t,k} + f_{t}(X_{t,k})\varepsilon_{t,j} +\varepsilon_{t,j}\varepsilon_{t,k}\big|X\big] =  \Sigma^*_{I}(X_{t,j}, X_{t,k}),
    \end{align*}
    the second equality follows since all the covariance terms equal to zero using the independent conditions between $f$ and $\varepsilon$, i.e.
    \begin{align*}
        &\mathrm{Cov}\big(f_{t}(X_{t,j})f_{t}(X_{t,k}),f_{t}(X_{t,j})\varepsilon_{t,k}\big|X\big)\\
        =\;& \mathbb{E}\big[f_{t}^2(X_{t,j})f_{t}(X_{t,k})\varepsilon_{t,k}\big|X\big] - \mathbb{E}\big[f_{t}(X_{t,j})f_{t}(X_{t,k})\big|X\big]\mathbb{E}\big[f_{t}(X_{t,j})\varepsilon_{t,k}\big|X\big]\\
        =\;& \mathbb{E}\big[f_{t}^2(X_{t,j})f_{t}(X_{t,k})\big|X\big]\mathbb{E}\big[\varepsilon_{t,k}\big|X\big] -\mathbb{E}\big[f_{t}(X_{t,j})f_{t}(X_{t,k})\big|X\big] \mathbb{E}\big[f_{t}(X_{t,j})\big|X\big]\mathbb{E}\big[\varepsilon_{t,k}\big|X\big]\\
        =\;& 0,
    \end{align*}
    and the last inequality follows from the Cauchy-Schwarz inequality and the sub-Gaussian properties of $\varepsilon$ and $f_t(x)$ for any $t \in I$ and $x \in [0,1]$ in \Cref{a_model}, i.e.
    \begin{align*}
        \var\big[f_{t}(X_{t,j})f_{t}(X_{t,k})\big|X\big] &= \mathbb{E}\big[f_{t}^2(X_{t,j})f_{t}^2(X_{t,k})\big|X\big] - \mathbb{E}\big[f_{t}(X_{t,j})f_{t}(X_{t,k})\big|X\big]^2\\
        &\leq \mathbb{E}\big[f_{t}^4(X_{t,j})\big|X\big]^{1/2}\mathbb{E}\big[f_{t}^4(X_{t,k})\big|X\big]^{1/2} - \Sigma^*_I(X_{t,j}, X_{t,k})^2 \\
        &\leq C_3C_f,
    \end{align*}
    \begin{align*}
        \var\big[f_{t}(X_{t,k})\varepsilon_{t,j}\big|X\big] &= \mathbb{E}\big[f_{t}^2(X_{t,k})\varepsilon_{t,j}^2\big|X\big] - \mathbb{E}\big[f_{t}(X_{t,k})\varepsilon_{t,j}\big|X\big]^2\\
        & =\mathbb{E}\big[f_{t}^2(X_{t,k})\big|X\big]\mathbb{E}\big[\varepsilon_{t,j}^2\big|X\big]-\mathbb{E}\big[f_{t}(X_{t,k})\big|X\big]^2\mathbb{E}\big[\varepsilon_{t,j}\big|X\big]^2\\
        & \leq C_4C_fC_\varepsilon,
    \end{align*}
    and
    \begin{align*}
        \var\big[\varepsilon_{t,j}\varepsilon_{t,k}\big|X\big] &= \mathbb{E}\big[\varepsilon_{t,j}^2\varepsilon_{t,k}^2\big|X\big] - \mathbb{E}\big[\varepsilon_{t,j}\varepsilon_{t,k}\big|X\big]^2\\
        &\leq C_5C_\varepsilon^2.
    \end{align*}
    Therefore, combining \Cref{l_loss_short_eq2} and \Cref{l_loss_short_eq3}, we have that with the probability of at least $1 -n^{-3}$
    \begin{align} \label{l_loss_short_eq4}
        (A) \leq \big|(A) - \mathbb{E}[(A)|X]\big|+ |\mathbb{E}[(A)|X]| \lesssim m\sqrt{|I|\log(n)} + m\log^2(n)+ |I|m \lesssim \xi,
    \end{align}
    where the last inequality follows from the fact that $|I| \lesssim \frac{\xi}{m}$, hence
    \begin{align*}
        m\sqrt{|I|\log(n)} \lesssim \sqrt{\xi}\sqrt{m\log(n)} \lesssim \xi, \; \mathrm{and} \quad m\log^2(n) \lesssim \xi.
    \end{align*}

    \noindent \textbf{Step 2.} By Cauchy-Schwarz inequality, we have that
    \begin{align*}
        \Big|\Big\{\Sigma^*_{I}(X_{t,j}, X_{t,k}) - \Phi_r^\top(X_{t,j})C^*_{r,I}\Phi_r(X_{t,k})\Big\}^2\Big| & \leq C_6\Big|\sum_{1\leq h',l' \leq r} (C^*_{r,I})_{h'l'} \phi_{h'}(X_{t,j})\phi_{l'}(X_{t,k})\Big|^2 \\
        & \leq C_6\Big\{\sum_{1\leq h',l' \leq r} |(C^*_{r,I})_{h'l'}|^2\Big\} \Big\{\sum_{1\leq h',l'\leq r} \|\phi_{h'}\|^2_{\infty}\|\phi_{l'}\|^2_{\infty}\Big\}\\
       & \leq C_6\|\Sigma^*_I\|_{L^2}^2 r^{4\alpha+2} \leq C_7r^{2},
    \end{align*}
    where the first inequality holds since $\Sigma^*_I$ is bounded, and the last inequality holds since $\Sigma^*_I$ is square-integrable and $\alpha =0$ for Fourier basis. Then by Hoeffding’s inequality for general bounded random variables \citep[Theorem 2.2.6 in][]{vershynin2018high}, conditioning on the set of the starting points on the fragments $a = \{a_t\}_{t \in I}$, we have that for any $\tau > 0$,
    \begin{align*}
        \mathbb{P}\Big\{\Big|(B)-\mathbb{E}[(B)|a]\Big| \geq \tau \Big|a \Big\} \leq \exp\Big(-\frac{C_8\tau^2}{r^4|I|m}\Big).
    \end{align*}
    By a union bound argument on the interval $I$, and pick $\tau = C_9r^2\sqrt{|I|m\log(n)}$, we have that
    \begin{align}\label{l_loss_short_eq5}
        \mathbb{P}\Big\{\Big|(B)-\mathbb{E}[(B)|a]\Big| \geq \tau, \; \text{for any} \; I\Big\} = \mathbb{E}_a\Big[\mathbb{P}\Big\{\Big|(B)-\mathbb{E}[(B)|a]\Big| \geq \tau, \; \text{for any} \; I\Big|a \Big\}\Big] \leq 3n^{-3}.
    \end{align}
    Under \Cref{a_model}\ref{a_model_grid}, we also have that for any $t \in I$, and $(j,k) \in \mathcal{O}$,
    \begin{align} \notag
        \mathbb{E}\Big[\big\{\Sigma^*(X_{t,j},X_{t,k})-\Phi_r^\top(X_{t,j})C^*_{r,I}\Phi_r(X_{t,k})\big\}^2\Big|a\Big] &= \frac{1}{\delta^2}\int_{a_t}^{a_t+\delta} \int_{a_t}^{a_t+\delta}\big\{\Sigma^*_I(s,l)-\Sigma^*_{r,I}(s,l)\big\}^2 \,\mathrm{d}s\,\mathrm{d}l\\
        \notag
        & \leq \frac{1}{\delta^2}\int_0^1 \int_0^1 \big\{\Sigma^*_I(s,l)-\Sigma^*_{r,I}(s,l)\big\}^2 \,\mathrm{d}s\,\mathrm{d}l\\
        \label{l_loss_short_eq6}
        & = \frac{1}{\delta^2}\|\Sigma^*_I-\Sigma^*_{r,I}\|_{L^2}^2 \leq \frac{C_{10}r^{-2q}}{\delta^2}.
    \end{align}
    Hence, \Cref{l_loss_short_eq5} and \Cref{l_loss_short_eq6} entail that, with probability at least $1-3n^{-3}$ that
    \begin{align} \label{l_loss_short_eq7}
        (B) \leq \big|(B) - \mathbb{E}[(B)|a]\big|+ \mathbb{E}[(B)|a] \lesssim r^2\sqrt{|I|m\log(n)} + \frac{|I|mr^{-2q}}{\delta^2} \lesssim \xi,
    \end{align}
    where the last inequality follows the fact that
    \begin{align*}
        r^2\sqrt{|I|m\log(n)} \lesssim \sqrt{\xi}r^2\sqrt{\log(n)}\lesssim \xi, \; \mathrm{and} \quad  \frac{|I|mr^{-2q}}{\delta^2} \lesssim \frac{\xi r^{-2q}}{\delta^2} \lesssim \xi.
    \end{align*}

    \noindent \textbf{Step 3.} Substituting the results in \Cref{l_loss_short_eq4} and \Cref{l_loss_short_eq7} into \Cref{l_loss_short_eq1}, the final result follows accordingly.
\end{proof}

\section{Proof of \texorpdfstring{\Cref{t_inference}}{}}
\label{pf_inference}
\begin{proof}[Proof of \Cref{t_inference}]
    Define the event 
    \begin{align*}
        \mathcal{E} = \Big\{K = \widehat{K},\; \mathrm{and}\;\max_{\ell=1, \dots, K} \kappa_{\ell}^2|\widehat{\eta}_\ell -\eta_\ell| \leq \frac{C_{\varepsilon}K\Tilde{r}^4\log^2(n)}{\delta^2\zeta_{\delta}^2}\Big\}.
    \end{align*}
    By \Cref{c_inf_localisation}, we have that the event $\mathcal{E}$ holds with probability at least $1-3n^{-3}$. In the rest of the proof, all of the analyses are carried out conditioning on the event $\mathcal{E}$. 
    
    \noindent \textbf{Preliminary.} For $\ell \in \{0, \ldots, K\}$, let $I_{\ell} = (\widehat{\eta}_{\ell-1},\widehat{\eta}_{\ell}]$, $\widehat{\eta}_{0} = 0$, and $\widehat{\eta}_{K+1} =n+1$. By \Cref{prop_localisation}, the interval $(\widehat{\eta}_{\ell-1}, \widehat{\eta}_{\ell+1}] = I_{\ell} \cup I_{\ell+1}$ can contain 1, 2, or 3 true change points. The most completed case when $(\widehat{\eta}_{\ell-1}, \widehat{\eta}_{\ell+1}]$ contains 3 true change points is analysed and the other two cases are similar and simpler. Without loss of generality, assume that $\widehat{\eta}_{\ell-1} < \eta_{\ell-1} < \eta_\ell < \widehat{\eta}_{\ell} < \eta_{\ell+1} < \widehat{\eta}_{\ell+1}$. In the rest of the proof, denote 
    \begin{align*}
       C^*_{\Tilde{r},I_\ell} = \frac{1}{|\widehat{\eta}_{\ell} - \widehat{\eta}_{\ell-1}|}\sum_{t = \widehat{\eta}_{\ell-1} +1}^{\widehat{\eta}_{\ell}} C^*_{\Tilde{r},t}, \; \mathrm{and}\quad C^*_{\Tilde{r},\eta_\ell} = C^*_{\Tilde{r},t},\; \mathrm{for}\;  t\in (\eta_{\ell-1}, \eta_\ell],
    \end{align*}
    where $ C^*_{\Tilde{r},t} \in \mathbb{R}^{\Tilde{r} \times \Tilde{r}}$ is the matrix with entries $(C^*_{\Tilde{r},t})_{hl}=\int_{0}^1 \int_{0}^1 \Sigma_t^*(s,t)\phi_h(s)\phi_l(t) \,\mathrm{d}s\,\mathrm{d}t$ for $h,l \in [\Tilde{r}]$. Since $\Sigma^*_t$ is square-integrable by \Cref{a_model}, it holds that for any $\ell \in \{1, \ldots, K\}$, the jump size is bounded by an absolute constant $C_\kappa >0$:
    \begin{align} \label{t_inference_eq1}
        \kappa_\ell = \|\Sigma^*_{\eta_{\ell+1}}-\Sigma^*_{\eta_{\ell}}\|_{L^2} \leq \|\Sigma^*_{\eta_{\ell+1}}\|_{L^2} +\|\Sigma^*_{\eta_{\ell}}\|_{L^2} \leq C_\kappa.
    \end{align}
    By \Cref{c_inf_localisation}, we have that 
    \begin{align*}
        |I_\ell| = \widehat{\eta}_\ell - \widehat{\eta}_{\ell-1} \geq \eta_{\ell+1} - \eta_\ell - C_{\varepsilon} \frac{K\Tilde{r}^4\log^2(n)}{\delta^2\zeta_{\delta}^2}  \Big(\frac{1}{\kappa_{\ell}^2}+\frac{1}{\kappa_{\ell-1}^2}\Big) \geq \frac{\Delta}{2} \gtrsim \frac{\xi}{m},
    \end{align*}
    where the second inequality follows from \Cref{a_snr_strong}\ref{a_snr_strong_1} Therefore following a similar and simpler argument as the one in the proof of \Cref{t_estimation}, we have that under \Cref{a_inference}, for any $\ell \in \{1, \ldots, K\}$
    \begin{align}\label{t_inference_eq2}
        \|\widehat{C}_\ell-C^*_{\Tilde{r},I_\ell}\|_{\F} \leq \frac{C_1\Tilde{r}}{\zeta_{\delta}} \sqrt{\frac{\log(n)}{|I_\ell|}} \leq \frac{C_2\Tilde{r}}{\zeta_{\delta}} \sqrt{\frac{\log(n)}{\Delta}}.
    \end{align}
    Moreover, note we have by construction that
    \begin{align*}
        C^*_{\Tilde{r},I_\ell}-C^*_{\Tilde{r},\eta_{\ell+1}} &= \frac{(\eta_{\ell-1}-\widehat{\eta}_{\ell-1})C^*_{\Tilde{r},\eta_{\ell-1}}}{\widehat{\eta}_{\ell}-\widehat{\eta}_{\ell-1}}+\frac{(\eta_{\ell}-\eta_{\ell-1})C^*_{\Tilde{r},\eta_{\ell}}}{\widehat{\eta}_{\ell}-\widehat{\eta}_{\ell-1}}+\frac{(\widehat{\eta}_{\ell-1}-\eta_{\ell})C^*_{\Tilde{r},\eta_{\ell+1}}}{\widehat{\eta}_{\ell}-\widehat{\eta}_{\ell-1}}\\
        & = \frac{(\eta_{\ell-1}-\widehat{\eta}_{\ell-1})(C^*_{\Tilde{r},\eta_{\ell-1}}-C^*_{\Tilde{r},\eta_{\ell+1}})}{\widehat{\eta}_{\ell}-\widehat{\eta}_{\ell-1}} + \frac{(\eta_{\ell}-\eta_{\ell-1})(C^*_{\Tilde{r},\eta_{\ell}}-C^*_{\Tilde{r},\eta_{\ell+1}})}{\widehat{\eta}_{\ell}-\widehat{\eta}_{\ell-1}},
    \end{align*}
    \begin{align*}
        C^*_{\Tilde{r},\eta_{\ell+1}}-C^*_{\Tilde{r},I_{\ell+1}} &= C^*_{\Tilde{r},\eta_{\ell+1}}-\frac{(\widehat{\eta}_{\ell+1}-\eta_{\ell+1})C^*_{\Tilde{r},\eta_{\ell+2}}}{\widehat{\eta}_{\ell+1}-\widehat{\eta}_{\ell}} - \frac{(\eta_{\ell+1}-\widehat{\eta}_{\ell})C^*_{\Tilde{r},\eta_{\ell+1}}}{\widehat{\eta}_{\ell+1}-\widehat{\eta}_{\ell}}\\
        & = \frac{(\widehat{\eta}_{\ell+1}-\eta_{\ell+1})C^*_{\Tilde{r},\eta_{\ell+1}}-(\widehat{\eta}_{\ell+1}-\eta_{\ell+1})C^*_{\Tilde{r},\eta_{\ell+2}}}{\widehat{\eta}_{\ell+1}-\widehat{\eta}_{\ell}} = \frac{(\widehat{\eta}_{\ell+1}-\eta_{\ell+1})(C^*_{\Tilde{r},\eta_{\ell+1}}-C^*_{\Tilde{r},\eta_{\ell+2}})}{\widehat{\eta}_{\ell+1}-\widehat{\eta}_{\ell}},
    \end{align*}
    and
    \begin{align*}
        C^*_{\Tilde{r},I_\ell}-C^*_{\Tilde{r},\eta_\ell} &= \frac{(\eta_{\ell-1}-\widehat{\eta}_{\ell-1})C^*_{\Tilde{r},\eta_{\ell-1}}}{\widehat{\eta}_\ell - \widehat{\eta}_{\ell-1}} + \frac{(\eta_\ell - \eta_{\ell-1})C^*_{\Tilde{r},\eta_{\ell}}}{\widehat{\eta}_\ell - \widehat{\eta}_{\ell-1}}+ \frac{(\widehat{\eta}_{\ell}-\eta_{\ell})C^*_{\Tilde{r} ,\eta_{\ell+1}}}{\widehat{\eta}_\ell - \widehat{\eta}_{\ell-1}}-C^*_{\Tilde{r},\eta_\ell}\\
        & = \frac{(\eta_{\ell-1}-\widehat{\eta}_{\ell-1})(C^*_{\Tilde{r},\eta_{\ell-1}}-C^*_{\Tilde{r},\eta_{\ell}})}{\widehat{\eta}_\ell - \widehat{\eta}_{\ell-1}} + \frac{(\eta_{\ell}-\widehat{\eta}_{\ell})(C^*_{\Tilde{r},\eta_{\ell}}-C^*_{\Tilde{r},\eta_{\ell+1}})}{\widehat{\eta}_\ell - \widehat{\eta}_{\ell-1}}.
    \end{align*}
    Therefore, taking Frobenius norms on both sides of the three equations above, we have that 
    \begin{align} \label{t_inference_eq3}
        \|C^*_{\Tilde{r},I_\ell}-C^*_{\Tilde{r},\eta_{\ell+1}}\|_{\F} \leq \frac{C_3(\eta_{\ell-1}-\widehat{\eta}_{\ell-1})}{\Delta}(\kappa_{\ell-1}+\kappa_\ell) + C_4\kappa_\ell \leq C_5\kappa_\ell,
    \end{align}
    \begin{align}\label{t_inference_eq4}
        \|C^*_{\Tilde{r},\eta_{\ell+1}}-C^*_{\Tilde{r},I_{\ell+1}}\|_{\F} &= \frac{(\widehat{\eta}_{\ell+1}-\eta_{\ell+1})\kappa_{\ell+1}}{\widehat{\eta}_{\ell+1}-\widehat{\eta}_{\ell}} \leq \frac{C_6(\widehat{\eta}_{\ell+1}-\eta_{\ell+1})\kappa_{\ell+1}^2}{\Delta\kappa_{\ell+1}} \leq C_7\alpha_n^{-1}\frac{\kappa^2}{\kappa_{\ell+1}} \leq C_7\alpha_n^{-1}\kappa_{\ell}, 
    \end{align}
    and
    \begin{align} \label{t_inference_eq5}
        \|C^*_{\Tilde{r},I_\ell}-C^*_{\Tilde{r},\eta_\ell}\|_{\F} \leq \frac{(\eta_{\ell-1}-\widehat{\eta}_{\ell-1})\kappa_{\ell-1}}{\Delta} + \frac{(\eta_{\ell}-\widehat{\eta}_{\ell})\kappa_{\ell}}{\Delta} \leq C_8\alpha_n^{-1}\frac{\kappa^2}{\kappa_{\ell-1}} + C_9\alpha_n^{-1}\frac{\kappa^2}{\kappa_{\ell}}\leq C_{10}\alpha_n^{-1}\kappa_{\ell}.
    \end{align}
    
    \noindent \textbf{Uniform tightness} Denote $d = \Tilde{\eta}_\ell - \eta_\ell$. Without loss of generality, suppose that $d \geq 0$. Since from \Cref{loss_refine}, $\Tilde{\eta}_\ell = \eta_\ell+d$ is the minimizer of $\mathcal{L}_\ell(\eta)$, it holds that $\mathcal{L}_\ell(\eta_\ell+d) -\mathcal{L}_\ell(\eta_\ell) \leq 0$.
    Rearrange and observe that 
    \begin{align*}
        & \mathcal{L}_\ell(\eta_\ell+d) -\mathcal{L}_\ell(\eta_\ell)\\
        =\;& \sum_{t = \eta_\ell+1}^{\eta_\ell+d}\sum_{(j,k) \in \mathcal{O}}\Big\{\big\{Y_{t,j}Y_{t,k} -\Phi_{\Tilde{r}}^\top(X_{t,j})\widehat{C}_{\ell}\Phi_{\Tilde{r}}(X_{t,k})\big\}^2 -\big\{Y_{t,j}Y_{t,k} -\Phi_{\Tilde{r}}^\top(X_{t,j})\widehat{C}_{\ell+1}\Phi_{\Tilde{r}}(X_{t,k})\big\}^2 \Big\}\\
        =\;& \sum_{t = \eta_\ell+1}^{\eta_\ell+d}\sum_{(j,k) \in \mathcal{O}}\Big\{\big\{Y_{t,j}Y_{t,k} -\Phi_{\Tilde{r}}^\top(X_{t,j})\widehat{C}_{\ell}\Phi_{\Tilde{r}}(X_{t,k})\big\}^2 -\big\{Y_{t,j}Y_{t,k} -\Phi_{\Tilde{r}}^\top(X_{t,j})C^*_{\Tilde{r},I_\ell}\Phi_{\Tilde{r}}(X_{t,k})\big\}^2 \Big\}\\
        &- \sum_{t = \eta_\ell+1}^{\eta_\ell+d}\sum_{(j,k) \in \mathcal{O}}\Big\{\big\{Y_{t,j}Y_{t,k} -\Phi_{\Tilde{r}}^\top(X_{t,j})\widehat{C}_{\ell+1}\Phi_{\Tilde{r}}(X_{t,k})\big\}^2 -\big\{Y_{t,j}Y_{t,k} -\Phi_{\Tilde{r}}^\top(X_{t,j})C^*_{\Tilde{r},I_{\ell+1}}\Phi_{\Tilde{r}}(X_{t,k})\big\}^2\Big\}\\
        &+\sum_{t = \eta_\ell+1}^{\eta_\ell+d}\sum_{(j,k) \in \mathcal{O}} \Big\{\big\{Y_{t,j}Y_{t,k} -\Phi_{\Tilde{r}}^\top(X_{t,j})C^*_{\Tilde{r},I_\ell}\Phi_{\Tilde{r}}(X_{t,k})\big\}^2 -\big\{Y_{t,j}Y_{t,k} -\Phi_{\Tilde{r}}^\top(X_{t,j})C^*_{\Tilde{r},\eta_\ell}\Phi_{\Tilde{r}}(X_{t,k})\big\}^2 \Big\}\\
        &- \sum_{t = \eta_\ell+1}^{\eta_\ell+d}\sum_{(j,k) \in \mathcal{O}} \Big\{\big\{Y_{t,j}Y_{t,k} -\Phi_{\Tilde{r}}^\top(X_{t,j})C^*_{\Tilde{r},I_{\ell+1}}\Phi_{\Tilde{r}}(X_{t,k})\big\}^2 -\big\{Y_{t,j}Y_{t,k} -\Phi_{\Tilde{r}}^\top(X_{t,j})C^*_{\Tilde{r},\eta_{\ell+1}}\Phi_{\Tilde{r}}(X_{t,k})\big\}^2 \Big\}\\
        &+\sum_{t = \eta_\ell+1}^{\eta_\ell+d}\sum_{(j,k) \in \mathcal{O}} \Big\{\big\{Y_{t,j}Y_{t,k} -\Phi_r^\top(X_{t,j})C^*_{\Tilde{r},\eta_{\ell}}\Phi_r(X_{t,k})\big\}^2 -\big\{Y_{t,j}Y_{t,k} -\Phi_{\Tilde{r}}^\top(X_{t,j})C^*_{\Tilde{r},\eta_{\ell+1}}\Phi_{\Tilde{r}}(X_{t,k})\big\}^2 \Big\}\\
        =\;& (I) -(II) +(III) -(IV) + (V) \leq 0.
    \end{align*}
    Therefore, we have that
    \begin{align} \label{t_inference_eq6}
        (V) \leq -(I) + (II) - (III) +(IV) \leq |(I)| + |(II)| + |(III)| + |(IV)|.
    \end{align}
    \noindent \textbf{Step 1: Order of magnitude of $(I)$}. We have that
    \begin{align*}
        (I) =\;& \sum_{t = \eta_\ell+1}^{\eta_\ell+d}\sum_{(j,k) \in \mathcal{O}} \big\{\Phi_{\Tilde{r}}^\top(X_{t,j})(\widehat{C}_\ell-C^*_{\Tilde{r},I_\ell})\Phi_{\Tilde{r}}(X_{t,k})\big\}^2\\
        & - 2\sum_{t = \eta_\ell+1}^{\eta_\ell+d} \sum_{(j,k) \in \mathcal{O}}\big\{Y_{t,j}Y_{t,k}-\Phi_{\Tilde{r}}^\top(X_{t,j})C^*_{\Tilde{r},I_\ell}\Phi_{\Tilde{r}}(X_{t,k})\big\}\big\{\Phi_{\Tilde{r}}^\top(X_{t,j})(\widehat{C}_\ell-C^*_{\Tilde{r},I_\ell})\Phi_{\Tilde{r}}(X_{t,k})\big\}\\
        =\;& \sum_{t = \eta_\ell+1}^{\eta_\ell+d}\sum_{(j,k) \in \mathcal{O}} \big\{\Phi_{\Tilde{r}}^\top(X_{t,j})(\widehat{C}_\ell-C^*_{\Tilde{r},I_\ell})\Phi_{\Tilde{r}}(X_{t,k})\big\}^2\\
        &- 2\sum_{t = \eta_\ell+1}^{\eta_\ell+d} \sum_{(j,k) \in \mathcal{O}} \big\{Y_{t,j}Y_{t,k}-\Phi_{\Tilde{r}}^\top(X_{t,j})C^*_{\Tilde{r},\eta_{\ell+1}}\Phi_{\Tilde{r}}(X_{t,k})\big\}\big\{\Phi_{\Tilde{r}}^\top(X_{t,j})(\widehat{C}_\ell-C^*_{\Tilde{r},I_\ell})\Phi_{\Tilde{r}}(X_{t,k})\big\}\\
        &-2\sum_{t = \eta_\ell+1}^{\eta_\ell+d} \sum_{(j,k) \in \mathcal{O}}\big\{\Phi_{\Tilde{r}}^\top(X_{t,j})(C^*_{\Tilde{r},\eta_{\ell+1}}-C^*_{\Tilde{r},I_\ell})\Phi_{\Tilde{r}}(X_{t,k})\big\}\big\{\Phi_{\Tilde{r}}^\top(X_{t,j})(\widehat{C}_\ell-C^*_{\Tilde{r},I_\ell})\Phi_{\Tilde{r}}(X_{t,k})\big\}\\
        =\;& (I_1) - 2(I_2) - 2(I_3) \leq |(I_1)| + 2|(I_2)| + 2|(I_3)|.
    \end{align*}
    Note that for $(I_1)$, we have that
    \begin{align*}
        |(I_1)| & = \text{vec}(\widehat{C}_\ell-C^*_{\Tilde{r},I_\ell})^\top \sum_{t = \eta_\ell+1}^{\eta_\ell+d}\sum_{(j,k) \in \mathcal{O}}\Psi(X_{t,j}, X_{t,k})\Psi^\top(X_{t,j},X_{t,k}) \text{vec}(\widehat{C}_\ell-C^*_{\Tilde{r},I_\ell})\\
        &\leq  \frac{dm\Tilde{r}^2}{\delta^2}\|\widehat{C}_\ell-C^*_{\Tilde{r},I_\ell}\|_{\F}^2+C_{11}\Tilde{r}^{3}\sqrt{dm\log(n)}\|\widehat{C}_\ell-C^*_{\Tilde{r},I_\ell}\|_{\F}^2\\
        & \leq \frac{C_{12}\Tilde{r}^2\log(n)}{\zeta_{\delta}^2\Delta} \Big\{\frac{dm\Tilde{r}^2}{\delta^2} + \Tilde{r}^{3}\sqrt{dm\log(n)}\Big\} \\
        & = \frac{C_{12}\Tilde{r}^4\log^2(n)}{\delta^2\zeta_{\delta}^2\Delta} \Big(dm\log^{-1}(n) + \sqrt{d}\Tilde{r}\sqrt{m}\log^{-1/2}(n)\delta^2\Big) \\
        & = O_p\Big(\alpha_n^{-1}K^{-1}\Big\{d\kappa_\ell^2 m\log^{-1}(n)+\sqrt{d\kappa_\ell^2} \Tilde{r}\sqrt{m}\log^{-1/2}(n)\delta^2\Big\}\Big),
    \end{align*}
    where the first inequality follows from a similar argument as the one leading to \Cref{l_loss_long_eq3} and \Cref{l_loss_long_eq4}, the second inequality follows from \Cref{t_inference_eq2}, and the last equality follows from \Cref{a_snr_strong}\ref{a_snr_strong_1}~and \Cref{t_inference_eq1}. For $(I_2)$, we have that
    \begin{align*}
        |(I_2)| &\leq  \Big\|\sum_{t = \eta_\ell+1}^{\eta_\ell+d} \sum_{(j,k) \in \mathcal{O}} \big\{Y_{t,j}Y_{t,k}-\Phi_{\Tilde{r}}^\top(X_{t,j})C^*_{\Tilde{r},\eta_{\ell+1}}\Phi_{\Tilde{r}}(X_{t,k})\big\}\Phi_{\Tilde{r}}^\top(X_{t,j}) \odot \Phi_{\Tilde{r}}^\top(X_{t,k})\Big\|_{\op}\Big\|\widehat{C}_\ell-C^*_{\Tilde{r},I_\ell}\Big\|_{\F}\\
        & \leq  \frac{C_{13}\Tilde{r}}{\zeta_{\delta}} \sqrt{\frac{\log(n)}{\Delta}} \Big\{\sqrt{d}\Tilde{r}m\sqrt{\log(n)} \vee \Tilde{r}m\log(n)\Big\} = \frac{C_{13}\delta}{\Tilde{r}}\Big\{\frac{\Tilde{r}^2\log(n)}{\delta\zeta_{\delta}\sqrt{\Delta}}\Big\}\Big\{\sqrt{d}\Tilde{r}m  \vee \Tilde{r}m\sqrt{\log(n)}\Big\}\\
        & = O_p\Big(\alpha_n^{-1/2}K^{-1/2}\delta\Big\{\sqrt{d\kappa_\ell^2}m + m\sqrt{\log(n)}\Big\}\Big),
    \end{align*}
    where the second inequality follows from \Cref{l_op1_eq5} in \Cref{l_op1}, and the last equality follows from \Cref{a_snr_strong}\ref{a_snr_strong_1} To find an upper bound on $(I_3)$, denote $\Lambda_{tjk} = \mathbb{E}\big[\Psi(X_{t,j}, X_{t,k})\Psi^\top(X_{t,j}, X_{t,k})\big|a\big]$, then we have that 
    \begin{align*}
        |(I_3)| =\;& \Big|\text{vec}(C^*_{\Tilde{r},\eta_{\ell+1}}-C^*_{\Tilde{r},I_\ell})^\top\sum_{t = \eta_\ell+1}^{\eta_\ell+d} \sum_{(j,k) \in \mathcal{O}}\Psi(X_{t,j}, X_{t,k})\Psi^\top(X_{t,j}, X_{t,k}) \text{vec}(\widehat{C}_\ell-C^*_{\Tilde{r},I_\ell})\Big|\\
        \leq\;& \Big|\text{vec}(C^*_{\Tilde{r},\eta_{\ell+1}}-C^*_{\Tilde{r},I_\ell})^\top\Big\{\sum_{t = \eta_\ell+1}^{\eta_\ell+d} \sum_{(j,k) \in \mathcal{O}}\Psi(X_{t,j}, X_{t,k})\Psi^\top(X_{t,j}, X_{t,k}) - \Lambda_{ijk}\Big\} \text{vec}(\widehat{C}_\ell-C^*_{\Tilde{r},I_\ell})\Big|\\
        &+dm\Big|\text{vec}(C^*_{\Tilde{r},\eta_{\ell+1}}-C^*_{\Tilde{r},I_\ell})^\top\Lambda_{\eta_\ell+1,1,2}\text{vec}(\widehat{C}_\ell-C^*_{\Tilde{r},I_\ell}) \Big|\\
        \leq \;& C_{14} \Tilde{r}^{2}\sqrt{dm\log(n)} \|C^*_{\Tilde{r},\eta_{\ell+1}}-C^*_{\Tilde{r},I_\ell}\|_{\F}\|\widehat{C}_\ell-C^*_{\Tilde{r},I_\ell}\|_{\F} + \frac{dm\Tilde{r}^2}{\delta^2}\|C^*_{\Tilde{r},\eta_{\ell+1}}-C^*_{\Tilde{r},I_\ell}\|_{\F}\|\widehat{C}_\ell-C^*_{\Tilde{r},I_\ell}\|_{\F}\\
        \leq \;& \frac{C_{15}\kappa_\ell \delta}{\Tilde{r}\sqrt{\log(n)}} \cdot \frac{\Tilde{r}^2\log(n)}{\delta\zeta_{\delta}\sqrt{\Delta}} \Big\{\frac{dm\Tilde{r}^2}{\delta^2} + \Tilde{r}^{2}\sqrt{dm\log(n)}\Big\}\\
        = \;& O_p\Big(\alpha_n^{-1/2}K^{-1/2} \Big\{d\kappa_\ell^2 \Tilde{r}m \log^{-1/2}(n)\delta^{-1}+\sqrt{d\kappa_\ell^2}\Tilde{r}\sqrt{m}\delta\Big\}\Big),
    \end{align*}
    where the second inequality follows from \Cref{l_op5} and \Cref{l_loss_long_eq4}, the third inequality follows from \Cref{t_inference_eq2} and \Cref{t_inference_eq3}, and the last equality follows from \Cref{a_snr_strong}\ref{a_snr_strong_1}~and \Cref{t_inference_eq1}. Therefore, 
    \begin{align} \notag
        |(I)| &=  O_p\Big(\alpha_n^{-1}K^{-1}\Big\{d\kappa_\ell^2 m\log^{-1}(n)+\sqrt{d\kappa_\ell^2} \Tilde{r}\sqrt{m}\log^{-1/2}(n)\delta^2\Big\} + \alpha_n^{-1/2}K^{-1/2}\delta\Big\{\sqrt{d\kappa_\ell^2}m + m\sqrt{\log(n)}\Big\}\\ \notag
        & \hspace{1.2cm}+\alpha_n^{-1/2}K^{-1/2} \Big\{d\kappa_\ell^2 \Tilde{r}m \log^{-1/2}(n)\delta^{-1}+\sqrt{d\kappa_\ell^2}\Tilde{r}\sqrt{m}\delta\Big\}\Big)\\ \label{t_inference_eq7}
        &= o_p(d\kappa_\ell^2m) +o_p(\sqrt{d\kappa_\ell^2}m) + o_p(m),
    \end{align}
    where the last inequality holds from \Cref{a_snr_strong}\ref{a_snr_strong_2}

    \noindent \textbf{Step 2: Order of magnitude of $(II)$}. We have that
    \begin{align*}
        (II) =\;& \sum_{t = \eta_\ell+1}^{\eta_\ell+d}\sum_{(j,k) \in \mathcal{O}} \big\{\Phi_{\Tilde{r}}^\top(X_{t,j})(\widehat{C}_{\ell+1}-C^*_{\Tilde{r},I_{\ell+1}})\Phi_{\Tilde{r}}(X_{t,k})\big\}^2\\
        & - 2\sum_{t = \eta_\ell+1}^{\eta_\ell+d} \sum_{(j,k) \in \mathcal{O}} \big\{Y_{t,j}Y_{t,k}-\Phi_{\Tilde{r}}^\top(X_{t,j})C^*_{\Tilde{r},\eta_{\ell+1}}\Phi_{\Tilde{r}}(X_{t,k})\big\}\big\{\Phi_{\Tilde{r}}^\top(X_{t,j})(\widehat{C}_{\ell+1}-C^*_{\Tilde{r},I_{\ell+1}})\Phi_{\Tilde{r}}(X_{t,k})\big\}\\
        & -2\sum_{t = \eta_\ell+1}^{\eta_\ell+d} \sum_{(j,k) \in \mathcal{O}}\big\{\Phi_{\Tilde{r}}^\top(X_{t,j})(C^*_{\Tilde{r},\eta_{\ell+1}}-C^*_{\Tilde{r},I_{\ell+1}})\Phi_{\Tilde{r}}(X_{t,k})\big\}\big\{\Phi_{\Tilde{r}}^\top(X_{t,j})(\widehat{C}_{\ell+1}-C^*_{\Tilde{r},I_{\ell+1}})\Phi_{\Tilde{r}}(X_{t,k})\big\}\\
        =\;& (II_1)-2(II_2) -2 (II_3) \leq |(II_1)|+2|(II_2)| +2|(II_3)|.
    \end{align*}
    Using a similar argument as the one in \textbf{Step 1}, it holds that
    \begin{align*}
        (II_1) = O_p\Big(\alpha_n^{-1}K^{-1}\Big\{d\kappa_\ell^2 m\log^{-1}(n)+\sqrt{d\kappa_\ell^2} \Tilde{r}\sqrt{m}\log^{-1/2}(n)\delta^2\Big\}\Big),
    \end{align*}
    and
    \begin{align*}
        (II_2) = O_p\Big(\alpha_n^{-1/2}K^{-1/2}\delta\Big\{\sqrt{d\kappa_\ell^2}m + m\sqrt{\log(n)}\Big\}\Big).
    \end{align*}
    For term $(II_3)$, using a similar argument as the one used to derive an upper bound on $(I_3)$, we have that
    \begin{align*}
        (II_3) &\leq C_{16}\Big\{\frac{dm\Tilde{r}^2}{\delta^2} + \Tilde{r}^{2}\sqrt{dm\log(n)}\Big\} \|C^*_{\Tilde{r},\eta_{\ell+1}}-C^*_{\Tilde{r},I_{\ell+1}}\|_{\F}\|\widehat{C}_{\ell+1}-C^*_{\Tilde{r},I_{\ell+1}}\|_{\F}\\
        &\leq C_{17}\alpha_n^{-1}\kappa_{\ell}\Big\{\frac{dm\Tilde{r}^2}{\delta^2} + \Tilde{r}^{2}\sqrt{dm\log(n)}\Big\} \|\widehat{C}_{\ell+1}-C^*_{\Tilde{r},I_{\ell+1}}\|_{\F}\\
        & \leq C_{18}\alpha_n^{-1}\frac{\kappa_{\ell}\delta}{\Tilde{r}\sqrt{\log(n)}}\frac{\Tilde{r}^2\log(n)}{\delta\zeta_{\delta}\sqrt{\Delta}}\Big\{\frac{dm\Tilde{r}^2}{\delta^2} + \Tilde{r}^{2}\sqrt{dm\log(n)}\Big\}\\
        & = O_p\Big(\alpha_n^{-3/2}K^{-1/2}\Big\{d\kappa_\ell^2\Tilde{r}m\log^{-1/2}(n)\delta^{-1}+\sqrt{d\kappa_{\ell}^2}\Tilde{r}\sqrt{m}\delta\Big\}\Big),
    \end{align*}
    where the second inequality follows from \Cref{t_inference_eq4}, the third inequality follows from \Cref{t_inference_eq2}, and the last inequality follows from \Cref{a_snr_strong}\ref{a_snr_strong_1}~and \Cref{t_inference_eq1}. Therefore, we have that
    \begin{align}\notag
        |(II)| &= O_p\Big(\alpha_n^{-1}K^{-1}\Big\{d\kappa_\ell^2 m\log^{-1}(n)+\sqrt{d\kappa_\ell^2} \Tilde{r}\sqrt{m}\log^{-1/2}(n)\delta^2\Big\} + \alpha_n^{-1/2}K^{-1/2}\delta\Big\{\sqrt{d\kappa_\ell^2}m + m\sqrt{\log(n)}\Big\} \\
        \notag
        & \hspace{1.2cm}+\alpha_n^{-3/2}K^{-1/2}\Big\{d\kappa_\ell^2\Tilde{r}m\log^{-1/2}(n)\delta^{-1}+\sqrt{d\kappa_{\ell}^2}\Tilde{r}\sqrt{m}\delta\Big\}\Big)\\ \label{t_inference_eq8}
        & =o_p(d\kappa_\ell^2m) +o_p(\sqrt{d\kappa_\ell^2}m) + o_p(m),
    \end{align}
    where the last inequality follows from \Cref{a_snr_strong}\ref{a_snr_strong_2}

    \noindent \textbf{Step 3: Order of magnitude of $(III)$}. We have that
    \begin{align*}
        (III) =\;& \sum_{t = \eta_\ell+1}^{\eta_\ell+d}\sum_{(j,k) \in \mathcal{O}} \big\{\Phi_{\Tilde{r}}^\top(X_{t,j})(C^*_{\Tilde{r},I_\ell}-C^*_{\Tilde{r},\eta_\ell})\Phi_{\Tilde{r}}(X_{t,k})\big\}^2\\
        &- 2\sum_{t = \eta_\ell+1}^{\eta_\ell+d} \sum_{(j,k) \in \mathcal{O}}\big\{Y_{t,j}Y_{t,k}-\Phi_{\Tilde{r}}^\top(X_{t,j})C^*_{\Tilde{r},\eta_\ell}\Phi_{\Tilde{r}}(X_{t,k})\big\}\big\{\Phi_{\Tilde{r}}^\top(X_{t,j})(C^*_{\Tilde{r},I_\ell}-C^*_{\Tilde{r},\eta_\ell})\Phi_{\Tilde{r}}(X_{t,k})\big\}\\
        =\;& \sum_{t = \eta_\ell+1}^{\eta_\ell+d}\sum_{(j,k) \in \mathcal{O}} \big\{\Phi_{\Tilde{r}}^\top(X_{t,j})(C^*_{\Tilde{r},I_\ell}-C^*_{\Tilde{r},\eta_\ell})\Phi_{\Tilde{r}}(X_{t,k})\big\}^2\\
        & - 2\sum_{t = \eta_\ell+1}^{\eta_\ell+d} \sum_{(j,k) \in \mathcal{O}}\big\{Y_{t,j}Y_{t,k}-\Phi_{\Tilde{r}}^\top(X_{t,j})C^*_{\Tilde{r},\eta_{\ell+1}}\Phi_{\Tilde{r}}(X_{t,k})\big\}\big\{\Phi_{\Tilde{r}}^\top(X_{t,j})(C^*_{\Tilde{r},I_\ell}-C^*_{\Tilde{r},\eta_\ell})\Phi_{\Tilde{r}}(X_{t,k})\big\}\\
        & - 2\sum_{t = \eta_\ell+1}^{\eta_\ell+d} \sum_{(j,k) \in \mathcal{O}}\big\{\Phi_{\Tilde{r}}^\top(X_{t,j})(C^*_{\Tilde{r},\eta_{\ell+1}}-C^*_{\Tilde{r},\eta_\ell})\Phi_{\Tilde{r}}(X_{t,k})\big\}\big\{\Phi_{\Tilde{r}}^\top(X_{t,j})(C^*_{\Tilde{r},I_\ell}-C^*_{\Tilde{r},\eta_\ell})\Phi_{\Tilde{r}}(X_{t,k})\big\}\\
        =\;& (III_1)-2(III_2)-2(III_3) \leq |(III_1)|+2|(III_2)| +2|(III_3)|.
    \end{align*}
    For the term $(III_1)$, \Cref{t_inference_eq5} and a similar argument used to derive an upper bound on $(I_1)$ together lead to 
    \begin{align*}
        |(III_1)| &= \text{vec}(C^*_{\Tilde{r},I_\ell}-C^*_{\Tilde{r},\eta_\ell})^\top \sum_{t = \eta_\ell+1}^{\eta_\ell+d}\sum_{(j,k) \in \mathcal{O}}\Psi(X_{t,j}, X_{t,k})\Psi^\top(X_{t,j},X_{t,k}) \text{vec}(C^*_{\Tilde{r},I_\ell}-C^*_{\Tilde{r},\eta_\ell})\\
        &\leq  \frac{dm\Tilde{r}^2}{\delta^2}\|C^*_{\Tilde{r},I_\ell}-C^*_{\Tilde{r},\eta_\ell}\|_{\F}^2+C_{19}\Tilde{r}^{2}\sqrt{dm\log(n)}\|C^*_{\Tilde{r},I_\ell}-C^*_{\Tilde{r},\eta_\ell}\|_{\F}^2\\
        & \leq C_{20}\alpha_n^{-2}\kappa_{\ell}^2\Big\{\frac{dm\Tilde{r}^2}{\delta^2} + \Tilde{r}^{2}\sqrt{dm\log(n)}\Big\}\\
        & = O_p\Big(\alpha_n^{-2}\Big\{d\kappa_{\ell}^2\Tilde{r}^2 m\delta^{-2} + \sqrt{d\kappa_\ell^2}\Tilde{r}^2\sqrt{m\log(n)}\Big\}\Big).
    \end{align*}
    For the term $(III_2)$, by \Cref{t_inference_eq5} and \Cref{l_op1_eq5} in \Cref{l_op1}, we have that
    \begin{align*}
        |(III_2)| &\leq  \Big\|\sum_{t = \eta_\ell+1}^{\eta_\ell+d} \sum_{(j,k) \in \mathcal{O}} \big\{Y_{t,j}Y_{t,k}-\Phi_{\Tilde{r}}^\top(X_{t,j})C^*_{\Tilde{r},\eta_{\ell+1}}\Phi_{\Tilde{r}}(X_{t,k})\big\}\Phi_{\Tilde{r}}^\top(X_{t,j}) \odot \Phi_{\Tilde{r}}^\top(X_{t,k})\Big\|_{\op}\Big\|C^*_{\Tilde{r},I_\ell}-C^*_{\Tilde{r},\eta_\ell}\Big\|_{\F}\\
        & \leq C_{21}\alpha_n^{-1}\kappa_{\ell}\Big\{\sqrt{d}\Tilde{r}m\sqrt{\log(n)} \vee \Tilde{r}m\log(n)\Big\}\\
        & = O_p\Big(\alpha_n^{-1}\Big\{\sqrt{d\kappa_\ell^2}\Tilde{r}m\sqrt{\log(n)}+ \Tilde{r}m\log(n)\Big\}\Big).
    \end{align*}
    Moreover, for the term $(III_3)$, \Cref{t_inference_eq5} and a similar argument used to derive an upper bound on $(I_3)$ lead to 
    \begin{align*}
        |(III_3)| &\leq C_{22}\Big\{\frac{dm\Tilde{r}^2}{\delta^2} + \Tilde{r}^{2}\sqrt{dm\log(n)}\Big\} \|C^*_{\Tilde{r},\eta_{\ell+1}}-C^*_{\Tilde{r},\eta_\ell}\|_{\F}\|C^*_{\Tilde{r},I_\ell}-C^*_{\Tilde{r},\eta_\ell}\|_{\F}\\
        & \leq C_{23}\alpha_n^{-1}\kappa_{\ell}^2 \Big\{\frac{dm\Tilde{r}^2}{\delta^2} + \Tilde{r}^{2}\sqrt{dm\log(n)}\Big\}\\
        & = O_p\Big(\alpha_n^{-1}\Big\{ d\kappa_\ell^2 \Tilde{r}^{2}m\delta^{-2}+\sqrt{d\kappa_\ell^2}\Tilde{r}^{2}\sqrt{m\log(n)}\Big\}\Big).
    \end{align*}
    Therefore, we have that
    \begin{align}\notag
        |(III)| =\;& O_p\Big(\alpha_n^{-2}\Big\{d\kappa_{\ell}^2\Tilde{r}^2 m\delta^{-2} + \sqrt{d\kappa_\ell^2}\Tilde{r}^2\sqrt{m\log(n)}\Big\}+\alpha_n^{-1}\Big\{\sqrt{d\kappa_\ell^2}\Tilde{r}m\sqrt{\log(n)}+ \Tilde{r}m\log(n)\Big\}\\ \notag
        & +\alpha_n^{-1}\Big\{ d\kappa_\ell^2 \Tilde{r}^{2}m\delta^{-2}+\sqrt{d\kappa_\ell^2}\Tilde{r}^{2}\sqrt{m\log(n)}\Big\}\Big)\\ \label{t_inference_eq9}
        = \;& o_p(d\kappa_\ell^2m) +o_p(\sqrt{d\kappa_\ell^2}m) + o_p(m),
    \end{align}
    where the last equality follows from \Cref{a_snr_strong}\ref{a_snr_strong_2}

    \noindent \textbf{Step 4: Order of magnitude of $(IV)$.} Following a similar argument as the one in \textbf{Step 3}, we have that
    \begin{align} \notag
        |(IV)|=\;& O_p\Big(\alpha_n^{-2}\Big\{d\kappa_{\ell}^2\Tilde{r}^2m\delta^{-2} + \sqrt{d\kappa_\ell^2}\Tilde{r}^2\sqrt{m\log(n)}\Big\}+\alpha_n^{-1}\Big\{\sqrt{d\kappa_\ell^2}\Tilde{r}m\sqrt{\log(n)}+ \Tilde{r}m\log(n) \Big\}\Big)\\ \label{t_inference_eq10}
        =\;& o_p(d\kappa_\ell^2m) +o_p(\sqrt{d\kappa_\ell^2}m) + o_p(m).
    \end{align}

    \noindent \textbf{Step 5: Lower bound on $(V)$.} With the same notation of $\Lambda_{ijk}$ in \textbf{Step 1}, we have that 
    \begin{align*}
        (V) =\;& \sum_{t = \eta_\ell+1}^{\eta_\ell+d}\sum_{(j,k) \in \mathcal{O}} \Big\{\big\{Y_{t,j}Y_{t,k} -\Phi_r^\top(X_{t,j})C^*_{\Tilde{r},\eta_{\ell}}\Phi_r(X_{t,k})\big\}^2 -\big\{Y_{t,j}Y_{t,k} -\Phi_{\Tilde{r}}^\top(X_{t,j})C^*_{\Tilde{r},\eta_{\ell+1}}\Phi_{\Tilde{r}}(X_{t,k})\big\}^2 \Big\}\\
        =\;& \text{vec}(C^*_{\Tilde{r},\eta_{\ell}}-C^*_{\Tilde{r},\eta_{\ell+1}})^\top\Big\{\sum_{t = \eta_\ell+1}^{\eta_\ell+d} \sum_{(j,k) \in \mathcal{O}}\Psi(X_{t,j}, X_{t,k})\Psi^\top(X_{t,j}, X_{t,k}) - \Lambda_{ijk}\Big\}\text{vec}(C^*_{\Tilde{r},\eta_{\ell}}-C^*_{\Tilde{r},\eta_{\ell+1}})\\
        & + dm\; \text{vec}(C^*_{\Tilde{r},\eta_{\ell}}-C^*_{\Tilde{r},\eta_{\ell+1}})^\top\Lambda_{\eta_\ell+1,1,2}\;\text{vec}(C^*_{\Tilde{r},\eta_{\ell}}-C^*_{\Tilde{r},\eta_{\ell+1}})\\
        & - 2\sum_{t = \eta_\ell+1}^{\eta_\ell+d} \sum_{(j,k) \in \mathcal{O}}\big\{Y_{t,j}Y_{t,k}-\Phi_{\Tilde{r}}^\top(X_{t,j})C^*_{\Tilde{r},\eta_{\ell+1}}\Phi_{\Tilde{r}}(X_{t,k})\big\}\big\{\Phi_{\Tilde{r}}^\top(X_{t,j})(C^*_{\Tilde{r},\eta_\ell} - C^*_{\Tilde{r},\eta_{\ell+1}})\Phi_{\Tilde{r}}(X_{t,k})\big\}\\
        =\;& (V_1)+(V_2) -2(V_3),
    \end{align*}
    thus we have that
    \begin{align*}
        (V) \geq (V_2) - |(V_1)| -2|(V_3)|.
    \end{align*}
    For $(V_2)$, by \Cref{l_rec}, we have that
    \begin{align*}
        (V_2) \geq dm\zeta_{\delta}\|C^*_{\Tilde{r},\eta_{\ell+1}}-C^*_{\Tilde{r},\eta_\ell}\|_{\F}^2 = dm\kappa_{\ell}^2 \zeta_{\delta}.
    \end{align*}
    To find an upper bound on $(V_1)$, for any $t \in \{\eta_\ell+1, \ldots, \eta_\ell+d\}$, consider the random variable
    \begin{align*}
        V_{1,t} = \frac{1}{\Tilde{r}^2m}\sum_{(j,k) \in \mathcal{O}} v^\top\Psi(X_{t,j},X_{t,k})\Psi^T(X_{t,j},X_{t,k})v.
    \end{align*}
    In order to use \Cref{l_inequality_partial_sum}, we are about to show that conditioning on $a= \{a_t\}_{t=\eta_\ell+1}^{\eta_\ell+d}$, it holds that $\|V_{1,t}/r^2m\|_2 \leq C_{25}$ for all $t \in \{\eta_\ell+1, \ldots, \eta_\ell+d\}$. Firstly observe from \Cref{l_op3_eq1} that for any deterministic vector $v \in \mathbb{R}^{\Tilde{r}^2}$ such that $\|v\|_2=1$, we have $\big|v^\top\Psi(X_{t,j},X_{t,k})\Psi^T(X_{t,j},X_{t,k})v\big|\leq  C_{24}\Tilde{r}^{2}$ for any $t \in \{\eta_\ell+1, \ldots, \eta_\ell+d\}$ and $j,k \in \{1, \ldots, m\}$. Therefore, it holds that
    \begin{align*}
        & \underset{t \in \{\eta_\ell+1, \ldots, \eta_\ell+d\}}{\max}  \Big\|\sum_{(j,k) \in \mathcal{O}} v^\top\Psi(X_{t,j},X_{t,k})\Psi^T(X_{t,j},X_{t,k})v\Big\|_2\\
        \leq\;& \underset{t \in \{\eta_\ell+1, \ldots, \eta_\ell+d\}}{\max}  \sum_{(j,k) \in \mathcal{O}} \|v^\top\Psi(X_{t,j},X_{t,k})\Psi^T(X_{t,j},X_{t,k})v\|_2\leq C_{25}\Tilde{r}^{2}m,
    \end{align*}
    which gives $\|V_{1,t}/r^2m\|_2 \leq C_{25}$. Next, note that the terms $\{V_{1,t}\}_{t=\eta_\ell+1}^{\eta_\ell+d}$ are mutually independent conditioning on $a$, therefore by \Cref{l_inequality_partial_sum}, we have that conditioning on $a$
    \begin{align*}
        |(V_1)| &=  \text{vec}(C^*_{\Tilde{r},\eta_{\ell}}-C^*_{\Tilde{r},\eta_{\ell+1}})^\top\Big\{\sum_{t = \eta_\ell+1}^{\eta_\ell+d} \sum_{(j,k) \in \mathcal{O}}\Psi(X_{t,j}, X_{t,k})\Psi^\top(X_{t,j}, X_{t,k}) - \Lambda_{ijk}\Big\}\text{vec}(C^*_{\Tilde{r},\eta_{\ell}}-C^*_{\Tilde{r},\eta_{\ell+1}})\\
        &= O_p\Big( \Tilde{r}^{2}\sqrt{d}m\{\log(d\kappa_\ell^2)+1\}\|C^*_{\Tilde{r},\eta_{\ell}}-C^*_{\Tilde{r},\eta_{\ell+1}}\|_{\F}^2\Big)\\
        &= O_p\Big(\sqrt{d\kappa_\ell^2}  \Tilde{r}^{2}m\{\log(d\kappa_\ell^2)+1\}\Big).
    \end{align*}
    Taking another expectation with respect to $a$, we have that
    \begin{align*}
        |(V_1)| = O_p\Big(\sqrt{d\kappa_\ell^2}  \Tilde{r}^{2}\sqrt{m}\{\log(d\kappa_\ell^2)+1\}\Big).
    \end{align*}
    To find an upper bound on the term $|(V_{3})|$, for any $t \in \{\eta_\ell+1, \ldots, \eta_\ell+d\}$, denote
    \begin{align*}
        V_{3,t} = \frac{1}{\Tilde{r}m}\sum_{(j,k) \in \mathcal{O}}\big\{Y_{t,j}Y_{t,k}-\Phi_{\Tilde{r}}^\top(X_{t,j})C^*_{\Tilde{r},\eta_{\ell+1}}\Phi_{\Tilde{r}}(X_{t,k})\big\}\big\{\Phi_{\Tilde{r}}^\top(X_{t,j})v\Phi_{\Tilde{r}}(X_{t,k})\big\},
    \end{align*}
    where $v \in \mathbb{R}^{\Tilde{r} \times \Tilde{r}}$ is  any deterministic matrix such that $\|v\|_{\F} =1$. We will show that $\|V_{3,t}/rm\|_2 \leq C_{29}$. Firstly note that conditioning on all of the observed grids $X=\{X_{t,j}\}_{t= \eta_\ell+1,j=1}^{\eta_\ell+d, m}$, for any $t \in \{\eta_\ell+1, \ldots, \eta_\ell+d\}$,, we have that
    \begin{align*}
        & \Big\|\sum_{(j,k) \in \mathcal{O}}\big\{Y_{t,j}Y_{t,k}-\Phi_{\Tilde{r}}^\top(X_{t,j})C^*_{\Tilde{r},\eta_{\ell+1}}\Phi_{\Tilde{r}}(X_{t,k})\big\}\big\{\Phi_{\Tilde{r}}^\top(X_{t,j})v\Phi_{\Tilde{r}}(X_{t,k})\big\}\Big\|_{\psi_1}\\
        \leq \;& \sum_{(j,k) \in \mathcal{O}} \Big\|\big\{Y_{t,j}Y_{t,k}-\Phi_{\Tilde{r}}^\top(X_{t,j})C^*_{\Tilde{r},\eta_{\ell+1}}\Phi_{\Tilde{r}}(X_{t,k})\big\}\big\{\Phi_{\Tilde{r}}^\top(X_{t,j})v\Phi_{\Tilde{r}}(X_{t,k})\big\}\Big\|_{\psi_1}\\
        \leq\;& C_{26}\Tilde{r}m,
    \end{align*}
    where $C_{26} >0$ is a constant depending on $C_f$ and $C_\varepsilon$ defined in \Cref{a_model}, and the last inequality holds from the standard property of sub-Exponential random variables \citep[Lemma 2.7.7 in][]{vershynin2018high} and Cauchy-Schwarz inequality, i.e. 
    \begin{align*}
        \Big|\Phi_{\Tilde{r}}^\top(X_{t,j})v\Phi_{\Tilde{r}}(X_{t,k})\Big| & \leq \Big\{\sum_{1\leq a,b \leq \Tilde{r}} |(v)_{ab}|^2\Big\}^{1/2} \Big\{\sum_{1\leq a,b \leq \Tilde{r}} \|\phi_{a}\|^2_{\infty}\|\phi_{b}\|^2_{\infty}\Big\}^{1/2} \leq C_{27}\Tilde{r}.
    \end{align*}
    Therefore, by the sub-Exponential properties \citep[Proposition 2.7.1 in][]{vershynin2018high}, we have that for any $t \in \{\eta_\ell+1, \ldots, \eta_\ell+d\}$, conditioning on $X$
    \begin{align*}
        & \Big\|\sum_{(j,k) \in \mathcal{O}}\big\{Y_{t,j}Y_{t,k}-\Sigma^*_{\eta_{\ell+1}}(X_{t,j}, X_{t,k})\big\}\big\{\Phi_{\Tilde{r}}^\top(X_{t,j})v\Phi_{\Tilde{r}}(X_{t,k})\big\}\Big\|_{2} \\
        \leq\;& 2C_{28}\Big\|\sum_{(j,k) \in \mathcal{O}}\big\{Y_{t,j}Y_{t,k}-\Sigma^*_{\eta_{\ell+1}}(X_{t,j}, X_{t,k})\big\}\big\{\Phi_{\Tilde{r}}^\top(X_{t,j})v\Phi_{\Tilde{r}}(X_{t,k})\big\}\Big\|_{\psi_1} \leq C_{29}\Tilde{r}m,
    \end{align*}
    which gives $\|V_{3,t}/rm\|_2 \leq C_{29}$. Also, note that conditioning on $X$, the terms $\{V_{3,t}\}_{t=\eta_\ell+1}^{\eta_\ell+d}$ are mutually independent across $t$ with mean $0$, therefore applying \Cref{l_inequality_partial_sum} and taking another expectation with respect to $X$, we have that
    \begin{align*}
        |(V_{3})| =  O_p\Big(\sqrt{d\kappa_\ell^2}\Tilde{r}m\{\log(d\kappa_\ell^2)+1\}\Big).
    \end{align*}
    Thus, we have that 
    \begin{align} \notag
        |(V)| &\geq  d\kappa_{\ell}^2 m\zeta_{\delta} - O_p\Big(\sqrt{d\kappa_\ell^2}\{\Tilde{r}m+ \Tilde{r}^{2}m\}\{\log(d\kappa_\ell^2)+1\}\Big)\\ \label{t_inference_eq11}
        & \geq d\kappa_{\ell}^2m\zeta_{\delta} - O_p\Big(\sqrt{d\kappa_\ell^2}m\{\log(d\kappa_\ell^2)+1\}\Big).
    \end{align}
    Taking the results in \Cref{t_inference_eq7}, \eqref{t_inference_eq8}, \eqref{t_inference_eq9}, \eqref{t_inference_eq10} and \eqref{t_inference_eq11} into \Cref{t_inference_eq6}, we have that 
    \begin{align*}
        d\kappa_{\ell}^2 m\zeta_{\delta} \leq O_p\Big(\sqrt{d\kappa_\ell^2}m\{\log(d\kappa_\ell^2)+1\}\Big) + o_p(d\kappa_\ell^2m) +o_p(\sqrt{d\kappa_\ell^2}m) + o_p(m),
    \end{align*}
    which implies that 
    \begin{align} \label{t_inference_eq12}
        d\kappa_{\ell}^2 = O_p(1).
    \end{align}
    
    \noindent \textbf{Limiting distributions.}
    For any $\ell \in \{1, \ldots, K\}$, given the endpoints $s_\ell$ and $e_\ell$ of the interval in local refinement  and the coefficients after projecting the true covariance functions to the selected $\Tilde{r}$ basis before and after the true change point $\eta_\ell$,  $C^*_{\Tilde{r},\eta_{\ell}}$ and $C^*_{\Tilde{r},\eta_{\ell+1}}$,  we define the function $\mathcal{L}^*_\ell(\eta)$ as 
    \begin{align*} 
        \mathcal{L}^*_\ell(\eta) =\;& \sum_{t = s_\ell+1}^{\eta} \sum_{(j,k) \in \mathcal{O}} \big\{Y_{t,j}Y_{t,k} -\Phi_{\Tilde{r}}^\top(X_{t,j})C^*_{\Tilde{r},\eta_{\ell}}\Phi_{\Tilde{r}}(X_{t,k})\big\}^2\\
        &+ \sum_{t = \eta+1}^{e_\ell}\sum_{(j,k) \in \mathcal{O}} \big\{Y_{t,j}Y_{t,k} -\Phi_{\Tilde{r}}^\top(X_{t,j})C^*_{\Tilde{r},\eta_{\ell+1}}\Phi_{\Tilde{r}}(X_{t,k})\big\}^2.
    \end{align*}
    Note that in the case when $d>0$,
    \begin{align*}
        (V)&= \sum_{t = \eta_\ell+1}^{\eta_\ell+d}\sum_{(j,k) \in \mathcal{O}} \Big\{\big\{Y_{t,j}Y_{t,k} -\Phi_r^\top(X_{t,j})C^*_{\Tilde{r},\eta_{\ell}}\Phi_r(X_{t,k})\big\}^2 -\big\{Y_{t,j}Y_{t,k} -\Phi_{\Tilde{r}}^\top(X_{t,j})C^*_{\Tilde{r},\eta_{\ell+1}}\Phi_{\Tilde{r}}(X_{t,k})\big\}^2 \Big\}\\
        & = \mathcal{L}^*_\ell(\eta_\ell+d) - \mathcal{L}^*_\ell(\eta_\ell).
    \end{align*}
    Since $m \asymp 1$ and by the uniform tightness property in \Cref{t_inference_eq12}, we have that when $n \rightarrow \infty$, 
    \begin{align*}
        |\mathcal{L}(\eta_\ell+d)-\mathcal{L}(\eta_\ell)-\{\mathcal{L}^*(\eta_\ell+d) - \mathcal{L}^*(\eta_\ell)\}|\leq |(I)|+|(II)|+|(III)|+|(IV)| = o_p(1).
    \end{align*}
    A similar result when $d <0$ can also be obtained. Therefore, it suffices to find the limiting distribution of $\mathcal{L}^*(\eta_\ell+d) - \mathcal{L}^*(\eta_\ell)$. 

    \noindent \textbf{Non-vanishing regime.} Note that it holds from \Cref{a_inference} that 
    \begin{align*}
        \Upsilon_\ell(\cdot, \cdot) = \big\{\Sigma^*_{\eta_{\ell+1}}(\cdot, \cdot)-\Sigma^*_{\eta_{\ell}}(\cdot, \cdot)\big\}/\kappa_\ell = \big\{\Phi_{\Tilde{r}}^\top (\cdot)(C^*_{\Tilde{r},\eta_{\ell+1}} - C^*_{\Tilde{r},\eta_{\ell}})\Phi_{\Tilde{r}}(\cdot)\big\}/\kappa_\ell.
    \end{align*}
    Observe that when $d >0$, we have that 
    \begin{align*}
        &\mathcal{L}^*(\eta_\ell+d) - \mathcal{L}^*(\eta_\ell)\\
        =\;& \sum_{t = \eta_\ell+1}^{\eta_\ell+d}\sum_{(j,k) \in \mathcal{O}}\Big\{\big\{Y_{t,j}Y_{t,k} -\Phi_{\Tilde{r}}^\top(X_{t,j})C^*_{\Tilde{r},\eta_{\ell}}\Phi_{\Tilde{r}}(X_{t,k})\big\}^2 -\big\{Y_{t,j}Y_{t,k} -\Phi_{\Tilde{r}}^\top(X_{t,j})C^*_{\Tilde{r},\eta_{\ell+1}}\Phi_{\Tilde{r}}(X_{t,k})\big\}^2 \Big\}\\
        =\;& 2\sum_{t = \eta_\ell+1}^{\eta_\ell+d} \sum_{(j,k) \in \mathcal{O}}\big\{Y_{t,j}Y_{t,k}-\Phi_{\Tilde{r}}^\top(X_{t,j})C^*_{\Tilde{r},\eta_{\ell+1}}\Phi_{\Tilde{r}}(X_{t,k})\big\}\big\{\Phi_{\Tilde{r}}^\top(X_{t,j})(C^*_{\Tilde{r},\eta_{\ell+1}}-C^*_{\Tilde{r},\eta_\ell})\Phi_{\Tilde{r}}(X_{t,k})\big\}\\
        & + \sum_{t = \eta_\ell+1}^{\eta_\ell+d}\sum_{(j,k) \in \mathcal{O}} \big\{\Phi_{\Tilde{r}}^\top(X_{t,j})(C^*_{\Tilde{r},\eta_{\ell+1}}-C^*_{\Tilde{r},\eta_{\ell}})\Phi_{\Tilde{r}}(X_{t,k})\big\}^2\\
        \stackrel{\mathcal{D}}{=} \;&\sum_{t = 1}^{d} \sum_{(j,k) \in \mathcal{O}} \Big\{2\varrho_\ell \big\{Y_{t,j}^{(\ell+1)}Y_{t,k}^{(\ell+1)}-\Phi_{\Tilde{r}}^\top(X_{t,j})C^*_{\Tilde{r},\eta_{\ell+1}}\Phi_{\Tilde{r}}(X_{t,k})\big\}\Upsilon_\ell(X_{t,j}, X_{t,k})+ \varrho_\ell^2\Upsilon_\ell^2(X_{t,j}, X_{t,k}) \Big\}\\
        \stackrel{\mathcal{D}}{\rightarrow} \;&\sum_{t=1}^{d} \sum_{(j,k) \in \mathcal{O}} \big\{2\varrho_\ell\epsilon_{t,(j,k)}(\ell+1)\vartheta_{t,(j,k)}(\ell)+\varrho_\ell^2\vartheta_{t,(j,k)}^2(\ell)\big\}.
    \end{align*}
    For $d<0$, we have that
    \begin{align*}
        &\mathcal{L}^*(\eta_\ell+d) - \mathcal{L}^*(\eta_\ell)\\
        =\;& \sum_{t = \eta_\ell+d+1}^{\eta_\ell}\sum_{(j,k) \in \mathcal{O}}\Big\{\big\{Y_{t,j}Y_{t,k} -\Phi_{\Tilde{r}}^\top(X_{t,j})C^*_{\Tilde{r},\eta_{\ell+1}}\Phi_{\Tilde{r}}(X_{t,k})\big\}^2 -\big\{Y_{t,j}Y_{t,k} -\Phi_{\Tilde{r}}^\top(X_{t,j})C^*_{\Tilde{r},\eta_{\ell}}\Phi_{\Tilde{r}}(X_{t,k})\big\}^2 \Big\}\\
        =\;& -2\sum_{t = \eta_\ell+d+1}^{\eta_\ell} \sum_{(j,k) \in \mathcal{O}}\big\{Y_{t,j}Y_{t,k}-\Phi_{\Tilde{r}}^\top(X_{t,j})C^*_{\Tilde{r},\eta_{\ell}}\Phi_{\Tilde{r}}(X_{t,k})\big\}\big\{\Phi_{\Tilde{r}}^\top(X_{t,j})(C^*_{\Tilde{r},\eta_{\ell+1}}-C^*_{\Tilde{r},\eta_\ell})\Phi_{\Tilde{r}}(X_{t,k})\big\}\\
        & + \sum_{t = \eta_\ell+d+1}^{\eta_\ell}\sum_{(j,k) \in \mathcal{O}} \big\{\Phi_{\Tilde{r}}^\top(X_{t,j})(C^*_{\Tilde{r},\eta_{\ell+1}}-C^*_{\Tilde{r},\eta_{\ell}})\Phi_{\Tilde{r}}(X_{t,k})\big\}^2\\
        \stackrel{\mathcal{D}}{=}\;& \sum_{t = d}^{-1} \sum_{(j,k) \in \mathcal{O}} \Big\{-2\varrho_\ell \big\{Y_{t,j}^{(\ell)}Y_{t,k}^{(\ell)}-\Phi_{\Tilde{r}}^\top(X_{t,j})C^*_{\Tilde{r},\eta_{\ell}}\Phi_{\Tilde{r}}(X_{t,k})\big\}\Upsilon_\ell(X_{t,j}, X_{t,k}) + \varrho_\ell^2\Upsilon^2_\ell(X_{t,j}, X_{t,k})\Big\}\\
        \stackrel{\mathcal{D}}{\rightarrow}\;& \sum_{t = d}^{-1} \sum_{(j,k) \in \mathcal{O}} \big\{-2\varrho_\ell\epsilon_{t,(j,k)}(\ell)\vartheta_{t,(j,k)}(\ell)+\varrho_\ell^2\vartheta_{t,(j,k)}^2(\ell)\big\}.
    \end{align*}
    Denote $P_\ell(d)$ a two-sided random walk defined as
    \begin{align*}
        P_\ell(d) = \begin{cases}
           \sum_{t=1}^{d} \sum_{(j,k) \in \mathcal{O}} \big\{2\varrho_\ell\epsilon_{t,(j,k)}(\ell+1)\vartheta_{t,(j,k)}(\ell)+\varrho_\ell^2\vartheta_{t,(j,k)}^2(\ell)\big\}, & d>0,\\
           0, &d = 0,\\
           \sum_{t = d}^{-1} \sum_{(j,k) \in \mathcal{O}} \big\{-2\varrho_\ell\epsilon_{t,(j,k)}(\ell)\vartheta_{t,(j,k)}(\ell)+\varrho_\ell^2\vartheta_{t,(j,k)}^2(\ell)\big\}, &d<0.
        \end{cases}
    \end{align*}
    Note that the path of $P_\ell(d)$ is continuous, and the minimize of $P_k(d)$ is also unique by the fact that the path of $\underset{d \in \mathbb{Z}}{\min}P_\ell(d)$ is continuous, so the probability of it attaining two identical values is zero. Hence all regularity conditions of the Argmax continuous mapping theorem are satisfied. So it follows from Slutsky's theorem and the Argmax continuous mapping theorem \citep[Theorem 3.2.2 in][]{vanderVaart1996} that
    \begin{align*}
        \Tilde{\eta_\ell} - \eta_\ell \stackrel{\mathcal{D}}{\longrightarrow} \underset{d \in \mathbb{Z}}{\arg\min} \; P_\ell(d).
    \end{align*}

    \noindent \textbf{Vanishing regime.} Let $h = \kappa_\ell^{-2}$, and we have that $h \rightarrow \infty$ as $n \rightarrow \infty$. Observe that when $d >0$, we have that
    \begin{align*}
        &\mathcal{L}^*(\eta_\ell+dh) - \mathcal{L}^*(\eta_\ell)\\
        =\;& \sum_{t = \eta_\ell+1}^{\eta_\ell+dh}\sum_{(j,k) \in \mathcal{O}}\Big\{\big\{Y_{t,j}Y_{t,k} -\Phi_{\Tilde{r}}^\top(X_{t,j})C^*_{\Tilde{r},\eta_{\ell}}\Phi_{\Tilde{r}}(X_{t,k})\big\}^2 -\big\{Y_{t,j}Y_{t,k} -\Phi_{\Tilde{r}}^\top(X_{t,j})C^*_{\Tilde{r},\eta_{\ell+1}}\Phi_{\Tilde{r}}(X_{t,k})\big\}^2 \Big\}\\
        =\;& 2\sum_{t = \eta_\ell+1}^{\eta_\ell+dh} \sum_{(j,k) \in \mathcal{O}}\big\{Y_{t,j}Y_{t,k}-\Phi_{\Tilde{r}}^\top(X_{t,j})C^*_{\Tilde{r},\eta_{\ell+1}}\Phi_{\Tilde{r}}(X_{t,k})\big\}\big\{\Phi_{\Tilde{r}}^\top(X_{t,j})(C^*_{\Tilde{r},\eta_{\ell+1}}-C^*_{\Tilde{r},\eta_\ell})\Phi_{\Tilde{r}}(X_{t,k})\big\}\\
        & + \sum_{t = \eta_\ell+1}^{\eta_\ell+dh}\sum_{(j,k) \in \mathcal{O}} \big\{\Phi_{\Tilde{r}}^\top(X_{t,j})(C^*_{\Tilde{r},\eta_{\ell+1}}-C^*_{\Tilde{r},\eta_{\ell}})\Phi_{\Tilde{r}}(X_{t,k})\big\}^2\\
        =\;& \frac{1}{\sqrt{h}}\sum_{t = \eta_\ell+1}^{\eta_\ell+dh} \sum_{(j,k) \in \mathcal{O}} 2\big\{Y_{t,j}Y_{t,k}-\Phi_{\Tilde{r}}^\top(X_{t,j})C^*_{\Tilde{r},\eta_{\ell+1}}\Phi_{\Tilde{r}}(X_{t,k})\big\}\Upsilon_\ell(X_{t,j}, X_{t,k})\\
        &+ \frac{1}{h}\sum_{t = \eta_\ell+1}^{\eta_\ell+dh} \sum_{(j,k) \in \mathcal{O}}\Upsilon^2_\ell(X_{t,j}, X_{t,k}) - \mathbb{E}\big[\Upsilon^2_\ell(X_{t,j}, X_{t,k})\big]\Big\}+ d\sum_{(j,k) \in \mathcal{O}}\mathbb{E}\big[\Upsilon^2_\ell(X_{1,j}, X_{1,k})\big].
    \end{align*}
    For any $s \in \{\ell, \ell+1\}$, denote 
    \begin{align*}
        \sigma^2_\ell(s) = 4 \underset{n \rightarrow \infty}{\lim} \var\Big(\sum_{(j,k) \in \mathcal{O}} \big\{Y^{(s)}_{1,j}Y^{(s)}_{1,k}-\Phi_{\Tilde{r}}^\top(X_{1,j})C^*_{\Tilde{r},\eta_{s}}\Phi_{\Tilde{r}}(X_{1,k})\big\}\Upsilon_\ell(X_{1,j}, X_{1,k}) \Big).
    \end{align*}
    To verify $\sigma_\ell^2(s)$ is well defined, firstly note that 
    \begin{align} \notag
        & \var\Big(\sum_{(j,k) \in \mathcal{O}} \big\{Y^{(s)}_{1,j}Y^{(s)}_{1,k}-\Phi_{\Tilde{r}}^\top(X_{1,j})C^*_{\Tilde{r},\eta_{s}}\Phi_{\Tilde{r}}(X_{1,k})\big\}\Upsilon_\ell(X_{1,j}, X_{1,k}) \Big)\\ \notag
        =\;& \mathbb{E}\Big[\Big\{\sum_{(j,k) \in \mathcal{O}} \big\{Y^{(s)}_{1,j}Y^{(s)}_{1,k}-\Phi_{\Tilde{r}}^\top(X_{1,j})C^*_{\Tilde{r},\eta_{s}}\Phi_{\Tilde{r}}(X_{1,k})\big\}\Upsilon_\ell(X_{1,j}, X_{1,k})\Big\}^2\Big]\\ \notag
        \leq \;& m \sum_{(j,k) \in \mathcal{O}} \Big\|Y^{(s)}_{1,j}Y^{(s)}_{1,k}-\Phi_{\Tilde{r}}^\top(X_{1,j})C^*_{\Tilde{r},\eta_{s}}\Phi_{\Tilde{r}}(X_{1,k})\Upsilon_\ell(X_{1,j}, X_{1,k})\Big\|_2^2\\\label{t_inference_eq13}
        \leq \;& m \sum_{(j,k) \in \mathcal{O}}\Big\|Y^{(s)}_{1,j}Y^{(s)}_{1,k}-\Phi_{\Tilde{r}}^\top(X_{1,j})C^*_{\Tilde{r},\eta_{s}}\Phi_{\Tilde{r}}(X_{1,k})\Big\|_4^2\Big\|\Upsilon_\ell(X_{1,j}, X_{1,k})\Big\|_4^2,
    \end{align}
    where the first equality follows from the fact that
    \begin{align*}
        &\mathbb{E}\Big[\sum_{(j,k) \in \mathcal{O}} \big\{Y^{(s)}_{1,j}Y^{(s)}_{1,k}-\Phi_{\Tilde{r}}^\top(X_{1,j})C^*_{\Tilde{r},\eta_{s}}\Phi_{\Tilde{r}}(X_{1,k})\big\}\Upsilon_\ell(X_{1,j}, X_{1,k})\Big]\\
        =\;&\mathbb{E}_X\Bigg[\mathbb{E}_{f|X}\Big[\sum_{(j,k) \in \mathcal{O}} \big\{Y^{(s)}_{1,j}Y^{(s)}_{1,k}-\Phi_{\Tilde{r}}^\top(X_{1,j})C^*_{\Tilde{r},\eta_{s}}\Phi_{\Tilde{r}}(X_{1,k})\big\}\Upsilon_\ell(X_{1,j}, X_{1,k})\Big]\Bigg] = 0.
    \end{align*}
    Moreover, by Assumptions \ref{a_model}\ref{a_model_error}~and \ref{a_model}\ref{a_model_function}, following a similar argument as the one leads to \Cref{l_loss_short_eq3}, we could prove that 
    \begin{align} \label{t_inference_eq14}
        \Big\|Y^{(s)}_{1,j}Y^{(s)}_{1,k}-\Phi_{\Tilde{r}}^\top(X_{1,j})C^*_{\Tilde{r},\eta_{s}}\Phi_{\Tilde{r}}(X_{1,k})\Big\|_4^2 \leq C_{30}.
    \end{align}
    Also, 
    \begin{align} \label{t_inference_eq15}
        \Big\|\Upsilon_\ell(X_{1,j}, X_{1,k})\Big\|_4 &= \frac{1}{\kappa_\ell}\Big\|\Phi_{\Tilde{r}}^\top(X_{t,j})(C^*_{\Tilde{r},\eta_{\ell+1}}-C^*_{\Tilde{r},\eta_\ell})\Phi_{\Tilde{r}}(X_{t,k})\Big\|_4 \leq \underset{\|\nu\|_2=1}{\sup} \|\Phi_r^\top(X_{t,j})\odot \Phi^{\top}_r(X_{t,k})\nu\|_4 \leq C_{31},
    \end{align}
    where the last inequality holds since each entry of $\Phi_r$ is bounded. 
    Putting results in \Cref{t_inference_eq14} and \Cref{t_inference_eq15} into \Cref{t_inference_eq13}, we have that
    \begin{align*}
        \var\Big(\sum_{(j,k) \in \mathcal{O}} \big\{Y^{(s)}_{1,j}Y^{(s)}_{1,k}-\Phi_{\Tilde{r}}^\top(X_{1,j})C^*_{\Tilde{r},\eta_{s}}\Phi_{\Tilde{r}}(X_{1,k})\big\}\Upsilon_\ell(X_{1,j}, X_{1,k}) \Big) \leq C_{32}m^2,
    \end{align*}
    which suggest that $\sigma_\ell^2(s)< \infty$ for $s \in \{\ell, \ell+1\}$. Hence, by the Functional central limit theorem, we have that when $n \rightarrow \infty$, 
    \begin{align*}
        \frac{1}{h}\sum_{t = \eta_\ell+1}^{\eta_\ell+dh}\Bigg\{ \sum_{(j,k) \in \mathcal{O}}\Big\{\Upsilon^2_\ell(X_{t,j}, X_{t,k}) - \mathbb{E}\big[\Upsilon^2_\ell(X_{t,j}, X_{t,k})\big]\Big\}\Bigg\} \stackrel{p}{\rightarrow} 0,
    \end{align*}
    and
    \begin{align*}
        &\frac{1}{\sqrt{h}}\sum_{t = \eta_\ell+1}^{\eta_\ell+dh} \Bigg\{\sum_{(j,k) \in \mathcal{O}} 2\big\{Y_{t,j}Y_{t,k}-\Phi_{\Tilde{r}}^\top(X_{t,j})C^*_{\Tilde{r},\eta_{\ell+1}}\Phi_{\Tilde{r}}(X_{t,k})\big\}\Upsilon_\ell(X_{t,j}, X_{t,k}) \Bigg\}\\
        \stackrel{\mathcal{D}}{=} \;& \frac{1}{\sqrt{h}}\sum_{t = 1}^{dh} \Bigg\{\sum_{(j,k) \in \mathcal{O}} 2\big\{Y^{(\ell+1)}_{t,j}Y^{(\ell+1)}_{t,k}-\Phi_{\Tilde{r}}^\top(X_{t,j})C^*_{\Tilde{r},\eta_{\ell+1}}\Phi_{\Tilde{r}}(X_{t,k})\big\}\Upsilon_\ell(X_{t,j}, X_{t,k}) \Bigg\}\\
        \stackrel{\mathcal{D}}{\rightarrow}\;& \sigma_\ell(\ell+1)\mathbb{B}(d),
    \end{align*}
    where $\mathbb{B}(d)$ is a standard Brownian motion. Therefore, it holds that when $n \rightarrow \infty$, 
    \begin{align*}
        \mathcal{L}^*(\eta_\ell+dh) - \mathcal{L}^*(\eta_\ell) \stackrel{\mathcal{D}}{\rightarrow}\sigma_\ell(\ell+1)\mathbb{B}(d) + d\varpi_\ell.
    \end{align*}
    Similarly, if $d < 0$, we have that
    \begin{align*}
        &\mathcal{L}^*(\eta_\ell+dh) - \mathcal{L}^*(\eta_\ell)\\
        =\;& \sum_{t = \eta_\ell+dh+1}^{\eta_\ell}\sum_{(j,k) \in \mathcal{O}}\Big\{\big\{Y_{t,j}Y_{t,k} -\Phi_{\Tilde{r}}^\top(X_{t,j})C^*_{\Tilde{r},\eta_{\ell+1}}\Phi_{\Tilde{r}}(X_{t,k})\big\}^2 -\big\{Y_{t,j}Y_{t,k} -\Phi_{\Tilde{r}}^\top(X_{t,j})C^*_{\Tilde{r},\eta_{\ell}}\Phi_{\Tilde{r}}(X_{t,k})\big\}^2 \Big\}\\
        =\;& -\frac{1}{\sqrt{h}}\sum_{t = \eta_\ell+dh+1}^{\eta_\ell} \sum_{(j,k) \in \mathcal{O}} 2\big\{Y_{t,j}Y_{t,k}-\Phi_{\Tilde{r}}^\top(X_{t,j})C^*_{\Tilde{r},\eta_{\ell}}\Phi_{\Tilde{r}}(X_{t,k})\big\}\Upsilon_\ell(X_{t,j}, X_{t,k})\\
        & + \frac{1}{h}\sum_{t = \eta_\ell+dh+1}^{\eta_\ell} \sum_{(j,k) \in \mathcal{O}}\Big\{\Upsilon^2_\ell(X_{t,j}, X_{t,k}) - \mathbb{E}\big[\Upsilon^2_\ell(X_{t,j}, X_{t,k})\big]\Big\}- d\sum_{(j,k) \in \mathcal{O}}\mathbb{E}\big[\Upsilon^2_\ell(X_{1,j}, X_{1,k})\big]\\
        \stackrel{\mathcal{D}}{\rightarrow}\;& \sigma_\ell(\ell)\mathbb{B}(-d)-d\varpi_\ell.
    \end{align*}
    Note that in the vanishing regime, we have $\kappa_\ell^2 \rightarrow 0$ when $n \rightarrow \infty$. Therefore, it holds that
    \begin{align*}
        \sum_{(j,k) \in \mathcal{O}} \big\{Y^{(\ell)}_{1,j}Y^{(\ell)}_{1,k}-\Phi_{\Tilde{r}}^\top(X_{1,j})C^*_{\Tilde{r},\eta_{\ell}}\Phi_{\Tilde{r}}(X_{1,k})\big\}\Upsilon_\ell(X_{1,j}, X_{1,k})
    \end{align*}
    and 
    \begin{align*}
        \sum_{(j,k) \in \mathcal{O}} \big\{Y^{(\ell+1)}_{1,j}Y^{(\ell+1)}_{1,k}-\Phi_{\Tilde{r}}^\top(X_{1,j})C^*_{\Tilde{r},\eta_{\ell+1}}\Phi_{\Tilde{r}}(X_{1,k})\big\}\Upsilon_\ell(X_{1,j}, X_{1,k}).
    \end{align*}
    are equal in distribution. Thus, it holds that 
    \begin{align*}
        \sigma_\ell(\ell) =\sigma_\ell(\ell+1) = \sigma_\ell.
    \end{align*} 
    Moreover, since Brownian motion is a continuous process with a unique minimizer, all regularity conditions for the Argmax continuous mapping theorem are satisfied. Thus, the Slutsky's theorem and the Argmax continuous mapping theorem \citep[Theorem 3.2.2 in][]{vanderVaart1996} lead to
    \begin{align*}
        \kappa_\ell^2(\Tilde{\eta}_\ell - \eta_\ell) \stackrel{\mathcal{D}}{\longrightarrow} \underset{d \in \mathbb{Z}}{\arg\min} \; \mathcal{B}_\ell(d),
    \end{align*}
    where
    \begin{align*}
        \mathcal{B}_\ell(d) = \begin{cases}
        \sigma_\ell\mathbb{B}(d) + d\varpi_\ell, & d >0,\\
        0, &d =0,\\
        \sigma_\ell\mathbb{B}(-d)-d\varpi_\ell, & d<0 .
        \end{cases}
    \end{align*}
\end{proof}

\section{Proof of \texorpdfstring{\Cref{t_estimation}}{}}
\label{pf_estimation}
\begin{proof}[Proof of \Cref{t_estimation}]
    For any $\theta >0$ and a given integer $0<r<n$, consider the set of deterministic matrix 
    \begin{align} \label{t_estimation_eq_B}
        \mathcal{B}_{r,I}(\theta)=\{\Delta \in \mathbb{R}^{r \times r}: \|\Delta\|_{\F}=\theta, C^*_{r,I}+\Delta \in \mathfrak{C}(r), \; \text{for all} \;|I|\in [n] \;\text{such that}\; |I| \gtrsim \log(n)\},
    \end{align} 
    where $\mathfrak{C}(r)$ denotes the collection of $r \times r$ symmetric and positive semidefinite matrix $G$ with elements $g_{hl}$ such that $\Sigma_G = \sum_{1\leq h, l \leq r}g_{hl}\phi_{h}\phi_{l} \in \mathcal{C}$. 
    
    \noindent \textbf{Step 1.} Consider the matrix $\Delta \in \mathcal{B}_{r,I}(\theta)$, and a rate $\omega_{I,m}$ depending on both $n$ and $m$, we would like to find the rate of $\omega_{I,m}$ such that 
    \begin{align} \label{t_estimation_eq1}
        Q(C^*_{r,I} + \omega_{I,m}\Delta)- Q(C^*_{r,I}) > 0.
    \end{align}
    Suppose it is the case that $\|\widehat{C} - C^*_{r,I}\|_{\F} > \omega_{I,m}\theta$, then there exists a quantity $\alpha_{n,m} \in (0, 1)$ depending on $n,m$ such that $\|\widehat{C} - C^*_{r,I}\|_{\F} =\frac{\omega_{I,m}}{1-\alpha_{n,m}}\theta$, which further suggests that there exists a $\Delta \in \mathcal{B}_{r,I}(\theta)$ such that $\widehat{C}$ could be written as $\widehat{C} =C^*_{r,I} +\frac{\omega_{I,m}}{1-\alpha_{n,m}}\Delta$. By convexity of the function $Q(\cdot)$, it holds that
    \begin{align*}
        \alpha_{n,m}Q(C^*_{r,I}) +(1-\alpha_{n,m})Q(\widehat{C}) \geq Q(\alpha_{n,m}C^*_{r,I}+(1-\alpha_{n,m})\widehat{C})= Q(C^*_{r,I} + \omega_{I,m}\Delta) > Q(C^*_r),
    \end{align*}
    where the last inequality follows from \Cref{t_estimation_eq1}. Hence we have that $Q(\widehat{C}) > Q(C^*_{r,I})$, which is a contradiction to the fact that $\widehat{C}$ is the minimizer of $Q(\cdot)$. Therefore, if \Cref{t_estimation_eq1} holds, we could conclude that $\|C^*_{r,I}-\widehat{C}\|_{\F} \leq \omega_{I,m}\theta$. Using the above idea and rearrange \Cref{t_estimation_eq1}, it holds that
    \begin{align}
        \notag
        Q(C^*_r + \omega_{I,m}\Delta)- Q(C^*_r) \geq & -\frac{2\omega_{I,m}}{|I|\lfloor \frac{m}{2} \rfloor}\sum_{t\in I} \sum_{(j,k) \in \mathcal{O}}\big\{Y_{t,j}Y_{t,k}-\Phi_r^\top(X_{t,j})C^*_{r,I}\Phi_r(X_{t,k})\big\}\big\{\Phi_r^\top(X_{t,j})\Delta\Phi_r(X_{t,k})\big\}\\
        \notag
        &+\frac{\omega_{I,m}^2}{|I|\lfloor \frac{m}{2} \rfloor}\sum_{t\in I} \sum_{(j,k) \in \mathcal{O}} \big\{\Phi_r^\top(X_{t,j})\Delta\Phi_r(X_{t,k})\big\}^2\\
        \notag
        &- \frac{2\lambda \omega_{I,m}}{\sqrt{|I|\lfloor \frac{m}{2} \rfloor}} \Big\{|\mathrm{Tr}(C^*_{r,I}U\Delta W)| + |\mathrm{Tr}(C^*_{r,I}V\Delta V)|\Big\}\\
        \notag
        &- \frac{\lambda \omega_{I,m}^2}{\sqrt{|I|\lfloor \frac{m}{2} \rfloor}}  \Big\{|\mathrm{Tr}(\Delta U\Delta W)| + |\mathrm{Tr}(\Delta V\Delta V)|\Big\}\\
        \label{t_estimation_eq2}
        &= -2\omega_{I,m}(A) + \omega^2_{n,m}(B)- \lambda(C),
    \end{align}
    where $W,U,V \in \mathbb{R}^{r\times r}$ are matrices with entries $W_{kl} =\int_{0}^1 \phi_k^{(2)}(t)\phi_l^{(2)}(t)\,\mathrm{d}t$, $V_{kl} =\int_{0}^1 \phi_k^{(1)}(t)\phi_l^{(1)}(t)\,\mathrm{d}t$, and $U_{kl} =\int_{0}^1 \phi_k(t)\phi_l(t)\,\mathrm{d}t =\mathbbm{1}_{\{k=l\}}$ respectively, where $\mathbbm{1}$ is the indicator function. In the rest of the proof, we find an upper bound on each of the three terms individually.
    
    \noindent \textbf{Step 2 Upper bound on Term ($A$).} Note that
    \begin{align*}
        (A) =\;&\frac{1}{|I|\lfloor \frac{m}{2} \rfloor}\sum_{t \in I} \sum_{(j,k) \in \mathcal{O}}\big\{Y_{t,j}Y_{t,k}-\Phi_r^\top(X_{t,j})C^*_{r,I}\Phi_r(X_{t,k})\big\}\big\{\Phi_r^\top(X_{t,j})\Delta\Phi_r(X_{t,k})\big\} \\
        =\; & \frac{1}{|I|\lfloor \frac{m}{2} \rfloor}\sum_{t\in I} \sum_{(j,k) \in \mathcal{O}}\big\{Y_{t,j}Y_{t,k}-\Phi_r^\top(X_{t,j})C^*_{r,t}\Phi_r(X_{t,k})\big\}\big\{\Phi_r^\top(X_{t,j})\Delta\Phi_r(X_{t,k})\big\}\\
        & + \frac{1}{|I|\lfloor \frac{m}{2} \rfloor}\sum_{t\in I} \sum_{(j,k) \in \mathcal{O}}\big\{\Phi_r^\top(X_{t,j})(C^*_{r,t}-C^*_{r,I})\Phi_r(X_{t,k})\big\}\big\{\Phi_r^\top(X_{t,j})\Delta\Phi_r(X_{t,k})\big\}\\
        =\;& \frac{1}{|I|\lfloor \frac{m}{2} \rfloor}\sum_{t \in I}\sum_{(j,k) \in \mathcal{O}}\big\{Y_{t,j}Y_{t,k}-\Sigma_t^*(X_{t,j},X_{t,k})\big\}\big\{\Phi_r^\top(X_{t,j})\Delta\Phi_r(X_{t,k})\big\}\\
        &+\frac{1}{|I|\lfloor \frac{m}{2} \rfloor}\sum_{t \in I}\sum_{(j,k) \in \mathcal{O}}\big\{\Sigma_t^*(X_{t,j},X_{t,k})-\Phi_r^\top(X_{t,j})C^*_{r,t}\Phi_r(X_{t,k})\big\}\big\{\Phi_r^\top(X_{t,j})\Delta\Phi_r(X_{t,k})\big\}\\
        &+\frac{1}{|I|\lfloor \frac{m}{2} \rfloor}\sum_{t\in I} \sum_{(j,k) \in \mathcal{O}}\big\{\Phi_r^\top(X_{t,j})(C^*_{r,t}-C^*_{r,I})\Phi_r(X_{t,k})\big\}\big\{\Phi_r^\top(X_{t,j})\Delta\Phi_r(X_{t,k})\big\}\\
        =\; & (A_1) + (A_2) + (A_3).
    \end{align*}
    To analyze $(A_1)$, rewrite it as 
    \begin{align}
        \notag
        (A_1) &= \frac{1}{|I|\lfloor \frac{m}{2} \rfloor}\sum_{t \in I}\sum_{(j,k) \in \mathcal{O}}\{Y_{t,j}Y_{t,k}-\Sigma^*_t(X_{t,j},X_{t,k})\}\Phi_r^\top(X_{t,j})\odot \Phi^{\top}_r(X_{t,k})\; \mathrm{vec}(\Delta)\\
        \label{t_estimation_eq3}
        & \leq \Bigg\|\frac{1}{|I|\lfloor \frac{m}{2} \rfloor}\sum_{t \in I}\sum_{(j,k) \in \mathcal{O}}\big\{Y_{t,j}Y_{t,k}-\Sigma_t^*(X_{t,j},X_{t,k})\big\}\Phi_r^\top(X_{t,j})\odot \Phi^{\top}_r(X_{t,k})\Bigg\|_{\op} \|\Delta\|_{\F},
    \end{align}
    where the inequality holds from the definition of the operator norm. Hence, by \Cref{l_op1}, it holds with probability at least $1-n^{-3}$ that
    \begin{align} \label{t_estimation_eq4}
        (A_1) \leq C_1r^{2\alpha+1} \sqrt{\frac{\log(n)}{|I|}}\|\Delta\|_{\F} = C_1\theta r^{2\alpha+1} \sqrt{\frac{\log(n)}{|I|}}.
    \end{align}
    For $(A_2)$, following a similar argument as $(A_1)$, denote
    \begin{align*}
        S_{A_2} = \frac{1}{|I|\lfloor \frac{m}{2} \rfloor}\sum_{t \in I} \sum_{(j,k) \in \mathcal{O}}\{\Sigma^*_t(X_{t,j},X_{t,k})-\Phi_r^\top(X_{t,j})C^*_{r,t}\Phi_r(X_{t,k})\}\Phi_r^\top(X_{t,j})\odot \Phi^{\top}_r(X_{t,k}),
    \end{align*}
    and rewrite $(A_2)$ as
    \begin{align*}
        (A_2) &= \frac{1}{I||\lfloor \frac{m}{2} \rfloor}\sum_{t \in I}  \sum_{(j,k) \in \mathcal{O}}\{\Sigma^*_t(X_{t,j},X_{t,k})-\Phi_r^\top(X_{t,j})C^*_{r,t}\Phi_r(X_{t,k})\}\Phi_r^\top(X_{t,j})\odot \Phi^{\top}_r(X_{t,k})\;\mathrm{vec}(\Delta)\\
        & = (S_{A_2} - \mathbb{E}[S_{A_2}|a] +\mathbb{E}[S_{A_2}|a])\;\mathrm{vec}(\Delta)\\
        & \leq \big\|S_{A_2} - \mathbb{E}[S_{A_2}|a]\big\|_{\op}\|\Delta\|_{\F} + \mathbb{E}[S_{A_2}\mathrm{vec}(\Delta)|a]
    \end{align*}
    where the last inequality follows from the definition of the operator norm and fact that $\Delta \in \mathcal{B}_{r,I}(\theta)$ is deterministic, and $a = \{a_t\}_{t\in I}$ is the collection of all the fragment starting points inside the time interval $I$, thus bounding $(A_2)$ is equivalent to find upper bounds on each term individually. For the first term, by \Cref{l_op2}, it holds with probability at least $1-3n^{-3}$ that
    \begin{align} \label{t_estimation_eq5}
        \big\|S_{A_2} - \mathbb{E}[S_{A_2}|a]\big\|_{\op}\|\Delta\|_{\F} \leq C_2r^{4\alpha+2}\sqrt{\frac{\log(n)}{|I|m}} \|\Delta\|_{\F} = C_2\theta r^{4\alpha+2}\sqrt{\frac{\log(n)}{|I|m}}.
    \end{align}
    To find an upper bound on $\mathbb{E}[S_{A_2}\mathrm{vec}(\Delta)|a]$, note that
    \begin{align} \notag
        &\mathbb{E}\Big[\frac{1}{|I|\lfloor \frac{m}{2} \rfloor}\sum_{t \in I} \sum_{(j,k) \in \mathcal{O}}\big\{\Sigma_t^*(X_{t,j},X_{t,k})-\Phi_r^\top(X_{t,j})C^*_{r,t}\Phi_r(X_{t,k})\big\}\Phi_r^\top(X_{t,j})\Delta \Phi_r(X_{t,k})\Big|a\Big]\\ 
        \notag
        =\;& \frac{1}{|I|}\sum_{t \in I}\mathbb{E}\Big[\big\{\Sigma_t^*(X_{t,1},X_{t,2})-\Phi_r^\top(X_{t,1})C^*_{r,t}\Phi_r(X_{t,2})\big\}\Phi_r^\top(X_{t,1})\Delta \Phi_r(X_{t,2})\Big|a\Big]\\
        \notag
        \leq\;& \frac{1}{|I|}\sum_{t \in I}\mathbb{E}\Big[\big\{\Sigma_t^*(X_{t,1},X_{t,2})-\Phi_r^\top(X_{t,1})C^*_{r,t}\Phi_r(X_{t,2})\big\}^2\Big|a\Big]^{\frac{1}{2}}\mathbb{E}\Big[\big\{\Phi_r^\top(X_{t,1})\Delta \Phi_r(X_{t,2})\big\}^2\Big|a\Big]^{\frac{1}{2}}\\
        \notag
        =\;& \frac{1}{|I|\delta^2} \sum_{t \in I}\Big\{\Big(\int_{a_1}^{a_1+\delta}\int_{a_1}^{a_1+\delta}\big\{\Sigma_t^*(s,l)-\Phi_r^\top(s)C^*_{r,t}\Phi_r(l)\big\}^2 \,\mathrm{d}s\,\mathrm{d}l\Big)^{\frac{1}{2}}\Big(\int_{a_1}^{a_1+\delta}\int_{a_1}^{a_1+\delta}\big\{\Phi_r^\top(s)\Delta \Phi_r(l)\big\}^2 \,\mathrm{d}s\,\mathrm{d}l\Big)^{\frac{1}{2}}\Big\}\\
        \notag
        \leq\;& \frac{1}{|I|\delta^2}\sum_{t \in I} \Big(\int_0^1\int_0^1\big\{\Sigma_t^*(s,l)-\Phi_r^\top(s)C^*_{r,t}\Phi_r(l)\big\}^2 \,\mathrm{d}s\,\mathrm{d}l\Big)^{\frac{1}{2}}\Big(\int_0^1 \int_0^1\big\{\Phi_r^\top(s)\Delta \Phi_r(l)\big\}^2 \,\mathrm{d}s\,\mathrm{d}l\Big)^{\frac{1}{2}}\\
        \label{t_estimation_eq6}
        =\;& \frac{\sum_{t \in I} \rho_{r,t}}{|I|\delta^2}\|\Delta \|_{\F} =\frac{\theta\sum_{t \in I} \rho_{r,t}}{|I|\delta^2},
    \end{align}
    where the first equality follows from the fact that conditioning on $a$, under \Cref{a_model}\ref{a_model_grid}, all of the terms are independently and identically distributed across $(j,k) \in \mathcal{O}$ for any $t \in I$, the first inequality follows from the Cauchy-Schwarz inequality, the second equality follows from \Cref{a_model}\ref{a_model_grid}, and the third equality follows from the definition of $\rho_{r,t}$, and the fact that $\int_0^1 \int_0^1\big\{\Phi_r^\top(s)\Delta \Phi_r(t)\big\}^2 \,\mathrm{d}s\,\mathrm{d}t = \|\Delta\|_{\F}^2$ since $\Phi$ is a complete orthonormal system. Hence, combine the results in \Cref{t_estimation_eq5} and \Cref{t_estimation_eq6}, we conclude that with probability at least $1-3n^{-3}$ that
    \begin{align} \label{t_estimation_eq7}
        (A_2) \leq C_2\theta r^{4\alpha+2}\sqrt{\frac{\log(n)}{|I|m}} +\frac{\theta\sum_{t \in I} \rho_{r,t}}{|I|\delta^2}.
    \end{align}
    Next, we are ready to bound $(A_3)$. Using a similar argument as the one in \Cref{t_estimation_eq3}, bounding $(A_3)$ is equivalent to bounding the operator norm of $S_{A_3}$, where $S_{A_3}$ is defined as 
    \begin{align*}
        S_{A_3} = \frac{1}{|I|\lfloor \frac{m}{2} \rfloor}\sum_{t\in I}  \sum_{(j,k) \in \mathcal{O}}\big\{\Phi_r^\top(X_{t,j})(C^*_{r,t}-C^*_{r,I})\Phi_r(X_{t,k})\big\}\Phi_r^\top(X_{t,j})\odot \Phi^{\top}_r(X_{t,k}).
    \end{align*}
    Using a similar argument as the one in \Cref{l_op1_eq1}, we have that
    \begin{align*}
        \|S_{A_3}\|_{\op} &\leq \frac{r}{|I|\lfloor \frac{m}{2} \rfloor} \; \underset{1\leq h,l \leq r}{\max}\Bigg|\sum_{t\in I} \sum_{(j,k) \in \mathcal{O}}\big\{\Phi_r^\top(X_{t,j})(C^*_{r,t}-C^*_{r,I})\Phi_r(X_{t,k})\big\}\phi_h(X_{t,j})\phi_l(X_{t,k})\Bigg|\\
        &\leq \frac{r}{|I|\lfloor \frac{m}{2} \rfloor} \; \underset{1\leq h,l \leq r}{\max} (S_{A_3.1}).
    \end{align*}
    Note that, by the definition of $C^*_{r,I}$, we have that
    \begin{align*}
        & \mathbb{E}\Big[\sum_{t\in I} \sum_{(j,k) \in \mathcal{O}}\big\{\Phi_r^\top(X_{t,j})(C^*_{r,t}-C^*_{r,I})\Phi_r(X_{t,k})\big\}\phi_h(X_{t,j})\phi_l(X_{t,k})\Big|a\Big]\\
        =\;& \sum_{(j,k)\in \mathcal{O}}\sum_{1 \leq a,b \leq r}\mathbb{E}\Big[\sum_{t\in I}  (C^*_{r,t}-C^*_{r,I})_{ab}\phi_a(X_{t,j})\phi_b(X_{t,k})\phi_h(X_{t,j})\phi_l(X_{t,k})\Big|a\Big]\\
        =\;& \sum_{(j,k)\in \mathcal{O}}\sum_{1 \leq a,b \leq r}\Big\{\mathbb{E}\Big[\phi_a(X_{1,j})\phi_b(X_{1,k})\phi_h(X_{1,j})\phi_l(X_{1,k})\Big|a\Big] \Big(\sum_{t\in I}  (C^*_{r,t}-C^*_{r,I})_{ab}\Big)\Big\}=0,
    \end{align*}
    where the last inequality holds since $\{X_{t,j}\}_{t\in I, j \in [m]}$ are i.i.d. given $a$. Also, using a similar argument as the one in \Cref{l_op2_eq1} with the Cauchy-Schwarz inequality, we have that
    \begin{align*}
        &\Big|\sum_{1 \leq a,b \leq r}(C^*_{r,t}-C^*_{r,I})_{a,b}\phi_a(X_{t,j})\phi_b(X_{t,k})\phi_h(X_{t,j})\phi_l(X_{t,k})\Big|\\
        \leq\;& \Big\{\sum_{1\leq a,b \leq r} |(C^*_{r,t}-C^*_{r,I})_{a,b}|^2\Big\}^{1/2} \Big\{\sum_{1\leq a,b \leq r} \|\phi_a\|^2_{\infty}\|\phi_b\|^2_{\infty}\Big\}^{1/2}\Big|\phi_h(X_{t,j})\phi_l(X_{t,k})\Big|\\
        \leq\;& C_3\|\Sigma^*_t- \Sigma^*_I\|_{L^2} \; r^{2\alpha+1}(hl)^{\alpha} \leq C_4r^{2\alpha+1}(hl)^{\alpha},
    \end{align*}
    where the last inequality holds from the fact that $\Sigma^*_t, \Sigma^*_I \in L([0,1]^2)$. By Hoeffding’s inequality for general bounded random variables \citep[Theorem 2.2.6 in][]{vershynin2018high}, it holds that
    \begin{align*}
        \mathbb{P}\Big\{(S_{A_3.1}) \geq \tau\Big\}&= \mathbb{E}_a\Big[\mathbb{P}\big\{(S_{A_3.1}) \geq \tau |a\big\}\Big]\\
         & \leq \exp\Bigg\{-\frac{C_5\tau^2}{|I|m(hl)^{2\alpha}r^{4\alpha+2}}\Bigg\}.
    \end{align*}
    Hence, by taking two union bounds arguments individually on $(h,l)$ and the choice of the integer interval $I$, and pick
    \begin{align*}
        \tau = C_6\sqrt{|I|m\log(n \vee r)}r^{4\alpha+1},
    \end{align*}
    we have that with probability at least $1-3n^{-3}$ that for any integer interval $|I|$, 
    \begin{align*}
        \|S_{A_3}\|_{\op} \leq C_6r^{4\alpha+2}\sqrt{\frac{\log(n)}{|I|m}},
    \end{align*}
    therefore, it holds that 
    \begin{align} \label{t_estimation_eq8}
        (A_3) \leq C_6r^{4\alpha+2}\sqrt{\frac{\log(n)}{|I|m}}\|\Delta\|_{\F} =  C_6\theta r^{4\alpha+2}\sqrt{\frac{\log(n)}{|I|m}}.
    \end{align}
    Combine results in \Cref{t_estimation_eq4}, \Cref{t_estimation_eq7}, and \Cref{t_estimation_eq8}, we can conclude that with probability at least $1-3n^{-3}$ that
    \begin{align} \label{t_estimation_eq9}
        (A) \leq C_7\theta \Bigg\{r^{2\alpha+1} \sqrt{\frac{\log(n)}{|I|}} +r^{4\alpha+2}\sqrt{\frac{\log(n)}{|I|m}} +\frac{\sum_{t \in I} \rho_{r,t}}{|I|\delta^2}\Bigg\}.
    \end{align}

    \noindent \textbf{Step 3 Lower bound on Term ($B$).}
    Write $\Psi(X_{t,j}, X_{t,k}) = \Phi_r(X_{t,j}) \odot \Phi_r(X_{t,k})\in \mathbb{R}^{r^2}$ which is the column vector of $(\phi_h\phi_l)_{1\leq h,l\leq r}$ and define
    \begin{align*}
        S_{B} = \frac{1}{|I|\lfloor \frac{m}{2} \rfloor}\sum_{t \in I}\sum_{(j,k) \in \mathcal{O}} \Psi(X_{t,j},X_{t,k})\Psi^T(X_{t,j},X_{t,k}) \in \mathbb{R}^{r^2 \times r^2},
    \end{align*}
    then $(B)$ could be written as
    \begin{align*}
        (B)= \frac{1}{|I|\lfloor \frac{m}{2} \rfloor}\sum_{t \in I} \sum_{(j,k) \in \mathcal{O}} \{\Phi_r^\top(X_{t,j})\Delta\Phi_r(X_{t,k})\}^2 = \mathrm{vec}(\Delta)^{\top}S_{B}\;\mathrm{vec}(\Delta).
    \end{align*}
    Therefore, by \Cref{l_rec} and \Cref{l_op3}, it holds with probability at least $1-3n^{-3}$ that
    \begin{align} \notag
        (B)&= \mathrm{vec}(\Delta)^{\top}\mathbb{E}[S_{B}|a]\mathrm{vec}(\Delta)+\mathrm{vec}(\Delta)^{\top}(S_{B}-\mathbb{E}[S_{B}|a])\mathrm{vec}(\Delta)\\
        \notag
        &\geq \mathbb{E}[\mathrm{vec}(\Delta)^{\top}S_{B}\mathrm{vec}(\Delta)|a] - \Big|\mathrm{vec}(\Delta)^{\top}(S_{B}-\mathbb{E}[S_{B}|a])\mathrm{vec}(\Delta)\Big|\\
        \label{t_estimation_eq10}
        & \geq C_8\Big\{\zeta_{\delta}- r^{4\alpha+2}\sqrt{\frac{\log(n)}{|I|m}}\Big\}\|\Delta \|_{\F}^2 =C_8\Big\{\zeta_{\delta}- r^{4\alpha+2}\sqrt{\frac{\log(n)}{|I|m}}\Big\}\theta^2 ,
    \end{align}
    where the first inequality follows from the triangle inequality, and the fact that $\Delta$ is deterministic.
    
    \noindent \textbf{Step 4 Upper bound on Term ($C$).} Since by assumption on $\Phi$, we have that $\|\phi_k^{(d)}\|_{L^2} \leq C_{\phi}k^{\beta d}$ for $d \in \{1,2\}$, therefore it holds that $\|W\|^2_{\F} \leq C_9r^{8\beta+2}$, $\|V\|^2_{\F} \leq C_{10}r^{4\beta+2}$, and $\|U\|^2_{\F} \leq r$. Also, we have that
    \begin{align*}
        \|C^*_{r,I}\|_{\F} \leq \|\Sigma^*_{I}\|_{L^2} \leq C_{11},
    \end{align*}
    where the last inequality holds since
    \begin{align*}
        \|\Sigma^*_{I}\|_{L^2} \leq \frac{1}{|I|}\sum_{t \in I}\|\Sigma^*_t\|_{\F} \leq C_{11}.
    \end{align*}
    Therefore, we have that
    \begin{align*}
        &|\mathrm{Tr}(C^*_{r,I}U\Delta W)| \leq \sqrt{r}\|C^*_{r,I}U\Delta W\|_{\F} \leq \sqrt{r}\|C^*_{r,I}\|_{\F}\|U\|_{\F}\|\Delta \|_{\F}\|W\|_{\F} = C_{12}r^{4\beta+2}\|\Delta \|_{\F},\\
        &|\mathrm{Tr}(C^*_{r,I}V\Delta V)| \leq \sqrt{r} \|C^*_rV\Delta V\|_{\F} \leq \sqrt{r}\|C^*_{r,I}\|_{\F}\|V\|_{\F}\|\Delta \|_{\F}\|V\|_{\F} = C_{13}r^{4\beta+5/2}\|\Delta \|_{\F},\\
        &|\mathrm{Tr}(\Delta U\Delta W)| \leq \sqrt{r}\|\Delta U\Delta W\|_{\F} \leq \sqrt{r} \|\Delta\|_{\F}\|U\|_{\F}\|\Delta \|_{\F}\|W\|_{\F} = C_{14}r^{4\beta+2}\|\Delta \|_{\F}^2,\\
        & |\mathrm{Tr}(\Delta V\Delta V)| \leq \sqrt{r} \|\Delta V\Delta V\|_{\F}  \leq \sqrt{r} \|\Delta\|_{\F}\|V\|_{\F}\|\Delta \|_{\F}\|V\|_{\F}= C_{15}r^{4\beta+5/2}\|\Delta \|_{\F}^2,
    \end{align*}
    where the inequalities follow from the facts that for any matrix $A, B \in \mathbb{R}^{r\times r}$, it holds that $\mathrm{Tr}(A) \leq \sqrt{r}\|A\|_{\F}$, and $\|AB\|_{\F} \leq \|A\|_{\F}\|B\|_{\F}$. Hence,
    \begin{align} \label{t_estimation_eq11}
        (C) \leq \frac{C_{16}\{\omega_{I,m} \vee \omega_{I,m}^2\}r^{4\beta+5/2}}{\sqrt{|I|\lfloor \frac{m}{2} \rfloor}}\{\|\Delta \|_{\F} \vee \|\Delta \|_{\F}^2\} = \frac{C_{16}\{\omega_{I,m} \vee \omega_{I,m}^2\}r^{4\beta+5/2}}{\sqrt{|I|\lfloor \frac{m}{2} \rfloor}}\{\theta \vee \theta^2\}.
    \end{align}

    \noindent \textbf{Step 5.} Plugging the results of \Cref{t_estimation_eq9}, \Cref{t_estimation_eq10}, and \Cref{t_estimation_eq11} into \Cref{t_estimation_eq2}, we have that
    \begin{align*}
        &Q(C^*_{r,I} + \omega_{I,m}\Delta)- Q(C^*_{r,I})\\
        \geq& -2C_7\omega_{I,m}\theta\Big\{r^{2\alpha+1} \sqrt{\frac{\log(n)}{|I|}} \vee r^{4\alpha+2}\sqrt{\frac{\log(n)}{|I|m}} \vee \frac{\sum_{t \in I} \rho_{r,t}}{|I|\delta^2}\Big\}\\
        & + C_8\omega_{I,m}^2\theta^2\Big\{\zeta_{\delta}- r^{4\alpha+2}\sqrt{\frac{\log(n)}{|I|m}}\Big\}-\frac{C_{16}\lambda\{\omega_{I,m} \vee \omega_{I,m}^2\}\{\theta \vee \theta^2\}r^{4\beta+5/2}}{\sqrt{|I|\lfloor \frac{m}{2} \rfloor}}.
    \end{align*}
    Pick
    \begin{align*}
        \omega_{I,m} \asymp \frac{r^{2\alpha+1}}{\zeta_{\delta}}\sqrt{\frac{\log(n)}{|I|}}+\frac{r^{4\alpha+2}}{\zeta_{\delta}}\sqrt{\frac{\log(n)}{|I|m}}+\frac{\sum_{t \in I} \rho_{r,t}}{|I|\zeta_{\delta}\delta^2},
    \end{align*}
    and 
    \begin{align*}
        \lambda = C_{\lambda}\Big\{r^{2\alpha-4\beta-3/2}\sqrt{m\log(n)} \vee r^{4\alpha-4\beta-1/2}\sqrt{\log(n)} \vee \frac{r^{-4\beta-5/2}\sqrt{m}\sum_{t \in I}\rho_{r,t}}{\sqrt{|I|}\delta^2}\Big\},
    \end{align*}
    we have $Q(C^*_{r,I} + \omega_{I,m}\Delta)- Q(C^*_{r,I}) > 0$.
    
    \noindent \textbf{Step 6.} To further compute an upper bound on the estimation error of the covariance function, $\|\Phi_r^\top\widehat{C}_I\Phi_r - \Sigma^*_{I}\|_{L^2}$, we have that
    \begin{align*}
        \|\Phi_r^\top\widehat{C}_I\Phi_r - \Sigma^*_{I}\|_{L^2} &\leq \|\Phi_r^\top\widehat{C}_I\Phi_r - \Phi_r^\top C^*_{r,I}\Phi_r\|_{L^2} + \| \Phi_r^\top C^*_{r,I}\Phi_r - \Sigma^*_{I}\|_{L^2}\\
        & =  \|\widehat{C}_I - C^*_{r,I}\|_{\F}+\| \Phi_r^\top C^*_{r,I}\Phi_r - \Sigma^*_{I}\|_{L^2}\\
        & \lesssim \frac{r^{2\alpha+1}}{\zeta_{\delta}}\sqrt{\frac{\log(n)}{|I|}}+\frac{r^{4\alpha+2}}{\zeta_{\delta}}\sqrt{\frac{\log(n)}{|I|m}} + \frac{\sum_{t \in I} \rho_{r,t}}{|I|}\{1 \vee \frac{1}{\zeta_{\delta}\delta^2}\},
    \end{align*}
    where the equality follows from the fact that $\Phi_r$ are orthonormal hence $\int_{0}^1\int_{0}^1 \phi_h(s)\phi_l(t)\,\mathrm{d}s\,\mathrm{d}t = \mathbbm{1}_{\{h=l\}}$ fir $h,l \in [r]$, and $\mathbbm{1}$ is the indicator function, thus the result follows.
\end{proof}

\section{Technical Lemmas} \label{pf_technical}
We collect all the necessary technical Lemmas in this section. Deviation bounds used in the proof of Theorems \ref{t_localisation}, \ref{t_inference} and \ref{t_estimation} are collected in Appendix \ref{pf_TL_deviation}. Sub-Weibull properties used in the proof of \Cref{t_localisation} can be found in Appendix \ref{pf_TL_subWeibull}, and a maximal inequality used in the proof of \Cref{t_inference} are collected in Appendix \ref{pf_TL_maximal}.

\subsection{Deviation bounds} \label{pf_TL_deviation}
\begin{lemma} \label{l_op1}
    Under the condition of \Cref{t_estimation}, with a given integer $0 <r <n$, it holds with probability at least $1-3n^{-3}$ that for any consecutive interval $I \subseteq [n]$ such that $|I| \geq C_1\log(n)$,
    \begin{align*}
        \Bigg\|\frac{1}{|I|\lfloor \frac{m}{2} \rfloor}\sum_{t \in I}\sum_{(j,k) \in \mathcal{O}}\big\{Y_{t,j}Y_{t,k}-\Sigma_t^*(X_{t,j},X_{t,k})\big\}\Phi_r^\top(X_{t,j})\odot \Phi^{\top}_r(X_{t,k})\Bigg\|_{\op} \leq C_2r^{2\alpha+1} \sqrt{\frac{\log(n)}{|I|}},
    \end{align*}
    where $C_2 >0$ is a sufficiently large constant depending only on $C_f$ and $C_\varepsilon$.
\end{lemma}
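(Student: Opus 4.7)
The quantity inside the operator norm is a vector in $\mathbb{R}^{r^2}$, indexed by pairs $(h,l) \in [r]^2$, so its operator norm coincides with its Euclidean norm. The plan is to bound the $\ell_\infty$ norm entrywise via Bernstein's inequality and then pay a factor of $r$ to convert to $\ell_2$, using $\|v\|_2 \leq r\|v\|_\infty$ for $v \in \mathbb{R}^{r^2}$. So it suffices to show that, for every fixed $(h,l) \in [r]^2$ and every integer interval $I \subseteq [n]$ with $|I| \geq C_1 \log(n)$,
\begin{align*}
    \left|\frac{1}{|I|\lfloor m/2 \rfloor}\sum_{t \in I}\sum_{(j,k)\in\mathcal{O}}\{Y_{t,j}Y_{t,k}-\Sigma_t^*(X_{t,j},X_{t,k})\}\phi_h(X_{t,j})\phi_l(X_{t,k})\right| \lesssim r^{2\alpha}\sqrt{\frac{\log(n)}{|I|}}
\end{align*}
holds simultaneously with probability at least $1-3n^{-3}$; union bounding over the $r^2 \leq n^2$ pairs and the $O(n^2)$ intervals will absorb into the $\log(n)$ factor.

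For fixed $(h,l)$ and $t$, set $\xi_t := \lfloor m/2\rfloor^{-1}\sum_{(j,k)\in\mathcal{O}}\{Y_{t,j}Y_{t,k}-\Sigma_t^*(X_{t,j},X_{t,k})\}\phi_h(X_{t,j})\phi_l(X_{t,k})$. By the tower property, $\mathbb{E}[\xi_t]=0$ since the bracket has conditional mean zero given $X_{t,j},X_{t,k}$. I would verify the sub-exponential norm bound $\|\xi_t\|_{\psi_1} \lesssim r^{2\alpha}$ as follows: by Assumptions \ref{a_model}\ref{a_model_error}--\ref{a_model}\ref{a_model_function}, each $Y_{t,j}$ is sub-Gaussian with norm $O(1)$, hence $Y_{t,j}Y_{t,k}$ is sub-exponential with $\|\cdot\|_{\psi_1}=O(1)$; the centering $\Sigma_t^*(X_{t,j},X_{t,k})$ is bounded; and the analytic $(\alpha,\beta)$-basis hypothesis gives $|\phi_h(x)\phi_l(y)| \leq C_\phi^2 (hl)^\alpha \leq C_\phi^2 r^{2\alpha}$. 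Hence each summand in the definition of $\lfloor m/2\rfloor\,\xi_t$ has $\psi_1$-norm $\lesssim r^{2\alpha}$, so by the triangle inequality $\|\lfloor m/2\rfloor\,\xi_t\|_{\psi_1} \lesssim \lfloor m/2\rfloor r^{2\alpha}$ and therefore $\|\xi_t\|_{\psi_1} \lesssim r^{2\alpha}$.

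The crucial independence input: across $t\in I$ the random functions $f_t$, the fragment starting points $a_t$, the grids $\{X_{t,j}\}_j$, and the errors $\{\varepsilon_{t,j}\}_j$ are mutually independent by \Cref{a_model}, so $\{\xi_t\}_{t\in I}$ are independent, zero-mean, sub-exponential with $\psi_1$-norm $\lesssim r^{2\alpha}$. Bernstein's inequality \citep[e.g., Theorem 2.8.1 in][]{vershynin2018high} gives
\begin{align*}
    \mathbb{P}\!\left(\left|\tfrac{1}{|I|}\sum_{t\in I}\xi_t\right|\geq \tau\right) \leq 2\exp\!\left(-c\min\!\left(\tfrac{|I|\tau^2}{r^{4\alpha}},\tfrac{|I|\tau}{r^{2\alpha}}\right)\right).
\end{align*}
Choosing $\tau = C_3 r^{2\alpha}\sqrt{\log(n)/|I|}$ with $C_3$ sufficiently large and using $|I|\geq C_1\log(n)$ makes the sub-Gaussian branch active and controls the deviation by a polynomially small factor in $n$. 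A union bound over the at most $r^2\leq n^2$ choices of $(h,l)$ and the at most $n^2$ integer intervals $I$ yields the $1-3n^{-3}$ uniform bound on the $\ell_\infty$ norm, and multiplying by $r$ recovers the claimed $r^{2\alpha+1}\sqrt{\log(n)/|I|}$ bound on the operator norm.

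The main obstacle is essentially bookkeeping rather than a conceptual difficulty: one must carefully verify that the within-$t$ correlations across pairs $(j,k)\in\mathcal{O}$ (induced by the shared random function $f_t$) are harmless because we collapse them into a single sub-exponential variable $\xi_t$ via the triangle inequality in $\psi_1$, thereby losing no independence across $t$; and that the chosen $\tau$, after both union bounds, still yields the advertised failure probability $3n^{-3}$. No finer concentration (say, Hanson--Wright) is needed, since the target rate $r^{2\alpha+1}\sqrt{\log(n)/|I|}$ already absorbs the crude $\|v\|_2\leq r\|v\|_\infty$ step and does not exploit the factor of $m$ inside the double sum.
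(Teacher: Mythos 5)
Your proposal is correct and follows essentially the same route as the paper's proof: reduce the operator norm to an entrywise maximum at the cost of a factor $r$, collapse the within-$t$ sum over pairs into a single zero-mean sub-exponential variable of $\psi_1$-norm $\lesssim m(hl)^\alpha \lesssim m r^{2\alpha}$ via the triangle inequality, apply Bernstein across the independent $t\in I$, and union bound over $(h,l)$ and intervals, with $|I|\gtrsim \log(n)$ keeping the sub-Gaussian branch active. The only cosmetic difference is that the paper conditions on the grids $X$ before invoking Bernstein and then integrates out, whereas you argue unconditionally; both are valid here.
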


\begin{proof}
    Note that
    \begin{align}
        \notag
        &\Bigg\|\frac{1}{|I|\lfloor \frac{m}{2} \rfloor}\sum_{t \in I}\sum_{(j,k) \in \mathcal{O}}\big\{Y_{t,j}Y_{t,k}-\Sigma_t^*(X_{t,j},X_{t,k})\big\}\Phi_r^\top(X_{t,j})\odot \Phi^{\top}_r(X_{t,k})\Bigg\|_{\op}\\
        \notag
        =\;  &\frac{1}{|I|\lfloor \frac{m}{2} \rfloor} \sqrt{\sum_{1\leq h,l \leq r}\Bigg\{\sum_{t \in I} \sum_{(j,k) \in \mathcal{O}}\big\{Y_{t,j}Y_{t,k}-\Sigma_t^*(X_{t,j},X_{t,k})\big\}\phi_h(X_{t,j})\phi_l(X_{t,k})\Bigg\}^2}\\
        \label{l_op1_eq1}
        \leq \; &\frac{r}{|I|\lfloor \frac{m}{2} \rfloor}\; \underset{1\leq h,l \leq r}{\max} \Bigg|\sum_{t \in I} \sum_{(j,k) \in \mathcal{O}}\big\{Y_{t,j}Y_{t,k}-\Sigma_t^*(X_{t,j},X_{t,k})\big\}\phi_h(X_{t,j})\phi_l(X_{t,k})\Bigg|,
    \end{align}
    where the last inequality follows from the fact that for any matrix $A \in \mathbb{R}^{1\times r^2}$ $\|A\|_{2} \leq r\|A\|_{\infty}$. Under \Cref{a_model}, by the property of sub-Exponential random variables \citep[Lemma 2.7.7 in][]{vershynin2018high}, conditioning on the set of observation grids $X = \{X_{t,j}\}_{t\in I, j \in [m]}$, it holds that
    \begin{align}
        \notag
        &\Big\|\sum_{(j,k) \in \mathcal{O}} Y_{t,j}Y_{t,k}\phi_h(X_{t,j})\phi_l(X_{t,k})\Big\|_{\psi_1}\\
        \notag
        =\; & \Big\|\sum_{(j,k) \in \mathcal{O}}\big\{f_t(X_{t,j})f_t(X_{t,k})+ \varepsilon_{t,j}f_t(X_{t,k})+\varepsilon_{t,k}f_t(X_{t,j})+\varepsilon_{t,j}\varepsilon_{t,k}\big\}\phi_h(X_{t,j})\phi_l(X_{t,k})\Big\|_{\psi_1}\\
        \notag
        \leq & \; \sum_{(j,k) \in \mathcal{O}} \Big\|\big\{f_t(X_{t,j})f_t(X_{t,k})+ \varepsilon_{t,j}f_t(X_{t,k})+\varepsilon_{t,k}f_t(X_{t,j})+\varepsilon_{t,j}\varepsilon_{t,k}\big\}\phi_h(X_{t,j})\phi_l(X_{t,k})\Big\|_{\psi_1}\\
        \notag
        \leq & \; C_{\phi}^2C_f^2m(hl)^{\alpha}+ 2C_{\phi}^2C_\varepsilon C_f m(hl)^\alpha +C_{\phi}^2C_{\varepsilon}^2m(hl)^\alpha\\
        \label{l_op1_eq2}
        \leq & \; C_1m(hl)^\alpha,
    \end{align}
    where the first inequality follows from the triangle inequality. Also, note that for any $t \in I$, 
    \begin{align}
        \notag
        &\mathbb{E}\Big[\sum_{(j,k) \in \mathcal{O}}\big\{Y_{t,j}Y_{t,k}-\Sigma^*(X_{t,j},X_{t,k})\big\}\phi_h(X_{t,j})\phi_l(X_{t,k})\Big|X\Big]\\
        \notag
        = \; &  \sum_{(j,k) \in \mathcal{O}} \mathbb{E}\Big[\big\{f_t(X_{t,j})f_t(X_{t,k})-\Sigma^*(X_{t,j},X_{t,k})\big\}\phi_h(X_{t,j})\phi_l(X_{t,k})\Big|X\Big]\\
        \notag
        &+\sum_{(j,k) \in \mathcal{O}}\mathbb{E}\Big[\big\{\varepsilon_{t,j}f_t(X_{t,k})+\varepsilon_{t,k}f_t(X_{t,j})+\varepsilon_{t,j}\varepsilon_{t,k}\big\}\phi_h(X_{t,j})\phi_l(X_{t,k})\Big|X\Big]\\
        \label{l_op1_eq3}
        =\; & 0,
    \end{align}
    where the last equality follows from the independence between $\{\varepsilon_{t,j}\}_{t \in I, j \in [m]}$, $\{f_t\}_{t\in I}$ and $X$, and the fact that $\mathbb{E}[\epsilon_{t,j}|X]=\mathbb{E}[\epsilon_{t,k}|X] =0$, and $\mathbb{E}[f_t(X_{t,j})f_t(X_{t,k})|X] = \Sigma^*_t(X_{t,j}, X_{t,k})$. Hence, using \Cref{l_op1_eq2} and \Cref{l_op1_eq3}, by the Bernstein inequality for the sum of independent sub-Exponential random variables \citep[Theorem 2.8.1 in][]{vershynin2018high}, we have that for any $\tau > 0$,  
    \begin{align*}
        \mathbb{P}\Bigg\{\Big|\sum_{t \in I} \sum_{(j,k) \in \mathcal{O}}\big\{Y_{t,j}Y_{t,k}-\Sigma_t^*&(X_{t,j},X_{t,k})\big\}\phi_h(X_{t,j})\phi_l(X_{t,k})\Big| \geq \tau \Bigg| X\Bigg\}\\
        &\leq 2\exp\Bigg(-C_2 \min\Big\{\frac{\tau^2}{|I|m^2(hl)^{2\alpha}}, \frac{\tau}{m(hl)^{\alpha}}\Big\}\Bigg).
    \end{align*}
    Consequently, we have that
    \begin{align} \notag
         &\mathbb{P}\Bigg\{\Big|\sum_{t \in I} \sum_{(j,k) \in \mathcal{O}}\big\{Y_{t,j}Y_{t,k}-\Sigma_t^*(X_{t,j},X_{t,k})\big\}\phi_h(X_{t,j})\phi_l(X_{t,k})\Big| \geq \tau \Bigg\}\\
         \label{l_op1_eq4}
         =\;& \mathbb{E}\Bigg[\mathbb{P}\Bigg\{\Big|\sum_{t \in I} \sum_{(j,k) \in \mathcal{O}}\big\{Y_{t,j}Y_{t,k}-\Sigma_t^*(X_{t,j},X_{t,k})\big\}\phi_h(X_{t,j})\phi_l(X_{t,k})\Big| \geq \tau \Bigg| X\Bigg\}\Bigg]\\
         \notag
         \leq \; & 2\exp\Bigg(-C_2 \min\Big\{\frac{\tau^2}{|I|m^2(hl)^{2\alpha}}, \frac{\tau}{m(hl)^{\alpha}}\Big\}\Bigg).
    \end{align}
    Applying two union-bound arguments on $(h,l)$ and the choices of the interval $I$ respectively, and pick 
    \begin{align*} 
         \tau = C_3\Big\{r^{2\alpha}m\sqrt{|I|\log(n\vee r)} \vee r^{2\alpha}m\log(n\vee r)\Big\},
    \end{align*}
    we have that conditioning on $X$, with probability at least $1-3n^{-3}$ that for any chosen interval $I \subseteq [n]$, 
    \begin{align} \notag
        &\Bigg\|\frac{1}{|I|\lfloor \frac{m}{2} \rfloor}\sum_{t \in I}\sum_{(j,k) \in \mathcal{O}}\big\{Y_{t,j}Y_{t,k}-\Sigma_t^*(X_{t,j},X_{t,k})\big\}\Phi_r^\top(X_{t,j})\odot \Phi^{\top}_r(X_{t,k})\Bigg\|_{\op}\\ \label{l_op1_eq5}
        \leq \; &\frac{C_3 r}{|I|\lfloor \frac{m}{2} \rfloor}\Big\{r^{2\alpha}m\sqrt{|I|\log(n\vee r)} \vee r^{2\alpha}m\log(n\vee r)\Big\}\\ \notag
        \leq \; & C_4r^{2\alpha+1} \sqrt{\frac{\log(n)}{|I|}},
    \end{align}
    where the last inequality follows from the fact that $|I|\gtrsim \log(n)$.
\end{proof}

\begin{lemma} \label{l_op2}
    Denote 
    \begin{align*}
        S_1 = \frac{1}{|I|\lfloor \frac{m}{2} \rfloor}\sum_{t \in I} \sum_{(j,k) \in \mathcal{O}}\{\Sigma^*_t(X_{t,j},X_{t,k})-\Phi_r^\top(X_{t,j})C^*_{r,t}\Phi_r(X_{t,k})\}\Phi_r^\top(X_{t,j})\odot \Phi^{\top}_r(X_{t,k}).
    \end{align*}
    Under the condition of \Cref{t_estimation}, with a given integer $0 < r< n$, it holds with probability at least $1-3n^{-3}$ that for any integer interval $I \subseteq [n]$,
    \begin{align*}
        \big\|S_1- \mathbb{E}[S_1|a]\big\|_{\op} \leq Cr^{4\alpha+2}\sqrt{\frac{\log(n)}{|I|m}},
    \end{align*}
    for a sufficiently large constant $C >0$, and $a = \{a_t\}_{t\in I}$ is the collection of the fragment starting points inside the time interval $I$.
\end{lemma}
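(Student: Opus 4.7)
The plan is to mirror the coordinate-wise strategy used in \Cref{l_op1}, but with two key simplifications: the summands here involve only the observation grids (no measurement errors or random functions once we condition on the fragment starting points $a$), and, being deterministic functions of bounded basis elements, they are uniformly bounded. Hence a plain Hoeffding bound replaces the sub-Exponential Bernstein bound.

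First, I would reduce to a scalar problem by the same inequality used in \eqref{l_op1_eq1}, namely $\|v\|_2 \le r\|v\|_\infty$ for $v \in \mathbb{R}^{r^2}$, so that
\begin{align*}
    \|S_1-\mathbb{E}[S_1|a]\|_{\op} \le \frac{r}{|I|\lfloor m/2\rfloor}\; \max_{1\le h,l\le r}\;\Bigl|\sum_{t\in I}\sum_{(j,k)\in\mathcal{O}} Z_{t,j,k}^{(h,l)} - \mathbb{E}[Z_{t,j,k}^{(h,l)}|a]\Bigr|,
\end{align*}
where $Z_{t,j,k}^{(h,l)} = \{\Sigma^*_t(X_{t,j},X_{t,k})-\Phi_r^\top(X_{t,j})C^*_{r,t}\Phi_r(X_{t,k})\}\phi_h(X_{t,j})\phi_l(X_{t,k})$. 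Crucially, conditioning on $a$, the pairs $\{(X_{t,2i-1},X_{t,2i}):t\in I, i\in[\lfloor m/2\rfloor]\}$ are mutually independent (the pairs in $\mathcal{O}$ use disjoint indices), so the summands are mutually independent with mean zero after recentering.

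Second, I would derive a uniform bound on $|Z_{t,j,k}^{(h,l)}|$. Since $\Sigma^*_t$ is continuous on $[0,1]^2$ by \Cref{a_model}\ref{a_model_identifiable}, $\|\Sigma^*_t\|_\infty$ is absolutely bounded. For the projection term, Cauchy--Schwarz together with $\|C^*_{r,t}\|_\F\le\|\Sigma^*_t\|_{L^2}\lesssim 1$ and the $(\alpha,\beta)$-basis property $\|\phi_k\|_\infty\le C_\phi k^\alpha$ yields
\begin{align*}
    |\Phi_r^\top(x)C^*_{r,t}\Phi_r(y)|
    \;\le\; \|C^*_{r,t}\|_\F \Bigl(\sum_{1\le a,b\le r}\phi_a^2(x)\phi_b^2(y)\Bigr)^{1/2}
    \;\lesssim\; r^{2\alpha+1},
\end{align*}
so $|Z_{t,j,k}^{(h,l)}|\lesssim r^{2\alpha+1}\cdot (hl)^\alpha \lesssim r^{4\alpha+1}$ uniformly in $(h,l)$, $t$, $(j,k)$ and in the realization of $X$.

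Third, I would apply Hoeffding's inequality for bounded, independent, mean-zero random variables \citep[Theorem 2.2.6 in][]{vershynin2018high} conditional on $a$ to the sum of $|I|\lfloor m/2\rfloor$ centered terms, giving
\begin{align*}
    \mathbb{P}\Bigl\{\Bigl|\sum_{t,(j,k)} Z_{t,j,k}^{(h,l)}-\mathbb{E}[Z_{t,j,k}^{(h,l)}|a]\Bigr|\ge\tau \;\Big|\; a\Bigr\}
    \;\le\; 2\exp\Bigl(-\frac{C\tau^2}{|I|\,m\,r^{8\alpha+2}}\Bigr).
\end{align*}
Taking $\tau\asymp r^{4\alpha+1}\sqrt{|I|m\log(n)}$ and performing a union bound over the $r^2$ pairs $(h,l)$ and over the $O(n^2)$ choices of intervals $I\subseteq[n]$, and then taking expectation with respect to $a$ as in \eqref{l_op1_eq4}, gives the claimed event with probability at least $1-3n^{-3}$. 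Plugging this into the scalarization bound above,
\begin{align*}
    \|S_1-\mathbb{E}[S_1|a]\|_{\op}
    \;\lesssim\; \frac{r\cdot r^{4\alpha+1}\sqrt{|I|m\log(n)}}{|I|m}
    \;=\; r^{4\alpha+2}\sqrt{\frac{\log(n)}{|I|m}},
\end{align*}
as required. I do not expect any serious obstacle: the only mildly delicate point is the uniform sup-norm control of $\Phi_r^\top C^*_{r,t}\Phi_r$ via the $(\alpha,\beta)$-basis framework, which is necessary so that Hoeffding yields a factor $\sqrt{\log(n)}$ rather than a $\log(n)$, and so that the $\sqrt{m}$ in the denominator of the final rate appears correctly from the fact that there are $|I|\lfloor m/2\rfloor$ (rather than merely $|I|$) independent pairs to average over.
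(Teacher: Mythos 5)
Your proposal is correct and follows essentially the same route as the paper: scalarise the operator norm via $\|v\|_2\le r\|v\|_\infty$, bound the summands uniformly by $r^{2\alpha+1}(hl)^\alpha$ using Cauchy--Schwarz and the $(\alpha,\beta)$-basis property, invoke conditional independence given $a$, apply Hoeffding with the same choice of $\tau$, and finish with union bounds over $(h,l)$ and intervals $I$ before de-conditioning. No gaps.
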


\begin{proof}
    Following a similar argument as the one in \Cref{l_op1_eq1}, it holds that
    \begin{align*}
        & \big\|S_1- \mathbb{E}[S_1|a]\big\|_{\op}\\
        \leq \; & \frac{r}{|I|\lfloor \frac{m}{2} \rfloor} \; \underset{1\leq h,l \leq r}{\max}\Bigg|\sum_{t \in I} \sum_{(j,k) \in \mathcal{O}}\big\{\Sigma_t^*(X_{t,j},X_{t,k})-\Phi_r^\top(X_{t,j})C^*_{r,t}\Phi_r(X_{t,k})\big\}\phi_h(X_{t,j})\phi_l(X_{t,k})\\
        &\hspace{6cm}-\mathbb{E}\Big[\big\{\Sigma_t^*(X_{t,j},X_{t,k})-\Phi_r^\top(X_{t,j})C^*_{r,t}\Phi_r(X_{t,k})\big\}\phi_h(X_{t,j})\phi_l(X_{t,k})\Big|a\Big]\Bigg|\\
       = \; & \frac{r}{|I|\lfloor \frac{m}{2} \rfloor} \; \underset{1\leq h,l \leq r}{\max}(S_{1.1}).
    \end{align*}
    By the Cauchy-Schwarz inequality, it holds that
    \begin{align} \notag
        \big|\Phi_r^\top(X_{t,j})C^*_{r,t}\Phi_r(X_{t,k})\big| &= \Big|\sum_{1\leq h',l' \leq r} (C^*_{r,t})_{h'l'} \phi_{h'}(X_{t,j})\phi_{l'}(X_{t,k})\Big| \\
        \notag
        & \leq \Big\{\sum_{1\leq h',l' \leq r} |(C^*_{r,t})_{h'l'}|^2\Big\}^{1/2} \Big\{\sum_{1\leq h',l'\leq r} \|\phi_{h'}\|^2_{\infty}\|\phi_{l'}\|^2_{\infty}\Big\}^{1/2}\\
        \label{l_op2_eq1}
        & \leq \|\Sigma^*_t\|_{L^2} \; r^{2\alpha+1} \leq C_1r^{2\alpha+1},
    \end{align}
    where the last inequality holds since $\Sigma^*_t 
    \in L([0,1]^2)$. Hence we have that
    \begin{align*}
        \Big|\big\{\Sigma_t^*(X_{t,j},X_{t,k})-\Phi_r^\top(X_{t,j})C^*_{r,t}\Phi_r(X_{t,k})\big\}\phi_h(X_{t,j})\phi_l(X_{t,k})\Big| \leq C_2r^{2\alpha+1}(hl)^{\alpha},
    \end{align*}
    where the last inequality holds since by \Cref{a_model}\ref{a_model_identifiable}, $\Sigma_t^*$ is continuous over a compact interval $[0,1]^2$, hence bounded. Moreover, by \Cref{a_model}\ref{a_model_grid}, conditioning on $a$, all of the terms inside the double summation of $(S_{1.1})$ are independent. By Hoeffding’s inequality for general bounded random variables \citep[Theorem 2.2.6 in][]{vershynin2018high}, we have that for any $\tau >0$,
    \begin{align*}
        \mathbb{P}\Big\{(S_{1.1}) \geq \tau\Big\}&= \mathbb{E}\Big[\mathbb{P}\big\{(S_{1.1}) \geq \tau |a\big\}\Big] \leq \exp\Bigg(-\frac{C_3\tau^2}{|I|m(hl)^{2\alpha}r^{4\alpha+2}}\Bigg).
    \end{align*}
    The final result follows by taking two union bounds arguments individually on $(h,l)$ and the choice of the integer interval $I$, and pick
    \begin{align*}
        \tau = C_4\sqrt{|I|m\log(n \vee r)}r^{4\alpha+1}.
    \end{align*}  
\end{proof}

\begin{lemma} \label{l_op3}
    Write $\Psi(X_{t,j}, X_{t,k}) = \Phi_r(X_{t,j}) \odot \Phi_r(X_{t,k})\in \mathbb{R}^{r^2}$, and define 
    \begin{align*}
        S_2 = \frac{1}{|I|\lfloor \frac{m}{2} \rfloor}\sum_{t \in I}\sum_{(j,k) \in \mathcal{O}} \Psi(X_{t,j},X_{t,k})\Psi^T(X_{t,j},X_{t,k}) \in \mathbb{R}^{r^2 \times r^2}.
    \end{align*}
   Let $0 <r <n$ be a given integer and $v \in \mathbb{R}^{r^2}$ be any deterministic vector satisfying $\|v\|_2 =1$. Under the condition of \Cref{t_estimation}, it holds with probability at least $1-3n^{-3}$ that for any integer interval $I \subseteq [n]$,
    \begin{align*}
        \Big|v^\top S_2 v - v^\top\mathbb{E}[S_2|a]v \Big| \leq Cr^{4\alpha+2}\sqrt{\frac{\log(n)}{|I|m}},
    \end{align*}
    for a sufficiently large constant $C >0$, and $a = \{a_t\}_{t\in I}$ is the collection of all the fragment starting points within the time interval $I$.
\end{lemma}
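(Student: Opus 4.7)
The plan is to view $v^\top S_2 v$ as a normalised sum of $|I|\lfloor m/2 \rfloor$ non-negative bounded random variables that become mutually independent once we condition on the fragment starting points $a=\{a_t\}_{t\in I}$, and then to apply Hoeffding's inequality for bounded random variables in this conditional setting, exactly as was done at the end of the proof of \Cref{l_op2}.

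First I would write
\begin{align*}
v^\top S_2 v - v^\top \mathbb{E}[S_2\mid a]v = \frac{1}{|I|\lfloor m/2\rfloor}\sum_{t\in I}\sum_{(j,k)\in\mathcal{O}}\Big\{W_{t,j,k}-\mathbb{E}[W_{t,j,k}\mid a]\Big\},
\end{align*}
where $W_{t,j,k}:=\big(v^\top \Psi(X_{t,j},X_{t,k})\big)^2\geq 0$. Because the index pairs in $\mathcal{O}=\{(2i-1,2i):i\in[\lfloor m/2\rfloor]\}$ are disjoint, and because the grid variables $\{X_{t,j}\}$ are i.i.d.\ given $a$ by \Cref{a_model}\ref{a_model_grid}, the family $\{W_{t,j,k}\}_{t\in I,(j,k)\in\mathcal{O}}$ is conditionally independent given $a$.

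Next I would establish a uniform boundedness estimate. Using $\|v\|_2=1$ together with Cauchy--Schwarz and the analytic $(\alpha,\beta)$-basis assumption, for any $(s,l)\in[0,1]^2$,
\begin{align*}
\big(v^\top \Psi(s,l)\big)^2\leq \|\Psi(s,l)\|_2^2 = \Big(\sum_{h=1}^r\phi_h^2(s)\Big)\Big(\sum_{h=1}^r\phi_h^2(l)\Big)\leq C_\phi^4\Big(\sum_{h=1}^r h^{2\alpha}\Big)^2\leq C_1 r^{4\alpha+2}.
\end{align*}
Hence each centred summand lies in an interval of length at most $C_2 r^{4\alpha+2}$.

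Applying Hoeffding's inequality for bounded random variables \citep[Theorem 2.2.6 in][]{vershynin2018high} conditionally on $a$ then yields
\begin{align*}
\mathbb{P}\Big\{\big|v^\top S_2 v - v^\top \mathbb{E}[S_2\mid a]v\big|\geq \tau \,\Big|\, a\Big\} \leq 2\exp\!\bigg(-\frac{C_3|I|m\tau^2}{r^{8\alpha+4}}\bigg),
\end{align*}
which, after taking expectation over $a$, holds unconditionally. Taking a union bound over the at most $n^2$ integer subintervals $I\subseteq[n]$ and choosing $\tau = C\, r^{4\alpha+2}\sqrt{\log(n)/(|I|m)}$ for a sufficiently large absolute constant $C>0$ delivers the claimed bound with probability at least $1-3n^{-3}$.

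The main (very mild) obstacle is purely bookkeeping: one has to confirm that the conditional independence holds jointly in $t$ and in $(j,k)$ and that the length-$C r^{4\alpha+2}$ boundedness is valid uniformly in the randomness, so that the constant inside the exponent of Hoeffding's inequality is dimension-free. Once these are in place, the argument is essentially identical to the union-bound argument used in \Cref{l_op2}, and no further concentration machinery is needed because $v$ is assumed deterministic (so no $\varepsilon$-net over the sphere is required).
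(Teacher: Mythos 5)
Your proposal is correct and follows essentially the same route as the paper's proof: conditional independence given $a$, a uniform bound of order $r^{4\alpha+2}$ on each summand, Hoeffding's inequality for bounded variables applied conditionally, and a union bound over the $O(n^2)$ integer intervals with $\tau \asymp r^{4\alpha+2}\sqrt{\log(n)/(|I|m)}$. The only cosmetic difference is that you bound $(v^\top\Psi)^2\le\|\Psi\|_2^2$ directly, whereas the paper expands the quadratic form and applies Cauchy--Schwarz to the coefficients; both give the same $Cr^{4\alpha+2}$ bound.
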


\begin{proof}
     Denote $\Lambda_{tjk} = \mathbb{E}[\Psi(X_{t,j},X_{t,k})\Psi^T(X_{t,j},X_{t,k})|a]$, then it holds that
     \begin{align*}
         \Big|v^\top(S_2-\mathbb{E}[S_2|a])v\Big| &= \frac{1}{|I|\lfloor \frac{m}{2} \rfloor}\Bigg|\sum_{t\in I} \sum_{(j,k) \in \mathcal{O}} v^\top\big(\Psi(X_{t,j},X_{t,k})\Psi^T(X_{t,j},X_{t,k})-\Lambda_{tjk}\big)v\Bigg|\\
         & \leq \frac{1}{n\lfloor \frac{m}{2} \rfloor}\; (S_{2.1}).
     \end{align*}
     Note that by Cauchy-Schwarz inequality, and following a similar argument as the one in \Cref{l_op2_eq1}, it holds that
     \begin{align} \notag
         &\Big|v^\top\Psi(X_{t,j},X_{t,k})\Psi^T(X_{t,j},X_{t,k})v\Big|\\ \notag
         =\;& \Big|\sum_{1\leq a, b\leq r^2} v_av_b\phi_{\lceil\frac{a}{r}\rceil}(X_{t,j})\phi_{\lceil\frac{b}{r}\rceil}(X_{t,j})\phi_{a-(\lceil\frac{a}{r}\rceil-1)r}(X_{t,k})\phi_{b-(\lceil\frac{b}{r}\rceil-1)r}(X_{t,k})\Big|\\ \notag
         \leq\;& \Big(\sum_{1\leq a, b\leq r^2} v_a^2 v_b^2\Big)^{1/2}\Big(\sum_{1\leq a, b\leq r^2}\phi^2_{\lceil\frac{a}{r}\rceil}(X_{t,j})\phi^2_{\lceil\frac{b}{r}\rceil}(X_{t,j})\phi^2_{a-(\lceil\frac{a}{r}\rceil-1)r}(X_{t,k})\phi^2_{b-(\lceil\frac{b}{r}\rceil-1)r}(X_{t,k})\Big)^{1/2}\\ \label{l_op3_eq1}
         \leq\;& C_1\Big(\sum_{1\leq a \leq r^2} v_a^2\Big)r^{4\alpha+2} = C_1 r^{4\alpha+2}.
     \end{align}
     Moreover, by \Cref{a_model}\ref{a_model_grid}, conditioning on $a$, all terms inside the double summation of $(S_{2.1})$ are independent. By Hoeffding’s inequality for general bounded random variables \citep[Theorem 2.2.6 in][]{vershynin2018high}, we have that for any $\tau >0$,
    \begin{align*}
        \mathbb{P}\Big\{(S_{2.1}) \geq \tau\Big\}&= \mathbb{E}\Big[\mathbb{P}\big\{(S_{2.1}) \geq \tau |a\big\}\Big] \leq \exp\Bigg(-\frac{C_2\tau^2}{|I|mr^{8\alpha+4}}\Bigg).
    \end{align*}
    Applying a union bound argument on the choices of the interval $I$, and picking $\tau = \sqrt{|I|m\log(n)}r^{4\alpha+2}$, the result follows.
\end{proof}

\begin{lemma} \label{l_op4}
    Under the same setup of \Cref{l_op3}, let $0 < r < n$ be a given integer and $v \in \mathbb{R}^{r^2}$ be any vector satisfying $\|v\|_2 =1$, then it holds with probability at least $1-3n^{-3}$ that for any integer interval $I \subseteq [n]$,
    \begin{align*}
        \Big|v^\top S_2 v - v^\top\mathbb{E}[S_2|a]v \Big| \leq Cr^{4\alpha+3}\sqrt{\frac{\log(n)}{|I|m}},
    \end{align*}
    for a sufficiently large constant $C >0$, and $a = \{a_t\}_{t\in I}$ is the collection of all the fragment starting points within the time interval $I$.
\end{lemma}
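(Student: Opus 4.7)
The plan is to deduce Lemma~\ref{l_op4} from the argument proving Lemma~\ref{l_op3} via a standard covering-number bound for quadratic forms on the unit sphere. Since $S_2$ is a sum of symmetric rank-one matrices $\Psi(X_{t,j},X_{t,k})\Psi^\top(X_{t,j},X_{t,k})$, both $S_2$ and $\mathbb{E}[S_2|a]$ are symmetric, hence $A:=S_2-\mathbb{E}[S_2|a]$ is symmetric and $\sup_{\|v\|_2=1}|v^\top A v|=\|A\|_{\op}$. It therefore suffices to produce an operator-norm bound on $A$ that is uniform in the (possibly random) unit vector $v$.

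I would fix a $1/4$-net $\mathcal{N}$ of the Euclidean unit sphere in $\mathbb{R}^{r^2}$, whose cardinality can be taken at most $9^{r^2}$ by the standard volumetric bound (Corollary~4.2.13 in \citealt{vershynin2018high}). The approximation lemma for quadratic forms of symmetric matrices on the sphere (Exercise~4.4.3 in \citealt{vershynin2018high}) then yields $\|A\|_{\op}\le 2\max_{v'\in\mathcal{N}}|v'^\top A v'|$, reducing the problem to a deviation bound over the finite deterministic set $\mathcal{N}$. For each fixed $v'\in\mathcal{N}$, the conditional Hoeffding argument used inside the proof of Lemma~\ref{l_op3} applies verbatim: the summands $v'^\top\bigl(\Psi(X_{t,j},X_{t,k})\Psi^\top(X_{t,j},X_{t,k})-\Lambda_{tjk}\bigr)v'$ are bounded by $C_1 r^{4\alpha+2}$ via \eqref{l_op3_eq1}, are centred, and are mutually independent conditionally on $a$, so that Hoeffding's inequality for bounded random variables gives
\begin{align*}
    \mathbb{P}\bigl\{|v'^\top A v'|\ge\tau\bigm|a\bigr\}\le 2\exp\!\Bigl(-C_2\,|I|m\tau^2/r^{8\alpha+4}\Bigr).
\end{align*}

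A union bound over the at most $9^{r^2}$ elements of $\mathcal{N}$ and the $O(n^2)$ choices of interval $I$, followed by taking expectations over $a$, contributes $r^2\log 9+2\log n+O(1)$ to the exponent. Choosing $\tau=C_3 r^{4\alpha+2}\sqrt{(r^2+\log n)/(|I|m)}$ therefore drives the exceedance probability below $3n^{-3}$ uniformly in $I$. Since $r^2+\log n\le 2r^2\log n$ whenever $r\ge 1$ and $n\ge 2$, the bound simplifies to $C\,r^{4\alpha+3}\sqrt{\log n/(|I|m)}$, which is the statement. No real obstacle arises: the only point to track is that the covering-number exponent of $S^{r^2-1}$ contributes $\sqrt{r^2}=r$, which is precisely why the power of $r$ in the conclusion increases from $4\alpha+2$ (deterministic $v$, Lemma~\ref{l_op3}) to $4\alpha+3$ (uniform in $v$, Lemma~\ref{l_op4}), with the $\sqrt{\log n}$ factor being preserved.
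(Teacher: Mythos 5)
Your proposal is correct and follows essentially the same route as the paper: an $\varepsilon$-net of the unit sphere in $\mathbb{R}^{r^2}$ combined with the fixed-vector Hoeffding bound from \Cref{l_op3}, a union bound over the net (cardinality exponential in $r^2$, which is exactly where the extra factor of $r$ comes from) and over the $O(n^2)$ intervals, and then an expectation over $a$. The paper phrases this by citing Lemma~35 of \citet{xu2022change} with $s=r^2$, yielding the exponent $r^2\log(21)-C\tau^2/(|I|mr^{8\alpha+4})$, which matches your calculation up to the choice of net radius.
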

\begin{proof}
    The proof follows from \Cref{l_op3}, and a similar argument as the one in Lemma 35 in \citet{xu2022change} by taking $s = r^2$ in their proof. Using standard techniques of covering Lemma and a union-bound argument, we have that for any $\tau >0$
    \begin{align*}
        \mathbb{P}\Big\{\underset{v \in \mathbb{R}^{r^2}:\|v\|_2=1}{\sup}\sum_{t \in I}\sum_{(j,k) \in \mathcal{O}}\Big|v^\top\big(\Psi(X_{t,j},X_{t,k})\Psi^T(X_{t,j},X_{t,k})-\Lambda_{t,j,k}\big)v\Big| \geq \tau\Big\} \leq \exp\Big(r^2\log(21)- \frac{C_1\tau^2}{|I|mr^{8\alpha+4}}\Big).
    \end{align*}
    Hence picking $\tau = C_2\sqrt{|I|m\log(n)}r^{4\alpha+3}$, the result follows accordingly.
\end{proof}

\begin{lemma} \label{l_op5}
    Under the condition of \Cref{t_estimation}, write $\Psi(X_{t,j}, X_{t,k}) = \Phi_r(X_{t,j}) \odot \Phi_r(X_{t,k})$ and $\Lambda_{tjk} =~\mathbb{E}\big[\Psi(X_{t,j}, X_{t,k})\Psi^\top(X_{t,j}, X_{t,k})\big|a\big]$, where $a = \{a_{t}\}_{t\in I}$. Let $0 < r < n$ be a given integer and $V \in \mathbb{R}^{r^2}$ be any deterministic vector satisfying $\|V\|_2 =1$, then it holds with probability at least $1-3n^{-3}$ that for any integer interval $I \subseteq [n]$,
    \begin{align*}
        \Big\|V^\top\Big\{\sum_{t \in I} \sum_{(j,k) \in \mathcal{O}}\Psi(X_{t,j}, X_{t,k})\Psi^\top(X_{t,j}, X_{t,k}) - \Lambda_{tjk}\Big\} \Big\|_{\op} \leq Cr^{4\alpha+2}\sqrt{|I|m\log(n)},
    \end{align*}  
    for a sufficiently large constant $C >0$.
    
\end{lemma}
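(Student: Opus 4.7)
The plan is to exploit the fact that, since $V$ is a deterministic unit vector, the quantity $V^{\top} M$ with
\[
M = \sum_{t \in I}\sum_{(j,k) \in \mathcal{O}} \bigl(\Psi(X_{t,j},X_{t,k})\Psi^{\top}(X_{t,j},X_{t,k}) - \Lambda_{tjk}\bigr) \in \mathbb{R}^{r^2 \times r^2}
\]
is a \emph{row} vector in $\mathbb{R}^{r^2}$, and hence its operator norm equals its $\ell_2$ norm. Writing $w = (V^{\top} M)^{\top} \in \mathbb{R}^{r^2}$, I would use the crude bound $\|w\|_2 \leq r\,\max_{1 \leq i \leq r^2} |w_i|$ to reduce the problem to controlling each coordinate
\[
w_i \;=\; \sum_{t \in I}\sum_{(j,k) \in \mathcal{O}} \Bigl\{\bigl(V^{\top}\Psi(X_{t,j},X_{t,k})\bigr)\Psi_i(X_{t,j},X_{t,k}) - \mathbb{E}\bigl[(V^{\top}\Psi)\Psi_i \,\big|\, a\bigr]\Bigr\}.
\]

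First I would establish the deterministic pointwise bounds $\|\Psi(X_{t,j},X_{t,k})\|_2 \leq C_1 r^{2\alpha+1}$ and $|\Psi_i(X_{t,j},X_{t,k})| \leq C_2 r^{2\alpha}$, which follow from $\|\phi_k\|_\infty \leq C_\phi k^{\alpha}$ in the analytic $(\alpha,\beta)$-basis framework and the Cauchy--Schwarz computation already used in \eqref{l_op2_eq1} and \eqref{l_op3_eq1}. Combining these with $\|V\|_2 = 1$ gives $|(V^{\top}\Psi)\Psi_i| \leq C_3 r^{4\alpha+1}$, so each summand in $w_i$ is bounded and centred conditional on $a = \{a_t\}_{t \in I}$. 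Crucially, under Assumption~\ref{a_model}\ref{a_model_grid}, the terms across distinct $t \in I$ and pairs $(j,k) \in \mathcal{O}$ are mutually independent given $a$.

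Next, I would apply Hoeffding's inequality for bounded random variables \citep[Theorem 2.2.6 in][]{vershynin2018high} conditional on $a$ to obtain, for any $\tau > 0$,
\[
\mathbb{P}\bigl\{|w_i| \geq \tau \,\big|\, a\bigr\} \;\leq\; 2\exp\!\Bigl(-\frac{c_1 \tau^2}{|I|\,m\,r^{8\alpha+2}}\Bigr),
\]
and then average in $a$. A union bound over $i \in [r^2]$ and over all $O(n^2)$ integer intervals $I \subseteq [n]$ (there are at most $n(n+1)/2$ of them) together with the choice $\tau = C_4\, r^{4\alpha+1}\sqrt{|I|m\log(n)}$ for a sufficiently large $C_4$ yields, with probability at least $1 - 3 n^{-3}$, the simultaneous bound $\max_i |w_i| \leq C_4 r^{4\alpha+1}\sqrt{|I|m\log(n)}$ uniformly in $I$. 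Multiplying by the factor $r$ from the coordinate-wise reduction gives
\[
\|V^{\top} M\|_{\op} \;=\; \|w\|_2 \;\leq\; r\,\max_i |w_i| \;\leq\; C\, r^{4\alpha+2}\sqrt{|I|m\log(n)},
\]
which is the desired inequality. The main conceptual point, rather than obstacle, is to recognise that because $V$ is a fixed unit vector one can avoid an $\epsilon$-net argument over $\mathbb{R}^{r^2}$ (which would cost an extra factor of $r$, inflating the bound); instead, using that $V^{\top} M$ is a row vector of dimension $r^2$, only a union bound over the $r^2$ coordinates is needed, and this is precisely what produces the exponent $4\alpha+2$ rather than $4\alpha+3$ as in Lemma \ref{l_op4}.
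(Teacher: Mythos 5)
Your proposal is correct and follows essentially the same route as the paper's proof: reduce the operator norm of the row vector $V^\top M$ to $r$ times the maximum over its $r^2$ coordinates, bound each summand by $Cr^{4\alpha+1}$ via Cauchy--Schwarz and $\|\phi_k\|_\infty\lesssim k^\alpha$, apply Hoeffding's inequality conditionally on $a$, and union-bound over coordinates and intervals with $\tau \asymp r^{4\alpha+1}\sqrt{|I|m\log(n)}$. Your closing remark about avoiding an $\epsilon$-net (and hence the exponent $4\alpha+2$ versus $4\alpha+3$ in Lemma~\ref{l_op4}) is exactly the distinction the paper exploits.
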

\begin{proof}
    Following a similar argument as the one that leads to \Cref{l_op1_eq1}, we have that
    \begin{align*}
        &\Big\|V^\top\Big\{\sum_{t \in I} \sum_{(j,k) \in \mathcal{O}}\Psi(X_{t,j}, X_{t,k})\Psi^\top(X_{t,j}, X_{t,k}) - \Lambda_{ijk}\Big\}\Big\|_{\op} \\
        \leq \;& r\underset{1\leq a \leq r^2}{\max} \Big|\sum_{t \in I}\sum_{(j,k) \in \mathcal{O}} \sum_{1\leq b\leq r^2} V_b\phi_{\lceil\frac{a}{r}\rceil}(X_{t,j})\phi_{\lceil\frac{b}{r}\rceil}(X_{t,j})\phi_{a-(\lceil\frac{a}{r}\rceil-1)r}(X_{t,k})\phi_{b-(\lceil\frac{b}{r}\rceil-1)r}(X_{t,k}) -V_b\big(\Lambda_{tjk}\big)_{ba}\Big|.
    \end{align*}
    By Cauchy-Schwarz inequality, we have that
    \begin{align*}
        &\Big|\sum_{1\leq b\leq r^2} V_b\phi_{\lceil\frac{a}{r}\rceil}(X_{t,j})\phi_{\lceil\frac{b}{r}\rceil}(X_{t,j})\phi_{a-(\lceil\frac{a}{r}\rceil-1)r}(X_{t,k})\phi_{b-(\lceil\frac{b}{r}\rceil-1)r}(X_{t,k})\Big|\\
        \leq\;& \Big(\sum_{1\leq b\leq r^2} V_b^2\Big)^{1/2}\Big(\sum_{1\leq b\leq r^2}\phi^2_{\lceil\frac{a}{r}\rceil}(X_{t,j})\phi^2_{\lceil\frac{b}{r}\rceil}(X_{t,j})\phi^2_{a-(\lceil\frac{a}{r}\rceil-1)r}(X_{t,k})\phi^2_{b-(\lceil\frac{b}{r}\rceil-1)r}(X_{t,k})\Big)^{1/2}\\
        \leq\;& C_1\Big(\sum_{1\leq b \leq r^2} V_b^2\Big)r^{2\alpha+1}\Big\lceil\frac{a}{r}\Big\rceil^\alpha\Big(a-\Big(\Big\lceil\frac{a}{r}\Big\rceil-1\Big)r\Big)^\alpha\\
        =\;& C_1r^{2\alpha+1}\Big\lceil\frac{a}{r}\Big\rceil^\alpha\Big(a-\Big(\Big\lceil\frac{a}{r}\Big\rceil-1\Big)r\Big)^\alpha \leq C_1r^{4\alpha+1}.
    \end{align*}
   Thus, conditioning on the starting points $a = \{a_t\}_{t\in I}$, applying Hoeffding’s inequality for general bounded random variables \citep[Theorem 2.2.6 in][]{vershynin2018high} and two union-bound arguments on the choices of $r$ and $I$, it holds that for any $\tau >0$, with probability at most
    \begin{align*}
        \exp\Big(4\log(n\vee r)-\frac{C_2\tau^2}{|I|mr^{8\alpha+2}}\Big),
    \end{align*}
    that for any integer interval $I$,
    \begin{align*}
        \underset{1\leq a \leq r^2}{\max} \Big|\sum_{t \in I}\sum_{(j,k) \in \mathcal{O}} \sum_{1\leq b\leq r^2} V_b\phi_{\lceil\frac{a}{r}\rceil}(X_{t,j})\phi_{\lceil\frac{b}{r}\rceil}(X_{t,j})\phi_{a-(\lceil\frac{a}{r}\rceil-1)r}(X_{t,k})\phi_{b-(\lceil\frac{b}{r}\rceil-1)r}(X_{t,k}) -V_b\big(\Lambda_{ijk}\big)_{b,a}\Big| \geq \tau.
    \end{align*}
    Pick $\tau = C_3r^{4\alpha+1}\sqrt{|I|m\log(n)}$, and taking another expectation with respect to $a$ similar to the argument in \Cref{l_op1_eq4}, the result follows.
\end{proof}

\begin{lemma}[Restricted eigenvalue condition] \label{l_rec}
For any $\theta >0$, integer $0<r<n$ and~$\Delta \in~\mathcal{B}_{r,I}(\theta)$, there exists some constants $\zeta_{\delta} > 0$ depending on $\delta$ and $\Sigma^*$, independently of $\Delta$, such that
\begin{align*}
    \mathbb{E}[\{\Phi_r^\top(X_{1,1})\Delta\Phi_r(X_{1,2})\}^2|a_1] \geq \zeta_{\delta}\|\Delta\|_{\F}^2
\end{align*}
almost surely, where $a_1 \in [0,1-\delta]$ is a given starting point of the observing fragment at $t =1$.
\end{lemma}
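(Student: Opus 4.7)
The plan is to reduce the conditional expectation to an $L^2$ integral over the fragment square, and establish the ratio
\[
R(a_1, \Delta) \;=\; \frac{\frac{1}{\delta^2}\int_{a_1}^{a_1+\delta}\!\!\int_{a_1}^{a_1+\delta} \big\{\Phi_r^{\top}(s)\Delta\Phi_r(t)\big\}^2 \, \mathrm{d}s\,\mathrm{d}t}{\|\Delta\|_{\F}^2}
\]
is bounded below by a positive constant uniformly over $a_1 \in [0, 1-\delta]$ and admissible $\Delta$. Using Assumptions~\ref{a_model}\ref{a_model_frag} and \ref{a_model}\ref{a_model_grid}, conditionally on $a_1$ the points $X_{1,1}, X_{1,2}$ are i.i.d.\ Uniform$[a_1, a_1+\delta]$, so the conditional expectation equals the numerator of $R(a_1,\Delta)$. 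By orthonormality of $\{\phi_h\phi_l\}_{h,l\ge 1}$ in $L^2([0,1]^2)$, the denominator satisfies $\|\Delta\|_{\F}^2 = \|\Phi_r^{\top}\Delta\Phi_r\|_{L^2}^2$, so the claim is equivalent to a uniform Poincar\'e-type lower bound on $R$.

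I would argue by contradiction and compactness. Suppose no such $\zeta_\delta >0$ exists; then there is a sequence $\{(a_1^{(k)}, \Delta_k)\}_{k\ge 1}$ with $a_1^{(k)}\in [0,1-\delta]$, $\Delta_k \in \mathcal{B}_{r,I}(1)$ (i.e.\ $\|\Delta_k\|_{\F}=1$ and $C^*_{r,I}+\Delta_k \in \mathfrak{C}(r)$), such that $R(a_1^{(k)},\Delta_k) \to 0$. Since $[0,1-\delta]$ is compact and the Frobenius unit sphere in the finite-dimensional space $\mathbb{R}^{r\times r}$ is compact, we may extract a subsequence (still indexed by $k$) with $a_1^{(k)}\to a_1^{\star}$ and $\Delta_k \to \Delta^{\star}$, where $\|\Delta^{\star}\|_{\F}=1$. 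Joint continuity of the map $(a, \Delta) \mapsto \int_{a}^{a+\delta}\!\int_{a}^{a+\delta}\{\Phi_r^{\top}(s)\Delta\Phi_r(t)\}^2\,\mathrm{d}s\,\mathrm{d}t$ then forces
\[
\int_{a_1^{\star}}^{a_1^{\star}+\delta}\!\!\int_{a_1^{\star}}^{a_1^{\star}+\delta} \big\{\Phi_r^{\top}(s)\Delta^{\star}\Phi_r(t)\big\}^2\, \mathrm{d}s\,\mathrm{d}t = 0,
\]
which combined with continuity of $\Phi_r^{\top}(\cdot)\Delta^{\star}\Phi_r(\cdot)$ gives $\Phi_r^{\top}(s)\Delta^{\star}\Phi_r(t)=0$ for every $(s,t)\in[a_1^{\star},a_1^{\star}+\delta]^2$.

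Now I would invoke the identifiability Assumption~\ref{a_model}\ref{a_model_identifiable}. The matrix set $\mathfrak{C}(r)$ is closed in $\mathbb{R}^{r\times r}$ (positive semidefiniteness is closed, and the Fourier-series functions produced by $\mathfrak{C}(r)$ automatically satisfy the analyticity-type condition defining $\mathcal{C}$ since finite Fourier sums are real analytic). Hence $C^*_{r,I}+\Delta^{\star}\in\mathfrak{C}(r)$, so both $\Sigma^*_{r,I}=\Phi_r^{\top}C^*_{r,I}\Phi_r$ and $\Sigma^*_{r,I} + \Phi_r^{\top}\Delta^{\star}\Phi_r$ lie in the $\mathcal{T}_{u,\delta}$-identifiable family $\mathcal{C}$ with $u=a_1^{\star}$. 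Since they agree on $[a_1^{\star}, a_1^{\star}+\delta]^2$, identifiability extends the equality to all of $[0,1]^2$, giving $\Phi_r^{\top}(s)\Delta^{\star}\Phi_r(t)\equiv 0$. The orthonormality of $\{\phi_h\phi_l\}_{h,l=1}^r$ then forces $\Delta^{\star}=0$, contradicting $\|\Delta^{\star}\|_{\F}=1$. Defining $\zeta_\delta := \inf R$ over the compact admissible set yields the desired constant (which may depend on $\delta$, $\Sigma^*$ and $r$, consistent with the way $\zeta_\delta$ enters the later theorems).

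The main obstacle I anticipate is the closedness of $\mathcal{B}_{r,I}(1)$, i.e.\ verifying that the limit $\Delta^{\star}$ inherits the constraint $C^*_{r,I}+\Delta^{\star}\in\mathfrak{C}(r)$. This is routine when $\mathcal{C}$ is the class of real analytic functions (the generating example in Assumption~\ref{a_model}\ref{a_model_identifiable}), since any function of the form $\Phi_r^{\top}C\Phi_r$ with $C\in\mathbb{R}^{r\times r}$ is automatically real analytic, so closedness of $\mathfrak{C}(r)$ reduces to closedness of positive semidefiniteness. A secondary subtlety is uniformity in $a_1$, which is handled cleanly by incorporating $a_1$ into the compactness argument above rather than treating it as a fixed parameter.
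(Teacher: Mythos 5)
Your proof is correct and follows essentially the same route as the paper's: a contradiction argument in which a vanishing conditional second moment forces $\Phi_r^\top(\cdot)\Delta^\star\Phi_r(\cdot)$ to vanish on the fragment square, whence the $\mathcal{T}_{u,\delta}$-identifiability of $\mathcal{C}$ (applied to $\Sigma^*_{r,I}$ and $\Sigma^*_{r,I}+\Phi_r^\top\Delta^\star\Phi_r$) propagates the vanishing to all of $[0,1]^2$ and contradicts $\|\Delta^\star\|_{\F}=\theta>0$ via orthonormality of $\{\phi_h\phi_l\}$. The one genuine refinement is that you compactify jointly in $(a_1,\Delta)$ at the matrix level, making $\zeta_\delta$ explicitly uniform over $a_1\in[0,1-\delta]$, whereas the paper fixes $a_1$ and extracts limits through sequential compactness of the function class $\mathcal{C}$ (via a cited proposition of Lin et al.), leaving the uniformity in $a_1$ implicit even though the lemma asserts an almost-sure bound with a single constant. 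Your flagged caveats --- closedness of $\mathfrak{C}(r)$ (hence of $\mathcal{B}_{r,I}(\theta)$) and the possible dependence of $\zeta_\delta$ on $r$ --- are shared by, and no worse than, the paper's own argument.
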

\begin{proof}
    The proof follows a similar argument as the proof of Lemma S4 in \citet{lin2021basis}. We will prove by contradiction, and the following proof is constructed conditioning on $a_1$. Suppose that there exists a $\Delta^* \in \mathcal{B}_{r,I}(\theta)$ such that $\mathbb{E}[\{\Phi_r^\top(X_{1,1})\Delta^*\Phi_r(X_{1,2})\}^2] = 0 $. Since $\mathcal{B}_{r,I}(\theta)$ is closed, therefore, we can find a sequence $\xi_k \rightarrow 0$ and a convergent sequence $\Delta_k \in \mathcal{B}_{r,I}(\theta)$ with $\underset{k \rightarrow \infty}{\lim} \Delta_k =\Delta^*$ such that
    \begin{align*}
        \mathbb{E}[\{\Phi_r^\top(X_{1,1})\Delta_k\Phi_r(X_{1,2})\}^2] \leq \xi_k \|\Delta_k\|_{\F}^2 = \xi_k\theta^2 \rightarrow 0.
    \end{align*}
    By Fatou's Lemma, we have that
    \begin{align*}
        \mathbb{E}[\underset{k \rightarrow \infty}{\lim}\{\Phi_r^\top(X_{1,1})\Delta_k\Phi_r(X_{1,2})\}^2] \leq \underset{k \rightarrow \infty}{\lim}\mathbb{E}[\{\Phi_r^\top(X_{1,1})\Delta_k\Phi_r(X_{1,2})\}^2] = 0,
    \end{align*}
    which implies that there exists a subsequence $\{k_\ell\}_{\ell=1}^{\infty}$ such that
    \begin{align*}
        \underset{\ell \rightarrow \infty}{\lim} \{\Phi_r^\top(s)\Delta_{k_\ell}\Phi_r(t)\} =0, \quad a.e. \; \mathrm{on} \; [a_1, a_1+\delta]^2,
    \end{align*}
    or equivalently
    \begin{align} \label{l_rec_eq1}
        \underset{\ell \rightarrow \infty}{\lim} \{\Phi_r^\top(s)(\Delta_{k_\ell}+C_{r,I}^*)\Phi_r(t)\} = \Phi_r^\top(s)C_{r,I}^*\Phi_r(t) = \Sigma^*_{r,I}(s,t), \quad a.e. \; \mathrm{on} \; [a_1, a_1+\delta]^2.
    \end{align}
    Since a finite sum of analytic functions is still analytic \citep[by Proposition 1.1.7 in][]{krantz2002primer}, we have that $\Sigma^*_{r, I}$ is analytic, hence $\Sigma^*_{r, I} \in \mathcal{C}$ by analytic continuity \citep[Corollary 1.2.6 in][]{krantz2002primer}. Also, since $\Delta_{k_\ell}\in \mathcal{B}_{r,I}(\theta)$, we have that $\Phi_r^\top(\cdot)(\Delta_{k_{\ell}}+C_{r, I}^*)\Phi_r(\cdot) \in \mathcal{C}$. 
    
    Furthermore, by Proposition 1(iii) in \citet{lin2021basis} and \Cref{a_model}\ref{a_model_identifiable}, the fact that all functions in $\mathcal{C}$ are continuous over a compact interval implies that $\mathcal{C}$ is a bounded sequentially compact family \citep[Definition 2 in][]{lin2021basis}, hence we can find a further subsequence $\{k_{\ell_h}\}_{h=1}^{\infty}$ such that $\Phi_r^\top(\cdot)(\Delta_{k_{\ell_h}}+C_{r, I}^*)\Phi_r(\cdot)$ converges pointwise over $[0,1]^2$ to some limit $\psi(\cdot, \cdot) \in \mathcal{C}$. This $\psi$ should satisfy $\psi(s,t) = \Sigma_{r, I}^*(s,t)$ for all $(s,t) \in [a_1, a_1+\delta]^2$ by \Cref{l_rec_eq1} and the fact that every subsequence of a convergent sequence will converge to the same limit. Moreover, by the $\mathcal{T}_{u,\delta}$-identifiability in \Cref{a_model}\ref{a_model_identifiable}, this further implies that $\psi(s,t) = \Sigma^*_{r, I}(s,t)$ for all $(s,t) \in [0,1]^2$. Therefore, using the above argument, we have that
    \begin{align*}
        \theta^2 & = \big\|\Delta^*\big\|_{\F}^2= \big\|\underset{h \rightarrow \infty}{\lim} \Delta_{k_{\ell_h}} \big\|_{\F}^2 \\
        &= \int_{0}^1\int_{0}^1 \underset{h \rightarrow \infty}{\lim} \{\Phi_r^\top(s)\Delta_{k_{\ell_h}}\Phi_r(t)\}^2 \,\mathrm{d}s\,\mathrm{d}t\\
        & = \int_{0}^1\int_{0}^1 \underset{h \rightarrow \infty}{\lim} \{\Phi_r^\top(s)(\Delta_{k_{\ell_h}}+C^*_{r, I})\Phi_r(t) - \Phi_r^\top(s)C^*_{r, I}\Phi_r(t)\}^2 \,\mathrm{d}s\,\mathrm{d}t\\
        & =\int_{0}^1\int_{0}^1 \{\psi(s,t) - \Sigma^*_{r,I}(s,t)\}^2 \,\mathrm{d}s\,\mathrm{d}t\\
        & = 0,
    \end{align*}
    where the first equality follows from the fact that $\Delta^* \in \mathcal{B}_{r,I}(\theta)$, and the third equality follows since $\Phi$ is a complete orthonormal system. Hence we have reached a contradiction to the fact that $\theta >0$. 
\end{proof}

\begin{remark} \label{r_zeta_delta}
Under \Cref{a_model}, it holds that
\begin{align*}
    \zeta_\delta \|\Delta\|_{\F}^2 &\leq \mathbb{E}[\{\Phi_r^\top(X_{1,1})\Delta\Phi_r(X_{1,2})\}^2|a_1] = \frac{1}{\delta^2}\int_{a_t}^{a_t+\delta} \int_{a_t}^{a_t+\delta}\{\Phi_r^\top(s)\Delta\Phi_r(t)\}^2 \,\mathrm{d}s\,\mathrm{d}t\\
    &\leq \frac{1}{\delta^2} \int_{0}^{1} \int_{0}^{1}\{\Phi_r^\top(s)\Delta\Phi_r(t)\}^2 \,\mathrm{d}s\,\mathrm{d}t = \frac{1}{\delta^2}\|\Delta\|_{\F}^2,
\end{align*}
where the first inequality follows from \Cref{l_rec}, and the last equality follows from the orthonormal properties of $\Phi$. Hence we have that $\zeta_\delta \leq C_\zeta$, where $C_\zeta > 0$ is an absolute constant.
\end{remark}

\subsection{Sub-Weibull property} \label{pf_TL_subWeibull}
\begin{lemma}[\citealp{wong2020lasso}, Sub-Weibull properties]
\label{l_subWeibuill_property}
Let $X$ be a random variable. Then the following statements are equivalent for every $\alpha>0$. The constants $C_1, C_2, C_3 >0$ differ at most by a constant depending only on $\alpha$.
\begin{enumerate}
    \item The tail of $X$ satisfies 
    \begin{align*}
        \mathbb{P}\big\{|X|>t\big\} \leq 2 \exp \big(-(t / C_1)^\alpha\big), \; \text{for all} \;\; t \geq 0.
    \end{align*}
    \item The moments of $X$ satisfy
    \begin{align*}
        \|X\|_p:=\big(\mathbb{E}\big[|X|^p\big]\big)^{1 / p} \leq C_2 p^{1 / \alpha}, \; \text{for all} \;\; p \geq 1 \wedge \alpha.
    \end{align*}
    \item  The moment generating function of $|X|^\alpha$ is finite at some point; namely
    \begin{align*}
        \mathbb{E}\big[\exp (|X| / C_3)^\alpha\big] \leq 2.
    \end{align*}
\end{enumerate}
We further call a random variable $X$ which satisfies any of the properties above a sub-Weibull random variable with parameter $\alpha$.
\end{lemma}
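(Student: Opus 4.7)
The plan is to close the loop $(1)\Rightarrow(2)\Rightarrow(3)\Rightarrow(1)$, which is the standard route for sub-Gaussian/sub-Exponential equivalences and extends mutatis mutandis to the sub-Weibull setting for arbitrary $\alpha>0$. Throughout I will keep track of how the constants $C_1,C_2,C_3$ transform at each step, since this is what the statement about differing ``by a constant depending only on $\alpha$'' requires.

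For $(1)\Rightarrow(2)$, I would apply the layer-cake representation
\[
\mathbb{E}|X|^p \;=\; \int_0^\infty p t^{p-1}\mathbb{P}\{|X|>t\}\,\mathrm{d}t \;\leq\; 2\int_0^\infty p t^{p-1}\exp\bigl(-(t/C_1)^\alpha\bigr)\,\mathrm{d}t,
\]
and then perform the change of variables $u=(t/C_1)^\alpha$, which turns the right-hand side into $2C_1^p (p/\alpha)\,\Gamma(p/\alpha)=2C_1^p\,\Gamma(p/\alpha+1)$. Taking $p$-th roots and invoking Stirling's bound $\Gamma(x+1)\leq C_\alpha\, x^x$ yields $\|X\|_p \leq C_2 p^{1/\alpha}$ with $C_2$ a constant depending only on $\alpha$ and $C_1$. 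The constraint $p\geq 1\wedge\alpha$ only comes into play to guarantee the Gamma function is evaluated in a regime where this crude Stirling bound is clean; I would remark that the same derivation works for all $p>0$ up to adjusting constants.

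For $(2)\Rightarrow(3)$, I would expand the exponential as a Taylor series:
\[
\mathbb{E}\bigl[\exp\{(|X|/C_3)^\alpha\}\bigr] \;=\; 1+\sum_{k=1}^\infty \frac{\mathbb{E}|X|^{\alpha k}}{k!\,C_3^{\alpha k}}.
\]
For indices $k$ with $\alpha k\geq 1\wedge\alpha$, the moment bound in (2) gives $\mathbb{E}|X|^{\alpha k}\leq C_2^{\alpha k}(\alpha k)^k$; combining with $k!\geq (k/e)^k$ (Stirling from below) each summand is at most $\bigl(\alpha e (C_2/C_3)^\alpha\bigr)^k$. A finite number of initial terms (those with $\alpha k<1\wedge\alpha$, which only occur when $\alpha<1$) are handled via Jensen's inequality, since $\mathbb{E}|X|^{\alpha k}\leq (\mathbb{E}|X|^{1\wedge\alpha})^{\alpha k/(1\wedge\alpha)}$, and this quantity is finite by (2). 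Choosing $C_3$ large enough that $\alpha e(C_2/C_3)^\alpha\leq 1/2$ makes the tail of the series sum to at most $1$, and shrinking $C_3$ further absorbs the finitely many boundary terms so that the total is $\leq 2$.

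For $(3)\Rightarrow(1)$, I would simply apply Markov's inequality to the monotone transform $\exp((\cdot/C_3)^\alpha)$:
\[
\mathbb{P}\{|X|>t\} \;=\; \mathbb{P}\bigl\{\exp((|X|/C_3)^\alpha)>\exp((t/C_3)^\alpha)\bigr\} \;\leq\; \frac{\mathbb{E}\exp((|X|/C_3)^\alpha)}{\exp((t/C_3)^\alpha)}\;\leq\; 2\exp(-(t/C_3)^\alpha),
\]
so one may take $C_1=C_3$. The main technical obstacle is the step $(2)\Rightarrow(3)$ in the regime $\alpha<1$: the ``moments grow like $p^{1/\alpha}$'' bound is only assumed for $p\geq\alpha$, so the low-order terms in the series expansion need a separate Jensen-type argument, and Stirling's lower bound on $k!$ only becomes effective once $k$ is past a threshold depending on $\alpha$. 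Carefully choosing the constants to absorb both the low- and high-order contributions into the final bound $\leq 2$ is the delicate bookkeeping of the proof.
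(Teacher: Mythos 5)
Your cycle $(1)\Rightarrow(2)\Rightarrow(3)\Rightarrow(1)$ is correct and is the standard argument (the general-$\alpha$ analogue of the sub-Gaussian/sub-exponential equivalences); note that the paper itself offers no proof of this lemma at all, it simply cites \citet{wong2020lasso}, so your contribution is a self-contained verification rather than an alternative to an in-paper argument. The computations check out: the layer-cake plus the substitution $u=(t/C_1)^\alpha$ gives exactly $2C_1^p\Gamma(p/\alpha+1)$, Stirling turns this into $\|X\|_p\lesssim_\alpha C_1 p^{1/\alpha}$, the series bound with $\mathbb{E}|X|^{\alpha k}\le C_2^{\alpha k}(\alpha k)^k$ and $k!\ge (k/e)^k$ gives summands $\bigl(\alpha e(C_2/C_3)^\alpha\bigr)^k$, and Markov closes the loop with $C_1=C_3$. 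Two small remarks on your step $(2)\Rightarrow(3)$: the ``boundary terms'' you worry about are in fact vacuous, since for every $k\ge 1$ one has $\alpha k\ge \alpha\ge 1\wedge\alpha$, so the moment bound in (2) applies to all terms of the series and no separate Jensen argument is needed; and where you write that one should ``shrink $C_3$ further'' to absorb extra terms, you mean \emph{enlarge} $C_3$ (shrinking it increases every summand). Neither point affects the validity of the proof, only the bookkeeping commentary.
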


\begin{theorem}[Theorem 3.1 in \citealp{kuchibhotla2022moving}]
\label{t_concentration_subWeibull}
    Let $X_1, \ldots, X_n$ be independent mean zero sub-Weibull random variables with parameter $\alpha$, where $0 < \alpha \leq 1$. If it follows that $\|X_i\|_{\psi_\alpha} \leq K$ for all $1 \leq i \leq n$ and $K >0$, then there exists a constant $C_\alpha >0$ depending only on $\alpha$ such that the following holds
    \begin{align*}
        \mathbb{P}\Big\{\Big|\sum_{i=1}^n X_i\Big| \geq \tau \Big\} \leq e\exp\Big(-C_\alpha \min\Big\{\frac{\tau^2}{nK^2},  \Big(\frac{\tau}{K}\Big)^\alpha\Big\} \Big),
    \end{align*}
    for any $\tau >0$.
\end{theorem}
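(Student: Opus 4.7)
The plan is to prove this Bernstein-type bound via truncation combined with the classical Bernstein inequality for bounded variables and a direct use of the sub-Weibull tail characterisation (Lemma~\ref{l_subWeibuill_property} part 1). The two regimes inside the $\min\{\cdot,\cdot\}$ correspond to a moderate-deviation regime governed by the variance (yielding $\tau^2/(nK^2)$) and a large-deviation regime governed by the tail of a single $X_i$ (yielding $(\tau/K)^\alpha$). Without loss of generality one may take $K=1$ by rescaling $X_i\mapsto X_i/K$.

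First I would introduce a truncation level $M>0$, to be chosen as a function of $\tau$ at the end, and write $X_i=Y_i+Z_i$ with $Y_i=X_i\mathbbm{1}\{|X_i|\le M\}$ and $Z_i=X_i\mathbbm{1}\{|X_i|>M\}$. Since $\mathbb{E}X_i=0$, one has $\sum_i X_i=\sum_i(Y_i-\mathbb{E}Y_i)+\sum_i(Z_i-\mathbb{E}Z_i)$, so it suffices to bound each part by $\tau/2$. For the bounded sum I would apply the classical Bernstein inequality using $|Y_i-\mathbb{E}Y_i|\le 2M$ and $\mathrm{Var}(Y_i)\le \mathbb{E}X_i^2\le C$ (the latter from Lemma~\ref{l_subWeibuill_property} part 2 at $p=2$), obtaining $\exp\{-c\min(\tau^2/n,\,\tau/M)\}$. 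For the tail sum, observe that $Z_i$ vanishes unless $|X_i|>M$, so the event $\{\sum_i Z_i\neq 0\}$ has probability at most $2n\exp(-M^\alpha)$ by a union bound, while the deterministic centering shift satisfies $|\sum_i \mathbb{E}Z_i|=n|\mathbb{E}[X_i\mathbbm{1}\{|X_i|>M\}]|\lesssim nM^{1-\alpha}\exp(-M^\alpha)$, via integration by parts on the sub-Weibull tail $\mathbb{P}\{|X_i|>t\}\le 2\exp(-t^\alpha)$.

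The final step is to choose $M=M(\tau)$ to balance the three contributions. In the regime $\tau\lesssim \sqrt{n\log n}$, taking $M$ of order $\tau/\sqrt{n}$ makes the Gaussian Bernstein term $\exp(-c\tau^2/n)$ dominant. In the regime $\tau\gtrsim(\log n)^{1/\alpha}$, taking $M$ so that $M^\alpha=a\tau^\alpha+b\log n$ for suitable $\alpha$-dependent constants $a,b$ simultaneously forces the Bernstein tail exponent $\tau/M\asymp \tau^\alpha$ to give $\exp(-c'\tau^\alpha)$ and absorbs the $\log n$ from the union bound $2n\exp(-M^\alpha)$ into the sub-Weibull exponent; one checks that $nM^{1-\alpha}\exp(-M^\alpha)$ is then $\ll \tau$, so the mean-shift indicator vanishes. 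Combining the two cases and collecting constants produces the stated bound with a single $\alpha$-dependent $C_\alpha$ and a harmless prefactor $e$.

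The main obstacle is verifying that the factor $n$ appearing in the union-bound term $2n\exp(-M^\alpha)$ can be fully absorbed into the claimed exponent $\exp(-C_\alpha(\tau/K)^\alpha)$ with no residual $n$ prefactor, uniformly in $\tau$. In the borderline regime where $\tau^\alpha$, $\log n$, and $\tau^2/n$ are of comparable order, the naive one-shot truncation does not obviously merge the two bounds into a single expression $\exp(-C_\alpha\min(\cdot,\cdot))$, and the balancing $M^\alpha=a\tau^\alpha+b\log n$ must be tuned so that the sub-Gaussian term $\tau^2/n$ precisely takes over at the threshold where the sub-Weibull term ceases to absorb the union-bound loss. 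This stitching of the two regimes is the technical core of the Kuchibhotla--Chakrabortty argument; modulo it, the remaining steps are routine applications of Bernstein's inequality and elementary tail integration.
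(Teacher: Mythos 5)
First, note that the paper does not prove this statement at all: it is imported verbatim as Theorem 3.1 of Kuchibhotla and Chakrabortty (2022), so there is no internal proof to compare against; your sketch has to stand on its own, and as written it does not.

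The fatal gap is in the large-deviation branch. To make the union-bound term $2n\exp(-M^\alpha)$ smaller than the target $e\exp\{-C_\alpha\min(\tau^2/n,\tau^\alpha)\}$ (taking $K=1$), you are forced to take $M^\alpha\gtrsim \log n+\min(\tau^2/n,\tau^\alpha)$. In the regime where the minimum is $\tau^\alpha$ (i.e.\ $\tau\gtrsim n^{1/(2-\alpha)}$), this means $M\gtrsim\tau$, and then Bernstein's inequality for the truncated, centred variables gives exponent of order $\tau^2/(n\sigma^2+M\tau)\asymp\tau^2/(n+\tau^2)=O(1)$ — not $\tau^\alpha$. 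Your claim that choosing $M^\alpha=a\tau^\alpha+b\log n$ "forces the Bernstein tail exponent $\tau/M\asymp\tau^\alpha$" is simply false: with $M\asymp\tau$ one gets $\tau/M\asymp 1$. (Even Bennett's inequality only improves this to a logarithmic exponent, since the crude bound $|Y_i|\le M$ throws away the smallness of the higher moments of the truncated variables.) The same obstruction already appears below the crossover: for $n^{(1+\alpha)/(2+\alpha)}\lesssim\tau\lesssim n^{1/(2-\alpha)}$ the constraints $M\lesssim n/\tau$ (needed so that $\tau/M\gtrsim\tau^2/n$) and $M^\alpha\gtrsim\tau^2/n$ (needed to absorb the union bound) are incompatible. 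So this is not a constant-tuning or "stitching" issue as your last paragraph suggests; a one-shot truncation plus bounded Bernstein can only yield a Fuk--Nagaev-type bound of the form $\exp(-c\tau^2/n)+Cn\exp(-c\tau^\alpha)$, which is strictly weaker than the stated inequality in exactly the regimes that matter. Note also that even for $\alpha=1$ the classical Bernstein bound $\exp\{-c\min(\tau^2/n,\tau)\}$ is proved through the moment generating function, not through truncation; for $\alpha<1$ the m.g.f.\ does not exist, and the actual proof of Kuchibhotla--Chakrabortty proceeds via sharp $L_p$-moment bounds for sums of variables with $\|X_i\|_p\lesssim p^{1/\alpha}$ (Latała/Gluskin--Kwapień-type estimates, packaged through their generalised Bernstein--Orlicz norm), followed by optimisation over $p$ to convert moments into the two-regime tail. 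If you want an elementary self-contained argument you would need at least a multi-level truncation/peeling scheme or a Rosenthal-type moment bound with explicit $p$-dependence; the single truncation you propose cannot be repaired by a cleverer choice of $M$.
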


\begin{lemma}
\label{l_addition_subWeibull}
    Let $X_1, \ldots, X_n$ be a finite collection of sub-Weibull random variables with parameters $\alpha_1, \ldots,\alpha_n$ respectively. Then for any deterministic $v_1, \ldots, v_n \in \mathbb{R}$, it holds that $\sum_{i=1}^n v_iX_i$ is a sub-Weibull random variable with parameter $\min\{\alpha_1, \ldots, \alpha_n\}$.
\end{lemma}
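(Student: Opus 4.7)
The plan is to route the argument through the moment characterisation in Lemma~\ref{l_subWeibuill_property}, since Minkowski's inequality composes cleanly with $L^p$-norms and requires no independence assumption, matching the hypotheses of the lemma. Write $\alpha^\star = \min\{\alpha_1,\ldots,\alpha_n\}$. By Lemma~\ref{l_subWeibuill_property}(2), each $X_i$ admits an absolute constant $K_i$ with $\|X_i\|_p \leq K_i\,p^{1/\alpha_i}$ for every $p \geq 1$.

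First, I would promote each individual bound to the slower exponent $1/\alpha^\star$. Since $\alpha^\star \leq \alpha_i$ one has $1/\alpha^\star \geq 1/\alpha_i$, whence $p^{1/\alpha_i} \leq p^{1/\alpha^\star}$ for every $p \geq 1$. Thus $\|X_i\|_p \leq K_i\,p^{1/\alpha^\star}$ on the same range, which already certifies that $X_i$ itself is sub-Weibull with parameter $\alpha^\star$.

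Next, Minkowski's inequality yields, for $p \geq 1$,
\begin{align*}
    \Big\|\sum_{i=1}^n v_i X_i\Big\|_p \leq \sum_{i=1}^n |v_i|\,\|X_i\|_p \leq K\,p^{1/\alpha^\star},
\end{align*}
where $K = \sum_{i=1}^n |v_i| K_i$. Invoking the implication (2)$\Rightarrow$(1) of Lemma~\ref{l_subWeibuill_property} with parameter $\alpha^\star$ then produces a sub-Weibull tail bound $\mathbb{P}(|\sum_i v_i X_i| > t) \leq 2\exp(-(t/C)^{\alpha^\star})$ with an absolute constant $C$ proportional to $K$, which is exactly the claim.

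The only subtlety is the mild tension between Lemma~\ref{l_subWeibuill_property}(2), stated for $p \geq 1 \wedge \alpha$, and Minkowski's inequality, which requires $p \geq 1$; this is harmless, because the standard passage from moment growth $\|Y\|_p \leq K\,p^{1/\alpha^\star}$ to a sub-Weibull tail proceeds through a Markov optimisation at $p \asymp (t/K)^{\alpha^\star}$, which is automatically at least $1$ on the informative range $t \gtrsim K$, and the small-$t$ regime is absorbed into the final constant. Notably, no independence of the $X_i$ is ever invoked, in line with the lemma's formulation.
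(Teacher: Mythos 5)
Your proposal is correct and follows essentially the same route as the paper's proof: Minkowski's inequality applied to the moment characterisation of Lemma~\ref{l_subWeibuill_property}(2), with each exponent $p^{1/\alpha_i}$ relaxed to $p^{1/\min_j \alpha_j}$. The only cosmetic differences are that you treat general $n$ directly rather than iterating from the $n=2$ case, and you explicitly note the (harmless) range-of-$p$ issue and use $|v_i|$ where the paper writes $v_i$.
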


\begin{proof}
    The proof of the case when $n=2$ is given here and a similar proof of the general case could be constructed iteratively. To prove the closure under addition, Property 2 in \Cref{l_subWeibuill_property} entails that there exist constants $C_1, C_2 > 0$ such that for all $p \geq 1 \wedge \alpha_1 \wedge \alpha_2$,
    \begin{align*}
        \|v_1X_1+v_2X_2\|_p &\leq v_1\|X_1\|_p+v_2\|X_2\|_p \leq v_1C_1 p^{1 / \alpha_1}+v_2C_2 p^{1 / \alpha_2}\\
        &\leq (v_1C_1+v_2C_2)p^{\max\{1/\alpha_1,1/\alpha_2\}},
    \end{align*}
    where the first inequality follows from the triangle inequality. Hence, Property 2 in \Cref{l_subWeibuill_property} suggests that $v_1X_1+v_2X_2$ follows a sub-Weibull distribution with parameters $\min\{\alpha_1,\alpha_2\}$.
\end{proof}

\subsection{A maximal inequality for weighted partial sums} \label{pf_TL_maximal}
\begin{theorem}[Theorem 3 in \citealp{rosenthal1970subspaces}]
\label{t_rosenthal}
    Let $X_1, \ldots, X_n$ be mean $0$ independent random variables, and we further assume that for all $i\in\{1, \ldots, n\}$, $\|X_i\|_p = \big(\mathbb{E}\big[|X_i|^p\big]\big)^{1/p} < \infty$ for $p \geq 2$. Denote $S_n = \sum_{i=1}^n X_i$, then it holds that 
    \begin{align*}
        \mathbb{E}\big[|S_n|^p\big] \leq K_p \max \Big\{\sum_{i=1}^n \mathbb{E}|X_i|^p, \Big(\sum_{i=1}^n \mathbb{E}|X_i|^2\Big)^{p/2}\Big\},
    \end{align*}
    where $K_p > 0$ is a constant depending only on the value of the $p$.
\end{theorem}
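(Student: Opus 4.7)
The plan is to prove this Rosenthal-type moment bound by combining the Burkholder--Davis--Gundy (BDG) inequality with an induction on the exponent $p$. The two-term structure on the right-hand side---a ``large-jump'' term $\sum_i \mathbb{E}|X_i|^p$ and a ``Gaussian-type'' term $(\sum_i \mathbb{E} X_i^2)^{p/2}$---emerges naturally once one reduces the problem to controlling the $(p/2)$-th moment of the quadratic variation $Q_n := \sum_{i=1}^n X_i^2$ and then peels off its deterministic mean.

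First I would apply BDG to the martingale $(S_k)_{k \leq n}$, valid because the $X_i$ are independent and mean zero, to get $\mathbb{E}|S_n|^p \leq c_p \, \mathbb{E} Q_n^{p/2}$ for a constant $c_p$ depending only on $p$. Writing $Q_n = \mu_n + T_n$, with $\mu_n := \sum_i \mathbb{E} X_i^2$ deterministic and $T_n := \sum_i(X_i^2 - \mathbb{E} X_i^2)$ a sum of independent centered variables $Z_i$, the elementary convexity bound $(a+b)^{p/2} \leq 2^{p/2-1}(a^{p/2}+b^{p/2})$ yields
\begin{align*}
\mathbb{E} Q_n^{p/2} \leq 2^{p/2-1}\bigl(\mathbb{E}|T_n|^{p/2} + \mu_n^{p/2}\bigr).
\end{align*}
The second term is already $(\sum_i \mathbb{E} X_i^2)^{p/2}$. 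For $\mathbb{E}|T_n|^{p/2}$, I would apply Rosenthal inductively at the smaller exponent $q := p/2$, converting the resulting $\mathbb{E}|Z_i|^q$ into the desired right-hand quantities via $|Z_i|^q \leq 2^{q-1}(|X_i|^p + (\mathbb{E} X_i^2)^{p/2})$, while the $(\sum_i \mathbb{E} Z_i^2)^{q/2}$ piece from the inductive call is handled by log-convexity of moments: H\"older interpolation gives $\mathbb{E} X_i^4 \leq (\mathbb{E} X_i^2)^{(p-4)/(p-2)}(\mathbb{E}|X_i|^p)^{2/(p-2)}$, and one more application of H\"older on the sum lets the product be absorbed into the maximum of the two target terms. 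The induction is organised on $\lceil \log_2 p \rceil$, with the base case $p = 2$ given by orthogonality, $\mathbb{E} S_n^2 = \sum_i \mathbb{E} X_i^2$, so that $K_2 = 1$.

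The main obstacle is the boundary regime $p \in (2,4]$, where the recursive exponent $q = p/2 \in (1,2]$ is not yet covered by the inductive hypothesis and where the hypothesis $\mathbb{E}|X_i|^p < \infty$ does not entail $\mathbb{E} X_i^4 < \infty$, so any argument that secretly uses a fourth moment is inadmissible. To resolve this regime, I would replace the recursive step by a direct truncation: split $X_i = X_i\mathbf{1}_{|X_i|\leq t} + X_i\mathbf{1}_{|X_i|>t}$, bound the $L^p$-norm of the truncated-and-centered sum by Burkholder for bounded martingale differences (of order $t^{p-2}\mu_n$ up to constants), control the tail sum in $L^p$ by the triangle inequality in terms of $\sum_i \mathbb{E}|X_i|^p$, and optimise over $t$ at the balance point $t \sim \bigl(\sum_i \mathbb{E}|X_i|^p / \mu_n^{p/2}\bigr)^{1/p}$. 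Bookkeeping the constants through the $O(\log_2 p)$ inductive levels---each contributing only an explicit $p$-dependent multiplicative factor---yields the claimed finite $K_p$ for every $p \geq 2$.
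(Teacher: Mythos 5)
The paper does not actually prove this statement: it is imported verbatim as Theorem 3 of \citet{rosenthal1970subspaces} and used as a black box in the proof of \Cref{l_maximal_inequality}, so there is no in-paper argument to compare against. Your proposal is a genuine proof attempt, and its architecture is the standard modern martingale proof of Rosenthal's inequality (quite different from Rosenthal's original functional-analytic argument via symmetric basic sequences): BDG reduces $\mathbb{E}|S_n|^p$ to $\mathbb{E}Q_n^{p/2}$, the split $Q_n=\mu_n+T_n$ produces the Gaussian-type term exactly, and for $p\geq 4$ the recursion at exponent $p/2$ together with the interpolation $\mathbb{E}X_i^4\leq(\mathbb{E}X_i^2)^{(p-4)/(p-2)}(\mathbb{E}|X_i|^p)^{2/(p-2)}$ and H\"older on the sum correctly absorbs the $(\sum_i\mathbb{E}Z_i^2)^{p/4}$ term into the geometric mean of the two target quantities, hence into their maximum. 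That part of the argument is sound, and you are right to flag that fourth moments are inadmissible when $p<4$.

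The one genuine gap is in your treatment of the regime $p\in(2,4]$. Controlling the tail sum $\sum_i\{X_i\mathbf{1}_{|X_i|>t}-\mathbb{E}[X_i\mathbf{1}_{|X_i|>t}]\}$ ``in $L^p$ by the triangle inequality'' yields $\bigl(\sum_i\|X_i\mathbf{1}_{|X_i|>t}\|_p\bigr)^p$, not $\sum_i\mathbb{E}[|X_i|^p\mathbf{1}_{|X_i|>t}]$; for $n$ i.i.d.\ summands the former is of order $n^p$ times a single moment while the latter is of order $n$, so the balancing choice of $t$ cannot rescue the claimed bound. The standard and simpler repair is to avoid truncation entirely: for $q=p/2\in(1,2]$ the von Bahr--Esseen inequality gives $\mathbb{E}|T_n|^{q}\leq 2\sum_i\mathbb{E}|Z_i|^{q}$ for independent centred $Z_i=X_i^2-\mathbb{E}X_i^2$, and $\mathbb{E}|Z_i|^{p/2}\leq 2^{p/2}\,\mathbb{E}|X_i|^{p}$ by convexity and Jensen, which requires only the assumed $p$-th moments. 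Substituting this for the inductive call when $p\in(2,4]$ (and keeping your recursion for $p>4$) closes the argument with the claimed $K_p$ depending only on $p$.
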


\begin{lemma}
\label{l_maximal_inequality}
    Under the same assumption of \Cref{t_rosenthal}, if we further assume that $\|X_i\|_2 \leq C_1$ for all $i \in [n]$ and $C_1>0$ is an absolute constant, then it follows that for any $m \in \mathbb{Z}_{+}$, $\alpha >0$, and $\tau>0$,
    \begin{align*}
        \mathbb{P}\Bigg\{\underset{k \in \{m, \ldots, \lfloor(\alpha+1)m \rfloor\}}{\max} \frac{\Big|\sum_{i=1}^k X_i\Big|}{\sqrt{k}} \geq \tau \Bigg\} \leq C_2\tau^{-2},
    \end{align*}
    where $C_2>0$ is a constant depending on $C_1$.
\end{lemma}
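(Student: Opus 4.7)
The plan is to reduce the supremum-over-$k$ quantity to a single maximum of partial sums by pulling out the worst-case denominator, and then invoke a classical maximal inequality together with the $L^2$ assumption.

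First I would observe that for every $k$ in the range $\{m,\ldots,\lfloor(\alpha+1)m\rfloor\}$ we have $\sqrt{k} \geq \sqrt{m}$, and therefore
\[
    \max_{k \in \{m,\ldots,\lfloor(\alpha+1)m\rfloor\}} \frac{|S_k|}{\sqrt{k}} \; \leq \; \frac{1}{\sqrt{m}}\max_{1 \leq k \leq \lfloor(\alpha+1)m\rfloor} |S_k|,
\]
where $S_k = \sum_{i=1}^k X_i$. Hence the event in question is contained in $\{\max_{k \leq \lfloor(\alpha+1)m\rfloor}|S_k| \geq \sqrt{m}\,\tau\}$.

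Next I would apply Kolmogorov's maximal inequality (which holds for sums of independent mean-zero random variables with finite second moments):
\[
    \mathbb{P}\Big\{\max_{1 \leq k \leq N}|S_k| \geq s\Big\} \;\leq\; \frac{\mathbb{E}[S_N^2]}{s^2}
\]
for any $s>0$, applied with $N = \lfloor(\alpha+1)m\rfloor$ and $s = \sqrt{m}\,\tau$. By independence and mean-zero, $\mathbb{E}[S_N^2]=\sum_{i=1}^N \mathbb{E}[X_i^2]$, and the hypothesis $\|X_i\|_2 \leq C_1$ yields $\mathbb{E}[S_N^2] \leq N C_1^2 \leq (\alpha+1)m\,C_1^2$. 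Combining,
\[
    \mathbb{P}\Big\{\max_{k \in \{m,\ldots,\lfloor(\alpha+1)m\rfloor\}}\frac{|S_k|}{\sqrt{k}} \geq \tau\Big\} \;\leq\; \frac{(\alpha+1)m C_1^2}{m\tau^2} \;=\; \frac{(\alpha+1)C_1^2}{\tau^2},
\]
so one may take $C_2 = (\alpha+1)C_1^2$ (absorbing the $\alpha$-dependence into the constant, consistent with how $\alpha$ is treated as a fixed parameter in the intended application).

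There is really no obstacle of substance here: the statement is a direct combination of the trivial monotonicity in $\sqrt{k}$ with Kolmogorov's maximal inequality and the uniform $L^2$ bound. The only minor care needed is to ensure that Kolmogorov's inequality is being applied in the correct form for sums of independent (not necessarily identically distributed) centered variables, which is its classical setting, and to carry through the floor without loss.
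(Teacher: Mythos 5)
Your proof is correct and follows essentially the same route as the paper: both reduce the weighted maximum to a Kolmogorov/Doob-type maximal inequality for $\max_{k\le N}|S_k|$ with $N=\lfloor(\alpha+1)m\rfloor$, plus a second-moment bound of order $N$. The only cosmetic differences are that the paper derives the maximal inequality via Doob's sub-martingale inequality and cites Rosenthal's inequality (with $p=2$) for $\mathbb{E}[S_N^2]\lesssim N$, whereas you invoke Kolmogorov's inequality directly and compute the variance by independence, and you handle the denominator by $\sqrt{k}\ge\sqrt{m}$ rather than the equivalent $N\le(\alpha+1)k$.
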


\begin{proof}
    Let $N=\lfloor(\alpha+1)m \rfloor$, and note that $\big\{\big(\sum_{i=1}^k X_i\big)^2\big\}_{k=1}^N$ is a sub-martingale relative to a filtration $\{\mathcal{F}_{k}\}_{k=1}^N = \{\sigma(X_1, \ldots, X_k)\}_{k=1}^N$ since
    \begin{align*}
        \mathbb{E}\Big[\Big(\sum_{i=1}^{k+1} X_i\Big)^2\Big|X_1, \ldots, X_k\Big] \geq \Big(\mathbb{E}\Big[\sum_{i=1}^{k+1} X_i\Big|X_1, \ldots, X_k\Big]\Big)^2=\Big(\sum_{i=1}^{k} X_i\Big)^2,
    \end{align*}
    where the first inequality follows from Jensen's inequality and the equality follows from the fact that $\mathbb{E}[X_{k+1}]=0$. Hence, following from Doob's martingale inequality, we have that
    \begin{align*}
        \mathbb{P}\Big\{\underset{k \in [N]}{\max}\Big|\sum_{i=1}^{k} X_i\Big| \geq \tau \Big\} = \mathbb{P}\Big\{\underset{k \in [N]}{\max}\Big(\sum_{i=1}^{k} X_i\Big)^2 \geq \tau^2 \Big\} \leq \frac{\mathbb{E}\Big[\Big(\sum_{i=1}^{N} X_i\Big)^2\Big]}{\tau^2}.
    \end{align*}
    By \Cref{t_rosenthal}, it holds that 
    \begin{align*}
        \mathbb{E}\Big[\Big(\sum_{i=1}^{N} X_i\Big)^2\Big] \leq C_1\sum_{i=1}^N \mathbb{E}[X_i^2] \leq C_2N,
    \end{align*}
    hence we have that
    \begin{align*}
        \mathbb{P}\Bigg\{\frac{1}{\sqrt{N}} \; \underset{k \in [N]}{\max}\Big|\sum_{i=1}^{k} X_i\Big| \geq \tau \Bigg\} \leq C_2\tau^{-2}.
    \end{align*}
    Observe that
    \begin{align*}
         \underset{k \in [N]}{\max} \; \frac{\Big|\sum_{i=1}^{k} X_i\Big|}{\sqrt{N} } \geq \underset{k \in \{m,\ldots, \lfloor(\alpha+1)m\rfloor\}}{\max} \; \frac{\Big|\sum_{i=1}^{k} X_i\Big|}{\sqrt{N} } \geq \underset{k \in \{m,\ldots, \lfloor(\alpha+1)m\rfloor\}}{\max} \;\frac{\Big|\sum_{i=1}^{k} X_i\Big|}{\sqrt{(\alpha+1)k} }.
    \end{align*}
    Therefore, 
    \begin{align*}
        \mathbb{P}\Bigg\{\underset{k \in \{m,\ldots, \lfloor(\alpha+1)m\rfloor\}}{\max} \;\frac{\Big|\sum_{i=1}^{k} X_i\Big|}{\sqrt{k} } \geq \sqrt{\alpha+1}\tau\Bigg\} \leq \mathbb{P}\Bigg\{\underset{k \in [N]}{\max}\frac{\Big|\sum_{i=1}^{k} X_i\Big|}{\sqrt{N}}  \geq \tau \Bigg\} \leq C_2\tau^{-2},
    \end{align*}
    which gives
    \begin{align*}
          \mathbb{P}\Bigg\{\underset{k \in \{m,\ldots, \lfloor(\alpha+1)m\rfloor\}}{\max} \;\frac{\Big|\sum_{i=1}^{k} X_i\Big|}{\sqrt{k} } \geq \tau\Bigg\} \leq C_3\tau^{-2}.
    \end{align*}
\end{proof}

\begin{lemma}
\label{l_inequality_partial_sum}
    Let $\nu >0$ be given. Under the same assumptions as the one in \Cref{l_maximal_inequality}, for any $0 <a<1$ it holds that
    \begin{align*}
        \mathbb{P}\Bigg\{\Big|\sum_{i=1}^d X_i\Big| \leq \frac{C}{a}\sqrt{d}\{\log(d\nu)+1\} \quad \text{for all} \quad d \geq 1/\nu \Bigg\} \geq 1-a^2,
    \end{align*}
    where $C>0$ is an absolute constant.
\end{lemma}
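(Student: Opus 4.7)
The plan is to reduce the uniform statement over all $d \geq 1/\nu$ to a countable union of block statements, each handled by the maximal inequality in Lemma \ref{l_maximal_inequality}, and then to allocate the failure budget $a^2$ across the blocks via a Basel-type series.

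First I would perform a dyadic decomposition of the index set. Define $m_k = \lceil 2^k/\nu \rceil$ for $k = 0, 1, 2, \ldots$, so that the intervals $B_k = \{m_k, m_k + 1, \ldots, 2 m_k\}$ cover $\{d \in \mathbb{Z}_+ : d \geq 1/\nu\}$ (here I am using $\alpha = 1$ in Lemma \ref{l_maximal_inequality}, which applies the bound on $B_k$). On each block $B_k$, Lemma \ref{l_maximal_inequality} gives
\begin{equation*}
    \mathbb{P}\Bigg\{\max_{d \in B_k} \frac{\bigl|\sum_{i=1}^d X_i\bigr|}{\sqrt{d}} \geq \tau_k \Bigg\} \leq C_2 \tau_k^{-2},
\end{equation*}
for any threshold $\tau_k > 0$, where $C_2$ depends only on the uniform second-moment bound $C_1$.

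Next I would pick the thresholds $\tau_k$ so that the sum of the failure probabilities is at most $a^2$, yet $\tau_k$ grows only logarithmically with $k$. Setting $\tau_k = C_3 (k+1)/a$ for a sufficiently large absolute constant $C_3$ (depending on $C_2$) gives $\sum_{k=0}^\infty C_2 \tau_k^{-2} = (C_2 a^2/C_3^2) \sum_{k\geq 0}(k+1)^{-2} \leq a^2$, by the convergence of $\sum (k+1)^{-2} = \pi^2/6$. Applying the union bound over $k$, with probability at least $1 - a^2$ the event
\begin{equation*}
    \Bigl|\textstyle\sum_{i=1}^d X_i\Bigr| \leq \tfrac{C_3 (k+1)}{a}\sqrt{d} \qquad \text{for all } k \geq 0 \text{ and all } d \in B_k
\end{equation*}
holds simultaneously.

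Finally, I would convert the block index $k$ into the logarithmic factor $\log(d\nu)+1$. For any $d \geq 1/\nu$, let $k = \lfloor \log_2(d\nu)\rfloor$, so that $d \in B_k$; then $k + 1 \leq \log_2(d\nu) + 1 \leq C_4 (\log(d\nu) + 1)$. Substituting, with probability at least $1 - a^2$,
\begin{equation*}
    \Bigl|\textstyle\sum_{i=1}^d X_i\Bigr| \leq \frac{C}{a}\sqrt{d}\bigl\{\log(d\nu) + 1\bigr\} \qquad \text{for all } d \geq 1/\nu,
\end{equation*}
with $C = C_3 C_4$, which is exactly the claimed bound. The only mildly delicate step is to make sure the dyadic cover starts at $m_0 \geq 1/\nu$ (so that the statement ranges over $d \geq 1/\nu$) and that the geometric growth of blocks is compatible with the arithmetic growth of $\tau_k$; both are routine once the series $\sum (k+1)^{-2}$ is isolated, so I do not expect a serious obstacle.
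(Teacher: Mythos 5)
Your proof is correct and follows essentially the same route as the paper's: a dyadic decomposition into blocks $[2^k/\nu, 2^{k+1}/\nu]$, the maximal inequality of Lemma \ref{l_maximal_inequality} on each block with thresholds growing linearly in $k$, a union bound via $\sum_k (k+1)^{-2}$, and conversion of the block index to $\log(d\nu)$. If anything, your choice of $\tau_k = C_3(k+1)/a$ with $C_3$ absorbing the $\pi^2/6$ is slightly cleaner than the paper's, which leaves the failure probability at $a^2\pi^2/6$ before stating the $1-a^2$ conclusion.
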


\begin{proof}
    Without loss of generality, assume that $1/\nu \geq 1$. Let $s \in \mathbb{Z}^+$, and $\mathcal{T}_s = [2^s/\nu, 2^{s+1}/\nu] \cap \mathbb{Z}$. Then following form \Cref{l_maximal_inequality}, we have that for any $\tau >0$,
    \begin{align*}
        \mathbb{P}\Bigg\{\underset{d \in \mathcal{T}_s}{\sup} \frac{\Big|\sum_{i=1}^d X_i\Big|}{\sqrt{d}} \geq \tau \Bigg\} \leq C_1\tau^{-2}.
    \end{align*}
    Therefore, by a union-bound argument, for any $0 < a <1$, we have that
    \begin{align*}
        \mathbb{P}\Big\{\exists s \in \mathbb{Z}^+:\underset{d \in \mathcal{T}_s}{\sup}\frac{\Big|\sum_{i=1}^d X_i\Big|}{\sqrt{d}} \geq \frac{\sqrt{C_1}}{a}(s+1) \Big\} \leq \sum_{s=0}^\infty \frac{a^2}{(s+1)^2}=\frac{a^2\pi^2}{6}.
    \end{align*}
    For any $d \in \mathcal{T}_s$, it holds that $s \leq \log(d\nu)/\log(2)$, and therefore we have that
    \begin{align*}
         \mathbb{P}\Bigg\{\underset{d \in \mathcal{T}_s}{\sup}\frac{\Big|\sum_{i=1}^d X_i\Big|}{\sqrt{d}} \geq \frac{\sqrt{C_1}}{a}\Big\{\frac{\log(d\nu)}{\log(2)}+1\Big\} \; \text{for all} \; s \leq \log(d\nu)/\log(2)\Bigg\} \leq \frac{a^2\pi^2}{6},
    \end{align*}
    which directly leads to
    \begin{align*}
        \mathbb{P}\Bigg(\Big|\sum_{i=1}^d X_i\Big| \leq \frac{C_2}{a}\sqrt{d}\{\log(d\nu)+1\} \; \text{for all} \; d \geq 1/\nu \Bigg) \geq 1-a^2.
    \end{align*}
\end{proof}

\end{appendices}

\end{document}